\newtheorem{Theorem}{Theorem}[section]
\newtheorem{Proposition}[Theorem]{Proposition}
\newtheorem{Lemma}[Theorem]{Lemma}
\newtheorem{Corollary}[Theorem]{Corollary}
\newtheorem{Conjecture}[Theorem]{Conjecture}
\newtheorem{Observation}[Theorem]{Observation}
\newtheorem{Definition}[Theorem]{Definition}
\newtheorem{Example}[Theorem]{Example}
\newtheorem{Remark}[Theorem]{Remark}
\newtheorem{Question}[Theorem]{Question}
\newtheorem{Concept}[Theorem]{Concept}
\newcommand{\bigslant}[2]{{\raisebox{.2em}{$#1$}\left/\raisebox{-.2em}{$#2$}\right.}}
\newcommand{\mf}[1]{\mathfrak{#1}}
\newcommand{\mb}[1]{\mathbb{#1}}
\newcommand{\al}{\mf{a}}
\newcommand{\be}{\mf{b}}
\newcommand{\ga}{\mf{c}}
\newcommand{\fal}{F_\al}
\newcommand{\fbe}{F_\be}
\newcommand{\fga}{F_\ga}
\newcommand{\gal}{{g_\al}}
\newcommand{\gbe}{{g_\be}}
\newcommand{\gga}{{g_\ga}}
\newcommand{\nal}{{\nu_\al}}
\newcommand{\nbe}{{\nu_\be}}
\newcommand{\nga}{{\nu_\ga}}
\newcommand{\dal}{d_\al}
\newcommand{\dbe}{d_\be}
\newcommand{\dga}{d_\ga}
\newcommand{\Gal}{\Gamma_\al}
\newcommand{\Gbe}{\Gamma_\be}
\newcommand{\Gga}{\Gamma_\ga}
\newcommand{\ii}{\II_i}
\newcommand{\ial}{\II_\al}
\newcommand{\ibe}{\II_\be}
\newcommand{\iga}{\II_\ga}
\newcommand{\dt}{|\theta_1|}
\newcommand{\dtt}{|\theta_2|}
\newcommand{\dti}{|\theta_i|}
\newcommand{\dri}{|\rho_i|}
\newcommand{\bd}[1]{\mathrm{Dic}_{#1}}
\newcommand{\bt}{\mathbb{T}^{\flat}}
\newcommand{\bo}{\mathbb{O}^{\flat}}
\newcommand{\by}{\mathbb{Y}^{\flat}}
\newcommand{\bti}{\mathbb{T}_{1}}
\newcommand{\btii}{\mathbb{T}_{2}}
\newcommand{\btiii}{\mathbb{T}_{3}}
\newcommand{\btiiii}{\mathbb{T}^{\flat}_{4}}
\newcommand{\btiiiii}{\mathbb{T}^{\flat}_{5}}
\newcommand{\btiiiiii}{\mathbb{T}^{\flat}_{6}}
\newcommand{\btiiiiiii}{\mathbb{T}_{7}}
\newcommand{\tnat}{\btiiii}
\newcommand{\boi}{\mathbb{O}_{1}}
\newcommand{\boii}{\mathbb{O}_{2}}
\newcommand{\boiii}{\mathbb{O}_{3}}
\newcommand{\boiiii}{\mathbb{O}^{\flat}_{4}}
\newcommand{\boiiiii}{\mathbb{O}^{\flat}_{5}}
\newcommand{\boiiiiii}{\mathbb{O}_{6}}
\newcommand{\boiiiiiii}{\mathbb{O}_{7}}
\newcommand{\boiiiiiiii}{\mathbb{O}^{\flat}_{8}}
\newcommand{\onat}{\boiiii}
\newcommand{\byi}{\mathbb{Y}_{1}}
\newcommand{\byii}{\mathbb{Y}^{\flat}_{2}}
\newcommand{\byiii}{\mathbb{Y}^{\flat}_{3}}
\newcommand{\byiiii}{\mathbb{Y}_{4}}
\newcommand{\byiiiii}{\mathbb{Y}_{5}}
\newcommand{\byiiiiii}{\mathbb{Y}_{6}}
\newcommand{\byiiiiiii}{\mathbb{Y}^{\flat}_{7}}
\newcommand{\byiiiiiiii}{\mathbb{Y}_{8}}
\newcommand{\byiiiiiiiii}{\mathbb{Y}^{\flat}_{9}}
\newcommand{\ynat}{\byii}
\newcommand{\mero}{\mathcal{M}(\overline{\CC})}
\newcommand{\splitk}{\CC}
\newcommand{\triv}{\epsilon}
\newcommand{\natrep}{U}
\newcommand\regrep[1]{\splitk {#1}}
\newcommand\regchar[1]{\chi_{\mathrm{reg}{#1}}}
\newcommand\palia[3][G]{\left(\mf{#2}({#3})\otimes R\right)^{#1}}
\newcommand\alia[4][G]{\left(\mf{#2}({#3})\otimes \mero_{#4}\right)^{#1}}
\newcommand\salia[4][G]{\overline{\mf{#2}({#3})}_{#4}^{#1}}
\newcommand\p[1]{\mathcal{P}_{#1}}
\newcommand\h[1]{\mathcal{H}_{#1}}
\newcommand{\AL}[2][\omega^2]{\mf{P}_{#1}(#2)}
\newcommand{\roots}{\Phi}
\newcommand{\sroots}{\Delta}
\newcommand{\Id}{\mathrm{Id}}
\newcommand{\id}{\mathrm{id}}
\newcommand{\tr}{\mathrm{tr}\,}
\newcommand{\im}{\mathrm{Im }} 
\newcommand{\GL}{\mathrm{GL}}
\newcommand{\SL}{\mathrm{SL}}
\newcommand{\SO}{\mathrm{SO}}
\newcommand{\NSO}{\mathrm{O}}
\newcommand{\PSL}{\mathrm{PSL}}
\newcommand{\PGL}{\mathrm{PGL}}
\newcommand{\Hom}{\mathrm{Hom}}
\newcommand{\End}{\mathrm{End}}
\newcommand{\Aut}{\mathrm{Aut}}
\newcommand{\Inn}{\mathrm{Inn}}
\newcommand{\Dyn}{\mathrm{Dyn}}
\newcommand{\Ad}{\mathrm{Ad}}
\newcommand{\ad}{\mathrm{ad}}
\newcommand{\Irr}{\mathrm{Irr}}
\newcommand{\lcm}{\mathrm{\,lcm}}
\newcommand{\codim}{\mathrm{codim}\,}
\newcommand{\diag}{\mathrm{diag}}
\newcommand{\rank}{\mathrm{rank}\,}
\newcommand{\cA}{\mathcal A}
\newcommand{\cG}{\mathcal G}
\newcommand{\cM}{\mathcal M}
\newcommand{\cN}{\mathcal N}
\newcommand{\cQ}{\mathcal Q}
\newcommand{\cV}{\mathcal V}
\newcommand\zn[1]{\bigslant{\ZZ}{#1}}
\newcommand{\NN}{\mathbb N}       
\newcommand{\ZZ}{\mathbb Z}       
\newcommand{\RR}{\mathbb R}
\newcommand{\QQ}{\mathbb Q}        
\newcommand{\CC}{\mathbb C}
\newcommand{\II}{\mathbb I}
\newcommand{\DD}{\mathbb D}        
\newcommand{\TT}{\mathbb T}  
\newcommand{\OO}{\mathbb O}  
\newcommand{\YY}{\mathbb Y}
\newcommand{\BB}{\mathbb B}  
\newcommand{\nodescale}{0.6}
\newcommand{\coloral}{red}
\newcommand{\colorbe}{green}
\newcommand{\colorga}{blue}
\newcommand
\newpage\textbf{Word Count}\newline\newline \input{count.txt}{
    \immediate\write18{texcount thesis.tex -inc > 'count.txt'}
\newpage\textbf{Word Count}\newline\newline \input{count.txt}}
\begin{document}
%
%
%

\pagestyle{empty}
\begin{center}

{\scshape\huge\mbox{Invariants of Automorphic Lie Algebras}} 
\vskip 6 cm
{\scshape\huge VJA Knibbeler}
\vskip 7.3 cm
{\scshape\huge  PhD} 
\vskip 2 cm
{\scshape\huge 2014}

\end{center}

%
%
%

\pagestyle{empty}
\begin{center}

\vskip 3 cm
{\huge\bf \mbox{Invariants of Automorphic Lie Algebras}} 
\vskip 3 cm
{\huge  Vincent Jan Ambrosius Knibbeler}
\vskip 3 cm
{\huge  A thesis submitted in partial fulfilment of the requirements of the University of Northumbria at Newcastle for the degree of Doctor of Philosophy}
\vskip 3 cm
{\huge  Research undertaken in the Department of Mathematics and Information Sciences}
\vskip 3 cm
{\huge September 2014}

\end{center}

\pagestyle{fancy}
\pagenumbering{roman}
\cleardoublepage\phantomsection
\addcontentsline{toc}{chapter}{Abstract}
\centerline{\textbf{\large{Abstract}}}

Automorphic Lie Algebras arise in the context of reduction groups introduced in the late 1970s \cite{Mikhailov81} in the field of integrable systems. 
They are subalgebras of Lie algebras over a ring of rational functions, defined by invariance under the action of a finite group, the reduction group.
Since their introduction in 2005 \cite{LM05comm, Lombardo}, mathematicians aimed to classify Automorphic Lie Algebras. 
Past work shows remarkable uniformity between the Lie algebras associated to different reduction groups.
That is, many Automorphic Lie Algebras with nonisomorphic reduction groups are isomorphic \cite{bury2010automorphic,LS10}.  
In this thesis we set out to find the origin of these observations by searching for properties that are independent of the reduction group, called invariants of Automorphic Lie Algebras.

The uniformity of Automorphic Lie Algebras with nonisomorphic reduction groups starts at the Riemann sphere containing the spectral parameter, restricting the finite groups to the polyhedral groups. 
Through the use of classical invariant theory and the properties of this class of groups it is shown that Automorphic Lie Algebras are freely generated modules over the polynomial ring in one variable. Moreover, the number of generators equals the dimension of the base Lie algebra, yielding an invariant.
This allows the definition of the 
determinant of invariant vectors which will turn out to be another invariant. A surprisingly simple formula is given expressing this determinant as a monomial in 
ground forms.


All invariants are used to set up a structure theory for Automorphic Lie Algebras. This naturally leads to a cohomology theory for root systems. A first exploration of this structure theory narrows down the search for Automorphic Lie Algebras significantly. Various particular cases are fully determined by their invariants, including most of the previously studied Automorphic Lie Algebras, thereby providing an explanation for their uniformity. 
In addition, the structure theory advances the classification project. For example, it clarifies the effect of a change in pole orbit resulting in various new 
Cartan-Weyl normal form generators for Automorphic Lie Algebras. 
From a more general perspective, the success of the structure theory and root system cohomology in absence of a field promises interesting theoretical developments for Lie algebras over a graded ring.


\cleardoublepage\phantomsection
\addcontentsline{toc}{chapter}{Contents}
\tableofcontents
\cleardoublepage\phantomsection
\addcontentsline{toc}{chapter}{List of Tables}
\listoftables
\cleardoublepage\phantomsection
\addcontentsline{toc}{chapter}{List of Figures}
\listoffigures
\cleardoublepage\phantomsection
\addcontentsline{toc}{chapter}{Acknowledgements}
\chapter*{Acknowledgements\markboth{Acknowledgements}{}}
\label{ch:Acknowledgements}

This research is the result of a three year PhD project at Northumbria University in Newcastle upon Tyne, United Kingdom. It has been a wonderful experience and I am indebted to many people for giving me this opportunity.

First of all I would like to thank Sara Lombardo, for inviting me to Newcastle and for introducing me to the academic world in general and to the fascinating subject of Automorphic Lie Algebras in particular. 
Each of the many hours that Sara and I worked together has been a pleasure. She organised several conferences during my research period at Northumbria, giving me the opportunity to participate and meet a wider scientific community, e.g.~NEEDS (Nonlinear Evolution Equations and Dynamical Systems) 2012, Integrable Systems In Newcastle 2013 and 2014, and the workshop ``An algebraic view of dynamics'' (to celebrate the 65th birthday of Jan Sanders).
The professional advice and guidance that Sara gave me will never lose significance and her character stays a source of inspiration. 

Jan Sanders' creativity seems to be limitless. I feel extremely privileged to have been a part of the research group on Automorphic Lie Algebras together with such an amazing scientist and I am very grateful for his generosity in spending time with me. Both Jan and Sara have made me feel welcome from the start, created a good research environment and played vital parts in the establishment of the results in this thesis.

Many thanks are due to all the members of the reading group on integrable systems at Northumbria University: Sara Lombardo, Matteo Sommacal, Benoit Huard, Richard Brown, Antonio Moro and James Atkinson. I am grateful for the effort everybody invested in order to make this reading group successful. 

My friends and fellow PhD students in lab F7, many thanks to them for the great times, the good conversations, the football matches, the badminton matches and the parties. They make F7 work.

Many thanks to my wonderful parents, family and friends in the Netherlands. It was always a treat to come back because of them. In particular thanks to Alwin ter Huurne for welcoming me into his home, giving me a place to stay and work, and of course for the many cappuccini.

To everyone involved in the MAGIC group, thank you for giving so many students across the UK the opportunity to see excellent lectures on a wide variety of mathematical subjects. I owe thanks in particular to the staff of Northumbria University responsible for enrolling the mathematics department into the MAGIC group.

I am grateful to the London Mathematical Society for organising the LMS-EPSRC short course series and waiving the fees for students.

This thesis would not have been written without the financial support of Northumbria University. 
I am also grateful to the NWO (The Netherlands Organisation for Scientific Research) for additinal support to visit the Vrije Universiteit Amsterdam (grant 639.031.622, PI Sara Lombardo), to the NEEDS network for a grant to participate in NEEDS 2012 in Crete, to the Graduate School for a travel grant to visit the University of Cagliari and to our Department for travel support to attend the LMS-EPSRC course on computational group theory in St Andrews. 

Last but not least I want to thank Sandra ter Huurne, for her companionship during this adventure, her continuous encouragement and support. She enriches the lives of everyone around her more than she could believe.
\cleardoublepage\phantomsection
\addcontentsline{toc}{chapter}{Declaration}
\centerline{\textbf{\large{Declaration}}}
\thispagestyle{plain}

I declare that the work contained in this thesis has not been submitted for any other award and that it is all my own work. I also confirm that this work fully acknowledges opinions, ideas and contributions from the work of others.
Any ethical clearance for the research presented in this thesis has been approved. Approval has been sought and granted by the Faculty Ethics Committee on $16$ December $2011$.


\vskip 1 cm
Name: Vincent Knibbeler
\vskip 2 cm
Signature:
\vskip 2 cm
Date:

\pagenumbering{arabic}


\chapter[Symmetric Symmetries]{Symmetric Symmetries}
\label{ch:intro}

Lie algebras are ubiquitous in physics and mathematics, often describing symmetries of a system of equations. They can be regarded as linear approximations of the more intricate \emph{Lie groups}: groups with a smooth manifold structure. The simpler Lie algebra, conventionally denoted by a lower-case letter in \emph{fraktur} such as $\mf{g}$, turns out to carry most of the information of the Lie group \cite{fulton1991representation}, which is one reason for its popularity.
We define Lie algebras over a ring conform Bourbaki \cite{bourbaki1998lie}.
\begin{Definition}[$R$-algebra]
If $R$ is a commutative ring then we call $\cA$ a $R$-algebra if it is a $R$-module equipped with a product $[\cdot,\cdot]:\cA\times\cA\rightarrow\cA$ which is $R$-bilinear, i.e.
\begin{align*}
&[ra+sb,c]=r[a,c]+s[b,c],\\
&[a,rb+sc]=r[a,b]+s[a,c],
\end{align*}
for all $a,b,c\in\cA$ and $r,s\in R$.
\end{Definition}
Notice that the product is not required to be associative.
\begin{Definition}[Lie algebra]
A Lie algebra $\mathfrak{g}$ is a $R$-algebra with antisymmetric product that satisfies Jacobi's identity, i.e.
\begin{align*}
&[a,a]=0,\\
&[a,[b,c]]+[b,[c,a]]+[c,[a,b]]=0,
\end{align*}
for all $a,b,c\in\mathfrak{g}$. The product is called the \emph{Lie bracket}.
\end{Definition}
More commonly a Lie algebra is defined to be a vector space (i.e.~$R$ is assumed to be a field).
As an example consider the space $\End(V)$ of linear endomorphisms on a vector space $V$. Together with the commutator bracket \[[A,B]=AB-BA\] this is a Lie algebra called the \emph{general linear} Lie algebra $\mf{gl}(V)$. A famous result known as \emph{Ado's theorem} \cite{fulton1991representation} shows that any complex Lie algebra is (isomorphic to) a Lie subalgebra of $\mf{gl}(V)$.

The \emph{special linear} Lie algebra is the subspace of traceless endomorphisms \[\mf{sl}(V)=\{A\in \mf{gl}(V)\;|\;\tr A=0\}\] with Lie bracket inherited from $\mf{gl}(V)$. Other examples are defined by a nondegenerate bilinear form $B$: \[\mf{g}_B(V)=\{A\in \mf{gl}(V)\;|\;A^TB+BA=0\}.\] If $B^T=B$ we speak of the \emph{orthogonal} Lie algebra $\mf{so}(V)$ and if $B^T=-B$ it is a \emph{symplectic} Lie algebra $\mf{sp}(V)$. These examples are the \emph{classical Lie algebras}.

Besides describing symmetries of a system of equations, a Lie algebra can be defined by a symmetry of its own. If for instance we have a homomorphism of groups $\rho:G\rightarrow \Aut(\mf{g})$, that is, $\rho(g)[a,b]=[\rho(g)a,\rho(g)b]$ for all $g\in G$ and all $a,b\in\mf{g}$, then one can define the Lie algebra \[\mf{g}^G=\{a\in \mf{g}\;|\;\rho(g)a=a,\,\forall g\in G\}.\] 
This space is closed under the Lie bracket due to the assumption that the group $G$ acts by Lie algebra morphisms. 
Automorphic Lie Algebras, to be defined in the next section, are examples of such Lie algebras.

\section{Motivation for Symmetry}

In various branches of mathematical physics, such as conformal field theories, integrable systems, and the areas founded upon these disciplines, it has been proven fruitful to include a complex parameter $\lambda$ in the Lie algebras. For instance, one can take a classical Lie algebra (as described above), call it $\mf{g}(V)$, and take the tensor product with the space of Laurent polynomials \[\mf{g}(V)\otimes \CC[\lambda,\lambda^{-1}]\] together with the Lie bracket defined by extending the bracket of $\mf{g}(V)$ linearly over $\CC[\lambda,\lambda^{-1}]$ \footnote{A convenient and commonly used notation for this Lie algebra is $\mf{g}[\lambda,\lambda^{-1}]$. However, in Automorphic Lie Algebra theory, the vector space $V$ is a group-module and plays a crucial role. Therefore we do not want to suppress it in the notation at this stage.}. As illustration, a typical element of $\mf{sl}(\CC^2)\otimes \CC[\lambda,\lambda^{-1}]$ can be represented by \[\begin{pmatrix}f&g\\h&-f\end{pmatrix}, \qquad f,g,h\in \CC[\lambda,\lambda^{-1}]\] and the bracket is simply the commutator. Equivalently, elements of this space can be regarded as Laurent polynomials with coefficients in the Lie algebra $\mf{g}(V)$ \[\sum_{i\in\ZZ} A_i \lambda^i,\qquad A_i\in\mf{g}(V).\] Such Lie algebras are called \emph{current algebras of Krichever-Novikov type} \cite{schlichenmaier2003higher,vasil2014harmonic} (or sometimes \emph{loop algebras} \cite{babelon2003introduction}). In fact, one can replace the Laurent polynomials by any ring of functions. In this thesis we consider the ring of rational functions with restricted poles
\[\mero_\Gamma=\{f:\overline{\CC}\rightarrow\overline{\CC}\;|\;f\text{ analytic outside }\Gamma\subset\overline{\CC}\}.\]
Notice that $\mero_{\{0,\infty\}}=\CC[\lambda,\lambda^{-1}]$.

Automorphic Lie Algebras were introduced in the context of the classification of integrable partial differential equations by Lombardo \cite{Lombardo} and Lombardo and Mikhailov \cite{LM04iop,LM05comm}. The Zakharov-Shabat / Ablowitz-Kaup-Newell-Segur scheme, 
used to integrate these equations, requires a pair of elements $X,T\in\mf{g}(V)\otimes\mero_\Gamma$ and the equations take the form of a zero curvature condition \begin{equation}
\label{eq:zero curvature}
[\partial_x-X,\partial_t-T]=0.
\end{equation}
Since a general pair of such $\lambda$-dependent matrices gives rise to an under determined system of differential equations (in the entries of the matrices), one requires additional constraints. By a well established scheme introduced by Mikhailov \cite{Mikhailov81}, and further developed in collaboration with Lombardo \cite{LM05comm}, this can be achieved by imposing a group symmetry on the matrices. The Lie subalgebras of $\mf{g}(V)\otimes\mero_\Gamma$ consisting of all such symmetric matrices are called \emph{Automorphic Lie Algebras}, in analogy with automorphic functions $\mero_\Gamma^G$. Since their introduction they have been extensively studied (see the work by Lombardo and Sanders \cite{LS10} and references therein, but also the thesis of Bury \cite{bury2010automorphic} and Chopp \cite{chopp2011lie}). 

\begin{Definition}[Automorphic Lie Algebra \cite{LM05comm,Lombardo}]
\label{def:alias1} Let $G$ be a finite subgroup of $\Aut(\overline{\CC})$ and $\Gamma\subset\overline{\CC}$ a $G$-orbit.
Consider a homomorphism $\psi:G\rightarrow \Aut(\mf{g}(V))$ where $\mf{g}(V)$ is a Lie algebra and define $\rho(g)=\psi(g)\otimes g^\ast\in\Aut\left(\mf{g}(V)\otimes\mero_\Gamma\right)$. 
The \emph{Automorphic Lie Algebra}\footnote{Later we will call this space of invariant matrices the \emph{natural representation of the Automorphic Lie Algebra}, and define the actual Automorphic Lie Algebra as the Lie algebra defined by this faithful representation, analogous to the classical case. A somewhat subtle distinction that will not be relevant before Chapter \ref{ch:B}.} is the space of $G$-invariant elements 
\[\alia{g}{V}{\Gamma}=
\left\{a\in\mf{g}(V)\otimes\mero_\Gamma\;|\;\rho(g)a=a, \;\forall g\in G\right\}\] of the current algebra. The \emph{base Lie algebra} of this Automorphic Lie Algebra is $\mf{g}(V)$.
\end{Definition}

Automorphic Lie Algebras are related to the well known Kac-Moody algebras \cite{kac1994infinite}: Lie algebras associated to a \emph{generalised Cartan matrix}. These matrices are classified and divided into three types: \emph{finite}, \emph{affine} and \emph{indefinite}. Generalised Cartan matrices of finite type correspond to the classical Cartan matrices \cite{humphreys1972introduction} and the associated Lie algebras are the simple complex Lie algebras (all nonexceptional cases were described above as the classical Lie algebras).
All Kac-Moody algebras associated to a generalised Cartan matrix of affine type (affine Kac-Moody algebras) can be realised using a Lie algebra of the form $\left(\mf{g}\otimes\CC[\lambda,\lambda^{-1}]\right)^{\zn{N}}$, where $\zn{N}$ acts (faithfully) on a simple complex Lie algebra $\mf{g}$ by a Dynkin diagram automorphism ($N\in\{1,2,3\}$) and on the Laurent polynomials by $\lambda\mapsto \omega\lambda$, $\omega^N=1$. An affine Kac-moody algebra is obtained when this Lie algebra is nontrivially extended by a one-dimensional centre, and adjoined by a derivation which kills the centre and acts as $\lambda\frac{d}{d\lambda}$ on the Laurent series \cite{kac1994infinite}. Notice that $\left(\mf{g}\otimes\CC[\lambda,\lambda^{-1}]\right)^{\zn{N}}$ is not an Automorphic Lie Algebra by Definition \ref{def:alias1} as it has poles on two orbits of $\zn{N}$: $\{0\}$ and $\{\infty\}$, yet it is closely related. In fact, in \cite{LM05comm} Automorphic Lie Algebras are alowed to have multiple pole orbits and an example with two pole orbits is described.

From a Lie algebraic perspective the current algebra $\mf{g}\otimes \mero_\Gamma$ is as interesting as its base Lie algebra $\mf{g}$ because the Lie brackets are the same. However, an extension of the current algebra, such as an affine Kac-Moody algebra, can have a richer structure and representation theory. Similarly, a Lie subalgebra of the infinite dimensional current algebra can have a very intricate Lie structure, and Automorphic Lie Algebras in particular promise to do so. This thesis is a first exploration of the algebraic structures of Automorphic Lie Algebras.

\section{The History of Automorphic Lie Algebras}

In the famous 1974-paper \cite{zakharov1974scheme} Zakharov and Shabat already notice that a general Lax pair does not always result in a (physically) meaningful integrable equation (here the existence of a Lax pair is taken as a definition of integrability) and some form of reduction is advised.
Mikhailov notices in 1979 \cite{mikhailov1979integrability} that a Lax pair of the Toda chain has a discrete group symmetry which is preserved by the flow. In the 1980-Letter \cite{mikhailov1980reduction} he defines the \emph{reduction group} (called $G$ in Definition \ref{def:alias1}) and explains how it can help in the classification of integrable systems through the restriction of Lax pairs. An example of a Lax pair with tetrahedral symmetry is computed.

About 25 years later Lombardo \cite{Lombardo} and Lombardo and Mikhailov \cite{LM04iop,LM05comm} introduce the Lie algebra of all matrices with rational dependence on the spectral parameter $\lambda$ that are invariant under the action of a finite reduction group $G$, and they name it the Automorphic Lie Algebra. The reduction group is discussed in more detail than it has been before. Extra attention is given to simplifications that can be achieved if there is a normal subgroup of $G$ that acts trivially on either the spectral parameter or on the matrices.

Using the Levi-decomposition for Lie algebras \cite{fulton1991representation} Lombardo and Mikhailov explain that it is the semisimple summand of the base Lie algebra that accounts for the nonlinear equations in the zero curvature condition (\ref{eq:zero curvature}). Moreover, these equations can be studied separately from the equations related to the radical of the base Lie algebra. If the first equations are solved, the latter follow easily. Therefore only semisimple base Lie algebras are considered.

Various Automorphic Lie Algebras are computed, e.g.~$\alia[\DD_N]{sl}{V}{\Gamma}$ where $\dim V$ is $2$ or $3$. This is done by averaging a selection of elements in the current algebra over the finite reduction group. Laurent expansions together with an induction argument on the order of the poles are used to show that the complete Lie algebra is accounted for by these averaged elements together with \emph{simple automorphic functions}. Both inner and outer automorphisms (cf.~Section \ref{sec:inner automorphisms}) are considered, as well as $\lambda$-dependent bases for the representations of $\DD_N$ (cf.~Section \ref{sec:representation theory}) for which the name \emph{twisted reduction group} is introduced.

In 2010 two important contributions to the subject appeared\footnote{Chronologically \cite{LS10} came before \cite{bury2010automorphic} but to discuss the development of the theory of Automorphic Lie Algebras it is more fitting to start with \cite{bury2010automorphic}.}. The second PhD-thesis \cite{bury2010automorphic} on the subject was finished, where Bury resumes the study of Automorphic Lie Algebras at the point where Lombardo \cite{Lombardo} left it. The methods used are similar but many more examples are calculated and explicit isomorphisms between different Automorphic Lie Algebras are found, significantly advancing the classification project. In particular, Bury finds that all $\mf{sl}_2(\CC)$-based Automorphic Lie Algebras whose poles lie in an exceptional orbit are isomorphic. Also preliminary results on the classification of $\mf{sl}_n(\CC)$-based Automorphic Lie Algebras with $n\ge 3$ are presented.
Bury puts the emphasis in \cite{bury2010automorphic} on integrable equations. Many of the computed Automorphic Lie Algebras are used to obtain systems of integrable partial differential equations which are subsequently studied.

The second milestone of 2010 is the paper \cite{LS10}. Lombardo and Sanders give the subject a new face by moving the workspace from the Riemann sphere to $\CC^2$. This puts the problem in the setting of \emph{classical invariant theory} (Section \ref{sec:classical invariant theory}). Tools from this field allow for more efficient ways to calculate invariants compared to averaging, e.g.~using \emph{transvectants}. Moreover, generating functions and \emph{Molien's theorem} provided a rigorous way to prove that the full Automorphic Lie Algebras are found. 

The paper constitutes the first occasion where Automorphic Lie Algebras are determined for a general reduction group. The authors show that all $\mf{sl}_2(\CC)$-based Automorphic Lie Algebras with poles restricted to the smallest orbit (in this thesis denoted by $\alia{sl}{V}{\Gal}$ with $\dim V=2$) are isomorphic. The proof contains the assumption that $V$ is the same representation as the one used to act on the spectral parameter, which enables one to find a single invariant for all reduction groups. This can only be done if $\dim V=2$, making this method for treating a general reduction group unsuitable for Automorphic Lie Algebras based on larger matrices.

A final important contribution of \cite{LS10} is the introduction of a \emph{Cartan-Weyl normal form} (Section \ref{sec:normal form}). Lombardo and Sanders find a set of generators for $\mf{sl}_2(\CC)$-based Automorphic Lie Algebras 
with similar properties as the Cartan-Weyl basis for simple complex Lie algebras.

Chopp provided a third PhD-thesis \cite{chopp2011lie} in 2011. Half of the thesis covers the subject of \emph{Witt-type algebras} and the other half concerns Automorphic Lie Algebras. The author considers arbitrary compact Riemann surfaces, with emphasis on the Riemann sphere and the torus. An implicit expression of a basis in terms of the averaging map for a set of generators for the Lie algebra is established. The proof relies on Laurent expansions similar to the work of Lombardo and Mikhailov. However, it includes a proof establishing independence of the proposed basis elements.
Apart from \cite{LS10}, the independence was not explicitly discussed before.

The most recent contribution, due to Lombardo, Sanders and the author \cite{knibbeler2014automorphic} continues in the style of \cite{LS10}, using classical invariant theory to compute Automorphic Lie Algebras with dihedral symmetry. Moreover, it contains a method to treat all possible pole orbits in one computation. The normal form is constructed for all pole orbits, including generic orbits. The contents of \cite{knibbeler2014automorphic} are included as a backbone of examples in this thesis, around which we build a theory.

We finish this brief historic account with a glimpse of the future: the paper \cite{knibbeler2014higher} that is in preparation during the time of writing.
Lombardo, Sanders and the author describe the Automorphic Lie Algebras with a reduction group isomorphic to the tetrahedral, octahedral or icosahedral group and base Lie algebra $\mf{sl}(V)$.
The explicit computations, given an irreducible group representation $V$,
are too complicated to do by hand. To overcome this obstacle, Sanders has written a FORM \cite{Form00} program. Calling on GAP \cite{GAP} and Singular \cite{MR2363237}, this program computes a generating set of invariant 
traceless matrices of degree $|G|$, where $G$ is the reduction group.  It then computes the corresponding \emph{matrices of invariants}, and such matrices find their first application in this paper. They are used to construct a Cartan subalgebra and compute the root vectors,
resulting in a \emph{Cartan-Weyl normal form} of the Automorphic Lie algebras. These computational results allowed for educated guesses that led to many of the general results of this thesis.
Together with the paper \cite{knibbeler2014automorphic} on dihedral symmetry, this completes the classification of the
$\mf{sl}_n(\CC)$-based Automorphic Lie Algebras with exceptional pole orbits.


\section{The Perspective of This Thesis}

The spectral parameter is a resident of the Riemann sphere. An assumption that the theory of Automorphic Lie Algebras has inherited from the theory of Lax connections in integrable systems. The finite group that defines the symmetry of the Automorphic Lie Algebra acts on the Riemann sphere. If this action is faithful then the finite group is 
a \emph{polyhedral group}. Their classification is a classical result \cite{dolgachev2009mckay,Klein56, Klein93,toth2002finite,toth2002glimpses}.

In this thesis we discover how the properties of this class of groups go all the way to the Lie algebra structure, severely limiting the possible structure constants and in some cases completely defining the Automorphic Lie Algebras. We aim to explain the uniformity in the Lie algebras for different reduction groups. This leads us to the following concept.
\begin{Concept}[Invariant of Automorphic Lie Algebras]
\label{conc:invariant of alias}
A property of an Automorphic Lie Algebra $\alia{g}{V}{\Gamma}$ is called an \emph{invariant of Automorphic Lie Algebras} if it depends solely on the type of orbit $\Gamma$ and the base Lie algebra $\mf{g}(V)$ up to Lie algebra isomorphism.
\end{Concept}
In this concept the \emph{type of orbit} is either \emph{generic}, $\al$, $\be$ or $\ga$. An orbit of size $|\Gamma|=G$ is generic and the remaining orbits, of which there are two if $G$ is cyclic and three otherwise, are of type $\al$, $\be$ or $\ga$, ordered by nondecreasing size. Nongeneric orbits are called exceptional.


The question whether the observed uniformity of Automorphic Lie Algebras over different reduction groups and their representations 
reaches as far as it possibly could,
can now be formulated as follows.

\begin{Question}[Isomorphism question]
\label{q:isomorphism question}
Is the Lie algebra structure an invariant of Automorphic Lie Algebras?
\end{Question}
By the Lie algebra structure we mean all the information that is encoded in the Lie bracket or equivalently the \emph{structure constants} \cite{fulton1991representation, humphreys1972introduction}.

The isomorphism question is an important motivation for the classification project and a distant goal of the research in this thesis. Unofficially there have been various conjectures similar to the affirmation of the isomorphism question circulating among the experts.

Our approach is to pass from the Riemann sphere to $\CC^2$, taking polyhedral groups to binary polyhedral groups, and obtain access to classical invariant theory, in line with the work by Lombardo and Sanders \cite{LS10}. Among the many powerful results is Hochster and Aegon's theorem (cf.~Section \ref{sec:classical invariant theory}) which shows that the spaces of invariants are finitely generated free modules over a polynomial ring (Cohen-Macaulay). This means for instance that Automorphic Lie Algebras are defined by finitely many structure constants (and in particular quasigraded), making it more natural to study these infinite dimensional Lie algebras in the framework developed for finite dimensional semisimple Lie algebras.

The specific class of finite groups allows us to do a lot better than this.
Using Clebsch-Gordan decompositions for $\SL_2(\CC)$-modules we show that the number of generators 
of the $\mero_\Gamma^G$-module $(V\otimes\mero_{\Gamma})^G$ equals the dimension of the base vector space $V$.
In particular this holds for Automorphic Lie Algebras, where $V$ is replaced by a Lie algebra $\mf{g}(V)$, and we arrive at a first invariant.

The particular number of generating invariant vectors permits the definition of the \emph{determinant of invariant vectors}. It takes considerable effort to determine this determinant by a direct computation, which adds to the value of the remarkably simple formula that is obtained, expressing this determinant as a monomial in \emph{ground forms}. By taking a simple Lie algebra as base vector space, this formula allows us to see that the determinant of invariant vectors is an invariant of Automorphic Lie Algebras as well.

An Automorphic Lie Algebra $\alia{g}{V}{\Gamma}$ is also a family of Lie subalgebras of $\mf{g}(V)$ parametrised by the Riemann sphere. Indeed, for all $\mu\in\CC$ one can evaluate the space of all invariant matrices to obtain a finite dimensional complex Lie algebra $\alia{g}{V}{\Gamma}(\mu)$. We provide a full classification of these families and find that it is another invariant. The Lie algebra structure of $\alia{g}{V}{\Gamma}(\mu)$ depends in fact only on the orbit type of $G\mu\subset\overline{\CC}$ and the base Lie algebra $\mf{g}(V)$ up to isomorphism.

The convenient number of generators of an Automorphic Lie Algebra begs the question whether it shares more properties with its base Lie algebra. To pursue this idea we define a \emph{Cartan-Weyl normal form} for Automorphic Lie Algebras, as similar to the classical Cartan-Weyl basis as possible, following \cite{knibbeler2014higher, knibbeler2014automorphic, LS10}. Particularly in \cite{knibbeler2014higher}, at the frontier of the Automorphic Lie Algebra classification project, many Cartan-Weyl normal forms are explicitly computed and represented in terms of \emph{matrices of invariants}. 
The aforementioned determinant of invariant vectors yields the precise number and type of automorphic functions appearing in this representation. The formula for this determinant thus provides a way to predict crucial information about the matrices of invariants, circumventing a tremendous amount of computations.

All the invariants of Automorphic Lie Algebras are combined to exploit their predictive power, which can be done conveniently using the framework of root systems. 
However, root systems alone do not carry enough information to reconstruct Lie algebras over rings, due to the absence of multiplicative inverses of the structure constants.
Adding the missing information naturally leads to a cohomology theory for root systems. The invariants can then be expressed in cohomological terms, and this way we get a handle on the Lie algebra structure. 

The continuation of the computational classification project (cf.~\cite{knibbeler2014higher}) can be checked against these invariants, just as the invariants were originally checked against the available computational results during their development. 
Moreover, the invariants are so restrictive that for most base Lie algebras and pole orbit types there are just a hand full of candidates for the Automorphic Lie Algebra, and in some cases there is indeed just a single candidate, thereby reproducing the results of various papers, e.g.~\cite{knibbeler2014automorphic,LS10}.




This work has been organised as follows:

The current introductory chapter is followed by Chapter \ref{ch:preliminaries}, Preliminaries, were a summary is given of selected background material. Chapter \ref{ch:A} discusses general features of representations of binary polyhedral groups (that is, finite subgroups of $\SL_2(\CC)$). Polynomial invariants are discussed in Section \ref{sec:invariant vectors} and these are used to devise a method to study invariants over meromorphic functions with all type of pole restrictions by computing one particular space of invariants, in Section \ref{sec:homogenisation}. This method is described in terms of two operators: \emph{prehomogenisation} and \emph{homogenisation}.
The last two sections lay the foundation for the main results of this thesis; the invariants of Automorphic Lie Algebras. In Section \ref{sec:squaring the ring} we show that Automorphic Lie Algebras are free modules over a polynomial ring and in Section \ref{sec:determinant of invariant vectors} the \emph{determinant of invariant vectors} is described.

In Chapter \ref{ch:B} we explore the outline of the classification project, discussing various possibilities and impossibilities to generalise the definition of an Automorphic Lie Algebra.
We consider classical Lie algebras as modules of the polyhedral groups and investigate how the Lie structure and module structure interact. 
Section \ref{sec:group decomposition of simple lie algebras} shows how to decompose complex Lie algebras into irreducible representations of the reduction group. In Section \ref{sec:inner automorphisms} we discuss inner and outer automorphisms of the base Lie algebras and explain how the reduction group can be represented in this context. Finally, in Section \ref{sec:evaluating alias} we classify the evaluations of the \emph{natural representation of Automorphic Lie Algebras}.


Chapter \ref{ch:C} combines the polynomial invariants from Chapter \ref{ch:A} with the Lie algebras from Chapter \ref{ch:B} and the first Automorphic Lie Algebras are discussed. 
From \emph{Polynomial Automorphic Lie Algebras} in Section \ref{sec:palias} we move to Automorphic Lie Algebras in Section \ref{sec:alias}. There we define the generalised Cartan-Weyl normal from and give the explicit computation and results for all dihedral Automorphic Lie Algebras with a general orbit of poles. Furthermore we conjecture the existence of the Cartan-Weyl normal form for all Automorphic Lie Algebras. 
Finally we develop a structure theory for Automorphic Lie Algebras in Section \ref{sec:structure theory for alias} using a new cohomology theory on root systems. The implications are studied for all root system that are involved in the isomorphism question. 


\chapter[Preliminaries]{Preliminaries}\label{ch:preliminaries}

In this chapter we discuss some notions of well established fields such as representation theory for finite groups, classical invariant theory and a touch of cohomology theory for groups. We expect the reader to be comfortable with linear algebra, including direct sums and tensor products.
Only basic knowledge of group theory including representation theory of finite groups over the complex field is needed.

The goal of this chapter is to remind the reader of results that will be of importance for the development of a theory for Automorphic Lie Algebras. Each section contains references that provide more detailed expositions and proofs.
Apart from Proposition \ref{prop:stanley} it does not contain original results and the reader can skip any section that they are familiar with, or just skim through in order to see the chosen notations.
In the main chapters, Chapter \ref{ch:A}, \ref{ch:B} and \ref{ch:C}, we refer back to the relevant preliminary section when appropriate.


\section{Polyhedral Groups}
\label{sec:polyhedral groups}
In the context of Automorphic Lie Algebras we are interested in groups of automorphisms of a Riemann surface $\overline{\CC}$, which historically was the domain of the spectral parameter of inverse scattering theory. For now we will consider finite groups of holomorphic bijections $\overline{\CC}\rightarrow\overline{\CC}$, i.e.~automorphisms of the Riemann sphere. These maps are also known as \emph{fractional linear transformations} or \emph{M\"obius transformations}. 

We first realise that $\Aut(\overline{\CC})\cong\PSL_2(\CC)\cong \SO(3)$, and use the latter perspective, rotations of a sphere in $\RR^3$, which has the advantage that we can have a mental picture of the group action.

Following e.g.~\cite{dolgachev2009mckay,etinghof,Klein56,Klein93,toth2002finite,toth2002glimpses}, we sketch the classification of finite subgroups $G$ of rotations of the 2-sphere.
Each nontrivial group element fixes a pair of antipodal points on the sphere. 
Therefore one can express the number of nontrivial group elements $|G|-1$ (where $|G|$ is the order of the group) in terms of stabiliser subgroups \[G_\lambda=\{g\in G\;|\;g\lambda=\lambda\},\qquad \lambda\in\overline{\CC},\] as follows. 
Because the group is finite, there can only be a finite number of points on the sphere whose stabiliser subgroup is nontrivial. Therefore the sum $\sum_{\lambda\in\overline{\CC}}\left(|G_\lambda|-1\right)$ has finitely many nonzero terms and is well defined. Moreover, since each nontrivial group element fixes two points on the sphere it is counted twice in this sum, thus we have a formula
\begin{equation}\label{eq:nontrivial group elements}2\left(|G|-1\right)=\sum_{\lambda\in\overline{\CC}}\left(|G_\lambda|-1\right).\end{equation}

Let $\Omega$ be an index set for the orbits $\Gamma_i$, $i\in \Omega$ on the sphere of elements with nontrivial stabiliser groups, \emph{exceptional orbits} hereafter. Moreover, let $d_i=|\Gamma_i|$ be the size of such an orbit and $\nu_i$ the order of the stabiliser subgroups at points in $\Gamma_i$. In particular
\[d_i\nu_i=|G|,\qquad i\in\Omega.\]
The sum in formula (\ref{eq:nontrivial group elements}) can be restricted to all $\lambda$ with nontrivial stabiliser group $\bigcup_{i\in\Omega}\Gamma_i$.
The formula becomes $2(|G|-1)=\sum_{i\in\Omega}\sum_{\lambda\in\Gamma_i}(\nu_i-1)=\sum_{i\in\Omega}d_i(\nu_i-1)$ or
\begin{equation}
\label{eq:finite subgroups of SO(3)}
2\left(1-\frac{1}{|G|}\right)=\sum_{i\in\Omega}\left(1-\frac{1}{\nu_i}\right).
\end{equation}
This equation can also be deduced from the Riemann-Hurwitz formula \cite{toth2002glimpses}. In terms of $d_i$ it reads
\[\sum_{i\in\Omega}d_i=(|\Omega|-2)|G|+2.\]
This equation is very restrictive, since we require all variables to be natural numbers and $\nu_i$ to divide $|G|$. It follows for instance that there are either $2$ or $3$ exceptional orbits. If $|\Omega|=2$ one finds the cyclic groups and if $|\Omega|=3$ one obtains the symmetry groups of the Platonic solids and regular polygons embedded in $\RR^3$.
\begin{center}
\begin{table}[h!] 
\caption{The orders of the polyhedral groups.}
\label{tab:orders of polyhedral groups}
\begin{center}
\begin{tabular}{ccc} \hline
$G$&$(\nu_i\;|\;i\in\Omega)$&$|G|$\\
\hline
$\zn{N}$&$(N, N)$&$N$\\
$\DD_{N}$&$(N,2,2)$&$2N$\\
$\TT$&$(3,3,2)$&$12$\\
$\OO$&$(4,3,2)$&$24$\\
$\YY$&$(5,3,2)$&$60$\\
\hline 
\end{tabular}
\end{center}
\end{table}
\end{center}
In this thesis, most attention will go to the non-cyclic groups, the groups with three exceptional orbits. For convenience we put in this situation 
\[\Omega=\{\al,\be,\ga\}\]
and $\nal\ge\nbe\ge\nga$.
Notice the Euler characteristic for the sphere
\begin{equation}
\label{eq:Euler characteristic}
\dal+\dbe-\dga=2.
\end{equation}

We will adopt the term \emph{polyhedral groups}, but these groups are also known as \emph{spherical von Dyck groups} $D(\nal, \nbe, \nga)$, which can be defined as the subgroups of words of even length (orientation preserving as isometries of the sphere) in the generators of the \emph{spherical triangle groups} $\Delta(\nal, \nbe, \nga)$.
The list contains alternating and symmetric groups: $\TT=A_4$, $\OO=S_4$ and $\YY=A_5$.

Polyhedral groups allow a presentation of the form 
\[G=\langle r,s\;|\;r^{\nal}=(rs)^{\nbe}=s^{\nga}=1\rangle.\]
The cyclic groups included, with $(\nal, \nbe, \nga)=(N,N,1)$, even though $s$ is redundant in that case. This presentation was favoured in previous works on Automorphic Lie Algebras, \cite{bury2010automorphic,chopp2011lie,LM04iop,LM05comm,LS10,Lombardo}, although the reader must be wary of different conventions: the role of $r$ and $s$ might be the other way around.

An alternative presentation, more in the style of triangle groups, is given by
\[G=\langle \gal, \gbe, \gga\;|\; \gal^{\nal}=\gbe^{\nbe}=\gga^{\nga}=\gal \gbe \gga=1\rangle.\]
It is related to the previous presentation through the isomorphism 
\begin{align*}
\gal\leftrightarrow r, \quad
\gbe\leftrightarrow (sr)^{-1},\quad
\gga\leftrightarrow s.
\end{align*}
Notice that any two of the three generators $\{ \gal, \gbe, \gga\}$ will generate the whole group, since the third can be constructed thanks to the relation $\gal \gbe \gga=1$.
This latter presentation turns out to be very convenient in the development of a theory for Automorphic Lie Algebras, and we will stick to it.

For future reference we give the \emph{abelianisation} and the \emph{exponent} of the polyhedral groups and summarise various properties of the groups in Table \ref{tab:various properties of polyhedral groups}. The definition of the \emph{Schur multiplier} $M(G)$ shown in Table \ref{tab:various properties of polyhedral groups} will be postponed to Section \ref{sec:schur covers}.

\begin{Definition}[abelianisation]
\label{def:abelianistaion}
The \emph{abelianisation} $\cA G$ of a group $G$ is the quotient group 
\[\cA G = \bigslant{G}{[G,G]}\]
where $[G,G]=\langle g^{-1}h^{-1}gh\;|\;g,h\in G\rangle$ denotes the commutator subgroup.
It is the largest abelian quotient group of $G$.
\end{Definition}
One can find the abelianisations of the polyhedral groups for instance by considering homomorhpisms into an abelian group, e.g.~$G\rightarrow \splitk^\ast$.

\begin{Definition}[Exponent of a group]
\label{def:exponent of a group}
The least common multiple of the orders of elements of a group $G$ is called the \emph{exponent} of $G$ and denoted \(\|G\|\), 
\[\|G\|=\min\{n\in\NN\;|\;g^n=1,\,\forall g\in G\}.\]
The exponent divides the order of a finite group.
\end{Definition}


As an example, the exponent of a cyclic group equals the group order, since there is a group element of that order. For general polyhedral groups we notice that each group element is contained in a subgroup $\zn{\nu_i}$ for some $i\in\Omega$. Hence the exponent is the least common multiple of $\{\nu_i\;|\;i\in\Omega\}$.

\begin{center}
\begin{table}[h!] 
\caption{The polyhedral groups, orders, exponent, Schur multiplier and abelianisation.}
\label{tab:various properties of polyhedral groups}
\begin{center}
\begin{tabular}{lllllllllllllllllllllllll} \hline
$G$&$|\Omega|$&$(\nu_i\;|\;i\in\Omega)$&$(d_i\;|\;i\in\Omega)$&$|G|$&$\|G\|$&$M(G)$&$\cA G$\\
\hline
$\zn{N}$&$2$&$(N,N)$&$(1,1)$&$N$&$N$&$1$&$\zn{N}$\\
$\DD_{N=2M-1}$&$3$&$(N,2,2)$&$(2,N,N)$&$2N$&$2N$&$1$&$\zn{2}$\\
$\DD_{N=2M}$&$3$&$(N,2,2)$&$(2,N,N)$&$2N$&$N$&$\zn{2}$&$\zn{2}\times \zn{2}$\\
$\TT$&$3$&$(3,3,2)$&$(4,4,6)$&$12$&$6$&$\zn{2}$&$\zn{3}$\\
$\OO$&$3$&$(4,3,2)$&$(6,8,12)$&$24$&$12$&$\zn{2}$&$\zn{2}$\\
$\YY$&$3$&$(5,3,2)$&$(12,20,30)$&$60$&$30$&$\zn{2}$&$1$\\
\hline 
\end{tabular}
\end{center}
\end{table}
\end{center}


\section{Representations of Finite Groups}
\label{sec:representation theory}

There are many excellent texts on representation theory for finite groups, e.g.~\cite{etinghof,fulton1991representation,serre1977linear}. The treatment of this subject depends rather strongly on the underlying field of the vector spaces. For our purposes it is sufficient to consider $\CC$ and $\RR$. We present a brief recap of some important results that will be used in the sequel.


\subsection{Group-Modules}

A \emph{representation} of a group $G$ is a homomorphism \[\rho:G\rightarrow\GL(V),\] where $\GL(V)$ is the \emph{general linear} group, that is, the group of invertible linear maps $V\rightarrow V$.
The group $G$ is represented by such linear transformations. It is common practice to call the vector space $V$ a representation as well, even though it is perhaps better to say that $V$ is a $G$-module. In that case we often omit the map $\rho$ in the notation for the action on $V$. Some common notations for this action are $v\mapsto\rho(g)v=\rho_g v=g\cdot v=gv$, where  $g\in G$ and $v\in V$.

A subrepresentation of $\rho:G\rightarrow\GL(V)$ is a subspace $U<V$ preserved by $G$, meaning $\rho(g)U\subset U$ for all $g$ in $G$.
A representation is called \emph{irreducible} (or simple) if it has no proper subrepresentation. Otherwise the representation is called \emph{reducible}. A representation is called \emph{completely reducible} (or semisimple) if it is a direct sum of irreducible representations.
\begin{Theorem}[Maschke's Theorem]
A representation $\rho:G\rightarrow\GL(V)$ of a finite group $G$ is completely reducible if and only if the characteristic of the field of $V$ does not divide the order of the group.
\end{Theorem}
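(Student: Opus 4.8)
The plan is to prove the two implications separately: the ``if'' direction (the characteristic of the field not dividing $|G|$ forces complete reducibility) by the classical averaging trick, and the ``only if'' direction by exhibiting a single non-semisimple representation whenever the characteristic divides $|G|$.

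For the ``if'' direction, I would first reduce complete reducibility to the assertion that every subrepresentation $U<V$ admits a $G$-invariant complement; this is a purely module-theoretic equivalence, proved by induction on $\dim V$ (split off an irreducible submodule and recurse). The heart of the argument is then the construction of such a complement. Starting from any linear projection $\pi\colon V\to U$ — one exists because $U$ is a subspace — form the averaged map
\[
\bar\pi=\frac{1}{|G|}\sum_{g\in G}\rho(g)\,\pi\,\rho(g)^{-1},
\]
which is well defined precisely because $|G|$ is invertible in the field; this is the one and only place where the hypothesis on the characteristic is used. One then checks that (i) $\bar\pi$ commutes with the action, i.e.\ $\rho(h)\bar\pi=\bar\pi\rho(h)$ for all $h\in G$; (ii) $\bar\pi(V)\subseteq U$, since each $\rho(g)\pi\rho(g)^{-1}$ already maps into the $G$-stable space $U$; and (iii) $\bar\pi|_U=\id_U$, because $\pi|_U=\id_U$ and $U$ is $G$-stable. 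Thus $\bar\pi$ is an idempotent $G$-module endomorphism with image exactly $U$, and $V=U\oplus\ker\bar\pi$ is a decomposition into subrepresentations.

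For the ``only if'' direction, I would show that if $\mathrm{char}\,k=p$ divides $|G|$ then the regular representation $k[G]$ fails to be completely reducible. Consider the augmentation homomorphism $\epsilon\colon k[G]\to k$, $\sum a_g g\mapsto\sum a_g$, with kernel the augmentation ideal $I$, together with the norm element $n=\sum_{g\in G}g$. A direct computation with coefficients shows that the space of $G$-fixed vectors in $k[G]$ is exactly the line $k\cdot n$. Since $\epsilon(n)=|G|=0$ in $k$, we have $n\in I$. If $k[G]$ were completely reducible, $I$ would have a $G$-invariant complement $M$; but $k[G]/I\cong k$ is the trivial module, so $M$ is a trivial submodule, hence $M\subseteq k\cdot n\subseteq I$, contradicting $k[G]=I\oplus M$ with $M\neq 0$.

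The main obstacle here is bookkeeping rather than a deep idea: one must carefully verify that $\bar\pi$ is genuinely a $G$-module morphism and a projection onto $U$, and, in the converse, pin down that the fixed subspace of $k[G]$ is one-dimensional and spanned by $n$, so that no trivial complement to $I$ can exist. A minor point worth a remark is the reduction of ``complete reducibility'' to ``every submodule is a direct summand''; for infinite-dimensional $V$ this needs a Zorn's lemma argument, but for the finite-dimensional representations relevant to this thesis the induction on $\dim V$ is entirely elementary.
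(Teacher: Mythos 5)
The paper states Maschke's Theorem as background in the Preliminaries chapter without proof, citing standard references, so there is no in-paper argument to compare against; your proposal is the classical textbook proof and it is correct. The averaging construction of $\bar\pi$ is sound: equivariance, image in $U$, and $\bar\pi|_U=\id_U$ all check out, and invertibility of $|G|$ in the field is indeed the only place the hypothesis enters. Your converse via the regular representation, the augmentation ideal $I$, and the norm element $n=\sum_g g$ is also correct. One small remark: as literally phrased, the ``only if'' direction cannot hold for an arbitrary single representation (the trivial representation is always completely reducible), so the theorem must be read as a statement about \emph{all} representations of $G$ over the given field; your proof implicitly adopts this reading by producing one specific non-semisimple module, namely $k[G]$, which is the right interpretation and the standard one.
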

In this thesis we only deal with fields of characteristic zero, hence all representations of finite groups are semisimple.

A representation $\rho:G\rightarrow\GL(V)$ is \emph{faithful} if it is a monomorphism, i.e.~if $\rho(G)\cong G$. Faithfulness and irreducibility are independent properties.


Given two $G$-modules $U$ and $V$, we define the set of \emph{$G$-linear} maps $U\rightarrow V$ by
\[\Hom_{G}(U,V)=\{f\in\Hom(U,V)\;|\;fg=gf,\;\forall g\in G\}\]
and $\End_{G}(V)=\Hom_{G}(V,V)$. Two $G$-modules $U$ and $V$ are isomorphic if and only if $\Hom_{G}(U,V)$ contains an invertible element. This is an isomorphism.

\begin{Lemma}[Schur's Lemma]
\label{lem:schur}
If $U$ and $V$ are simple $G$-modules, then $\Hom_{G}(U,V)$ is a division ring, i.e.~then any nonzero $G$-linear map $U\rightarrow V$ is invertible.
\end{Lemma}
A finite dimensional division ring $D$ over an algebraically closed field $k$ is isomorphic to the field, $D=k 1_D$. In particular, by Schur's Lemma, all $G$-linear maps in $\End(V)$ over the complex numbers are scalars \[\End_{G}(V)=\CC\Id\] whenever $V$ is an irreducible representation.
This fact can be used to show that any complex representation $V$ has a \emph{unique} decomposition
\[V=U_1\oplus\cdots\oplus U_n,\] where $U_i$ is in turn a direct sum of irreducible representations which are pairwise isomorphic, but irreducible components from different summands, $U_i$ and $U_j$, are not isomorphic \cite{fulton1991representation,serre1977linear}. Such a summand is called an \emph{isotypical component} of $V$. 

Representation theory over the real numbers leads to division rings over the real numbers through Schur's Lemma. A real division ring is isomorphic to either the real numbers, the complex numbers or the quaternions. At this stage already the real theory is more complicated than the complex theory. The study of invariant bilinear forms is one way to get a handle on real representations. Before turning to this subject we introduce the \emph{character} of a representation.


\subsection{Character Theory}
\label{sec:character theory}
Representations are characterised by their trace. Fittingly the trace of a representation $\rho:G\rightarrow\GL(V)$ is called the \emph{character}. It is a map $\chi:G\rightarrow k$ defined by
\[\chi(g)=\tr\rho(g),\]
where $k$ is the field of the vector space $V$. 
Two representations are isomorphic if and only if they have the same character \cite{fulton1991representation, serre1977linear}.
We say that $\rho$ is a \emph{representation affording $\chi$} or $V$ is a \emph{module affording $\chi$}. It is often convenient to write $\chi_V$ for the character of $V$, or to write $V_\chi$ or $\rho_\chi$ for the representation affording $\chi$.  A character is called irreducible if the related representation is irreducible.
We notice that
\begin{align*}
\chi_V(1)&=\dim V,\\
\chi(g^{-1})&=\overline{\chi(g)},\\
\chi(hgh^{-1})&=\chi(g),
\end{align*}
where the bar denotes complex conjugation. 
For the middle equality we assume that the field of $V$ is a subfield of $\CC$. The equality then follows from the fact that $g$ has finite order.
The last equality shows that a character is a \emph{class function}, that is, it is constant on the conjugacy classes of the group.

A group action on two vector spaces $U$ and $V$ induces an action on their direct sum and tensor product by $g(u\oplus v)=gu\oplus gv$ and $g(u\otimes v)=gu\otimes gv$. The symmetric and alternating square of a tensor product are subrepresentations: $V\otimes V=S^2V\oplus \wedge^2 V$. Their characters are related by
\begin{align*}
\chi_{U\oplus V}&=\chi_U+\chi_V,\\
\chi_{U\otimes V}&=\chi_U\chi_V,\\
\chi_{S^2V}(g)&=\nicefrac{1}{2}\left(\chi_V(g)^2+\chi_V(g^2)\right),\\
\chi_{\wedge^2 V}(g)&=\nicefrac{1}{2}\left(\chi_V(g)^2-\chi_V(g^2)\right).
\end{align*}

On the space of functions $G\rightarrow \CC$ we define a Hermitian inner product
\begin{equation}
\label{eq:inner product}
(\phi,\psi)=\frac{1}{|G|}\sum_{g\in G}\phi(g)\overline{\psi(g)}.
\end{equation}
The power of character theory is largely due to the orthogonality relations.  Let \[\Irr(G)=\{\text{irreducible characters of }G\}.\]
\begin{Theorem}[First orthogonality relation]
\label{thm:characters orthonormal}
The irreducible characters of a finite group form an orthonormal basis with respect to (\ref{eq:inner product}) for the space of class functions of the group,
\[(\chi,\psi)=\delta_{\chi\psi},\qquad \chi,\psi\in\Irr(G).\]
In particular, the number of irreducible representations $|\Irr(G)|$ equals the dimension of the space of class functions, that is, the number of conjugacy classes of $G$.
\end{Theorem}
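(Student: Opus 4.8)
The plan is to establish the orthonormality of irreducible characters via Schur's Lemma (Lemma \ref{lem:schur}), and then to deduce the dimension count for the space of class functions as a corollary. First I would prove the key averaging lemma: for any $G$-module homomorphism $\phi \in \Hom(U,V)$ between two $G$-modules, the averaged map $\phi^\# = \frac{1}{|G|}\sum_{g\in G} \rho_V(g)\,\phi\,\rho_U(g)^{-1}$ lies in $\Hom_G(U,V)$. This is a routine reindexing of the sum, using that $\rho$ is a homomorphism and that summing over $G$ is translation-invariant. The point of this construction is that it manufactures $G$-linear maps out of arbitrary linear maps, and Schur's Lemma then pins them down completely: if $U,V$ are simple and nonisomorphic, $\phi^\# = 0$; if $U = V$ is simple, $\phi^\#$ is a scalar, and taking traces shows that scalar is $\frac{1}{\dim V}\tr\phi$ (since $\tr\phi^\# = \tr\phi$ by conjugation-invariance of the trace).

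Next I would apply this with $\phi$ running over the elementary matrices $E_{ij}$ in suitable bases of $U$ and $V$, i.e.\ compute the previous identities entrywise. Writing $\rho_U(g) = (a_{ij}(g))$ and $\rho_V(g) = (b_{kl}(g))$, the vanishing (resp.\ scalar) statements become the Schur orthogonality relations for matrix coefficients:
\begin{align*}
&\frac{1}{|G|}\sum_{g\in G} b_{ki}(g)\,a_{jl}(g^{-1}) = 0 \quad\text{when } U\not\cong V \text{ simple},\\
&\frac{1}{|G|}\sum_{g\in G} a_{ki}(g)\,a_{jl}(g^{-1}) = \frac{1}{\dim U}\,\delta_{kl}\delta_{ij} \quad\text{when } U \text{ simple}.
\end{align*}
Summing over $i=k$ and $j=l$ and using $\chi(g^{-1}) = \overline{\chi(g)}$ (valid since the field is a subfield of $\CC$, as noted in the excerpt) collapses these to $(\chi_U,\chi_V) = 0$ for nonisomorphic irreducibles and $(\chi_U,\chi_U) = 1$ for irreducible $U$. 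This gives the stated relation $(\chi,\psi) = \delta_{\chi\psi}$ for $\chi,\psi \in \Irr(G)$, and in particular shows the irreducible characters are linearly independent in the space of class functions.

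For the final sentence — that $|\Irr(G)|$ equals the number of conjugacy classes — I would argue that the irreducible characters not only are linearly independent but in fact span the space of class functions. The standard approach: suppose $f$ is a class function orthogonal to every irreducible character; for each irreducible representation $\rho_\chi$ form the operator $T_f = \sum_{g} f(g)\rho_\chi(g)$, check it is $G$-linear (using that $f$ is a class function), hence a scalar by Schur, and compute that scalar as a multiple of $(\,\overline{f},\chi\,)$, which vanishes by assumption; so $T_f = 0$ on every irreducible, hence on the regular representation, which forces $f = 0$. Therefore $\Irr(G)$ is an orthonormal basis of the class functions, whose dimension is visibly the number of conjugacy classes. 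The main obstacle is really just bookkeeping: keeping the index conventions for matrix coefficients straight and being careful with the $g \leftrightarrow g^{-1}$ substitution and complex conjugation, since since the whole result is a chain of elementary but easily-miswritten manipulations rather than a single deep idea; the one genuinely substantive input is Schur's Lemma, which is already available.
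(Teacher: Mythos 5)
Your proof is correct. Note, however, that the paper does not actually prove Theorem \ref{thm:characters orthonormal}: it appears in the Preliminaries chapter, which the author explicitly flags as a summary of known results with proofs deferred to the cited references (Serre, Fulton--Harris). The argument you give --- averaging an arbitrary linear map to produce a $G$-linear one, invoking Schur's Lemma (Lemma \ref{lem:schur}) to get the matrix-coefficient orthogonality relations, summing diagonal entries with $\chi(g^{-1})=\overline{\chi(g)}$ to obtain $(\chi,\psi)=\delta_{\chi\psi}$, and then proving completeness by showing that a class function orthogonal to all irreducible characters annihilates the regular representation --- is precisely the standard proof found in those references, so there is nothing to reconcile; the one genuinely substantive input is indeed Schur's Lemma, which the paper has already recorded.
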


There is a particular representation that can be used to find more interesting properties of characters, the \emph{regular representation}. Consider the vector space with basis $\{e_g\;|\;g\in G\}$ and turn it into a $G$-module by defining \[ge_h=e_{gh}.\] In this basis each group element is represented by a permutation matrix and only the identity element fixes any basis vectors. In particular, we see that the character is
\[
\regchar{}(g)=\left\{\begin{array}{ll}
|G|&g=1,\\
0& g\ne1.\end{array}\right.
\]
Using the first orthogonality relation, Theorem \ref{thm:characters orthonormal}, one can find the decomposition of the regular representation into irreducible representations. Indeed, if $\regchar{}=\sum_{\chi\in\Irr(G)}n_\chi\chi$ then
\[n_\chi=(\chi,\regchar{})=\frac{1}{|G|}\sum_{g\in G}\chi(g)\overline{\regchar{}(g)}=\chi(1)=\dim V_\chi.\]
When we evaluate the regular character we find the identity
\[
\sum_{\chi\in\Irr(G)}\chi(1)\chi(g)=\left\{\begin{array}{ll}
|G|&g=1,\\
0& g\ne1.\end{array}\right.
\]
In fact, a more general orthogonality statement holds.
\begin{Proposition}[Second orthogonality relation]
\label{prop:second orthogonality relation}
\[\sum_{\chi\in\Irr(G)}\chi(g)\overline{\chi(h)}=\left\{\begin{array}{ll}
|C_G(g)|&[g]=[h],\\
0& [g]\ne[h].\end{array}\right.\]
\end{Proposition}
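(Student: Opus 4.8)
The plan is to deduce the second orthogonality relation from the first one (Theorem \ref{thm:characters orthonormal}) by a matrix/duality argument. Form the character table $X = (\chi(g))_{\chi \in \Irr(G),\,[g]}$, a square matrix indexed by irreducible characters (rows) and conjugacy classes (columns), the latter with a chosen representative $g$ for each class $[g]$. The first orthogonality relation says that the rows of $X$, weighted appropriately, are orthonormal; rephrasing this as a matrix identity and then reading it in the other order will give that the columns satisfy a dual relation, which is exactly the claim.

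First I would make the weighting explicit. Writing $c(g) = |[g]|$ for the size of the conjugacy class of $g$, the inner product $(\ref{eq:inner product})$ on class functions becomes $(\phi,\psi) = \frac{1}{|G|}\sum_{[g]} c(g)\,\phi(g)\overline{\psi(g)}$, the sum running over conjugacy classes. Theorem \ref{thm:characters orthonormal} therefore reads
\[
\frac{1}{|G|}\sum_{[g]} c(g)\,\chi(g)\overline{\psi(g)} = \delta_{\chi\psi}, \qquad \chi,\psi\in\Irr(G).
\]
Introduce the matrix $A$ with entries $A_{\chi,[g]} = \sqrt{c(g)/|G|}\,\chi(g)$. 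The displayed relation says $A\overline{A}^{T} = \Id$, i.e.\ $A$ is a unitary matrix (it is square because $|\Irr(G)|$ equals the number of conjugacy classes, again by Theorem \ref{thm:characters orthonormal}). For a square matrix, $A\overline{A}^T = \Id$ forces $\overline{A}^T A = \Id$ as well. Writing this out in coordinates indexed by two classes $[g], [h]$ gives
\[
\sum_{\chi\in\Irr(G)} \frac{\sqrt{c(g)c(h)}}{|G|}\,\overline{\chi(g)}\chi(h) = \delta_{[g][h]},
\]
hence $\sum_{\chi\in\Irr(G)}\overline{\chi(g)}\chi(h) = \frac{|G|}{c(g)}\delta_{[g][h]}$. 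Finally, by the orbit--stabiliser theorem $c(g) = |G|/|C_G(g)|$, so $|G|/c(g) = |C_G(g)|$, and taking complex conjugates (noting the right-hand side is real and the left-hand side is symmetric in $g,h$) yields the stated formula. For the case $[g]\ne[h]$ one may instead argue directly: the class function $\psi$ that is $1$ on $[g]$ and $0$ elsewhere expands in the orthonormal basis $\Irr(G)$ as $\psi = \sum_\chi (\psi,\chi)\chi$ with $(\psi,\chi) = \frac{c(g)}{|G|}\overline{\chi(g)}$, and evaluating at $h$ gives $0 = \frac{c(g)}{|G|}\sum_\chi \overline{\chi(g)}\chi(h)$, which is the vanishing half of the claim; the diagonal case then follows from evaluating the same expansion at $g$ itself, giving $1 = \frac{c(g)}{|G|}\sum_\chi |\chi(g)|^2$.

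The only real content is the elementary linear-algebra fact that a square matrix with $A\overline{A}^T = \Id$ also satisfies $\overline{A}^T A = \Id$ — equivalently that a one-sided inverse of a square matrix is a two-sided inverse — together with the bookkeeping that turns the field inner product $(\ref{eq:inner product})$ into a sum over conjugacy classes with the correct weights. I expect the main (minor) obstacle to be purely cosmetic: keeping the conjugacy-class weights $c(g)$ and the centraliser orders $|C_G(g)|$ straight through the transposition, and making sure the square-ness of the character table — which is itself a consequence of Theorem \ref{thm:characters orthonormal} — is invoked before one passes from the row relation to the column relation. No deeper input is needed.
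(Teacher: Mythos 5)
Your proof is correct. Note that the paper does not actually prove this proposition: the Preliminaries chapter only derives the special case $h=1$ by decomposing the regular representation, and then asserts the general statement, leaving the proof to the cited references. So there is no "paper's route" to compare against in a strict sense; your argument fills the gap with the standard duality proof. The two ingredients you use — that the weighted character table $A_{\chi,[g]}=\sqrt{|[g]|/|G|}\,\chi(g)$ is square (because $|\Irr(G)|$ equals the number of conjugacy classes) and that a one-sided inverse of a square matrix is two-sided — are both legitimately available, the first being part of Theorem \ref{thm:characters orthonormal} as stated in the paper. Your fallback argument via the indicator class function $1_{[g]}$ is also sound, since that theorem asserts the irreducible characters form a \emph{basis} of the class functions, not merely an orthonormal set. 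One cosmetic remark: the line
\[
\sum_{\chi\in\Irr(G)} \tfrac{\sqrt{c(g)c(h)}}{|G|}\,\overline{\chi(g)}\chi(h) = \delta_{[g][h]}
\ \Longrightarrow\
\sum_{\chi\in\Irr(G)}\overline{\chi(g)}\chi(h) = \tfrac{|G|}{c(g)}\delta_{[g][h]}
\]
silently uses $c(g)=c(h)$, which only holds when $[g]=[h]$; since both sides vanish otherwise this is harmless, but it deserves a clause. Compared with the regular-representation computation the paper sketches for $h=1$, your approach buys the full two-variable statement at essentially no extra cost, which is exactly what the paper needs later (it uses the proposition to read off centraliser orders from the character tables in Section \ref{sec:characters}).
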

Here $C_G(g)=\{h\in G\;|\;hg=gh\}$ is the centraliser of $g$ and $[g]=\{g'\in G\;|\;\exists h\,:\;hg'=gh\}$ is the conjugacy class of $g$.

\subsection{Invariant Bilinear Forms}
\label{sec:invariant bilinear forms}

An action of a group $G$ on a $k$-space $V$ induces an action of the group on its dual $V^\ast=\Hom (V,k)$ by the requirement that the natural pairing of $V$ and $V^\ast$ is respected. If $v\in V$, $v^\ast\in V^\ast$ and $g\in G$, we require $v^\ast(v)=(gv^\ast)(gv)$ so that\[gv^\ast=v^\ast\circ g^{-1}.\]
This implies that \[\chi_{V^\ast}=\overline{\chi_V}\]
and we see that $V$ and $V^\ast$ are isomorphic as $G$-modules if and only if the character $\chi_V$ is real valued.

This idea can be generalised to $\Hom(U,V)$. As linear spaces there is an isomorphism \[\Hom(U,V)\cong V\otimes U^\ast.\] If $U$ and $V$ are $G$-modules then so is $V\otimes U^\ast$ by the above constructions. By requiring the linear isomorphism $\Hom(U,V)\cong V\otimes U^\ast$ to be a $G$-module isomorphism, we obtain an action of $g\in G$ on $f\in\Hom(U,V)$. Indeed, if $f$ corresponds to a pure tensor $v\otimes u^\ast \in V\otimes U^\ast$ then \[gf=g(v\otimes u^\ast)=gv\otimes gu^\ast=gv\otimes u^\ast\circ g^{-1}=g\circ f\circ g^{-1},\]
and we extend linearly to the full space.
In particular, this gives another description of $G$-linear maps:
\[\Hom_G(U,V)\cong (V\otimes U^\ast)^G.\]

We are interested in bilinear forms on a complex $G$-module $V$, that is, elements of $V^\ast\otimes V^\ast$. In particular, we are interested in \emph{invariant} bilinear forms. If we assume that $V$ is irreducible then Schur's Lemma gives
\[(V^\ast\otimes V^\ast)^G\cong\Hom_G(V,V^\ast)=\delta_{\chi_V,\overline{\chi_V}}\CC\Id.\]
In other words, if $\chi_V$ is real valued then there is a unique invariant bilinear form and it is nondegenerate. If $\chi_V$ is not real valued then there is no invariant bilinear form.
Let $\chi_V$ be real valued. By the decomposition of $G$-modules $V^\ast\otimes V^\ast=S^2V^\ast\oplus \wedge^2V^\ast$ we know that the invariant bilinear form is either symmetric or antisymmetric. To distinguish between these cases, one can use the so called \emph{Frobenius-Schur indicator} $\iota:\Irr(G)\rightarrow\{-1,0,1\}$ defined by
\begin{equation}
\label{eq:fs indicator}
\iota_\chi=\dim (S^2 V_\chi)^G-\dim(\wedge^2 V_\chi)^G=\frac{1}{|G|}\sum_{g\in G}\chi(g^2).
\end{equation}
The second equality can be derived from the character formulas of the previous section as follows. Let $\triv:G\rightarrow \{1\}$ be the trivial character. Then
$\dim (S^2 V)^G-\dim(\wedge^2 V)^G=(\chi_{S^2 V}-\chi_{\wedge^2 V},\triv)=\frac{1}{|G|}\sum_{g\in G}\chi_{S^2 V}(g)-\chi_{\wedge^2 V}(g)=\frac{1}{|G|}\sum_{g\in G}\chi(g^2)$. 

Complex irreducible representations with Frobenius-Schur indicator $1$, $0$ or $-1$ are respectively known as representations of \emph{real type}, \emph{complex type} or \emph{quaternionic type}. This is due to the classification of real division rings, such as the $G$-linear endomorphisms $\End_G(V)$ on a real simple $G$-module $V$ \cite{fulton1991representation}.

\section{Projective Representations}
\label{sec:schur covers}


In this section we present a classical and elementary exposition of the theory of projective representation, as in the original work by Schur \cite{schur1904darstellung,schur1911darstellung},  following the clear account of Curtis \cite{curtis1999pioneers}.

\subsection{The Schur Multiplier}
Projective representations of a group $G$ are homomorphisms
\[\rho:G\rightarrow\PGL(V),\] where $\PGL(V)$ is the \emph{projective general linear group}, the quotient of $\GL(V)$ by the scalar maps $k\Id$. If we take a nonzero representative of $\rho(g)$ in $\GL(V)$ for each $g\in G$ one obtains a map
\[\tilde{\rho}:G\rightarrow\GL(V).\]
Because $\rho$ is a morphism we have \[\tilde{\rho}(g)\tilde{\rho}(h)=c(g,h)\tilde{\rho}(gh)\] for a map $c:G\times G\rightarrow \CC^\ast$ (where $\CC^\ast$ is the multiplicative group $\CC\setminus\{0\}$) and by associativity one finds that
\[c(g_1,g_2)c(g_1g_2,g_3)=c(g_1,g_2g_3)c(g_2,g_3).\]
This is the defining property of a $2$-\emph{cocycle} $c$ of $G$. A map $\tilde{\rho}:G\rightarrow\GL(V)$ with the property that $\tilde{\rho}(g)\tilde{\rho}(h)=c(g,h)\tilde{\rho}(gh)$ for some cocycle $c$ is also called a projective representation. One obtains a homomorphism $G\rightarrow\PGL(V)$ by composing $\tilde{\rho}$ with the quotient map $\GL(V)\rightarrow\PGL(V)$.

Of course, we had a choice in constants to define $\tilde{\rho}$ from $\rho$. Any other choice is given by $\tilde{\rho}'(g)=b(g)\tilde{\rho}(g)$ for some map $b:G\rightarrow \CC^\ast$. The cocycle related to $\tilde{\rho}'$ is seen to be \[c'(g,h)=\frac{b(g)b(h)}{b(gh)}c(g,h).\]
That is, $c$ and $c'$ differ by a \emph{coboundary} $db(g,h)=\frac{b(g)b(h)}{b(gh)}.$ Therefore, the projective representation $\rho$ determines the cocycle $c$ uniquely up to coboundaries, i.e.~$\rho$ defines an element in the second cohomology group \[H^2(G,\CC^\ast).\]
Multiplication gives $H^2(G,\CC^\ast)$ the structure of an abelian group. This group is also known as the \emph{Schur multiplier} of $G$ and denoted $M(G)$.

Now that we have established equivalence, we will make no distinction between the homomorphism $\rho$ and the pair $(\tilde{\rho},[c])$, where $[c]$ is the related element in the Schur multiplier. 

\subsection{Central Extensions}
The group $G^\flat$ is called an \emph{extension} of $G$ by $Z$ if there exists a short exact sequence
\[1\rightarrow Z\rightarrow G^\flat \xrightarrow{\pi} G\rightarrow 1,\]
i.e., if $Z\triangleleft G^\flat$ and $G\cong\bigslant{G^\flat}{Z}$.
When (the image of) $Z$ is in the centre of $G^\flat$, the extension is called \emph{central}.

The isomorphism $\phi:G\rightarrow\bigslant{G^\flat}{Z}$ implies the existence of a section 
\[{s}:G\rightarrow G^\flat,\qquad \pi\circ s=\text{id},\qquad s(1)=1,\qquad \phi(g)=s(g)Z\] sending $g\in G$ to a representative of the coset $\phi(g)=s(g)Z$.
The fact that $\phi$ is a morphism says $s(g)s(h)Z=s(g)Zs(h)Z=\phi(g)\phi(h)=\phi(gh)=s(gh)Z$ and thus ensures the existence of a map \cite{humphreys1996course} $z:G\times G\rightarrow Z$, defined by \[s(g)s(h)=z(g,h)s(gh)\]
with the properties
\begin{align*}
&z(g,1)=1=z(1,g)\\
&z(g_1,g_2)z(g_1g_2,g_3)=z(g_1,g_2g_3)z(g_2,g_3).
\end{align*}

Now consider an irreducible linear representation \[\tau:G^\flat\rightarrow \GL(V).\] If $z\in Z<Z(G^\flat)$ then Schur's Lemma ensures that there is a scalar $c\in\CC^\ast$ such that $\tau(z)=c\Id$. In particular, we can define a $2$-cocycle $c:G\times G \rightarrow \CC^\ast$ by \[\tau(z(g,h))=c(g,h)\Id.\] 
Then the map \[\tilde{\rho}:G\rightarrow \GL(V),\qquad \tilde{\rho}(g)=\tau(s(g)),\]
has the property that 
\begin{align*}
\tilde{\rho}(g)\tilde{\rho}(h)&=\tau(s(g))\tau(s(h))=\tau(s(g)s(h))\\
&=\tau(z(g,h)s(gh))=\tau(z(g,h))\tau(s(gh))=c(g,h)\tilde{\rho}(gh).
\end{align*}
In other words, $\tilde{\rho}$ is a projective representation of $G$. We see that \emph{each irreducible representation of a central extension $G^\flat$ of $G$ induces a projective representation of $G$.}

A century ago, Schur studied projective representations \cite{curtis1999pioneers, schur1904darstellung} and wondered whether an extension $G^\flat$ of $G$ exists such that all projective representations of $G$ are induced by linear representations of $G^\flat$, in the way described in this section. One would say $G$ is \emph{sufficiently extended}. Particullarly he was interested in sufficient extensions of minimal order. 
He found that such groups exist, and nowadays the sufficient extensions $G^\flat$ of minimal order are called \emph{Schur covers} of $G$ (or \emph{Schur extensions}, formerly known as \emph{Darstellunggruppe} or \emph{representation groups}). This group is not unique in general, contrary to the \emph{Schur multiplier} $M(G)$, which takes the place of $Z$ in the exact sequence of such a minimal extension. 

\subsection{Schur Covers and Other Sufficient Extensions}
We are now familiar with projective representations, central extensions, and how the latter can be used to find the former. However, this does not help us to find an extension which provides all projective representations, let alone a Schur cover. The missing information is given by the next theorem, which is due to Schur \cite{schur1904darstellung}. We present a reformulation by Curtis \cite{curtis1999pioneers}.
\begin{Theorem}
\label{thm:sufficiently extended groups}
Let $G^\flat$ be a central extension of $G$ with kernel $Z$. Then the intersection of the derived subgroup $[G^\flat,G^\flat]$ and the kernel $Z$ is isomorphic to a subgroup of the Schur multiplier $M(G)$. There is an isomorphism $[G^\flat,G^\flat]\cap Z\cong M(G)$ if and only if all projective representations of $G$ are induced by linear representations of $G^\flat$.
\end{Theorem}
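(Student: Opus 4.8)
The plan is to prove the two assertions separately: first that $[G^\flat,G^\flat]\cap Z$ always embeds into $M(G)$, and then that equality holds precisely when $G^\flat$ is sufficiently extended. Throughout I would fix a set-theoretic section $s:G\to G^\flat$ with $s(1)=1$ and the associated factor system $z:G\times G\to Z$ defined by $s(g)s(h)=z(g,h)s(gh)$, which by the computation already given in the excerpt is a $Z$-valued $2$-cocycle. The key auxiliary object is the \emph{transgression-type} map that sends a homomorphism $\chi:Z\to\CC^\ast$ to the class $[\chi\circ z]\in H^2(G,\CC^\ast)=M(G)$; this is well defined on cohomology since changing $s$ alters $z$ by a coboundary, and $\chi\circ(\,\cdot\,)$ carries coboundaries to coboundaries. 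This gives a homomorphism $t:\mathrm{Hom}(Z,\CC^\ast)\to M(G)$, and the first step is to identify its image and kernel.

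For the embedding statement, the idea is that the kernel of $t$ consists of those characters $\chi$ for which $\chi\circ z$ is a coboundary $\partial b$; unwinding this, $\chi\circ z=\partial b$ means the map $g\mapsto b(g)^{-1}s(g)$ is "almost" a homomorphism, and more precisely one checks that $\chi$ is trivial on $[G^\flat,G^\flat]\cap Z$ exactly when $\chi$ factors through the abelianisation in a way compatible with splitting off $Z$. Concretely: a character $\chi\in\mathrm{Hom}(Z,\CC^\ast)$ lies in $\ker t$ if and only if $\chi$ extends to a homomorphism $G^\flat\to\CC^\ast$ (because such an extension is exactly a choice of $b$ trivialising $\chi\circ z$), and $\chi$ extends to $G^\flat$ if and only if $\chi|_{[G^\flat,G^\flat]\cap Z}$ is trivial (a character of an abelian-by-anything group extends iff it kills the relevant part of the commutator subgroup; here one uses that $Z$ is central). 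Hence $\ker t=\{\chi\in\mathrm{Hom}(Z,\CC^\ast):\chi|_{[G^\flat,G^\flat]\cap Z}=1\}=\mathrm{Hom}(Z/([G^\flat,G^\flat]\cap Z),\CC^\ast)$. Since $Z$ is a finite abelian group, Pontryagin duality over $\CC^\ast$ gives $\mathrm{Hom}(Z,\CC^\ast)\cong Z$ and $|\mathrm{Hom}(Z,\CC^\ast)/\ker t|=|[G^\flat,G^\flat]\cap Z|$, so $\mathrm{im}\,t\cong [G^\flat,G^\flat]\cap Z$ as a subgroup of $M(G)$, proving the first claim.

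For the second claim, I would argue that $G^\flat$ is sufficiently extended if and only if $t$ is surjective, and then combine with the isomorphism $\mathrm{im}\,t\cong[G^\flat,G^\flat]\cap Z$ just established — surjectivity of $t$ onto $M(G)$ together with $\mathrm{im}\,t\cong[G^\flat,G^\flat]\cap Z$ forces $[G^\flat,G^\flat]\cap Z\cong M(G)$, and conversely if $[G^\flat,G^\flat]\cap Z\cong M(G)$ then $\mathrm{im}\,t$ is a subgroup of $M(G)$ of full order, hence all of $M(G)$. So everything reduces to: \emph{every projective representation of $G$ lifts to a linear representation of $G^\flat$ iff every class in $M(G)$ is of the form $[\chi\circ z]$ for some $\chi:Z\to\CC^\ast$.} The forward direction: given $[c]\in M(G)$, realise it by a projective representation $\tilde\rho$ of $G$ with cocycle $c$ (every class arises this way, e.g.\ from the twisted group algebra); by hypothesis there is a linear $\tau:G^\flat\to\GL(V)$ inducing it, decompose into irreducibles, and on each irreducible summand $Z$ acts by a scalar character $\chi$, giving $[c]=[\chi\circ z]$ as in the excerpt's computation. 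The reverse direction: given a projective representation $\tilde\rho$ of $G$ with cocycle $c$, write $[c]=[\chi\circ z]$, adjust $\tilde\rho$ within its projective equivalence class so that its cocycle is literally $\chi\circ z$, and then verify that $g\mapsto$ (appropriate combination using $\tilde\rho$ and $\chi$) is a genuine homomorphism $G^\flat\to\GL(V)$ — this is the standard check that a projective representation with cocycle pulled back from $Z$ linearises over the extension.

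The main obstacle I anticipate is the careful bookkeeping in identifying $\ker t$ with characters trivial on $[G^\flat,G^\flat]\cap Z$: one must be precise about the difference between a character of $Z$ extending to $G^\flat$ and merely extending to $[G^\flat,G^\flat]\cdot Z$, and use centrality of $Z$ at the right moment to push the extension all the way to $G^\flat$ (this uses that $G^\flat/([G^\flat,G^\flat]Z)$ is a quotient of the abelianisation of $G$, onto which any character of $Z$ killing $[G^\flat,G^\flat]\cap Z$ descends freely). The rest is a matter of assembling standard facts about twisted group algebras and Pontryagin duality for finite abelian groups, neither of which presents a genuine difficulty.
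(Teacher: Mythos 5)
The paper does not actually prove this theorem: it is quoted as a classical result of Schur, in Curtis's reformulation, and immediately applied, so there is no in-text argument to compare yours against. Your proposal is the standard proof via the transgression map $t:\Hom(Z,\CC^\ast)\to M(G)$, $\chi\mapsto[\chi\circ z]$, and it is essentially correct. The identification of $\ker t$ with the characters of $Z$ extending to $G^\flat$, and of those with the characters trivial on $[G^\flat,G^\flat]\cap Z$, both go through as you describe (the second step needs divisibility of $\CC^\ast$ so that a character of the image of $Z$ in the finite abelian quotient $G^\flat/[G^\flat,G^\flat]$ extends to the whole quotient). The equivalence ``sufficiently extended $\Leftrightarrow$ $t$ surjective'' is also handled correctly in both directions: every class in $M(G)$ is realised by an irreducible projective representation via the twisted group algebra, an irreducible linear representation of $G^\flat$ composed with $s$ has cocycle $\chi\circ z$ by Schur's Lemma applied to the central $Z$, and conversely a projective representation whose cocycle is literally $\chi\circ z$ linearises by $\tau(w\,s(g))=\chi(w)\tilde{\rho}(g)$ for $w\in Z$.

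One point to tighten: you deduce $\im t\cong[G^\flat,G^\flat]\cap Z$ from the equality of orders $|\Hom(Z,\CC^\ast)/\ker t|=|[G^\flat,G^\flat]\cap Z|$, but equality of orders alone does not give an isomorphism of abelian groups. The clean route is the duality you gesture at: the restriction map $\Hom(Z,\CC^\ast)\to\Hom([G^\flat,G^\flat]\cap Z,\CC^\ast)$ is surjective (divisibility of $\CC^\ast$ again) with kernel exactly $\ker t$, so $\im t\cong\Hom([G^\flat,G^\flat]\cap Z,\CC^\ast)\cong[G^\flat,G^\flat]\cap Z$, the last isomorphism being Pontryagin duality for finite abelian groups. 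With that adjustment, and with finiteness of $G$ made explicit (it is needed both here and in the final step where a subgroup of $M(G)$ of full order is declared equal to $M(G)$), your argument is a complete proof of a statement the paper only cites.
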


The finite groups of our interest are the polyhedral groups. We state their Schur multipliers (or second cohomology groups) \cite{read1976schur, schur1911darstellung}.
\begin{Theorem}
\label{thm:schur multipliers of polyhedral groups}
The Schur multiplier of $\zn{N}$ and $\DD_{2M-1}$ is trivial and the Schur multiplier of $\DD_{2M}$, $\TT$, $\OO$ and $\YY$ is somorphic to $\zn{2}$.
\end{Theorem}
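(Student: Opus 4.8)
The plan is to compute the Schur multiplier $M(G) = H^2(G,\CC^\ast)$ for each polyhedral group directly from a presentation, using the Hopf formula together with the standard fact that for a finite group $G$ one has $M(G) \cong H_2(G,\ZZ)$. Recall Hopf's formula: if $G = F/R$ with $F$ free, then $H_2(G,\ZZ) \cong (R \cap [F,F])/[F,R]$. The polyhedral groups come with the convenient presentation from Section \ref{sec:polyhedral groups}, namely $G = \langle r,s \mid r^{\nal} = (rs)^{\nbe} = s^{\nga} = 1\rangle$ (with $s$ redundant in the cyclic case), so in each of the five cases $F$ is free of rank $2$ and $R$ is the normal closure of three explicit relators. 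The task reduces to a finite bookkeeping exercise in each case.

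First I would dispose of the cyclic groups $\zn{N}$: a cyclic group has a presentation with one generator and one relator, so $F$ has rank $1$, $[F,F] = 1$, and hence $H_2 = (R\cap[F,F])/[F,R] = 1$ immediately. For $\DD_{2M-1}$ I would use that $\DD_{N}$ with $N$ odd has abelianisation $\zn{2}$ (Table \ref{tab:various properties of polyhedral groups}) and is generated by a single conjugacy-rich pair; more efficiently, one can invoke the known stability result that $M(\DD_N)$ is trivial when $N$ is odd and $\zn{2}$ when $N$ is even, which follows from the fact that $\DD_{2M-1}$ has a $2$-Sylow subgroup that is cyclic (of order $2$), and a finite group whose Sylow subgroups are all cyclic has trivial Schur multiplier — while for $\DD_{2M}$ the $2$-Sylow is itself dihedral, contributing a $\zn 2$. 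For $\TT = A_4$, $\OO = S_4$, $\YY = A_5$ I would quote (or re-derive via the $5$-term exact sequence applied to the natural $2$-dimensional projective representation coming from $\SL_2(\CC)$) that the binary polyhedral groups $\bti, \boi, \byi \subset \SL_2(\CC)$ are central extensions of $\TT,\OO,\YY$ by $\zn 2$ which are perfect in the relevant sense, so that Theorem \ref{thm:sufficiently extended groups} identifies $[G^\flat,G^\flat]\cap Z \cong M(G)$, giving $M(G) = \zn 2$; one checks these binary covers are genuinely nonsplit central extensions, which is exactly the statement $M(G)\supseteq\zn 2$, and an order/commutator count gives the reverse containment.

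The main obstacle is proving that the extension is \emph{sufficient} in the sense of Theorem \ref{thm:sufficiently extended groups} — i.e., that the binary polyhedral group really realises \emph{all} projective representations, equivalently that $[G^\flat,G^\flat] \cap Z$ is the full Schur multiplier and not a proper subgroup. This amounts to showing $M(G)$ has order at most $2$ in the tetrahedral, octahedral and icosahedral cases, which is where one genuinely needs either the Hopf-formula computation (counting that $(R\cap[F,F])/[F,R]$ is cyclic of order $2$) or an appeal to the classical tables of Schur \cite{schur1911darstellung}; the lower bound $M(G) \neq 1$ is the easy half, since the nonsplitness of $\SL_2(\CC) \to \PSL_2(\CC)$ restricted to the binary polyhedral subgroup exhibits a nontrivial class. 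Since the excerpt explicitly cites \cite{read1976schur, schur1911darstellung} for this theorem, I would present the lower bounds via the binary covers and the nonsplit argument, and cite those references for the matching upper bounds, rather than grinding through the Hopf-formula relator calculus in full.
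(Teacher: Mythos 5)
The first thing to note is that the paper offers no proof of this theorem at all: it is stated as a classical fact with citations to Schur and Read, so your proposal is necessarily taking a different route from the paper, namely actually arguing the result. Your sketch is broadly sound and in places more informative than the source. The cyclic case via the one--relator Hopf formula is correct and complete, and the odd dihedral case via ``all Sylow subgroups cyclic $\Rightarrow$ trivial multiplier'' is a clean, standard argument (the $p$-primary part of $M(G)$ embeds in $M(P)$ for a Sylow $p$-subgroup $P$, and cyclic groups have trivial multiplier). For the remaining groups your division of labour is the right one: the lower bound $M(G)\supseteq\zn{2}$ comes from exhibiting a stem extension, i.e.\ a central extension $1\to Z\to G^\flat\to G\to 1$ with $Z\subseteq[G^\flat,G^\flat]$, which by Theorem \ref{thm:sufficiently extended groups} embeds $Z$ into $M(G)$; the paper itself supplies exactly the needed commutator computations in Section \ref{sec:binary polyhedral groups} (that $-\Id\in[G^\flat,G^\flat]$ for $\bt,\bo,\by$, and the analogous fact for $\DD_{2N}\to\DD_N$ with $N$ even). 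The upper bound $|M(G)|\le 2$ is the genuinely hard half, and deferring it to \cite{read1976schur,schur1911darstellung} leaves you no worse off than the paper, which defers the entire statement.

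Two points need tightening. First, your phrase that the nonsplitness of $\SL_2(\CC)\to\PSL_2(\CC)$ restricted to the binary subgroup ``is exactly the statement $M(G)\supseteq\zn{2}$'' is not right as stated: a nonsplit central extension by $\zn{2}$ need not contribute to the Schur multiplier (consider $\zn{4}\to\zn{2}$, nonsplit with both multipliers trivial). What you need, and what Theorem \ref{thm:sufficiently extended groups} actually uses, is the stronger stem condition $Z\subseteq[G^\flat,G^\flat]$; keep that argument and drop the nonsplitness phrasing. Second, ``the $2$-Sylow of $\DD_{2M}$ is dihedral, contributing a $\zn{2}$'' only yields the \emph{upper} bound $|M(\DD_{2M})_{(2)}|\le 2$ via the embedding into $M(P)$ (the image is only the stable elements, not all of $M(P)$ a priori); the lower bound must still be supplied by the stem extension $\DD_{4M}\to\DD_{2M}$. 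With those two corrections the proposal is a valid proof modulo the cited upper bounds.
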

Notice in particular that the order of polyhedral groups equal their exponent times the order of the Schur multiplier.
\begin{equation}
\label{eq:order is exponent times schur multiplier}
|G|=\|G\||M(G)|
\end{equation}
cf.~Table \ref{tab:various properties of polyhedral groups}

With this information one can find sufficiently extended groups for the polyhedral groups. But first a lemma for future reference.

\begin{Lemma}
\label{lem:identical abelianisation}
If $G^\flat$ is a sufficient extension of $G$ then their abelianisations are isomorphic.
\end{Lemma}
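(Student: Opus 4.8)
The plan is to compare the two abelianisations through the defining surjection $\pi:G^\flat\to G$ of the central extension $1\to Z\to G^\flat\xrightarrow{\pi} G\to 1$, and to show that the induced map on abelianisations is an isomorphism. First I would record the easy half: $\pi$ carries $[G^\flat,G^\flat]$ onto $[G,G]$ (images of commutators are commutators, and $\pi$ is onto), so it descends to a surjective homomorphism $\bar\pi:\cA G^\flat=G^\flat/[G^\flat,G^\flat]\twoheadrightarrow G/[G,G]=\cA G$. Equivalently, since $G\cong G^\flat/Z$, one computes $[G^\flat/Z,G^\flat/Z]=Z[G^\flat,G^\flat]/Z$ and hence $\cA G\cong G^\flat/\bigl(Z[G^\flat,G^\flat]\bigr)$.

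The kernel of $\bar\pi$ is $Z[G^\flat,G^\flat]/[G^\flat,G^\flat]$, which by the second isomorphism theorem is isomorphic to $Z/(Z\cap[G^\flat,G^\flat])$. Therefore the lemma reduces to the single assertion
\[
Z\subseteq[G^\flat,G^\flat],
\]
i.e.\ that the kernel of a sufficient extension lies in its derived subgroup; granting this, $\bar\pi$ is an isomorphism and we are done. So the entire content of the proof is this inclusion.

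To obtain it I would invoke Theorem~\ref{thm:sufficiently extended groups}: the hypothesis that $G^\flat$ is a sufficient extension says precisely that $[G^\flat,G^\flat]\cap Z\cong M(G)$. Since for the sufficient extensions under consideration the kernel $Z$ itself is (isomorphic to) the Schur multiplier $M(G)$ — the Schur cover is a stem extension, with $Z$ taking the place of $M(G)$ in the exact sequence, cf.\ Theorem~\ref{thm:schur multipliers of polyhedral groups} for the polyhedral case — the inclusion $[G^\flat,G^\flat]\cap Z\subseteq Z$ of a subgroup into a group of the same (finite) order is an equality, giving $Z=Z\cap[G^\flat,G^\flat]\subseteq[G^\flat,G^\flat]$ as required. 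The delicate point, and the place I expect to have to argue carefully, is exactly this last step: pinning down that the kernel of a sufficient extension sits inside the derived subgroup (equivalently, has order $|M(G)|$), rather than merely meeting it in a copy of $M(G)$; everything else is a routine diagram chase with the isomorphism theorems.
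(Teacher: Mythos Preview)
Your route is genuinely different from the paper's. The paper argues via representation theory: every one-dimensional projective representation $\tilde\rho:G\to\CC^\ast$ has cocycle $c(g,h)=\tilde\rho(g)\tilde\rho(h)/\tilde\rho(gh)=d\tilde\rho(g,h)$, hence is equivalent to a linear representation, so the one-dimensional characters of $G$ and of $G^\flat$ ``coincide'', whence $\cA G\cong\cA G^\flat$ (a finite abelian group being recoverable from its character group). Your reduction via the isomorphism theorems to the single inclusion $Z\subseteq[G^\flat,G^\flat]$ is correct and transparent, and yields the same conclusion once that inclusion is in hand.

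The gap you flag is real and, at the stated generality, cannot be closed: a sufficient extension need not be a stem extension. For instance, if $M(G)=1$ then $G^\flat=G\times A$ with any nontrivial finite abelian $A$ satisfies $[G^\flat,G^\flat]\cap Z=1=M(G)$, so it is sufficient by Theorem~\ref{thm:sufficiently extended groups}, yet $Z=A\not\subseteq[G^\flat,G^\flat]=[G,G]\times 1$ and $\cA G^\flat\cong\cA G\times A\not\cong\cA G$. Your attempted fix --- arguing that $|Z|=|M(G)|$ --- is exactly the Schur cover hypothesis, not the sufficient-extension hypothesis. What rescues the lemma in the paper's setting is that the binary polyhedral groups actually used are Schur covers (hence stem extensions, $Z\subseteq[G^\flat,G^\flat]$) whenever $M(G)\ne1$, and one takes $G^\flat=G$ when $M(G)=1$; under that extra hypothesis your argument is complete, and this is precisely what the explicit derived-subgroup computations in Section~\ref{sec:binary polyhedral groups} supply. (The paper's character argument tacitly needs the same hypothesis to pass from ``the cocycle is a coboundary'' to ``every one-dimensional representation of $G^\flat$ descends to $G$''.)
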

\begin{proof}
A one-dimensional projective representation $\tilde{\rho}:G\rightarrow \CC^\ast$ is equivalent to a linear representation. Indeed, the multiplication $\tilde{\rho}(g)\tilde{\rho}(h)=c(g,h)\tilde{\rho}(gh)$ immediately shows that the cocycle $c$ is a coboundary: $c(g,h)=\frac{\tilde{\rho}(g)\tilde{\rho}(h)}{\tilde{\rho}(gh)}$. In other words, the set of one-dimensional representations of $G$ and $G^\flat$ coincide. Therefore $\mathcal{A}G^\flat\cong\mathcal{A}G$.
\end{proof}

\section{Binary Polyhedral Groups}
\label{sec:binary polyhedral groups}

In the light of our classification ambitions regarding Automorphic Lie Algebras it is crucial that we find all projective representation of polyhedral groups $G$. 
For various reasons, both computational and theoretical, it is desirable to work with \emph{linear} representations rather than \emph{projective} representations.
Theorem \ref{thm:sufficiently extended groups} guarantees that we can do this provided that we find a central extension $G^\flat$ of $G$ with kernel $Z$, such that $[G^\flat,G^\flat]\cap Z\cong M(G)$. It is not important to us to have the smallest group $G^\flat$ satisfying these conditions, that is, we do not necessarily need a Schur cover of $G$.




If a group has a trivial Schur multiplier, then any projective representation is equivalent to a linear representation. Indeed, if any cocycle is a coboundary, then the cocycle that occurs in the multiplication of the projective representation can be transformed to be identically $1$. In other words, such a group equals its own unique Schur cover.
Moreover, every central extension of a group with trivial Schur multiplier is sufficiently extended. This also follows immediately from Theorem \ref{thm:sufficiently extended groups}.

The cyclic groups and dihedral groups with odd parameter $N=2M-1$ have trivial Schur multiplier. All other polyhedral groups have a Schur multiplier isomorphic to $\zn{2}$ (Theorem \ref{thm:schur multipliers of polyhedral groups}).
For these groups, we need to find a central extension $G^\flat$ of $G$ with kernel $Z$ such that $[G^\flat,G^\flat]\cap Z$ has two elements, according to Theorem \ref{thm:sufficiently extended groups}. It turns out that the \emph{binary polyhedral groups}, defined as follows, will suffice.

A polyhedral group $G<\PSL_2(\CC)$ uniquely defines a \emph{binary polyhedral group} $\BB G<\SL_2(\CC)$ through the exact sequence
\[1\rightarrow \{\pm\Id\}\rightarrow \SL_2(\CC) \xrightarrow{q} \PSL_2(\CC)\rightarrow 1,\]
i.e.~$\BB G=q^{-1}(G)$. The embedding of $\BB G$ in $\SL_2(\CC)$ is called the \emph{natural representation} of $\BB G$ \cite{dolgachev2009mckay,MR2500567}.

Let us first investigate the centre $Z(\BB G)$. The natural representation is reducible if and only if the group is abelian, i.e.~$Z(\BB G)=\BB G$. Otherwise, if the natural representation is irreducible, then Schur's Lemma ensures that central elements are scalars,
\[Z(\BB G)\subset \CC\Id\cap \SL_2(\CC)=\{\pm \Id\}=Z(\SL_2(\CC)).\]
On the other hand, by definition,
\[Z(\SL_2(\CC))\cap \BB G\subset Z(\BB G),\]
and because $Z(\SL_2(\CC))=\{\pm \Id\}=q^{-1}(1)\subset \BB G$ we obtain
\[Z(\BB G)=\{\pm \Id\}=Z(\SL_2(\CC)).\]


The group $[\BB G,\BB G]\cap Z$ is now determined by whether $-\Id$ is a commutator of $\BB G$ or not. By Theorem \ref{thm:sufficiently extended groups} this cannot be the case for polyhedral groups with trivial Schur multiplier.

Generally, a finite subgroup of $\SL_2(\CC)$ contains $-\Id$ if (and only if) it has even order, since the Sylow theorems (cf.~\cite{humphreys1996course}) ensure that such a group has an element of order two, and $-\Id$ is the only involution in $\SL_2(\CC)$.

One can use the computer package GAP (Groups, Algorithms, Programing, cf.~\cite{GAP}) to determine the derived subgroups of the binary polyhedral groups $\BB \TT$, $\BB \OO$ and $\BB \YY$. To this end it is useful to know the code of these groups in the \emph{Small group library}, \cite{beschesmallgroups}:
\begin{align*}
&\BB \TT=\texttt{SmallGroup}(24,3),\\
&\BB \OO=\texttt{SmallGroup}(48,28),\\
&\BB \YY=\texttt{SmallGroup}(120,5).
\end{align*}
One finds
\begin{align*}
&{[}\BB \TT,\BB \TT]=\BB \DD_2=Q_8,\\
&{[}\BB \OO,\BB \OO]=\BB\TT,\\ 
&{[}\BB \YY,\BB \YY]=\BB\YY.
\end{align*}
In particular, these groups have even order, hence also contain $-\Id$ in $\SL_2(\CC)$.

The binary dihedral group, also known as the \emph{dicyclic} group $\bd{N}$ of order $4N$, has a presentation
\[\bd{N}=\langle r,s\;|\;r^{2N}=1,\, s^2=r^N,\, rs=sr^{-1}\rangle.\]
Similar to the dihedral groups, all elements are of the form $r^j$ or $r^js$ with $0\le j<2N$. With this concise description we can quickly find the derived subgroup. Indeed, 
\begin{align*}
&[r^is,r^j]=(r^is)^{-1}(r^{j})^{-1}r^isr^j=s^{-1}r^{-i-j+i}sr^{j}=s^{-1}sr^{j}r^{j}=r^{2j},\\ 
&[r^is,r^js]=s^{-1}r^{-i}s^{-1}r^{-j}r^isr^js=s^{-1}s^{-1}r^{i}r^{-j}r^ir^{-j}ss=r^{-N+i-j+i-j+N}=r^{2(i-j)},
\end{align*}
so the commutators in $\bd{N}$ are the even powers of $r$:
\[[\bd{N},\bd{N}]=\zn{N}.\]
In particular, if $N$ is even then the embedding of this group in $\SL_2(\CC)$ contains $-\Id$.
We can conclude as follows.
\begin{Observation}
A binary polyhedral group $\BB G$ is a sufficient extension of the related polyhedral group $G$. Moreover, it is a Schur cover of $G$ if and only if $M(G)\cong\zn{2}$.
\end{Observation}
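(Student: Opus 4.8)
The plan is to assemble ingredients already in hand into two short arguments: that $\BB G$ is always a sufficient extension, and then a comparison of orders for the Schur cover claim.

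First I would record that $\BB G$ is a central extension of $G$ with kernel $Z:=\{\pm\Id\}\cong\zn{2}$: since $\BB G=q^{-1}(G)$ contains $q^{-1}(1)=\{\pm\Id\}=Z(\SL_2(\CC))$, restricting the defining sequence $1\rightarrow\{\pm\Id\}\rightarrow\SL_2(\CC)\xrightarrow{q}\PSL_2(\CC)\rightarrow1$ gives $1\rightarrow Z\rightarrow\BB G\xrightarrow{q}G\rightarrow1$ with $Z$ central. By Theorem \ref{thm:sufficiently extended groups}, $\BB G$ is a sufficient extension of $G$ if and only if $[\BB G,\BB G]\cap Z\cong M(G)$, and that same theorem guarantees $[\BB G,\BB G]\cap Z$ is at least isomorphic to a subgroup of $M(G)$. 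If $M(G)$ is trivial (i.e. $G=\zn{N}$ or $G=\DD_{2M-1}$, by Theorem \ref{thm:schur multipliers of polyhedral groups}) this forces $[\BB G,\BB G]\cap Z=1\cong M(G)$ and nothing more is needed. So the real content is the case $M(G)\cong\zn{2}$, where, since $Z\cong\zn{2}$, the condition reduces to showing $-\Id\in[\BB G,\BB G]$.

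Next I would dispatch the case $M(G)\cong\zn{2}$, i.e. $G\in\{\DD_{2M},\TT,\OO,\YY\}$, using the fact recorded above that a finite subgroup of $\SL_2(\CC)$ of even order contains $-\Id$ (the unique involution of $\SL_2(\CC)$, via the Sylow theorems). It therefore suffices to see that $[\BB G,\BB G]$ has even order. For $G=\DD_N$ we have $\BB G=\bd{N}$, and the dicyclic commutator identities computed above give $[\bd{N},\bd{N}]=\zn{N}$; in the relevant subcase $N$ is even, so this has even order. For $G=\TT,\OO,\YY$ the derived subgroups computed above are $Q_8$, $\BB\TT$ and $\BB\YY$ respectively, each of even order. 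Hence $-\Id\in[\BB G,\BB G]$ in every case, so $[\BB G,\BB G]\cap Z=Z\cong M(G)$ and Theorem \ref{thm:sufficiently extended groups} applies; together with the trivial-multiplier case this proves that $\BB G$ is a sufficient extension of $G$ for every polyhedral group.

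Finally, for the Schur cover statement I would argue by orders. For any sufficient extension $1\rightarrow Z'\rightarrow G^\flat\rightarrow G\rightarrow1$ we have $M(G)\cong[G^\flat,G^\flat]\cap Z'\le Z'$, so $|G^\flat|=|Z'|\,|G|\ge|M(G)|\,|G|$, and Schur's theorem says this bound is attained; thus a Schur cover is precisely a sufficient extension of order $|M(G)|\,|G|$. Since $q$ is two-to-one, $|\BB G|=2|G|$, so $\BB G$ (already known to be sufficient) is a Schur cover exactly when $2|G|=|M(G)|\,|G|$, i.e. when $|M(G)|=2$, i.e. (Theorem \ref{thm:schur multipliers of polyhedral groups}) when $M(G)\cong\zn{2}$. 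The one nonformal ingredient is the claim that $[\BB G,\BB G]$ has even order for $G=\DD_{2M},\TT,\OO,\YY$; for the binary dihedral family this follows cleanly from the explicit commutator identities, but for $\BB\TT,\BB\OO,\BB\YY$ it rests on the computer identification of the derived subgroups, and exhibiting $-\Id$ directly as a product of commutators in those three groups is the step I expect to be the main obstacle to a fully self-contained argument.
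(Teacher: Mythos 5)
Your proposal is correct and follows essentially the same route as the paper: reduce to showing $-\Id\in[\BB G,\BB G]$ when $M(G)\cong\zn{2}$, using that a finite subgroup of $\SL_2(\CC)$ of even order contains the unique involution $-\Id$, with the derived subgroups supplied by the explicit dicyclic commutator computation and the GAP identifications of $[\BB\TT,\BB\TT]$, $[\BB\OO,\BB\OO]$, $[\BB\YY,\BB\YY]$, and the trivial-multiplier case handled by noting every central extension is then sufficient. Your explicit order count $|\BB G|=2|G|$ versus the minimal order $|M(G)|\,|G|$ for the Schur-cover claim is exactly the comparison the paper leaves implicit.
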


This fact allows us to use results of $\SL_2(\CC)$ theory. Specifically the \emph{Clebsch-Gordan decomposition} (cf.~\cite{fossum, suter2007quantum} and Section \ref{sec:squaring the ring}) will play an important part in the study of Automorphic Lie Algebras.
However, sometimes a different choice of central extension is more convenient. For instance, the polyhedral groups with trivial Schur multiplier do not need an extension at all like the dihedral groups with odd parameter. But the other dihedral groups have a Schur cover that is a bit easier to work with as well. Take the presentation $\DD_{2N}=\langle r,s\;|\;r^{2N}=1,\, s^2=1,\, rs=sr^{-1}\rangle$ and the homomorphism $\pi:\DD_{2N}\rightarrow\DD_{N}<\DD_{2N}$ defined by $\pi(r)=r^2$, $\pi(s)=s$. The image of $\pi$ is $\DD_{N}$ and the kernel $Z=\{1,r^N\}\cong\zn{2}$ is central. Moreover, just as with the dicyclic group we find the commutator subgroup $[\DD_{2N},\DD_{2N}]=\langle r^2\rangle\cong\zn{N}$ which contains $Z$ if and only if $N$ is even, in which case $\DD_{2N}$ is another Schur cover of $\DD_N$, an alternative to $\bd{N}$. The dihedral group is easier to handle in explicit calculations, compared to the dicyclic group, and this will be our preferred choice in the many examples throughout this thesis.

For compactness we will hereafter denote the binary polyhedral group related to $G$ by $G^\flat$ instead of $\BB G$. If $G=\langle \gal, \gbe, \gga\;|\; \gal^{\nal}=\gbe^{\nbe}=\gga^{\nga}=\gal \gbe \gga=1\rangle$
then $G^\flat$ allows a presentation
\[G^\flat=\langle \gal, \gbe, \gga\;|\; \gal^{\nal}=\gbe^{\nbe}=\gga^{\nga}=\gal \gbe \gga\rangle\]
(see for instance \cite{suter2007quantum}). Here we have made the dangerous decision to use the same symbols for the generators of two different groups. We trust however that this will not create confusion as the context will show whether the subject is $G$ or $G^\flat$.

The unique nontrivial central element in this presentation is given by $z=\gal\gbe\gga$ since $g_i\gal\gbe\gga=g_i^{\nu_i+1}=\gal\gbe\gga g_i$. Notice that a homomorphism $\pi:G^\flat\rightarrow G$ with $\pi(z)=1$ maps the relations to $(\pi \gal)^{\nal}=(\pi \gbe)^{\nbe}=(\pi \gga)^{\nga}=(\pi \gal)(\pi \gbe)(\pi \gga)=1$.



\section{Selected Character Tables}
\label{sec:characters}
In this section we list the characters that will be used in the sequel.
The characters of $\zn{N}=\langle r\;|\;r^N=1\rangle$ are the homomorphisms $\chi:\zn{N}\rightarrow \splitk^\ast$ which are given by $\{\chi:r\mapsto \omega_N^j\;|\;0\le j< N\}$ where $\omega_N=e^{\frac{2\pi i}{N}}$ or any other primitive $N$-th root of unity.

Next we consider the dihedral group $\DD_N=\langle r, s\;|\;r^N=s^2=(rs)^2=1 \rangle$.
There are significant differences between the cases where $N$ is odd or even, as we have already seen from their projective representations.
If $N$ is odd then $\DD_N$ has two one-dimensional characters, $\chi_1$ and $\chi_2$. If $N$ is even there are two additional one-dimensional characters, $\chi_3$ and $\chi_4$. 
\begin{center}
\begin{table}[h!]
\caption{One-dimensional characters of $\DD_N$.}
\label{eq:D_N one-dimensional characters}
\begin{center}
\begin{tabular}{ccccc}
\hline
 $g$ & $\chi_1$ & $\chi_2$ & $\chi_3$ &$\chi_4$\\
\hline
$r$ & $1$ & $1$ & $-1$ & $-1$\\
$s$ & $1$ & $-1$ & $1$ & $-1$\\
\hline
\end{tabular}
\end{center}
\end{table}
\end{center}

The remaining irreducible characters are two-dimensional (indeed, there is a normal subgroup of index $2$: $\zn{N}$, \cite{serre1977linear}). We denote these characters by $\psi_j$, for $1\le j <\nicefrac{N}{2}$. They take the values
\begin{equation}
\label{eq:D_N two-dimensional characters}
\psi_j(r^i)=\omega_N^{ji}+\omega_N^{-ji},\qquad \psi_j(sr^i)=0\,,
\end{equation}
where $\omega_N= e^{\frac{2\pi i}{N}}.$
The characters of the faithful (i.e.~injective) representations are precisely those $\psi_j$ for which $\gcd(j,N)=1$, where $\gcd$ stands for greatest common divisor.

Finally, we move to the Schur covering groups of $\TT$, $\OO$ and $\YY$ and we choose the binary polyhedral groups for this. Their character tables can be produced by the computer package GAP \cite{GAP}.
In the character table for $\bt$ we use the definition $\omega_3= e^{\frac{2\pi i}{3}}$ 
and in anticipation of the character table of $\by$ we define the golden section $\phi^+$ and its conjugate $\phi^-$ in $\QQ(\sqrt{5})$
\[\phi^\pm= \frac{1\pm\sqrt{5}}{2},\] which are the roots of $p(x)=x^2-x-1$. 
If $\omega_5=e^{\frac{2\pi i}{5}}$ then $\omega_5^2+\omega_5^3 \text{ and } \omega_5+\omega_5^4$ both satisfy the equation $p(-x)=x^2+x-1=0$. Since the first is negative and the second positive we conclude that $\omega_5^2+\omega_5^3=-\phi^+$ and $\omega_5+\omega_5^4=-\phi^-$.
\begin{center}
\begin{table}[h!]
\caption{Irreducible characters of the binary tetrahedral group $\bt$.}
\label{tab:ctbt}
\begin{center}
\begin{tabular}{c|ccccccc|cccc} \hline
$g$& $1$ &$\gal^2$ &$\gga$&$z$&$\gbe^2$&$\gbe$&$\gal$\\
$|C_G(g)|$&$24$&$6$&$4$&$24$&$6$&$6$&$6$ &$\iota$&$\det$&$\im$\\
\hline
$\bti$ & $1$ &$1$&$1$&$1$&$1$&$1$&$1$&$1$&$\bti$&$1$\\
$\btii$ & $1$ &$\omega_3$&$1$&$1$&$\omega_3^2$&$\omega_3$&$\omega_3^2$&$0$&$\btii$&$\zn{3}$\\
$\btiii$  & $1$ &$\omega_3^2$&$1$&$1$&$\omega_3$&$\omega_3^2$&$\omega_3$&$0$&$\btiii$&$\zn{3}$\\
$\underline{\btiiii}$ & $2$ & $-1$ & $0$ & $-2$ & $-1$ & $1$ & $1$ & $-1$&$\bti$&$\bt$\\
$\btiiiii$ & $2$ & $-\omega_3^2$ & $0$ & $-2$ & $-\omega_3$ & $\omega_3^2$ & $\omega_3$&$0$&$\btii$&$\bt$\\
$\btiiiiii$ & $2$ & $-\omega_3$ & $0$ & $-2$ & $-\omega_3^2$ & $\omega_3$ & $\omega_3^2$&$0$&$\btiii$&$\bt$\\
$\btiiiiiii$ & $3$ & $0$ & $-1$ & $3$ & $0$ & $0$ & $0$ &$1$&$\bti$&$\TT$\\
\hline
\end{tabular}
\end{center}
\end{table}
\end{center}

The characters will be denoted by the symbol used for the group and a numbering in the subscript. For example, the characters of the binary tetrahedral group are denoted $\Irr(\bt)=\{\bti,\ldots,\btiiiiiii\}$. This way it is easier to discuss representations of different groups. We add a superscript ``$\flat$'' if and only if the character is \emph{spinorial}, cf.~Definition \ref{def:spinorial}, e.g.~$\btiiii$. 
The order of the centraliser in the second row is found by the second orthogonality relation, Proposition \ref{prop:second orthogonality relation}.
The column headed ``$\iota$'' contains the Frobenius-Schur indicator (\ref{eq:fs indicator}). In the next column we have the homomorphism ``$\det$'' defined by composing the representation with $\det:\GL(V)\rightarrow\CC^\ast$. The right most column describes the group structure of the image of the representation. In particular, there we see which representations are faithful. Finally, we have underlined the character of the natural representation $G^\flat\hookrightarrow \SL_2(\CC)$.


\begin{center}
\begin{table}[h!]
\caption{Irreducible characters of the binary octahedral group $\bo$.}
\label{tab:ctbo}
\begin{center}
\begin{tabular}{c|cccccccc|cccc}\hline
$g$& $1$ & $\gga$ & $\gbe^2$ & $\gal^2$ & $z$ & $\gal^3$ & $\gbe$ & $\gal$ \\
$|C_G(g)|$&$48$&$4$&$6$&$8$&$48$&$8$&$6$&$8$&$\iota$&$\det$&$\im$\\
\hline
$\boi$ & $1$ &$1$&$1$&$1$&$1$&$1$&$1$&$1$&$1$&$\boi$&$1$\\
$\boii$ & $1$ & $-1$ & $1$ & $1$ & $1$ &$-1$&$1$&$-1$ &$1$&$\boii$&$\zn{2}$\\
$\boiii$ & $2$ &$0$&$-1$&$2$&$2$&$0$&$-1$&$0$&$1$&$\boii$&$\DD_3$\\
$\underline{\boiiii}$ & $2$ &$0$&$-1$&$0$&$-2$&$-\sqrt{2}$&$1$&$\sqrt{2}$&$-1$&$\boi$&$\bo$\\
$\boiiiii$ & $2$ &$0$&$-1$&$0$&$-2$&$\sqrt{2}$&$1$&$-\sqrt{2}$&$-1$&$\boi$&$\bo$\\
$\boiiiiii$ & $3$ &$1$&$0$&$-1$&$3$&$-1$&$0$&$-1$&$1$&$\boii$&$\OO$\\
$\boiiiiiii$ & $3$ &$-1$&$0$&$-1$&$3$&$1$&$0$&$1$&$1$&$\boi$&$\OO$\\
$\boiiiiiiii$ & $4$ &$0$ & $1$ & $0$ & $-4$ & $0$ & $-1$ & $0$ & $-1$&$\boi$&$\bo$\\\hline
\end{tabular}
\end{center}
\end{table}
\end{center}

\begin{center}
\begin{table}[h!]
\caption{Irreducible characters of the binary icosahedral group $\by$.}
\label{tab:ctby}
\begin{center}
\begin{tabular}{c|ccccccccc|cccc} \hline
$g$& $1$ & $\gal^2$ &$\gal^4$&$\gbe$&$\gga$&$\gbe^2$&$\gal^3$&$z$&$\gal$\\
$|C_G(g)|$&$120$&$10$&$10$&$6$&$4$&$6$&$10$&$120$&$10$&$\iota$&$\det$&$\im$\\
\hline
$\byi$ & $1$ &$1$&$1$&$1$&$1$&$1$&$1$&$1$&$1$&$1$&$\byi$&$1$\\
$\underline{\byii}$ & $2$ &$-\phi^-$&$-\phi^+$&$1$&$0$&$-1$&$\phi^-$&$-2$&$\phi^+$&$-1$&$\byi$&$\by$\\
$\byiii$ & $2$ &$-\phi^+$&$-\phi^-$&$1$&$0$&$-1$&$\phi^+$&$-2$&$\phi^-$&$-1$&$\byi$&$\by$\\
$\byiiii$ & $3$ &$\phi^+$&$\phi^-$&$0$&$-1$&$0$&$\phi^+$&$3$&$\phi^-$&$1$&$\byi$&$\YY$\\
$\byiiiii$ & $3$ &$\phi^-$&$\phi^+$&$0$&$-1$&$0$&$\phi^-$&$3$&$\phi^+$&$1$&$\byi$&$\YY$\\
$\byiiiiii$ & $4$&$-1$&$-1$&$1$&$0$&$1$&$-1$&$4$&$-1$&$1$&$\byi$&$\YY$\\
$\byiiiiiii$ & $4$&$-1$&$-1$&$-1$&$0$&$1$&$1$&$-4$&$1$&$-1$&$\byi$&$\by$\\
$\byiiiiiiii$ & $5$&$0$&$0$&$-1$&$1$&$-1$&$0$&$5$&$0$&$1$&$\byi$&$\YY$\\
$\byiiiiiiiii$ & $6$&$1$&$1$&$0$&$0$&$0$&$-1$&$-6$&$-1$&$-1$&$\byi$&$\by$\\\hline
\end{tabular}
\end{center}
\end{table}
\end{center}

\newpage

\section{Classical Invariant Theory}
\label{sec:classical invariant theory}

The theory of polynomial invariants, developed in the second half of the $19$th century, 
is a powerful aid in the study of Automorphic Lie Algebras. We list some important notions and results.
The main references for this section are \cite{neusel2007invariant,MR1328644,stanley1979invariants}.

Let $\natrep$ be a finite dimensional vector space and consider the direct sum of vector spaces
\[\CC[\natrep]=\CC\oplus \natrep^\ast\oplus S^2 \natrep^\ast\oplus S^3 \natrep^\ast\oplus\ldots,\]
where $S^d\natrep$ denotes the $d$-th symmetric tensor of $\natrep$. Given a basis $\{X_1,\ldots, X_n\}$ for $\natrep^\ast$, the summand $S^d \natrep^\ast$ corresponds to homogeneous polynomials in the variables $\{X_1,\ldots, X_n\}$ of degree $d$, also known as \emph{forms}. 
The vector space $\CC[\natrep]$ is a graded ring under the usual multiplication. We introduce the notation $R_d=S^d \natrep^\ast$ and
\[R=\bigoplus_{d\ge 0}R_d,\qquad R_dR_e\subset R_{d+e}.\]

If $\natrep$ is a $G$-module by the representation \[\sigma:G\rightarrow \GL(\natrep)\]
then there is an induced action on the polynomial ring $R=\CC[\natrep]$ given by
\[g\cdot p=p\circ \sigma(g)^{-1},\qquad p\in R,\,g\in G.\]
We are interested in the decomposition of $R$ into the isotypical components $R^\chi$ of this group action,
\[R=\bigoplus_{\chi \in \Irr(G)} R^\chi.\]

The action of $G$ on $R$ respects the ring structure. Moreover, it preserves the grading. That is, each homogeneous component $R_d=S^d\natrep^\ast$ is a submodule. For this last reason we may confine our investigations to forms.


\begin{Example}[$\zn{2}$]
The group of two elements has two conjugacy classes, hence two irreducible characters. Say
\[G=\zn{2}=\left \langle\begin{pmatrix}0&1\\1&0\end{pmatrix}\right\rangle, \qquad \Irr(G)=\{\triv, \chi\},\] where $\triv$ denotes the trivial character and $\chi$ the nontrivial character, sending the nontrivial group element to $-1\in\CC^\ast$.
The isotypical components of $R$ are
\[R^\triv=\CC[X+Y,XY],\qquad R^\chi=\CC[X+Y,XY](X-Y).\]
Indeed, one checks that $X+Y$ and $XY$ are invariant and $X-Y$ has character $\chi$. Hence $R^\triv\supset\CC[X+Y,XY]$ and $R^\chi\supset\CC[X+Y,XY](X-Y)$. In particular $\CC[X+Y,XY]\cap\CC[X+Y,XY](X-Y)=0$. Now notice that $X+Y$ and $XY$ are algebraically independent and compute the sum of the Poincar\'e series accordingly
\begin{align*}
&P(\CC[X+Y,XY],t)+P(\CC[X+Y,XY](X-Y),t)\\
&=\frac{1}{(1-t)(1-t^2)}+\frac{t}{(1-t)(1-t^2)}=\frac{1}{(1-t)^2}=P(R,t).
\end{align*}
Thus we have indeed found the decomposition.
\end{Example}

\begin{Example}[$\DD_3$]
\label{ex:D3 decomposition}
Section \ref{sec:characters} describes the three irreducible characters of
\[G=\DD_3=\left \langle\begin{pmatrix}\omega_3&0\\0&\omega_3^{-1}\end{pmatrix},\begin{pmatrix}0&1\\1&0\end{pmatrix}\right\rangle,\;\omega_3=e^{\frac{2\pi i}{3}}, \qquad \Irr(G)=\{\epsilon, \chi, \psi\}.\]
We obtain the components
\begin{align*}
&R^\triv=\CC[XY,X^3+Y^3],\\
&R^\chi=\CC[XY,X^3+Y^3]\left(X^3-Y^3\right),\\
&R^\psi=\CC[XY,X^3+Y^3]\left(Y\oplus X\oplus X^2\oplus Y^2\right).
\end{align*}
Again we check the sum of the Poincar\'e series
\begin{align*}
&P(R^\triv,t)+P(R^\chi,t)+P(R^\psi,t)=
\frac{1+t^3+2t+2t^2}{(1-t^2)(1-t^3)}=
\frac{(1+t)(1+t+t^2)}{(1-t^2)(1-t^3)}\\&=
\frac{1}{(1-t)^2}=P(R,t),
\end{align*}
establishing the decomposition.
\end{Example}



Molien's Theorem is a central result in the theory of invariants. It provides a way to determine the Poincar\'e series of an isotypical component of the polynomial ring without having to find the component first.
\begin{Theorem}[Molien's Theorem, \cite{neusel2007invariant,MR1328644,stanley1979invariants}]
\label{thm:molien} Let the group $G$ act on $R=\CC[\natrep]$ as induced by the representation $\sigma:G\rightarrow\GL(\natrep)$. Then the Poincar\'e series of an isotypical component $R^\chi$ is given by 
\[P(R^\chi,t)=\frac{\chi(1)}{|G|}\sum_{g\in G}\frac{\chi(g)}{\det (1-\sigma(g)t)}\]
for any character $\chi$ of $G$.
\end{Theorem}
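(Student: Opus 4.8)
The plan is to compute the Poincaré series $P(R^\chi,t)=\sum_{d\ge 0}\dim(R_d^\chi)\,t^d$ by first expressing $\dim(R_d^\chi)$ as an inner product of characters and then summing a geometric series. Recall that for any finite-dimensional $G$-module $W$, the multiplicity of the irreducible $V_\chi$ in $W$ equals $(\chi_W,\chi)=\frac{1}{|G|}\sum_{g\in G}\chi_W(g)\overline{\chi(g)}$, by the first orthogonality relation (Theorem \ref{thm:characters orthonormal}); hence $\dim W^\chi = \chi(1)\,(\chi_W,\chi)$, since the isotypical component $W^\chi$ is a direct sum of $(\chi_W,\chi)$ copies of $V_\chi$. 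Applying this with $W=R_d=S^d\natrep^\ast$, we get
\[
\dim(R_d^\chi)=\frac{\chi(1)}{|G|}\sum_{g\in G}\chi_{S^d\natrep^\ast}(g)\,\overline{\chi(g)}.
\]

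The next step is to identify the generating function $\sum_{d\ge 0}\chi_{S^d\natrep^\ast}(g)\,t^d$. Fix $g\in G$ and let $\sigma(g)$ act on $\natrep^\ast$ (the dual action, $\sigma^\ast(g)=\sigma(g^{-1})^T$) with eigenvalues $\lambda_1,\ldots,\lambda_n$ — these exist since $g$ has finite order, so $\sigma(g)$ is diagonalisable. Then $S^d\natrep^\ast$ has a basis of monomials of degree $d$ in the eigenvectors, with eigenvalue $\lambda_1^{a_1}\cdots\lambda_n^{a_n}$ for $a_1+\cdots+a_n=d$, so
\[
\sum_{d\ge 0}\chi_{S^d\natrep^\ast}(g)\,t^d=\sum_{d\ge 0}\Big(\sum_{a_1+\cdots+a_n=d}\lambda_1^{a_1}\cdots\lambda_n^{a_n}\Big)t^d=\prod_{i=1}^{n}\frac{1}{1-\lambda_i t}=\frac{1}{\det\!\big(1-\sigma^\ast(g)t\big)}.
\]
Finally, I would replace $\sigma^\ast(g)$ by $\sigma(g)$ using the identity $\det(1-\sigma(g^{-1})^T t)=\det(1-\sigma(g^{-1})t)$ together with the substitution $g\mapsto g^{-1}$ in the sum over the group (a bijection of $G$), combined with $\overline{\chi(g)}=\chi(g^{-1})$; after relabelling, the product becomes $\frac{1}{\det(1-\sigma(g)t)}$ and the factor $\overline{\chi(g)}$ becomes $\chi(g)$, yielding the stated formula. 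Interchanging the (finite) sum over $G$ with the formal sum over $d$ is legitimate since everything can be read as an identity of formal power series in $t$.

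The only genuinely delicate point is the bookkeeping with duals, conjugates, and inverses in the last step — one must be careful that $R_d=S^d\natrep^\ast$ (not $S^d\natrep$) so that the natural geometric-series computation produces $\det(1-\sigma^\ast(g)t)$, and then verify that the change of variable $g\leftrightarrow g^{-1}$ simultaneously converts $\overline{\chi(g)}\to\chi(g)$ and $\det(1-\sigma^\ast(g)t)\to\det(1-\sigma(g)t)$ consistently. Everything else is routine: the multiplicity formula is immediate from Theorem \ref{thm:characters orthonormal}, and the eigenvalue computation for symmetric powers is standard. I would present the argument treating the identity as one of formal power series to avoid any convergence concerns.
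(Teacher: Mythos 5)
Your proof is correct and is the standard argument for Molien's theorem; the paper itself gives no proof of this statement, stating it as a cited classical result in the Preliminaries, so there is nothing to compare against beyond the literature. The one point worth affirming is that your bookkeeping at the end is consistent with the paper's conventions: the action on $R=\CC[\natrep]$ is $g\cdot p=p\circ\sigma(g)^{-1}$, so the eigenvalues on $\natrep^\ast$ are the inverses of those of $\sigma(g)$, the geometric series produces $\det(1-\sigma(g)^{-1}t)^{-1}$, and the substitution $g\mapsto g^{-1}$ together with $\overline{\chi(g^{-1})}=\chi(g)$ lands exactly on the stated formula with $\chi(g)$ (not $\overline{\chi(g)}$) in the numerator.
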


\begin{Example}
[Ring of invariants of the dihedral group, cf.~\cite{MR1328644}, p.93]
\label{ex:D_N invariant forms}
Consider the cyclic group $\zn{N}$ generated by the matrix $\diag(\omega_N,\omega_N^{N-1})$, where $\omega_N$ is a primitive $N$-th root of unity. By direct investigation one finds that $R^\zn{N}=\CC[X,Y]^{\zn{N}}=\bigoplus_{i=0}^{N-1}\CC[X^N,Y^N](XY)^i$ and therefore \[P\left(R^\zn{N},t\right)=\frac{\sum_{i=0}^{N-1}t^{2i}}{(1-t^N)^2}=\frac{1-t^{2N}}{(1-t^2)(1-t^N)^2}=\frac{1+t^{N}}{(1-t^2)(1-t^N)}.\]
With this in mind one can determine the invariants for the group 
\[\DD_N=\left\langle r=\begin{pmatrix}\omega_N&0\\0&\omega_N^{N-1}\end{pmatrix},\;s=\begin{pmatrix}0&1\\1&0\end{pmatrix}\right\rangle\]
using Molien's Theorem. Indeed, it follows that
\begin{align*}
P\left(R^{\DD_N},t\right)&=\frac{1}{2N}\sum_{g\in\DD_N}\frac{1}{\det(1-\sigma_gt)}\\
&=\frac{1}{2N}\sum_{i=0}^{N-1}\frac{1}{\det(1-r^i t)}+\frac{1}{2N}\sum_{i=0}^{N-1}\frac{1}{\det(1-r^i s t)}\\
&=\frac{1}{2}P\left(R^\zn{N},t\right)+\frac{1}{2N}\sum_{i=0}^{N-1}\begin{vmatrix}1&-\omega_N^i t\\-\omega_N^{N-i}t&1\end{vmatrix}^{-1}\\
&=\frac{1}{2}\left(\frac{1+t^{N}}{(1-t^2)(1-t^N)}+\frac{1}{(1-t^2)}\right)\\
&=\frac{1}{(1-t^2)(1-t^N)}\,.
\end{align*}
If one can find two algebraically independent $\DD_N$-invariant forms, one of degree $2$ and one of degree $N$, e.g.~$XY$ and $X^N+Y^N$, then the above calculation proves that any $\DD_N$-invariant polynomial is a polynomial in these two forms, i.e.~$R^{\DD_N}=\CC[XY, X^N+Y^N]$.
\end{Example}




The isotypical summands of $R$ in the first examples have a very simple structure. There are classical theorems that guarantee this is always the case.

An ideal $I$ in a ring $R$ is called \emph{prime} if $a,b\in R$ and $ab\in I$ imply that $a\in I$ or $b\in I$. 
\begin{Definition}[Krull dimension]
The \emph{Krull dimension} of a commutative algebra is the supremum of the length of all chains of proper inclusions of prime ideals.
\end{Definition}
The standard (but not obvious) example is that the Krull dimension of $\CC[\natrep]$ is $\dim \natrep$. Here is another example:
\begin{Theorem}[\cite{stanley1979invariants}, Theorem 1.1]
The Krull dimension of $R^G$ is $\dim \natrep$, i.e.~there exists exactly $\dim \natrep$ algebraically independent invariants. 
\end{Theorem}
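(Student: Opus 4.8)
The statement to prove is that the Krull dimension of $R^G$ equals $\dim \natrep$, equivalently that there exist exactly $\dim \natrep$ algebraically independent invariants. Let me write $V = \natrep$ and $n = \dim V$.

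The plan is to establish two inequalities. First, $R^G$ is a subring of $R = \CC[V]$, so any chain of prime ideals in $R^G$ has length at most the Krull dimension of $R$, which is $n$. Actually, I need to be careful — a subring can in principle have larger Krull dimension than the ambient ring in general, but here the key point is that $R$ is integral (in fact finite) over $R^G$. This is a classical fact: for a finite group acting on a finitely generated algebra, the ring of invariants is Noetherian and the extension $R^G \subseteq R$ is integral, because every $r \in R$ satisfies the monic polynomial $\prod_{g \in G}(t - g\cdot r)$ whose coefficients are $G$-invariant (they are the elementary symmetric functions in the orbit of $r$). By the going-up theorem / the fact that integral extensions preserve Krull dimension, $\dim R^G = \dim R = n$.

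So the main work is really just: (1) show $R^G \subseteq R$ is integral, using the orbit-polynomial trick above; (2) quote that an integral extension of rings has equal Krull dimension on both sides (this follows from lying-over, going-up, and incomparability — the standard Cohen–Seidenberg theorems); (3) quote that $\dim \CC[V] = n$ (the "standard but not obvious example" mentioned in the excerpt). Combining, $\dim R^G = n$. Finally, translate "Krull dimension $= n$" into "there are exactly $n$ algebraically independent invariants": over a field, for a finitely generated domain, the Krull dimension equals the transcendence degree of its fraction field, and a maximal algebraically independent set has exactly that cardinality. Here $R^G$ is a domain (subring of the domain $\CC[V]$) and is finitely generated over $\CC$ (Noether's theorem, or Hilbert's finiteness theorem), so its fraction field has transcendence degree $n$ over $\CC$, giving exactly $n$ algebraically independent invariant forms.

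I would also need to briefly justify that $R^G$ is finitely generated as a $\CC$-algebra — this is Hilbert's finiteness theorem for finite groups (or Noether's bound), and follows from $R$ being finite over $R^G$ together with $R^G$ being a direct summand of $R$ as an $R^G$-module via the Reynolds operator $r \mapsto \frac{1}{|G|}\sum_{g} g\cdot r$; Noetherianity of $R^G$ then comes from the Artin–Tate lemma. The least routine part is making the chain of citations line up cleanly — none of the individual steps is hard, but the statement sits on top of a stack of standard commutative algebra (integral extensions, Cohen–Seidenberg, Noether normalization/finiteness, transcendence degree equals Krull dimension for affine domains), so the real decision is how much to prove versus cite. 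Given the expository tone of this Preliminaries chapter, I expect the cleanest route is to cite \cite{stanley1979invariants} directly as the source of the theorem and perhaps \cite{neusel2007invariant} for the supporting commutative algebra, indicating the orbit-polynomial integrality argument as the one-line idea behind it.
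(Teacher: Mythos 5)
The paper does not prove this statement: it is quoted verbatim from Stanley (\cite{stanley1979invariants}, Theorem 1.1) as background in the Preliminaries chapter, with no argument supplied. Your proof is correct and is the standard one behind Stanley's theorem — integrality of $R$ over $R^G$ via the orbit polynomial $\prod_{g\in G}(t-g\cdot r)$, equality of Krull dimension under integral extensions (Cohen--Seidenberg), $\dim\CC[\natrep]=\dim\natrep$, and the identification of Krull dimension with transcendence degree for affine domains — so there is nothing to compare beyond noting that the thesis simply cites the result where you reconstruct it.
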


We say that an $R$-module $M$ is generated by $m_1,\ldots,m_k$ if every element in $M$ can be written as a sum $r_1m_1+\ldots+r_km_k$ where $r_i\in R$, and we write \[M=\sum_{i=1}^kR m_i.\] If in addition the elements $r_1,\ldots,r_k$ are uniquely determined we write \[M=\bigoplus_{i=1}^kR m_i\] and say that $M$ is a \emph{free} $R$-module generated by $m_1,\ldots,m_k$.

\begin{Definition}[Homogeneous system of parameters, \cite{paule2008algorithms}]
Let $n$ be the Krull dimension of a graded algebra $A$. A set of homogeneous elements $\theta_1,\ldots,\theta_n$ of positive degree is a \emph{homogeneous systems of parameters} for the algebra if $A$ is finitely generated as a $\CC[\theta_1,\ldots\theta_n]$-module.
\end{Definition}
\begin{Proposition}[\cite{neusel2007invariant} Proposition 12.15, \cite{paule2008algorithms} Theorem 2.3.1, \cite{stanley1979invariants} Proposition 3.1]
\label{prop:hsop}
If an algebra is a finitely generated \emph{free} module over \emph{one} homogeneous system of parameters, then it is a finitely generated \emph{free} module over \emph{any} of its homogeneous systems of parameters.
\end{Proposition}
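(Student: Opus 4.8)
The plan is to identify freeness of $A$ over a homogeneous system of parameters with an intrinsic property of the graded $\CC$-algebra $A$ that refers to no particular choice of parameters, namely $\operatorname{depth}_{A_+}A=\dim A$ (Cohen--Macaulayness, with $A_+$ the irrelevant ideal); running that identification for a second homogeneous system of parameters then gives the equivalence for free. So write $n=\dim A$ and let $\theta_1,\dots,\theta_n$ be the homogeneous system of parameters over which $A$ is assumed free, and let $\eta_1,\dots,\eta_n$ be another one.

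First I would dispatch the routine points. Since $A$ is a finite module over $\CC[\theta_1,\dots,\theta_n]$, this subring has Krull dimension $n$; being a quotient of the polynomial ring $\CC[T_1,\dots,T_n]$, which also has dimension $n$, it must equal it, so the $\theta_i$ are algebraically independent and $S:=\CC[\theta_1,\dots,\theta_n]$ is a graded polynomial ring (and likewise $B:=\CC[\eta_1,\dots,\eta_n]$). Moreover, for \emph{any} homogeneous system of parameters the quotient $A/(\theta_1,\dots,\theta_n)A$ is a finite-dimensional $\CC$-vector space, hence Artinian, so its irrelevant ideal is nilpotent and $\sqrt{(\theta_1,\dots,\theta_n)A}=A_+$ (and similarly $\sqrt{(\eta_1,\dots,\eta_n)A}=A_+$). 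Finally I would quote the standard dictionary: a finitely generated graded module $M$ over a graded polynomial ring generated in positive degrees is free if and only if the polynomial generators form an $M$-regular sequence; the forward direction is immediate, and the converse is a short induction on the number of variables using graded Nakayama, that the last variable is a non-zero-divisor, and descent on degree to eliminate a hypothetical nontrivial relation.

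The core of the argument is then a chain of equivalences. Freeness of $A$ over $S$ is, by the dictionary, the same as $\theta_1,\dots,\theta_n$ being an $A$-regular sequence (the $S$-module structure of $A$ is the one induced by its $A$-module structure); this forces $\operatorname{depth}_{(\theta)A}A=n$, and since the depth along an ideal depends only on its radical (via $\operatorname{depth}(I,A)=\inf\{\,i:\operatorname{Ext}^i_A(A/I,A)\neq0\,\}$) and $\sqrt{(\theta)A}=A_+$, this says $\operatorname{depth}_{A_+}A=n$. Conversely, from $\operatorname{depth}_{A_+}A=n$ and $\sqrt{(\eta)A}=A_+$ we get $\operatorname{depth}_{(\eta)A}A=n$, so the ideal $(\eta_1,\dots,\eta_n)A$, minimally generated by $n$ elements, has grade equal to its number of generators; hence the Koszul complex on $\eta_1,\dots,\eta_n$ is acyclic in positive degrees, which means $\eta_1,\dots,\eta_n$ is an $A$-regular sequence, and by the dictionary again $A$ is free over $B$. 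Running the same reasoning with the roles of $\theta$ and $\eta$ interchanged closes the loop, and all of these conditions are equivalent to the single statement $\operatorname{depth}_{A_+}A=\dim A$, which mentions no system of parameters whatsoever. (If one prefers, the last step "$\eta_1,\dots,\eta_n$ regular $\Rightarrow$ $A$ free over $B$" also falls out of the Auslander--Buchsbaum formula over $B$, which then gives $\operatorname{pd}_B A=0$ and hence freeness.)

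The step I expect to be the genuine obstacle is the transfer "$\theta_1,\dots,\theta_n$ an $A$-regular sequence $\Rightarrow$ $\eta_1,\dots,\eta_n$ an $A$-regular sequence." This is precisely the work done by the Cohen--Macaulay condition, and it rests on two non-formal inputs: the invariance of depth under passage to the radical of an ideal, used to move depth between the $A_+$-primary ideals $(\theta)A$ and $(\eta)A$; and the Koszul characterisation of regular sequences (an ideal whose grade equals its minimal number of generators is generated by a regular sequence), used to turn "contains a regular sequence of length $n$" back into "$\eta_1,\dots,\eta_n$ is itself a regular sequence." Everything else — algebraic independence of a homogeneous system of parameters, $\sqrt{(\theta)A}=A_+$, graded Nakayama, and the free-versus-regular-sequence dictionary over a polynomial ring — is routine bookkeeping.
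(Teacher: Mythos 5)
Your argument is correct, but there is nothing in the paper to compare it against: the thesis states Proposition \ref{prop:hsop} as a quoted classical fact, citing Neusel--Smith, Paule et al.\ and Stanley, and offers no proof (the Preliminaries chapter explicitly claims no original results apart from Proposition \ref{prop:stanley}). What you have written is essentially the standard proof from those references: freeness over a homogeneous system of parameters is equivalent to that system being an $A$-regular sequence, which via the radical-invariance of grade and $\sqrt{(\theta)A}=A_+=\sqrt{(\eta)A}$ is equivalent to the intrinsic condition $\mathrm{depth}_{A_+}A=\dim A$, i.e.\ Cohen--Macaulayness; the return trip from grade $n$ to regularity of $\eta_1,\dots,\eta_n$ via depth-sensitivity of the Koszul complex (or Auslander--Buchsbaum) is also the standard route. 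The individual ingredients you isolate — algebraic independence of an h.s.o.p., the free-versus-regular-sequence dictionary over a graded polynomial ring via graded Nakayama, and the two non-formal inputs you flag — are all correctly identified and correctly deployed, so the proposal stands as a complete proof of the cited statement.
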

An algebra that satisfies the property of Proposition \ref{prop:hsop} is said to be \emph{Cohen-Macaulay}.
\begin{Theorem}[\cite{stanley1979invariants}, Theorem 3.2]
The ring of invariants $R^G$ is Cohen-Macaulay. That is, there are independent invariants $\theta_1,\ldots,\theta_m, \eta_1,\ldots,\eta_k$ such that \[R^G=\bigoplus_{i=1}^k\CC[\theta_1,\ldots,\theta_m]\eta_i\]
where $m$ is the Krull dimension of $R^G$.
\end{Theorem}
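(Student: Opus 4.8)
The plan is to realise $R^G$ as a module-finite extension of a polynomial subring and to transport the Cohen--Macaulay property of the ambient ring $R=\CC[\natrep]$ down to $R^G$ by means of the averaging (Reynolds) operator. First I would invoke Noether normalisation: since $R^G$ is a finitely generated graded $\CC$-algebra of Krull dimension $m=\dim\natrep$, it admits a homogeneous system of parameters $\theta_1,\dots,\theta_m$. Each coordinate function $X_i$ is integral over $R^G$, being a root of the monic polynomial $\prod_{g\in G}\bigl(T-g\cdot X_i\bigr)\in R^G[T]$; hence $R$ is a finitely generated $R^G$-module and therefore also a finitely generated module over $\CC[\theta_1,\dots,\theta_m]$, so $\theta_1,\dots,\theta_m$ is a homogeneous system of parameters for $R$ as well. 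Since $R=\CC[X_1,\dots,X_n]$ is free (of rank one) over itself with $\{X_1,\dots,X_n\}$ a homogeneous system of parameters, Proposition~\ref{prop:hsop} shows $R$ is a free $\CC[\theta_1,\dots,\theta_m]$-module; passing successively to the quotients $R/(\theta_1,\dots,\theta_j)R$, one reads off that $\theta_1,\dots,\theta_m$ is a regular sequence on $R$.

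Then I would bring in the Reynolds operator $\rho\colon R\to R^G$, $\rho(p)=\tfrac{1}{|G|}\sum_{g\in G}g\cdot p$, available because the characteristic is zero. As $G$ fixes $R^G$ pointwise, $\rho$ is $R^G$-linear, and $\rho|_{R^G}=\id$, so that $R^G$ is an $R^G$-module direct summand of $R$. Using this retraction I would show that $\theta_1,\dots,\theta_m$ is a regular sequence on $R^G$ by induction: $\theta_1$ is a nonzerodivisor on $R$, hence on the submodule $R^G$; and if $a\in R^G$ satisfies $\theta_{i+1}a\in(\theta_1,\dots,\theta_i)R^G\subseteq(\theta_1,\dots,\theta_i)R$, then $R$-regularity of the $\theta_j$ gives $a=\sum_{j=1}^{i}\theta_j r_j$ with $r_j\in R$, whence applying $\rho$ and using $R^G$-linearity together with $\theta_j\in R^G$ yields $a=\rho(a)=\sum_{j=1}^{i}\theta_j\,\rho(r_j)\in(\theta_1,\dots,\theta_i)R^G$. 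So $\theta_{i+1}$ is a nonzerodivisor on $R^G/(\theta_1,\dots,\theta_i)R^G$, which closes the induction (the final quotient is nonzero since the $\theta_j$ have positive degree).

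Finally I would conclude with standard graded commutative algebra: a finitely generated graded module over $\CC[\theta_1,\dots,\theta_m]$ on which the $\theta_j$ act as a regular sequence is free over it --- lift a homogeneous $\CC$-basis $\eta_1,\dots,\eta_k$ of $R^G/(\theta_1,\dots,\theta_m)R^G$ and verify, via graded Nakayama and the Koszul complex of the regular sequence, that the $\eta_i$ form a free basis. This gives $R^G=\bigoplus_{i=1}^{k}\CC[\theta_1,\dots,\theta_m]\,\eta_i$; in particular $R^G$ is a finitely generated free module over one homogeneous system of parameters, so it is Cohen--Macaulay, and Proposition~\ref{prop:hsop} then yields the same kind of decomposition over \emph{any} homogeneous system of parameters.

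The hard part will be the passage from $R$-regularity to $R^G$-regularity of the $\theta_j$: for an arbitrary subring this step is simply false, and it goes through here only because the Reynolds operator supplies an $R^G$-linear retraction $R\twoheadrightarrow R^G$ --- precisely where the characteristic-zero (non-modular) hypothesis is used. The remaining ingredients (Noether normalisation, integrality of $R$ over $R^G$, freeness of a graded module over a regular sequence) are routine.
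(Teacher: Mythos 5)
The paper does not actually prove this statement: it is quoted in the Preliminaries from Stanley's survey, with the result attributed to Hochster and Eagon, so there is no internal proof to compare against. Your argument is the standard (and correct) non-modular proof via the Reynolds operator, and it is complete in all essentials: Noether normalisation gives a homogeneous system of parameters $\theta_1,\dots,\theta_m$ for $R^G$; integrality of each $X_i$ over $R^G$ (via $\prod_{g\in G}(T-g\cdot X_i)$) makes $R$ module-finite over $\CC[\theta_1,\dots,\theta_m]$, so the $\theta_j$ are an h.s.o.p.\ for $R$ and, by Proposition~\ref{prop:hsop} applied to the free module $R$ over $\CC[X_1,\dots,X_n]$, a regular sequence on $R$; the $R^G$-linear retraction $\rho$ then pushes regularity down to $R^G$, and graded Nakayama plus the Koszul complex give freeness over $\CC[\theta_1,\dots,\theta_m]$. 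You also correctly isolate the one step that genuinely needs the non-modular hypothesis (the transfer of regularity from $R$ to $R^G$), which is exactly where the general Hochster--Eagon ``direct summand'' principle enters; the original Hochster--Eagon treatment is more general and less elementary, so your route is the more self-contained one for the situation the thesis actually needs ($G$ finite, characteristic zero). Two cosmetic points only: it is worth saying explicitly that $R^G$ is a finitely generated $\CC$-algebra (Noether's finiteness theorem, or the fact that $R$ is a Noetherian module over the subalgebra generated by the orbit polynomials) before invoking Noether normalisation, and that $\dim R^G=\dim R=m$ follows from the integrality of $R$ over $R^G$, which is what makes $\{\theta_j\}$ an h.s.o.p.\ for $R$ as well as for $R^G$.
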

Stanley \cite{stanley1979invariants} attributes this theorem to Hochster and Aegon, cf.~\cite{hochster1971cohen}. We call $\theta_1,\ldots,\theta_m$ \emph{primary invariants} and $\eta_1,\ldots,\eta_k$ \emph{secondary invariants}. Notice that primary and secondary invariants are chosen rather than fixed. Consider for instance two descriptions of one ring $\CC[X]=\CC[X^2](1\oplus X)$. In the setting of the left hand side $X$ is primary and on the right hand side $X$ is secondary.

See (\ref{eq:invariant forms}) for more examples of rings of invariants $R^G$.
There we see that these objects are not necessarily polynomial ($k$ might be greater than $1$). The famous Chevalley-Shephard-Todd theorem states that $R^G$ is polynomial if and only if $G$ is generated by \emph{pseudoreflections}, matrices with at most one eigenvalue unequal to one \cite{MR1328644}.


Notice that $R^G\cdot R^\chi\subset R^\chi$, i.e.~$R^\chi$ is a $R^G$-module. In fact, $R^\chi$ is finitely generated as such: $R^\chi=R^G\rho_1+\ldots+ R^G\rho_{k'}$ (cf.~\cite{stanley1979invariants}).
Unfortunately $R^\chi$ need not be a \emph{free} $R^G$-module, in the sense that the sum need not be direct over the ring. There exists however a choice of primary invariants over whose polynomial ring this isotypical component \emph{is} free.
\begin{Theorem}[\cite{stanley1979invariants}, Theorem 3.10]
\label{thm:cohen-macaulay}
An isotypical component $R^\chi$ is a Cohen-Macaulay module, i.e.~there are independent forms $\theta_1,\ldots,\theta_m, \rho_1,\ldots,\rho_{k_\chi}$ such that
\[R^\chi=\bigoplus_{i=1}^{k_\chi}\CC[\theta_1,\ldots,\theta_m]\rho_i\]
where $m$ is the Krull dimension of $R$.
\end{Theorem}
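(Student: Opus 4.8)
The plan is to realise $R^\chi$ as a direct summand of the free module $R$ over a polynomial subring, and then to invoke that finitely generated graded projective modules over a polynomial ring are free. First I would fix a homogeneous system of parameters $\theta_1,\ldots,\theta_m$ for the invariant ring $R^G$; this exists by Noether normalisation, since $R^G$ is a finitely generated graded $\CC$-algebra of Krull dimension $m=\dim\natrep$ (the quoted theorem of Stanley). Put $S=\CC[\theta_1,\ldots,\theta_m]$, which is a genuine polynomial ring because the $\theta_i$ are algebraically independent. Since $R$ is integral over $R^G$ — each $p\in R$ is a root of $\prod_{g\in G}(T-g\cdot p)\in R^G[T]$ — and $R$ is a finitely generated $\CC$-algebra, $R$ is a finitely generated $R^G$-module, hence a finitely generated $S$-module; in particular $\theta_1,\ldots,\theta_m$ is a homogeneous system of parameters for $R$ itself.

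The second step is to show that $R$ is a \emph{free} $S$-module. Here I would use that $R=\CC[\natrep]$ is Cohen-Macaulay, so that any homogeneous system of parameters is automatically an $R$-regular sequence; thus $\theta_1,\ldots,\theta_m$ is an $R$-regular sequence of length $m=\dim R=\dim S$, which makes $R$ a maximal Cohen-Macaulay $S$-module, and a finitely generated graded maximal Cohen-Macaulay module over a graded polynomial ring is free (graded Auslander-Buchsbaum, or directly: lift a $\CC$-basis of $R/(\theta_1,\ldots,\theta_m)R$ to homogeneous generators $\omega_j$ of $R$ over $S$ via graded Nakayama, and use that the regular sequence forces these to be $S$-independent). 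So $R=\bigoplus_j S\,\omega_j$ with the $\omega_j$ homogeneous.

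Third, I would bring in the isotypical projector $\pi_\chi=\frac{\chi(1)}{|G|}\sum_{g\in G}\overline{\chi(g)}\,(g\cdot-)$ acting on $R$. By the orthogonality relations it is idempotent with image $R^\chi$, it preserves the grading because $G$ does, and it is $S$-linear: for $a\in S\subset R^G$ and $p\in R$ one has $g\cdot(ap)=(g\cdot a)(g\cdot p)=a(g\cdot p)$, whence $\pi_\chi(ap)=a\pi_\chi(p)$. So $R=R^\chi\oplus(1-\pi_\chi)R$ as graded $S$-modules, making $R^\chi$ a direct summand of the free $S$-module $R$, hence a finitely generated graded projective $S$-module (finite generation because $R$ is $S$-finite and $S$ Noetherian). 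Since $S$ is $\NN$-graded and connected, graded Nakayama forces such a module to be free: $R^\chi=\bigoplus_{i=1}^{k_\chi}S\,\rho_i$ for suitable homogeneous $\rho_i$. As the $\theta_i$ are algebraically independent forms, this is precisely the asserted decomposition, with $m$ the Krull dimension of $R$; the stronger claim that $R^\chi$ is a Cohen-Macaulay \emph{module} — freeness over \emph{any} homogeneous system of parameters — then follows by running the same argument with an arbitrary invariant h.s.o.p., or by the module analogue of Proposition~\ref{prop:hsop}.

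The main obstacle is the freeness of $R$ over $S$: this is the one genuinely nontrivial ingredient, resting on the Cohen-Macaulayness of the polynomial ring together with the fact that a homogeneous system of parameters in a graded Cohen-Macaulay algebra is a regular sequence (the graded avatar of the local statement). Once this is secured, the rest is the formal machinery of the isotypical projector and of "graded projective implies graded free". One small point to check along the way is that a nonzero $R^\chi$ has full dimension $m$ over $S$, so that "maximal Cohen-Macaulay" applies to it as well — but this is immediate, since $R$ is an integral domain and therefore $\mathrm{Ann}_S R^\chi=0$.
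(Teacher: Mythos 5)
The paper does not actually prove this statement — it is quoted from Stanley's Theorem 3.10, whose proof is exactly the argument you give: $R$ is free over $S=\CC[\theta_1,\ldots,\theta_m]$ because the polynomial ring is Cohen-Macaulay (so the h.s.o.p.\ is a regular sequence), and the $S$-linear isotypical projector exhibits $R^\chi$ as a graded direct summand of this free module, hence graded projective, hence graded free. Your reconstruction is correct as written, including the passage from one homogeneous system of parameters to all of them via Proposition~\ref{prop:hsop}.
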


The following proposition is a modification of a result by Stanley \cite{stanley1979invariants} (Lemma 4.2 and Corollary 4.3) where arbitrary finite groups $G$ in $\GL(\natrep)$ and their rings of invariants $R^G$ are considered. We require information on all the $R^G$-modules $R^\chi$, rather than just the invariants. This generalisation ruins the simple form of the second formula in Corollary 4.3 of \cite{stanley1979invariants} (the analogue of (\ref{eq:stanley2}) below). However, restricting to subgroups $G^\flat$ of  $\SL_2(\CC)$ simplifies the situation again, due to the lack of \emph{pseudoreflections}. We give a full proof.

\begin{Proposition}
\label{prop:stanley}
Let $G^\flat$ be a finite subgroup of $\SL_2(\CC)$ and let $\chi$ be one of its irreducible characters. $R^\chi$ is a Cohen-Macaulay module of Krull dimension 2, cf.~Theorem \ref{thm:cohen-macaulay}. Say  
\[R^\chi=\bigoplus_{i=1}^{k_\chi}\CC[\theta_1,\theta_2]\rho_i.\]
Then 
\begin{align}
\label{eq:stanley1}
\frac{k_\chi}{\chi(1)^2}&=\frac{\dt \dtt}{|G^\flat|},\\
\label{eq:stanley2}
\frac{2}{k_\chi}\sum_{i=1}^{k_\chi}\dri&=\dt+\dtt-2,
\end{align}
where $\dti=\deg \theta_i$ and $\dri=\deg \rho_i$.
\end{Proposition}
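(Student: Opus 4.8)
The plan is to extract both identities from a single source: the Poincaré (Hilbert) series of $R^\chi$ as computed by Molien's Theorem (Theorem \ref{thm:molien}), together with the Cohen-Macaulay decomposition $R^\chi=\bigoplus_{i=1}^{k_\chi}\CC[\theta_1,\theta_2]\rho_i$. From the free-module decomposition one reads off directly
\[
P(R^\chi,t)=\frac{\sum_{i=1}^{k_\chi}t^{\dri}}{(1-t^{\dt})(1-t^{\dtt})},
\]
since $\CC[\theta_1,\theta_2]$ is a polynomial ring in two algebraically independent forms of degrees $\dt,\dtt$. The whole proof is then a matter of comparing this rational function with the Molien expression near $t=1$: the leading pole gives \eqref{eq:stanley1} and the next order gives \eqref{eq:stanley2}.

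First I would set up the Molien side. Since $G^\flat\subset\SL_2(\CC)$ acts on $\natrep=\CC^2$, for $g\in G^\flat$ with eigenvalues $\lambda,\lambda^{-1}$ we have $\det(1-\sigma(g)t)=(1-\lambda t)(1-\lambda^{-1}t)$, so every term $\frac{\chi(g)}{\det(1-\sigma(g)t)}$ is regular at $t=1$ unless $g=\pm\Id$ (the only elements with a repeated eigenvalue $1$, resp. $-1$; note $-\Id\in G^\flat$ whenever $|G^\flat|$ is even, and $\det(1-(-\Id)t)=(1+t)^2$ is regular at $t=1$). Hence the only contribution to the double pole of $P(R^\chi,t)$ at $t=1$ comes from $g=\Id$, giving
\[
P(R^\chi,t)=\frac{\chi(1)^2}{|G^\flat|}\cdot\frac{1}{(1-t)^2}+(\text{lower-order pole at }t=1).
\]
Comparing with the decomposition side, where $(1-t^{\dt})(1-t^{\dtt})=(1-t)^2 \dt \dtt +O((1-t)^3)$ in the denominator and the numerator tends to $k_\chi$ as $t\to1$, the coefficient of $(1-t)^{-2}$ is $k_\chi/(\dt\dtt)$. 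Equating the two expressions for this coefficient yields \eqref{eq:stanley1}.

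For \eqref{eq:stanley2} I would push the Laurent expansion at $t=1$ one order further. Writing $f(t)=\sum_{i}t^{\dri}$ and $g(t)=(1-t^{\dt})(1-t^{\dtt})$, a short computation with $t=1-u$ gives $g(t)=\dt\dtt u^2\bigl(1-\tfrac{\dt+\dtt-2}{2}u+O(u^2)\bigr)$ and $f(t)=k_\chi\bigl(1-\tfrac{1}{k_\chi}\sum_i\dri\, u+O(u^2)\bigr)$, so that
\[
P(R^\chi,t)=\frac{k_\chi}{\dt\dtt}\,u^{-2}\left(1+\Bigl(\tfrac{\dt+\dtt-2}{2}-\tfrac{1}{k_\chi}\textstyle\sum_i\dri\Bigr)u+O(u^2)\right).
\]
On the Molien side one needs the coefficient of $u^{-1}$, i.e. the residue; the $g=\Id$ term contributes $\tfrac{\chi(1)^2}{|G^\flat|}\cdot\tfrac{1}{u^2}$ with no $u^{-1}$ part, the $g=-\Id$ term (if present) is regular at $t=1$, and the remaining terms are regular. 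So the simple-pole coefficient of $P(R^\chi,t)$ at $t=1$ must vanish, which forces $\tfrac{\dt+\dtt-2}{2}=\tfrac{1}{k_\chi}\sum_i\dri$, i.e. \eqref{eq:stanley2}.

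The main obstacle is the bookkeeping at the $u^{-1}$ order: one must be careful that no other group element contributes a simple pole at $t=1$ and, in particular, handle $-\Id$ correctly (it contributes a pole at $t=-1$, not at $t=1$, hence is irrelevant here) — this is exactly where the hypothesis $G^\flat\subset\SL_2(\CC)$, rather than a general subgroup of $\GL_2$, is used, since it rules out pseudoreflections (elements with eigenvalues $1,\zeta$, $\zeta\neq1$) that would otherwise create extra poles at $t=1$ and spoil the clean form of \eqref{eq:stanley2}. Once that is checked, both identities fall out of matching the first two Laurent coefficients at $t=1$; one may also sanity-check \eqref{eq:stanley1} against the known Hilbert series in Example \ref{ex:D_N invariant forms} and \eqref{eq:stanley2} against the degree data there.
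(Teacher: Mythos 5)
Your proposal is correct and follows essentially the same route as the paper's own proof: both compute the first two Laurent coefficients of $P(R^\chi,t)$ at $t=1$ once via Molien's Theorem (where only $g=\Id$ contributes to the double pole and the absence of pseudoreflections in $\SL_2(\CC)$ kills the simple pole) and once via the Cohen--Macaulay decomposition, then equate. Your handling of $-\Id$ and the expansion in $u=1-t$ match the paper's computation exactly.
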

\begin{proof}
The two equations follow from the first two coefficients, $A$ and $B$, of the Laurent expansion around $t=1$ of the Poincar\'e series
\[P(R^\chi,t)=\frac{A}{(1-t)^2}+\frac{B}{1-t}+\mathcal{O}(1).\]
We have two ways to express $P(R^\chi,t)$. Namely by Molien's Theorem (Theorem \ref{thm:molien}) and by the expression of $R^\chi$ as a Cohen-Macaulay module.
Molien's theorem states
\[P(R^\chi,t)=\frac{\chi(1)}{|G^\flat|}\sum_{g\in G^\flat}\frac{\chi(g)}{\det (1-tg)}.\] Considering $g$ to be diagonal we see that the only contribution to the term of order $(1-t)^{-2}$ in the Laurent expansion comes from the identity element $g=1$, so $A=\frac{\chi(1)^2}{|G^\flat|}.$ The terms $\frac{\chi(g)}{\det (1-tg)}$ that contribute to the coefficient of $(1-t)^{-1}$ in the Laurent expansion come from elements $g$ that have precisely one eigenvalue equal to 1; pseudoreflections. Here we use the assumption that $G^\flat$ is a subgroup of $\SL_2(\CC)$, which has no pseudoreflections. Thus $B=0$.

On the other hand we calculate 
\begin{align*}
&P\left(\bigoplus_{i=1}^{k_\chi}\CC[\theta_1,\theta_2]\rho_i,t\right)=\frac{\sum_{i=1}^{k_\chi}t^{\dri}}{(1-t^{\dt})(1-t^{\dtt})}\\
&=\frac{\sum_{i=1}^{k_\chi}t^{\dri}}{(1-t)^2\sum_{j=0}^{\dt-1}t^j\sum_{j=0}^{\dtt-1}t^j}\\
&=\frac{\sum_{i=1}^{k_\chi}\left[1-\dri (1-t)+\mathcal{O}((1-t)^2)\right]}
{(1-t)^2\sum_{j=0}^{\dt-1}\left[1-j (1-t)+\mathcal{O}((1-t)^2)\right]
\sum_{j=0}^{\dtt-1}\left[1-j (1-t)+\mathcal{O}((1-t)^2)\right]}\\
&=\frac{\sum_{i=1}^{k_\chi}\left[1-\dri (1-t)\right]}
{(1-t)^2\left[\dt-\frac{1}{2}\dt(\dt-1) (1-t)\right]
\left[\dtt-\frac{1}{2}\dtt(\dtt-1)(1-t)\right]}+\mathcal{O}(1)\\
&=\frac{\sum_{i=1}^{k_\chi}\left[1-\dri (1-t)\right]
\left[\frac{1}{\dt}+\frac{\dt-1}{2\dt} (1-t)\right]
\left[\frac{1}{\dtt}+\frac{\dtt-1}{2\dtt} (1-t)\right]}{(1-t)^2}+\mathcal{O}(1)\\
&=\frac{\frac{k_\chi}{\dt\dtt}}{(1-t)^2}+\frac{\frac{k_\chi}{2 \dt\dtt}(\dt-1)+\frac{k_\chi}{2\dt\dtt}(\dtt-1)-\frac{1}{\dt\dtt}\sum_{i=1}^{k_\chi}\dri}{(1-t)}+\mathcal{O}(1)
\end{align*}
and therefore $A=\frac{k_\chi}{\dt\dtt}$ and $B=\frac{k_\chi}{2 \dt\dtt}(\dt-1)+\frac{k_\chi}{2\dt\dtt}(\dtt-1)-\frac{1}{\dt\dtt}\sum_{i=1}^{k_\chi}\dri$. Equating the two expressions for $A$ respectively $B$ results in (\ref{eq:stanley1}) respectively (\ref{eq:stanley2}).
\end{proof}
Even though primary and secondary invariants, and therefore their degrees also
, can be chosen to a certain extent, this proposition gives a relation between the two that is fixed.
For example, the quotient (\ref{eq:stanley1}) predicts the number of generators of a particular space of invariants as a module over the chosen ring of primary invariants. It is interesting that the right hand sides of (\ref{eq:stanley1}, \ref{eq:stanley2}) are independent of the character.
The formulas can be compared to the invariant forms (\ref{eq:invariant forms}).


\section{Ground Forms}
\label{sec:ground forms}

In this section we consider the polynomial ring \[R=\CC[U]\] as in Section \ref{sec:classical invariant theory}. This time we fix the module $U$ to the natural one of the binary polyhedral groups, i.e.~we have a monomorphism \[\sigma:G^\flat\rightarrow \SL(U)=\SL_2(\CC),\]
e.g.~$U=\tnat$, $\onat$ or $\ynat$.
We will relate orbits of the polyhedral group in the Riemann sphere to forms in $R$, called \emph{ground forms}. 
\begin{Definition}[Algebraic sets]
\label{def:algebraic set}
Let $S\subset \CC[\natrep]$ and $A\subset \natrep$.
We denote the set of common zeros of polynomials in $S$ by
\[\cV(S)=\{u\in \natrep\;|\;f(u)=0,\,\forall f\in S\},\]
which is known as an \emph{algebraic set}.
Conversely, the set of polynomials vanishing on $A$ is denoted
\[\cN(A)=\{f\in \CC[\natrep]\;|\;f(u)=0,\,\forall u\in A\}\]
and constitutes an ideal in $\CC[\natrep]$.
\end{Definition}

The zeros of a form $p\in R$ is a collection of lines in $\natrep\cong \CC^2$, since, by homogeneity, $p(tu)=t^{\deg p}p(u)$, $t\in\CC$, $u\in\natrep$. These zeros are thus well defined on the Riemann sphere $\overline{\CC}\cong \CC P^1$. In other words, if $q:\CC^2 \rightarrow \CC P^1$ is the quotient map, a form $p$ has the property $q^{-1}(q(\cV(p)\setminus\{0\}))=\cV(p)\setminus\{0\}$, whereas in general one can only say $q^{-1}(q(\cV(p)\setminus\{0\}))\supset\cV(p)\setminus\{0\}$. In what follows we will suppress $q$ in the notation and switch between points on the Riemann sphere and lines in $\natrep$ without mentioning it.

\begin{Definition}[The form vanishing on an orbit]
\label{def:form of orbit}
For any $G$-orbit $\Gamma$ on the Riemann sphere, we define 
$F_\Gamma\in R_{|\Gamma|}$
as the form vanishing on $\Gamma$, i.e.~the generator of the ideal  
\[\cN(\Gamma)=RF_\Gamma.\]
This form $F_\Gamma$ is unique up to multiplicative constant. 
Moreover, there is a number $\nu_\Gamma\in\NN$ such that 
\begin{equation*}
\nu_\Gamma |\Gamma|=|G|\,.
\end{equation*}
For exceptional orbits we use a shorthand notation $F_i=F_{\Gamma_i}$ and $\nu_i=\nu_{\Gamma_i}$, $i\in\Omega$, in agreement with Section \ref{sec:polyhedral groups}. 
\end{Definition}

The forms $F_i$ are relative invariants of $G^\flat$. That is, $g\,F_i=\chi_i(g) F_i$, $\forall g\in G^\flat$, where $\chi_i$ is a one-dimensional character of $G^\flat$. Indeed, since the action of $G$ permutes the zeros of $F_i$, $g\,F_i$ must be a scalar multiple of $F_i$. In fact, these forms generate the same ring as the relative invariants of any Schur cover $G^\flat$ do, as follows from an induction argument on the zeros \cite{dolgachev2009mckay, toth2002glimpses}
\[R^{[G^\flat,G^\flat]}=\CC[F_i\;|\; i\in\Omega].\]
This is trivially true for the cyclic case, where $[G^\flat,G^\flat]=1$, $|\Omega|=2$ and $F_i$ is linear. 

The forms $F_i^{\nu_i}$ and $F_j^{\nu_j}$ of degree $|G|$ vanish on different orbits if $i\ne j$. Therefore, any orbit of zeros can be achieved by taking a linear combination of these two. If the orbit has size equal to the order $|G|$, the zeros are simple. This can be done with any pair, $F_j^{\nu_j}$ and $F_k^{\nu_k}$, obtaining the same form. Hence, if $|\Omega|=3$, there is a linear relation between $\fal^\nal$, $\fbe^\nbe$ and $\fga^\nga$, as is well known \cite{toth2002glimpses}, and we may choose our constants such that
\begin{equation}
\label{eq:relation ground forms}
\sum_{i\in \Omega} F_i^{\nu_i}=0,\qquad |\Omega|=3.
\end{equation}
There is no other algebraic relation between the forms $F_i^{\nu_i}$.

In the non-cyclic case, i.e.~$|\Omega|=3$, we have elements of order $\nga=2$, which gives us the following well known structure.
\begin{Proposition}
\label{prop:ring of relative invariant forms}
For all non-cyclic binary polyhedral groups $G^\flat$,
\[
R^{[G^\flat,G^\flat]}=\CC[\fal,\fbe]\oplus \CC[\fal,\fbe]\fga,
\]
where $\fal$, $\fbe$ and $\fga$ are the ground forms of $G^\flat$ and there is no exceptional orbit larger than $\Gga$.
\end{Proposition}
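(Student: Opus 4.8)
The plan is to exhibit $R^{[G^\flat,G^\flat]}$ as a free $\CC[\fal,\fbe]$-module of rank $2$ with basis $\{1,\fga\}$. Two inputs are already available: the description $R^{[G^\flat,G^\flat]}=\CC[\fal,\fbe,\fga]$ (which for a non-cyclic group has $|\Omega|=3$) and the relation (\ref{eq:relation ground forms}). For a non-cyclic group the smallest stabiliser order is $\nga=2$ (Table~\ref{tab:orders of polyhedral groups}), so after the normalisation of (\ref{eq:relation ground forms}) that relation reads $\fga^{2}=-\fal^{\nal}-\fbe^{\nbe}$; the point to retain is that $\fga^{2}\in\CC[\fal,\fbe]$.

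First I would split an arbitrary element of $\CC[\fal,\fbe,\fga]$ into its even and odd parts in $\fga$ and apply $\fga^{2}=-\fal^{\nal}-\fbe^{\nbe}$ to the even part, concluding that every element can be written as $p(\fal,\fbe)+q(\fal,\fbe)\fga$ with $p,q\in\CC[\fal,\fbe]$. Thus $R^{[G^\flat,G^\flat]}=\CC[\fal,\fbe]+\CC[\fal,\fbe]\fga$ is module-finite over $\CC[\fal,\fbe]$, generated by $1$ and $\fga$. Since $R^{[G^\flat,G^\flat]}$ has Krull dimension $\dim U=2$ and a ring that is module-finite over a subring has the same Krull dimension as that subring, $\CC[\fal,\fbe]$ also has Krull dimension $2$; hence $\fal,\fbe$ are algebraically independent and $\{\fal,\fbe\}$ is a homogeneous system of parameters for $R^{[G^\flat,G^\flat]}$.

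Next I would invoke Cohen-Macaulayness: $R^{[G^\flat,G^\flat]}$, being the invariant ring of the finite linear group $[G^\flat,G^\flat]\le\GL(U)$, is Cohen-Macaulay, hence a finitely generated free module over every homogeneous system of parameters (Proposition~\ref{prop:hsop}), in particular over $\CC[\fal,\fbe]$. Having two generators, its rank is at most $2$; and it is not $1$, since otherwise $R^{[G^\flat,G^\flat]}$ would be a polynomial ring, which is impossible because the nontrivial group $[G^\flat,G^\flat]\le\SL_2(\CC)$ contains no pseudoreflections and hence cannot have a polynomial ring of invariants by the Chevalley--Shephard--Todd theorem. (That $[G^\flat,G^\flat]$ is nontrivial for non-cyclic $G$ is recorded in Section~\ref{sec:binary polyhedral groups}; alternatively, Proposition~\ref{prop:stanley} applied to the trivial character of $[G^\flat,G^\flat]$ computes the rank as $\dal\dbe/|[G^\flat,G^\flat]|$, which equals $2$ for each of the four families.) So the rank is exactly $2$, and since two generators of a rank-$2$ free module over a commutative ring form a basis, $R^{[G^\flat,G^\flat]}=\CC[\fal,\fbe]\oplus\CC[\fal,\fbe]\fga$. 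The remaining assertion is immediate: $\nal\ge\nbe\ge\nga$ together with $\dal\nal=\dbe\nbe=\dga\nga=|G|$ gives $\dal\le\dbe\le\dga$, and for $|\Omega|=3$ these are the only exceptional orbits, so $\Gga$ is the largest.

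The sole step with any content is the directness of the sum, i.e.~the $\CC[\fal,\fbe]$-linear independence of $1$ and $\fga$; everything else is bookkeeping around the relation (\ref{eq:relation ground forms}). The argument above extracts it from Cohen-Macaulayness plus the absence of pseudoreflections in $\SL_2(\CC)$. A self-contained alternative: a relation $p(\fal,\fbe)+q(\fal,\fbe)\fga=0$ with $q\ne0$ forces, after squaring and clearing denominators, $\fal^{\nal}+\fbe^{\nbe}$ to be a square in $\CC[\fal,\fbe]$; under $\CC[\fal,\fbe]\cong\CC[x,y]$ this says $x^{\nal}+y^{\nbe}$ is a square, contradicting that $x^{\nal}+y^{\nbe}$ is squarefree --- its only singular point is the origin, which is too low-dimensional to lie on a repeated irreducible component --- where one uses $\nal,\nbe\ge2$, valid for every non-cyclic polyhedral group.
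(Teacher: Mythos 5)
Your proof is correct. The paper offers no proof of this proposition at all — it is introduced as a "well known structure" immediately after the observation that $\nga=2$, which is exactly the generation step you carry out by splitting into even and odd powers of $\fga$ and substituting $\fga^2=-\fal^{\nal}-\fbe^{\nbe}$. The only genuine content is the directness of the sum, which the paper leaves implicit; both of your justifications check out — the Cohen--Macaulay/Chevalley--Shephard--Todd route (the rank count $\dal\dbe/|[G^\flat,G^\flat]|=2$ indeed holds for all four families) and the elementary one via the squarefreeness of $x^{\nal}+y^{\nbe}$ for $\nal,\nbe\ge2$ — so either version stands as a complete proof of the statement.
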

The structure of a particularly interesting subring, the ring of invariants $R^{G^\flat}$, is more complicated. We first consider another consequence of (\ref{eq:relation ground forms}), namely, if $g F_i=\chi_i(g)F_i$, then $\chi_i^{\nu_i}=\chi_j^{\nu_j}$, for all $i,j\in\Omega$.
This leads us to the next lemma.

\begin{Lemma}
\label{lem:finuiinvariant}
If $\cN(\Gamma_i)=RF_i$ for $i\in\Omega$, then 
\[F_i^{\nu_i}\in R^{G^\flat}\]
where $\nu_i=\frac{|G|}{|\Gamma_i|}$.
\end{Lemma}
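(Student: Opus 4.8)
The plan is to turn the membership $F_i^{\nu_i}\in R^{G^\flat}$ into a statement about a single one-dimensional character. As noted just before the lemma, since $G$ permutes the zeros of the form $F_i$ generating $\cN(\Gamma_i)=RF_i$, we have $g\cdot F_i=\chi_i(g)F_i$ with $\chi_i$ a one-dimensional character of $G^\flat$; hence $g\cdot F_i^{\nu_i}=\chi_i(g)^{\nu_i}F_i^{\nu_i}$, and the claim is equivalent to the triviality of the character $\chi_i^{\nu_i}$. (Note $\deg F_i^{\nu_i}=\nu_i|\Gamma_i|=|G|$, so this is really a statement about a form of degree $|G|$.)

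The key move is to exploit the relation (\ref{eq:relation ground forms}). As already observed above, it forces $\chi_i^{\nu_i}=\chi_j^{\nu_j}$ for all $i,j\in\Omega$; write $\mu$ for this common one-dimensional character. It therefore suffices to produce a single index $i\in\Omega$ for which $\chi_i^{\nu_i}$ is visibly trivial. For this I would use that any one-dimensional character of $G^\flat$ factors through its abelianisation, which equals $\cA G$ up to isomorphism by Lemma \ref{lem:identical abelianisation}, so the order of each $\chi_i$ divides the exponent $\|\cA G\|$. Reading off Table \ref{tab:various properties of polyhedral groups}, every non-cyclic polyhedral group has an index $i\in\Omega$ with $\|\cA G\|\mid\nu_i$: take $i=\al$ for $\TT$, where $\|\cA G\|=3=\nal$, and $i=\ga$ for $\DD_N$, $\OO$, $\YY$, where $\|\cA G\|\in\{1,2\}$ and $\nga=2$. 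For that $i$ the order of $\chi_i$ divides $\nu_i$, so $\chi_i^{\nu_i}=\triv$, hence $\mu=\triv$, hence $\chi_j^{\nu_j}=\triv$ for every $j\in\Omega$. The cyclic case must be treated separately and is immediate: there $|\Omega|=2$, the $F_i$ are the two coordinate forms, a generator acts on each by a primitive $N$-th root of unity, and $\nu_i=N$.

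The part needing care — more an obstacle of exposition than of substance — is exactly that $\chi_i^{\nu_i}$ cannot be seen to be trivial for each $i$ in isolation: for the tetrahedral group $\chi_\ga^{\nga}=\chi_\ga^2$ may a priori be a nontrivial cube root of unity, and it is only through the relation (\ref{eq:relation ground forms}) that triviality is transported from the favourable index $\al$ to $\ga$. One also has to remember that this argument presupposes three exceptional orbits, so the cyclic groups are disposed of by hand.
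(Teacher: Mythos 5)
Your proof is correct and follows essentially the same route as the paper: dispose of the cyclic case by hand, use the relation $\sum_i F_i^{\nu_i}=0$ to get $\chi_i^{\nu_i}=\chi_j^{\nu_j}$ for all $i,j\in\Omega$, and then exhibit one index where $\|\cA G^\flat\|$ divides $\nu_i$ via Lemma \ref{lem:identical abelianisation} and Table \ref{tab:various properties of polyhedral groups}. The only (immaterial) difference is the choice of favourable index for $\TT$, where the paper uses $\be$ and you use $\al$; since $\nal=\nbe=3$ both work.
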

\begin{proof}
For the cyclic group the statement follows immediately since then $\nu_i=N=\left|\zn{N}\right|$ and therefore $\chi^{\nu_i}=\triv$ for any homomorphism $\chi:\zn{N}\rightarrow \CC^\ast$, where $\triv:G\rightarrow\{1\}\subset\CC^\ast$ denotes the trivial character.

By definition,  $\chi_j^{\|\cA G^\flat\|}=\triv$, where $\cA G$ is the abelianisation of $G$ and $\|H\|$ denotes the exponent of $H$. Indeed, the image of $\chi_j$ is an abelian quotient of $G^\flat$ and therefore a subgroup of $\cA G^\flat$.
To show that $F_i^{\nu_i}$ is invariant for the non-cyclic groups one only needs to find one $j\in\Omega$ such that $\chi_j^{\nu_j}=\triv$, thanks to the above observation that $\chi_i^{\nu_i}=\chi_j^{\nu_j}$.
Thus it will suffice to find one particular order $\nu_i$ that is a multiple of $\| \cA G^\flat\|$.
The abelianisations $\cA G^\flat$ (isomorphic to $\cA G$ by Lemma \ref{lem:identical abelianisation}) are shown in Table \ref{tab:various properties of polyhedral groups}. One finds $\| \cA \DD_N\|=\| \cA \OO\|=2=\nga$, $\| \cA \TT\|=3=\nbe$ and $\| \cA \YY\|=1$, proving the Lemma. \end{proof}



The rings of invariants are
\begin{equation}
\label{eq:invariant forms}
\begin{array}{llll}
R^{\zn{N}}&=&\CC[\fal\fbe, \fal^N+\fbe^N](1\oplus \fal^N-\fbe^N),\\
R^{\DD_N}&=&\CC[\fal,\fbe]\quad\,\text{if $N$ is odd},\\
R^{\DD_{2N}}&=&\CC[\fal,\fbe^2]\quad\text{if $N$ is even},\\
R^{\bt}&=&\CC[\fga,\fal\fbe](1\oplus \fal^3),\\
R^{\bo}&=&\CC[\fbe,\fal^2](1\oplus \fal\fga),\\
R^{\by}&=&\CC[\fal,\fbe](1\oplus \fga).
\end{array}
\end{equation}
The degrees $d_i=\deg F_i$ can be found in Table \ref{tab:various properties of polyhedral groups} and give the Poincar\'e series
\begin{equation}
\label{eq:gf of invariant forms}
\begin{array}{lll}
P\left(R^{\zn{N}},t\right)&=&\frac{1+t^N}{(1-t^2)(1-t^N)},\\
P\left(R^{\DD_N},t\right)&=&\frac{1}{(1-t^2)(1-t^N)},\\
P\left(R^{\bt},t\right)&=&\frac{1+t^{12}}{(1-t^6)(1-t^8)},\\
P\left(R^{\bo},t\right)&=&\frac{1+t^{18}}{(1-t^8)(1-t^{12})},\\
P\left(R^{\by},t\right)&=&\frac{1+t^{30}}{(1-t^{12})(1-t^{20})}.
\end{array}
\end{equation}
One can confirm (\ref{eq:stanley1}) for all of these rings, and (\ref{eq:stanley2}) for all but the dihedral invariants. Indeed, $\DD_N$ is not a subgroup of $\SL_2(\CC)$.



\chapter[Invariants of Polyhedral Groups]{Invariants of Polyhedral Groups}
\label{ch:A}

This chapter lays the foundation of the invariants of Automorphic Lie Algebras (see Concept \ref{conc:invariant of alias}) that we will find in this thesis, and is in this sense the most important chapter. It only concerns the theory of finite groups and their invariants; Lie algebras will not enter the story until the next chapter. 
It is this group theory that explains the group-independence of Automorphic Lie Algebras found to date.

Henceforth $G$ will be a polyhedral group (cf.~Section \ref{sec:polyhedral groups}) and $G^\flat$ the corresponding binary group (cf.~Section \ref{sec:binary polyhedral groups}), unless otherwise stated. We recall the presentations
\begin{align*}
&G=\langle \gal, \gbe, \gga\;|\;\gal^\nal=\gbe^\nbe=\gga^\nga=\gal\gbe\gga=1\rangle\\
&G^\flat=\langle \gal, \gbe, \gga\;|\;\gal^\nal=\gbe^\nbe=\gga^\nga=\gal\gbe\gga\rangle
\end{align*}
where in the noncyclic case the numbers $\nu_i$, $i\in\Omega=\{\al, \be, \ga\}$, are given by Table \ref{tab:orders of non-cyclic polyhedral groups}.
\begin{center}
\begin{table}[h!] 
\caption{Orders of non-cyclic polyhedral groups.}
\label{tab:orders of non-cyclic polyhedral groups}
\begin{center}
\begin{tabular}{ccc} \hline
$G$&$(\nal,\nbe,\nga)$&$|G|$\\
\hline
$\DD_{N}$&$(N,2,2)$&$2N$\\
$\TT$&$(3,3,2)$&$12$\\
$\OO$&$(4,3,2)$&$24$\\
$\YY$&$(5,3,2)$&$60$\\
\hline 
\end{tabular}
\end{center}
\end{table}
\end{center}


\section{Representations of Binary Polyhedral Groups}

In this section we discuss both linear and projective representations of (binary) polyhedral groups. We find some interesting general properties of these representations, most notably Lemma $\ref{lem:dimV^g}$ and Theorem $\ref{thm:multiplicities of eigenvalues}$, specific to this class of groups, suggesting these results are related to the group of automorphisms of the Riemann sphere.

Before we get there, we want to distinguish two types of linear representations of binary polyhedral groups $G^\flat$. Since we have a homomorphism $\pi:G^\flat\rightarrow G$, any representation of $G$ can be extended to a representation of $G^\flat$ by composition with $\pi$. The other representations of $G^\flat$ will be called \emph{spinorial}, following \cite{suter2007quantum}. It is useful to elaborate on this point.

We recall from Section \ref{sec:schur covers} that an irreducible representation $\tau:G^\flat\rightarrow\GL(V)$ of a binary polyhedral group induces a projective representation $\tilde{\rho}:G\rightarrow \GL(V)$ of a polyhedral group through a section $s:G\rightarrow G^\flat$ by defining $\tilde{\rho}=\tau\circ s$. The section defines the cocycle of $\tilde{\rho}$. The map $\tilde{\rho}$ is related to a homomorphism $\rho:G\rightarrow \PGL(V)$ by $\rho=q\circ \tilde{\rho}$ where $q:\GL(V)\rightarrow\PGL(V)$ is the quotient map. 
One can go from the homomorphism $\rho$ to the map $\tilde{\rho}$ using a section $b:\PGL(V)\rightarrow\GL(V)$ by defining $\tilde{\rho}=b\circ\rho$. 
To a projective representation $\rho$ or $\tilde{\rho}$ is related a linear representation $\tau:G^\flat\rightarrow\GL(V)$ by Theorem \ref{thm:sufficiently extended groups}.
\begin{center}
\begin{tikzpicture}
  \matrix (m) [matrix of math nodes,row sep=3em,column sep=4em,minimum width=2em]
  {
    Z& G^\flat & G \\
    \CC\Id &\GL(V) &  \PGL(V) \\};
  \path[-stealth]
    (m-1-1.east|-m-1-2) edge node []{} (m-1-2)
    (m-1-2) edge node [left] {$\tau$} (m-2-2)
    (m-1-2.east|-m-1-3) edge node [above] {$\pi$} (m-1-3)
    (m-2-1.east|-m-2-2) edge node [] {} (m-2-2)
     (m-2-2.east|-m-2-3) edge node [below] {$q$} (m-2-3)
     (m-1-3) edge node [above] {$\tilde{\tau}$}node [below] {$\tilde{\rho}$} (m-2-2)
             edge node [right] {$\rho$} (m-2-3)
	     edge [bend right=40] node [above] {$s$} (m-1-2)
	(m-2-3)edge [bend left=40] node [below] {$b$} (m-2-2);
\end{tikzpicture}
\end{center}
\begin{Definition}[Spinorial representations]
\label{def:spinorial}
The linear representation $\tau$ or the projective representation $\tilde{\rho}$ or $\rho$ is called \emph{nonspinorial} if one of the following equivalent statements hold. Otherwise it is called \emph{spinorial}.


\begin{enumerate}
\item\label{def:spinorial1}
The representation $\tau:G^\flat\rightarrow \GL(V)$ factors through $G$ via the epimorphism $\pi:G^\flat\rightarrow G$. That is, there exists a homomorphism $\tilde{\tau}:G\rightarrow \GL(V)$ such that $\tau=\tilde{\tau}\circ\pi$, which is true if and only if $Z<\ker \tau$.
\item\label{def:spinorial2}
The cocycle $c$ in $\tilde{\rho}(g)\tilde{\rho}(h)=c(g,h)\tilde{\rho}(gh)$, $\forall\,g, h\in G$, is a coboundary, i.e.~there exists an equivalent projective representation $\tilde{\tau}:G\rightarrow\GL(V)$ which is also an ordinary representation.
\item\label{def:spinorial3}
The homomorphism $\rho:G\rightarrow \PGL(V)$ factors through $\GL(V)$ via the quotient map $q:\GL(V)\rightarrow \PGL(V)$, that is, there exists a homomorphism $\tilde{\tau}:G\rightarrow \GL(V)$ such that $\rho=q\circ\tilde{\tau}$.
\end{enumerate}
\end{Definition}
Let us show the equivalence.
Suppose we have a nonspinorial representation $\tau=\tilde{\tau}\circ\pi$ as in (\ref{def:spinorial1}). Then $\tilde{\rho}=\tau\circ s=\tilde{\tau}\circ\pi\circ s=\tilde{\tau}$. This is a homomorphism, hence we have (\ref{def:spinorial2}). 

For the other direction, let $\tilde{\rho}=\tau\circ s$ be a homomorphism, i.e.~satisfy (\ref{def:spinorial2}). We know $s$ is not a homomorphism, since otherwise $G$ is a finite subgroup of $\SL_2(\CC)$, contradicting the classification results. Therefore there exists $g,h\in G$ such that $s(g)s(h)=z(g,h)s(gh)$ where $z(g,h)$ is the nontrivial element in $Z$. Then 
\begin{align*}
\tilde{\rho}(gh)&=\tilde{\rho}(g)\tilde{\rho}(h)=\tau(s(g))\tau(s(h))\\
&=\tau(s(g)s(h))=\tau(z(g,h)s(gh))\\
&=\tau(z(g,h))\tau(s(gh))=\tau(z(g,h))\tilde{\rho}(gh)
\end{align*}
and therefore $\tau(z(g,h))=\Id$ and $Z<\ker \tau$. Hence (\ref{def:spinorial1}).
The equivalence of (\ref{def:spinorial2}) and (\ref{def:spinorial3}) follows from the relation $q\circ\tilde{\rho}=\rho$.

One can spot the spinorial and nonspinorial representations in the character table of $G^\flat$ (cf.~sections \ref{sec:character theory} and \ref{sec:characters}) in the columns where all values have maximal norm $|\chi(z)|=\chi(1)$, $\forall \chi\in\Irr G^\flat$. These are the columns of the central elements.
Indeed, $\chi(z)$ is a sum of $\chi(1)$ roots of unity, which implies $|\chi(z)|\le\chi(1)$ and if $\chi$ is faithful then \[|\chi(z)|=\chi(1)\Leftrightarrow \rho_\chi(z)=c\Id\Leftrightarrow z\in Z(G^\flat).\]
The last implication from left to the right uses faithfulness of $\chi$ and the implication from right to left uses Schur's Lemma and irreducibility.

In Section \ref{sec:binary polyhedral groups} we found that $Z=Z(G^\flat)=\zn{2}$ if $Z(G^\flat)\ne G^\flat$ in which case $\chi(z)=\pm\chi(1)$ if $z\in Z$. 
This leads us to the following.
\begin{Observation}
An irreducible character $\chi$ of a non-abelian binary polyhedral group is spinorial if and only if 
$\chi(z)\ne \chi(1)$, where $z=\gal \gbe \gga$ is the unique nontrivial central group element.
\end{Observation}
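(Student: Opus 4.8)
The plan is to unwind Definition~\ref{def:spinorial}, item~\ref{def:spinorial1}, using what we already established about the centre of a non-abelian binary polyhedral group. By that definition an irreducible $\tau:G^\flat\rightarrow\GL(V)$ affording $\chi$ is \emph{nonspinorial} precisely when $Z<\ker\tau$. For non-abelian $G^\flat$ we showed $Z=Z(G^\flat)=\{1,z\}\cong\zn{2}$ with $z=\gal\gbe\gga$ the unique nontrivial central element (we checked $g_i z=g_i^{\nu_i+1}=z g_i$). Hence $Z<\ker\tau$ reduces to the single condition $z\in\ker\tau$, i.e.\ $\rho_\chi(z)=\Id$, and the whole statement becomes the equivalence ``$\rho_\chi(z)=\Id \iff \chi(z)=\chi(1)$''.

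First I would dispatch the easy implication: if $z\in\ker\tau$ then $\rho_\chi(z)=\Id$, so $\chi(z)=\tr\,\Id=\dim V=\chi(1)$. For the converse, suppose $\chi(z)=\chi(1)$. Since $z$ is central and $\tau$ is irreducible, Schur's Lemma (Lemma~\ref{lem:schur}) forces $\rho_\chi(z)=c\,\Id$ for some scalar $c$; taking traces gives $\chi(z)=c\,\chi(1)$, and as $\chi(1)=\dim V\neq 0$ we conclude $c=1$. Thus $\rho_\chi(z)=\Id$, i.e.\ $z\in\ker\tau$, whence $Z=\langle z\rangle<\ker\tau$ and $\tau$ is nonspinorial. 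Contraposing both sides yields the Observation as stated, namely that $\chi$ is spinorial if and only if $\chi(z)\neq\chi(1)$.

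There is essentially no obstacle here — the argument is short and already foreshadowed in the discussion preceding the Observation; the only points to handle with care are that $\rho_\chi(z)=c\,\Id$ genuinely needs the irreducibility hypothesis for Schur's Lemma to apply, and that one should note $z$ is indeed the element the statement refers to: it is central by the relation $g_i z=z g_i$ and nontrivial in $G^\flat$ (it lies in $\ker\pi$ but $\pi$ has nontrivial kernel), so by $|Z(G^\flat)|=2$ it is the unique such element. The fact that $z$ has order two (so that a priori $\chi(z)\in\{\pm\chi(1)\}$ and the dichotomy in the Observation is sharp) is not actually needed for the equivalence itself, only $\chi(1)\neq 0$ and centrality are.
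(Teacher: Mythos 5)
Your proof is correct and follows essentially the same route as the paper, whose argument (given in the discussion immediately preceding the Observation) likewise rests on Schur's Lemma forcing $\rho_\chi(z)=c\,\Id$ for the central element and then reading off $c$ from the trace, combined with the identification $Z=Z(G^\flat)=\{1,z\}$ and the definition of nonspinorial as $Z<\ker\tau$. Your remark that the dichotomy $\chi(z)=\pm\chi(1)$ (which the paper does invoke via $z^2=1$) is not strictly needed for the equivalence is a fair, if minor, streamlining.
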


Some interesting properties of representations of binary polyhedral groups can be deduced in general. 
\begin{Lemma}
\label{lem:odd dimensional nonspinorial}
All odd-dimensional irreducible projective representations of polyhedral groups are nonspinorial.
\end{Lemma}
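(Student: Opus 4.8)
The plan is to translate the statement into character theory and use the fact (established via the Frobenius--Schur indicator discussion and the structure of $Z(G^\flat)=\{\pm\Id\}$) that a projective representation $\rho$ of $G$ is spinorial exactly when its associated linear representation $\tau$ of $G^\flat$ has $\chi_\tau(z)=-\chi_\tau(1)$, where $z=\gal\gbe\gga$ is the unique nontrivial central element. So it suffices to show: if $\tau:G^\flat\rightarrow\GL(V)$ is irreducible and $\dim V=\chi_\tau(1)$ is odd, then $\tau(z)=+\Id$, i.e. $Z<\ker\tau$, i.e. $\tau$ is nonspinorial. I would first record that $\tau(z)$ is a scalar $c\Id$ by Schur's Lemma (Lemma \ref{lem:schur}), and that $c=\pm1$ since $z^2=1$ in $G^\flat$ whenever $G^\flat$ is non-abelian (the central element has order $2$). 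The cyclic case is trivial since there $G^\flat=\zn{N}$ has no spinorial representations at all, so assume $G^\flat$ non-abelian.

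The key step is to rule out $c=-1$ when $\dim V$ is odd. Here I would use $\det:\GL(V)\rightarrow\CC^\ast$: we have $\det\tau(z)=c^{\dim V}=c$ since $\dim V$ is odd. On the other hand $\det\circ\,\tau$ is a one-dimensional representation of $G^\flat$, hence factors through the abelianisation $\cA G^\flat$, which by Lemma \ref{lem:identical abelianisation} is isomorphic to $\cA G$ (the binary group is a sufficient extension of $G$). Therefore $\det\tau(z)$ is the image of $z$ in $\cA G^\flat$. But $z$ is a commutator — indeed, going back to Section \ref{sec:binary polyhedral groups}, we computed the derived subgroups of all the binary polyhedral groups and found that $-\Id = z$ lies in $[G^\flat,G^\flat]$ in every non-abelian case (this is exactly what makes $\BB G$ a Schur cover when $M(G)\cong\zn{2}$; equivalently one sees it directly from the dicyclic presentation where $[\bd N,\bd N]=\zn N$ contains $-\Id$ for $N$ even, and from the GAP computations for $\BB\TT,\BB\OO,\BB\YY$). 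Hence $z$ maps to the identity in $\cA G^\flat$, so $\det\tau(z)=1$, forcing $c=1$. Thus $\tau(z)=\Id$ and $\tau$ is nonspinorial, as required.

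I expect the only real subtlety to be bookkeeping rather than mathematics: making sure the chain ``$\dim V$ odd $\Rightarrow c^{\dim V}=c$'', ``$\det\circ\tau$ factors through $\cA G^\flat$'', and ``$z\in[G^\flat,G^\flat]$'' is applied uniformly across $\DD_{2M}^\flat$, $\bt$, $\bo$, $\by$ (the non-abelian cases with nontrivial $z$), and correctly handling the degenerate cases where there is nothing to prove (the cyclic groups, and $\DD_{2M-1}$ whose Schur multiplier is trivial so that no spinorial representations exist and $z$ is not present as a genuine central involution distinct from the identity in the quotient picture). An alternative, essentially equivalent, route avoiding the derived-subgroup computation would be: among central elements $\chi(z)=\pm\chi(1)$; if $\chi$ is irreducible and $\chi(z)=-\chi(1)$ then every eigenvalue of $\tau(z)$ is $-1$, so $\det\tau(z)=(-1)^{\dim V}$; being a $1$-dimensional character value on a group element of order $2$ this is automatic, so instead one pairs with the Frobenius--Schur / abelianisation argument to conclude $\det\tau(z)=1$. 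I would present the first version as the clean one.
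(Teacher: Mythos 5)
Your core idea is the same determinant trick the paper uses, but you run it on the linear representation $\tau$ of $G^\flat$ rather than directly on the cocycle, and this is where a genuine problem appears. The claim that $z=-\Id$ lies in $[G^\flat,G^\flat]$ ``in every non-abelian case'' is false for the binary dihedral groups with odd parameter: $[\bd{N},\bd{N}]=\langle r^2\rangle$ consists of the even powers of $r$, so $z=r^N$ is a commutator only when $N$ is even (your own parenthetical concedes this, yet the surrounding sentence quantifies over all non-abelian cases). Worse, in that exceptional case the conclusion you are driving at is itself false: for $N$ odd the abelianisation of $\bd{N}$ is $\zn{4}$, generated by the image of $s$, and $z=s^2$ maps to the element of order $2$; hence $\bd{N}$ has two one-dimensional — so odd-dimensional — irreducible characters with $\chi(z)=-1$. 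Thus the statement your argument actually establishes, ``$\dim V$ odd implies $\tau(z)=\Id$,'' does not hold for all binary polyhedral groups, and your attempt to dismiss $\DD_{2M-1}$ on the grounds that ``$z$ is not present as a genuine central involution'' is not correct: $z=-\Id$ is a perfectly genuine central involution of $\bd{2M-1}$.

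The lemma survives because it is a statement about \emph{projective} representations of $G$, i.e.\ about cocycle classes in $M(G)$, not about whether $Z<\ker\tau$ for the particular central extension $G^\flat$. The paper's proof stays at that level: the determinant of an $n$-dimensional projective representation is a one-dimensional projective representation with cocycle $c^n$; one-dimensional projective representations have coboundary cocycles (the computation in the proof of Lemma \ref{lem:identical abelianisation}); and since $M(G)$ has exponent at most $2$ and $n$ is odd, $[c]=[c^n]=1$. This is uniform in $G$ and in particular handles $\DD_{2M-1}$ correctly, where the lemma is vacuous because the trivial Schur multiplier makes every projective representation nonspinorial. Your argument is correct, and essentially equivalent to the paper's, precisely on the stem extensions $\bt$, $\bo$, $\by$ and $\bd{2M}$ — that is, whenever $M(G)\cong\zn{2}$ — which are the only groups admitting spinorial projective representations at all. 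To make it a complete proof you must either restrict explicitly to those cases and dispose of the trivial-multiplier groups by the vacuity argument, or abandon the reduction to condition (\ref{def:spinorial1}) of Definition \ref{def:spinorial} and argue on cocycle classes as the paper does.
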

\begin{proof}
In the setting of this chapter, the kernel of the extension is $Z\cong\zn{2}$.
Consider an odd-dimensional irreducible projective representation of $G$. Its cocycle equals the cocycle of the one-dimensional projective representation defined by taking the determinant of the former representation. But cocycles of one-dimensional projective representations are always trivial, by Lemma \ref{lem:identical abelianisation}, as desired.
\end{proof}

The following lemma and its consequences about spinoral and nonspinorial representations will be used throughout the rest of the thesis. The special case of the icosahedral group is discussed by Lusztig \cite{lusztig2003homomorphisms} where it is attributed to Serre.
\begin{Lemma}
\label{lem:dimV^g}
If $V$ is a spinorial $G^\flat$-module then $V^{\langle g_i\rangle}=0$ for all $i\in\Omega$. If $V$ is a nonspinorial $G^\flat$-module then
\[(|\Omega|-2)\dim V+2\dim V^{G^\flat}=\sum_{i\in\Omega}\dim V^{\langle g_i\rangle}.\]
\end{Lemma}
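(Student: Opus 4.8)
The plan is to dispose of the two cases of the dichotomy separately; the spinorial case is immediate, and the nonspinorial case reduces to a counting identity for the polyhedral group $G$ acting on the sphere. For the spinorial case I would invoke the Observation just above: a spinorial irreducible character of $G^\flat$ satisfies $\chi(z)=-\chi(1)$ with $z=\gal\gbe\gga$, and since $z$ has order $2$ this forces $z$ to act as $-\Id$ on every spinorial module $V$. As $g_i^{\nu_i}=z$ in $G^\flat$ for each $i\in\Omega$, any $v\in V^{\langle g_i\rangle}$ satisfies $v=g_i^{\nu_i}v=zv=-v$, hence $v=0$; so $V^{\langle g_i\rangle}=0$ and there is nothing more to prove.

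For the nonspinorial case, $Z<\ker\rho_V$ by Definition~\ref{def:spinorial}(\ref{def:spinorial1}), so $V$ is really a $G$-module via $\pi\colon G^\flat\to G$, with $V^{G^\flat}=V^G$ and $V^{\langle g_i\rangle}=V^{\langle\pi(g_i)\rangle}$, where $\pi(g_i)$ has order $\nu_i$ in $G$. It therefore suffices to prove for an arbitrary $G$-module $V$ that $(|\Omega|-2)\dim V+2\dim V^G=\sum_{i\in\Omega}\dim V^{\langle\pi(g_i)\rangle}$, and I would do this by computing $\dim\bigl(\CC[X]\otimes V\bigr)^G$ in two ways, where $X=\bigsqcup_{i\in\Omega}\Gamma_i\subset\overline{\CC}$ is the finite $G$-set of all points with nontrivial stabiliser. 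First, each $\Gamma_i$ is a transitive $G$-set whose point stabiliser is (conjugate to) $\langle\pi(g_i)\rangle$, so $\CC[\Gamma_i]\cong\mathrm{Ind}_{\langle\pi(g_i)\rangle}^G\triv$; by the tensor identity $\mathrm{Ind}_H^G\triv\otimes V\cong\mathrm{Ind}_H^G\mathrm{Res}_H V$ together with Frobenius reciprocity, $\dim(\CC[\Gamma_i]\otimes V)^G=\dim V^{\langle\pi(g_i)\rangle}$, and summing over $i$ produces the right-hand side. Second, $\dim(\CC[X]\otimes V)^G=\frac{1}{|G|}\sum_{g\in G}|X^g|\,\chi_V(g)$; here $|X^1|=|X|=\sum_{i\in\Omega}d_i=(|\Omega|-2)|G|+2$ by the formula in Section~\ref{sec:polyhedral groups}, while $|X^g|=2$ for every nontrivial $g$, since such a $g$ fixes exactly two points of $\overline{\CC}$ and both lie in $X$ (this is the content underlying~(\ref{eq:nontrivial group elements})). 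Splitting off the term $g=1$ and using $\sum_{g\in G}\chi_V(g)=|G|\dim V^G$ gives $\dim(\CC[X]\otimes V)^G=\frac{|X|-2}{|G|}\dim V+2\dim V^G=(|\Omega|-2)\dim V+2\dim V^G$, the left-hand side. Equating the two expressions completes the argument; in particular this also covers the cyclic case $|\Omega|=2$, where moreover $\langle\pi(\gal)\rangle=\langle\pi(\gbe)\rangle=G$ makes the identity transparent.

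The one point that needs genuine care—and the main obstacle—is the identification $\CC[\Gamma_i]\cong\mathrm{Ind}_{\langle\pi(g_i)\rangle}^G\triv$: one must know that the generator $\pi(g_i)$ distinguished by the presentation of $G$ really does generate a \emph{full} point stabiliser of the orbit $\Gamma_i$ (equivalently, that $\pi(g_i)$ has order exactly $\nu_i$ and that its fixed points lie on $\Gamma_i$ rather than on a larger exceptional orbit—the latter being a real possibility, e.g.\ an order-$2$ element can arise as the square of an order-$4$ rotation in $\OO$). This is built into the meaning of the $g_i$ as monodromy generators of the branched cover $\overline{\CC}\to\overline{\CC}/G$, cf.~Section~\ref{sec:polyhedral groups}; once it is in hand, both halves of the two-way count are routine character theory.
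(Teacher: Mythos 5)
Your proof is correct, and both halves rest on the same facts as the paper's: for the spinorial case the paper likewise uses that $z=g_i^{\nu_i}$ acts as $-\Id$ (it phrases this via the vanishing of the averaging projector $\frac{1}{2\nu_i}\sum_j g_i^j$, which is your telescoping $v=zv=-v$ in disguise); for the nonspinorial case both arguments come down to the double count of fixed points of nontrivial elements over the exceptional orbits together with $\sum_i d_i=(|\Omega|-2)|G|+2$. The only organizational difference is that the paper expands $2\dim V^G=\frac{2}{|G|}\sum_{g}\chi(g)$ directly, partitioning $G\setminus\{1\}$ into the stabiliser subgroups $G_\lambda$ and using that $\dim V^{G_\lambda}$ is constant on orbits, whereas you package the identical computation as a two-way count of $\dim(\CC[X]\otimes V)^G$ via the permutation character of $X=\bigsqcup_i\Gamma_i$ and Frobenius reciprocity. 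Your version makes the bookkeeping slightly more systematic (Burnside on one side, $\mathrm{Ind}\dashv\mathrm{Res}$ on the other), at the cost of introducing the auxiliary module; the paper's is more pedestrian but self-contained. The point you flag as delicate --- that $\langle\pi(g_i)\rangle$ really is a full point stabiliser on $\Gamma_i$, of order exactly $\nu_i$ --- is indeed the load-bearing hypothesis in both proofs, and it is built into the paper's setup in Section \ref{sec:polyhedral groups}, where $\nu_i$ is \emph{defined} as the order of the stabilisers on $\Gamma_i$ and $g_i$ is chosen as a generator of one of them; so no gap there.
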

This result will mostly be used for nontrivial irreducible representations $V$ of non-cyclic polyhedral groups. In that case the formula simplifies to \[\dim V=\sum_{i\in\Omega}\dim V^{\langle g_i\rangle}.\]
\begin{proof}
Averaging over the group gives a projection $\frac{1}{2\nu_i}\sum_{j=0}^{2\nu_i-1}g_i^j:V\rightarrow V^{\langle g_i\rangle}$. If $V$ is irreducible then the central element $z\in G^\flat$ is represented by a scalar (Schur's Lemma \ref{lem:schur}) and since $z^2=1$ this scalar is $-\Id$ if $V$ is spinorial and $\Id$ when $V$ is nonspinorial. In the former case $V^{\langle g_i\rangle}=\im \frac{1}{2\nu_i}\sum_{j=0}^{2\nu_i-1}g_i^j=\im \frac{1}{2\nu_i}\left(\sum_{j=0}^{\nu_i-1}g_i^j-g_i^j\right)=0$.

Now suppose that $V$ is nonspinorial, so that the action is effectively from $G$. We will use the description of $G$ as being covered by stabiliser subgroups $G_\lambda=\{g\in G\;|\;g\lambda=\lambda\}$ of the action on the Riemann sphere $\overline{\CC}\ni\lambda$ (cf.~Section \ref{sec:polyhedral groups}). That is, \[G\setminus\{1\}=\bigcup_{\lambda \in \overline{\CC}}G_\lambda\setminus\{1\}=\bigcup_{i\in\Omega,\;\lambda\in\Gamma_i}G_\lambda\setminus\{1\}.
\]
Notice also that each nontrivial group element appears exactly twice on the right hand side, since each nontrivial rotation of the sphere fixes two points.

Stabiliser subgroups $G_{\lambda}$ and $G_\mu$ are conjugate if $\lambda$ and $\mu$ share an orbit ($G\lambda=G\mu\Leftrightarrow\exists g\in G : g\lambda=\mu\Rightarrow G_\mu=g G_\lambda g^{-1}$). In particular $gV^{G_\lambda}=V^{G_\mu}$ so that the dimension $\dim V^{G_\lambda}$ is constant on orbits. Therefore it is for the purpose of this proof sufficient to take one representative $\langle g_i\rangle$ of the collection of stabiliser subgroups belonging to each exceptional orbit $\Gamma_i$.

Again we use the projection operator $\frac{1}{|G|}\sum_{g\in G}g:V\rightarrow V^{G}$, this time with the general fact that a projection operator is diagonalisable and each eigenvalue is either $0$ or $1$. Thus one can see that its trace equals the dimension of its image: $\tr \frac{1}{|G|}\sum_{g\in G}g=\dim V^{G}$. 
Let $\chi$ be the character of $V$ (cf.~Section \ref{sec:character theory}). Then
\begin{align*}
2\dim V^G
&=2\;\tr\frac{1}{|G|}\sum_{G}g=\frac{2}{|G|}\sum_{G}\chi(g)\\
&=\frac{2}{|G|}\left( \chi(1)+\sum_{G\setminus\{1\}}\chi(g)\right)\\
&=\frac{2\chi(1)}{|G|}+\frac{1}{|G|}\sum_{i\in\Omega}\sum_{\lambda\in \Gamma_i}\sum_{g\in G_\lambda\setminus\{1\}}\chi(g)\\
&=\frac{2\chi(1)}{|G|}+\frac{1}{|G|}\sum_{i\in\Omega}\sum_{\lambda\in \Gamma_i}\left(\nu_i\dim V^{\langle g_i\rangle}-\chi(1)\right)\\
&=\frac{2\chi(1)}{|G|}+\frac{1}{|G|}\sum_{i\in\Omega}d_i\left(\nu_i\dim V^{\langle g_i\rangle}-\chi(1)\right)\\
&=\frac{2\chi(1)}{|G|}+\sum_{i\in\Omega}\dim V^{\langle g_i\rangle}-\frac{\chi(1)}{|G|}\sum_{i\in\Omega} d_i.
\end{align*}
Now we use (\ref{eq:finite subgroups of SO(3)}) to see that $\sum_{i\in\Omega}d_i=(|\Omega|-2)|G|+2$ and obtain the result.
\end{proof}

One consequence of this result is that any vector in a representation of a relevant group is the sum of a $\gal$-, a $\gbe$- and a $\gga$-invariant.
\begin{Corollary} 
\label{cor:sumV^g}
If V is a representation of a non-cyclic polyhedral group then 
\[V=V^{\langle \gal \rangle}+V^{\langle \gbe \rangle}+V^{\langle \gga \rangle}\]
and the sum is direct if and only if $V^G=0$.
\end{Corollary}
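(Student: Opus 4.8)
The plan is to prove the two assertions of the corollary separately: first the spanning statement $V=V^{\langle\gal\rangle}+V^{\langle\gbe\rangle}+V^{\langle\gga\rangle}$ for an arbitrary module $V$, and then the directness criterion, which will fall out of a dimension count based on Lemma \ref{lem:dimV^g}.

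For the spanning statement I would fix a $G$-invariant inner product on $V$ and record that, writing $N_i=\sum_{j=0}^{\nu_i-1}g_i^j\in\End V$, the operator $\frac{1}{\nu_i}N_i$ is the orthogonal projection onto $V^{\langle g_i\rangle}$; hence the claim is equivalent to $\ker N_\al\cap\ker N_\be\cap\ker N_\ga=0$. To get this I would use $H^1(G,V)=0$ (the usual averaging argument, $|G|$ being invertible). Taking the presentation $G=\langle\gal,\gbe\mid\gal^\nal=\gbe^\nbe=(\gal\gbe)^\nga=1\rangle$ obtained by eliminating $\gga=(\gal\gbe)^{-1}$, one expands the cocycle identity on the three relators to identify $Z^1(G,V)$ with the pairs $(a,b)\in V^2$ satisfying $N_\al a=0$, $N_\be b=0$ and $N_\ga(a+\gal b)=0$, and $B^1(G,V)$ with the pairs $\bigl((\gal-1)w,(\gbe-1)w\bigr)$, $w\in V$. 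Now let $v\in\ker N_\al\cap\ker N_\be\cap\ker N_\ga$; then $(v,0)\in Z^1(G,V)=B^1(G,V)$, so $v=(\gal-1)w$ for some $w$ with $(\gbe-1)w=0$, i.e.\ $w\in V^{\langle\gbe\rangle}$. Plugging this into $N_\be v=0$ yields $P_\be(\gal w)=w$, where $P_\be=\frac{1}{\nbe}N_\be$ is the orthogonal projection onto $V^{\langle\gbe\rangle}$. Since $\gal$ is an isometry and an orthogonal projection $P$ satisfies $\|Px\|=\|x\|$ only when $x$ lies in its image, this forces $\gal w=w$; thus $w\in V^{\langle\gal\rangle}\cap V^{\langle\gbe\rangle}$, which equals $V^G$ because $\gal$ and $\gbe$ generate $G$. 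Hence $v=(\gal-1)w=0$.

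For the directness, the key point is again that any two of $\gal,\gbe,\gga$ generate $G$, so $V^{\langle g_i\rangle}\cap V^{\langle g_j\rangle}=V^G$ for $i\neq j$. Writing $A,B,C$ for the three fixed subspaces and applying $\dim(X+Y)=\dim X+\dim Y-\dim(X\cap Y)$ twice,
\[
\dim(A+B+C)=\dim A+\dim B+\dim C-\dim V^G-\dim\bigl(A\cap(B+C)\bigr),
\]
with $A\cap(B+C)\supseteq V^G$. Lemma \ref{lem:dimV^g}, whose nonspinorial case covers any $G$-module with $|\Omega|=3$, gives $\dim A+\dim B+\dim C=\dim V+2\dim V^G$; combined with $A+B+C=V$ from the first part this forces $A\cap(B+C)=V^G$. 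The sum is direct precisely when $\dim(A+B+C)=\dim A+\dim B+\dim C$, i.e.\ when $\dim V=\dim V+2\dim V^G$, i.e.\ when $V^G=0$.

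The main obstacle is the spanning statement, and specifically the recognition that $H^1(G,V)=0$ is the right tool, together with the small isometry-and-projection argument that converts the leftover relation $N_\be v=0$ into "$w$ is $\gal$-fixed"; the bookkeeping with the two-generator presentation and the operators $N_i$ must be done carefully. Everything after $A+B+C=V$ is routine, using only elementary dimension counting and Lemma \ref{lem:dimV^g}.
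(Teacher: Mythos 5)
Your proof is correct, and for the spanning statement it takes a genuinely different route from the paper. The paper handles both assertions in a single inclusion--exclusion count,
$\dim(A+B+C)=\sum_i\dim V^{\langle g_i\rangle}-\sum_{i\ne j}\dim\bigl(V^{\langle g_i\rangle}\cap V^{\langle g_j\rangle}\bigr)+\dim(A\cap B\cap C)$,
combined with $V^{\langle g_i\rangle}\cap V^{\langle g_j\rangle}=V^G$ and Lemma \ref{lem:dimV^g}; you instead dualise the spanning claim to $\ker N_\al\cap\ker N_\be\cap\ker N_\ga=0$ (legitimate because each $\frac{1}{\nu_i}N_i$ is a self-adjoint idempotent for the invariant inner product, so $\ker N_i=(\im N_i)^{\perp}$) and derive it from $H^1(G,V)=0$ via the two-generator presentation, finishing with the isometry-versus-projection observation that forces $\gal w=w$. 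Your route is longer but is actually the more watertight one: the three-subspace inclusion--exclusion identity the paper writes down is not valid for arbitrary subspaces (three distinct lines through the origin of $\CC^2$ already violate it), and in general it only yields an upper bound for $\dim(A+B+C)$, i.e.\ the trivial inequality $\dim(A+B+C)\le\dim V$; your cohomological argument supplies the missing lower bound. Your directness step is then in substance the same as the paper's --- both rest on Lemma \ref{lem:dimV^g} and on the fact that any two of the three generators generate $G$ --- but your organisation via $\dim(A+B+C)=\dim A+\dim(B+C)-\dim\bigl(A\cap(B+C)\bigr)$ is the correct way to run that count, and the criterion ``direct iff $\dim(A+B+C)=\dim A+\dim B+\dim C$'' is exactly right for three subspaces (pairwise trivial intersections alone would not suffice). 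The only item to spell out in a final write-up is the standard fact that a pair $(a,b)$ satisfying the three relator conditions genuinely descends to a cocycle on $G$ rather than merely on the free group; this follows from $c(grg^{-1})=g\,c(r)$ whenever $r$ acts trivially on $V$.
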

\begin{proof} Since any two of the three generators of $G$ generate the whole group, that is, $\langle g_i, g_j\rangle=G$ if $i,j\in\Omega$, $i\ne j$, we have $V^{\langle g_i \rangle}\cap V^{\langle g_j \rangle}=V^G$. Now we count the dimension
\begin{align*}
&\dim \left(V^{\langle \gal \rangle}+V^{\langle \gbe \rangle}+V^{\langle \gga \rangle}\right)\\
&=\sum_{i\in\Omega}\dim V^{\langle g_i \rangle}
-
\sum_{\{i,j\}\subset\Omega,\,i\ne j}\dim \left(V^{\langle g_i \rangle}\cap V^{\langle g_j \rangle}\right)
+
\dim \left(V^{\langle \gal \rangle}\cap V^{\langle \gbe \rangle}\cap V^{\langle \gga \rangle}\right)\\
&=\sum_{i\in\Omega}\dim V^{\langle g_i \rangle}-2\dim V^G=\dim V
\end{align*}
where the last equality is given by Lemma \ref{lem:dimV^g}. 
\end{proof}
Since we have all the character tables available (cf.~Section \ref{sec:characters}), it is not hard to find the dimension of the space $V^{\langle g_i\rangle}$ of $g_i$-invariants in a $G$-module $V$. Indeed, these are simply the traces of the projection operators: $\dim V^{\langle g\rangle}=\frac{1}{\nu_i}\sum_{j=0}^{\nu_i-1}\chi(g^j_i)$
and we list them in Table \ref{tab:dimV^g}. Notice that the second row in this table equals the sum of the last three, in agreement with Lemma \ref{lem:dimV^g}.
\begin{center}
\begin{table}[h!] 
\caption{Dimensions $\dim V^{\langle g_i\rangle},\; i\in\Omega$ for all nontrivial irreducible representations $V$ of non-cyclic polyhedral groups.}
\label{tab:dimV^g}
\begin{center}
\begin{tabular}{cccccccccccccccccccc} \hline
$ $&$\chi_2$&$\chi_3$&$\chi_4$&$\psi_j$&$\btii$&$\btiii$&$\btiiiiiii$&$
\boii$&$\boiii$&$\boiiiiii$&$\boiiiiiii$&$\byiiii$&$\byiiiii$&$\byiiiiii$&$\byiiiiiiii$\\
\hline
$1$&$1$&$1$&$1$&$2$&$1$&$1$&$3$&$1$&$2$&$3$&$3$&$3$&$3$&$4$&$5$\\
\hline
$\gal$&$1$&$0$&$0$&$0$&$0$&$0$&$1$&$0$&$1$&$0$&$1$&$1$&$1$&$0$&$1$\\
$\gbe$&$0$&$0$&$1$&$1$&$0$&$0$&$1$&$1$&$0$&$1$&$1$&$1$&$1$&$2$&$1$\\
$\gga$&$0$&$1$&$0$&$1$&$1$&$1$&$1$&$0$&$1$&$2$&$1$&$1$&$1$&$2$&$3$\\
\hline 
\end{tabular}
\end{center}
\end{table}
\end{center}

The next lemma is not restricted to polyhedral groups.
\begin{Lemma}
\label{lem:dimkG^H}
Let $\regrep{G}$ denote the regular representation of a finite group $G$. If $H<G$ is a subgroup then
\[\dim \regrep{G}^{H}=[G:H]\]
where $[G:H]=\frac{|G|}{|H|}$ is the index of $H$ in $G$.
\end{Lemma}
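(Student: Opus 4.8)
The plan is to read off the dimension of the fixed subspace directly from the permutation action defining the regular representation. Recall that $\regrep{G}$ has a basis $\{e_g \mid g\in G\}$ on which $G$ acts by $h\cdot e_g = e_{hg}$. First I would write a general element as $v=\sum_{g\in G} c_g e_g$ and impose $h\cdot v = v$ for all $h\in H$. Since $h\cdot v = \sum_{g} c_g e_{hg} = \sum_{g} c_{h^{-1}g}\, e_g$, invariance under $H$ is equivalent to $c_{h^{-1}g}=c_g$ for all $h\in H$ and all $g\in G$, that is, to the function $g\mapsto c_g$ being constant on each right coset $Hg$. Hence the vectors $\sum_{g\in Hx}e_g$, one for each right coset $Hx$, form a basis of $\regrep{G}^{H}$, and therefore $\dim \regrep{G}^{H}$ equals the number of right cosets of $H$ in $G$, which is $[G:H]$.

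Alternatively, and even more quickly, one can invoke the averaging argument already used in the proof of Lemma \ref{lem:dimV^g}: for any finite-dimensional $G$-module $V$ and any subgroup $H<G$, the operator $\frac{1}{|H|}\sum_{h\in H}h$ is a projection onto $V^{H}$, so it is diagonalisable with eigenvalues in $\{0,1\}$ and its trace equals the dimension of its image, giving $\dim V^{H} = \frac{1}{|H|}\sum_{h\in H}\chi_V(h)$. Taking $V=\regrep{G}$ and using $\regchar{}(h)=|G|$ for $h=1$ and $\regchar{}(h)=0$ for $h\neq 1$ yields $\dim \regrep{G}^{H} = \frac{1}{|H|}\,|G| = [G:H]$.

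There is no genuine obstacle here; the statement is essentially a bookkeeping exercise. The only point to watch is the distinction between left and right cosets: the left action $e_g\mapsto e_{hg}$ forces the coefficient function to be constant on \emph{right} cosets $Hg$, not on left cosets. This does not affect the count, since $|H\backslash G| = |G/H| = [G:H]$.
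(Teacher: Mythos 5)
Your first argument is exactly the paper's proof: write $v=\sum_g c_g e_g$, impose invariance, and conclude that the coefficient function is constant on cosets of $H$, giving dimension $[G:H]$; your extra care about right versus left cosets is a harmless refinement of what the paper states more loosely. The alternative averaging argument via the regular character is also correct, but the coset-counting route is the one the paper takes.
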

\begin{proof}
Denote an arbitrary vector $v\in\regrep{G}$ by $\sum_{g\in G}c_g g$. Invariance of $v$ under $h\in H$ means 
\[\sum_{g\in G}c_g g=v=h^{-1} v =h^{-1} \sum_{g\in G}c_g g=\sum_{g\in G}c_g h^{-1}g=\sum_{g\in G}c_{h g} g.\]
In other words, $v\in\regrep{G}^H$ if and only if $c_{h g}=c_g$ for all $g\in G$ and $h\in H$, i.e.~$c:G\rightarrow \splitk$ is constant on cosets of $H$, leaving a vector space of dimension $\frac{|G|}{|H|}$.
\end{proof}
Applied to a polyhedral group and a stabiliser subgroup of its action on the sphere, Lemma \ref{lem:dimkG^H} reads
\begin{equation}
\label{eq:dimkG^g}
\dim \regrep{G}^{\langle g_i \rangle}=\dim \regrep{G^\flat}^{\langle g_i \rangle}=d_i
\end{equation}
were we consider $g_i$ in $G$ or $G^\flat$ accordingly. The dimension is the same because the order of the group and the subgroup are both doubled when going to the binary case.
Notice how this result satisfies Lemma \ref{lem:dimV^g}, considering Equation (\ref{eq:finite subgroups of SO(3)}).

This section will be concluded with the first instance of a recurring theme in this text. A seemingly group-dependent object turns out to be group-independent. The value of the theorem will become apparent in Chapter \ref{ch:B}.

\begin{Theorem}
\label{thm:multiplicities of eigenvalues}
Let $\tau:G^\flat\rightarrow\GL(V)$ be an irreducible representation of a binary polyhedral group. Given a section $s:G\rightarrow G^\flat$, the multiplicities of eigenvalues of $\tau(s(g_i))$ depend only on $i\in\Omega$ and $\dim V$, not on the choice of binary polyhedral group $G^\flat$ nor the choice of the section. The partitions are listed in Table \ref{tab:multiplicities of eigenvalues}.
\begin{center}
\begin{table}[h!] 
\caption{Eigenvalue multiplicities of $\tau(g_i)$ for irreducible representations  $\tau:G^\flat\rightarrow\GL(V)$.}
\label{tab:multiplicities of eigenvalues}
\begin{center}
\begin{tabular}{cccc} \hline
$\dim V$ & $\al$ & $\be$ & $\ga$\\
\hline
$1$ & $(1)$ & $(1)$ & $(1)$ \\
$2$ & $(1,1)$ & $(1,1)$ & $(1,1)$ \\
$3$ & $(1,1,1)$ & $(1,1,1)$ & $(2,1)$ \\
$4$ & $(1,1,1,1)$ & $(2,1,1)$ & $(2,2)$\\
$5$ & $(1,1,1,1,1)$ & $(2,2,1)$ & $(3,2)$\\
$6$ & $(2,1,1,1,1)$ & $(2,2,2)$ & $(3,3)$\\
\hline 
\end{tabular}
\end{center}
\end{table}
\end{center}
\end{Theorem}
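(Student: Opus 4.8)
Since $\pi\circ s=\id$ and $\ker\pi=Z=\{1,z\}$ with $z=\gal\gbe\gga$, the element $s(g_i)$ is either $g_i$ or $zg_i$ in $G^\flat$. When $V$ is irreducible, Schur's Lemma (Lemma~\ref{lem:schur}) gives $\tau(z)=c\,\Id$ with $c^2=1$, hence $\tau(s(g_i))\in\{\tau(g_i),-\tau(g_i)\}$; passing from a matrix to its negative negates every eigenvalue but leaves the multiset of eigenvalue \emph{multiplicities} unchanged. So it is enough to analyse $\tau(g_i)$ itself, and the multiplicities are automatically independent of the section.

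\textbf{Step 2 — reduction to a symmetric power, and reading off the table.} For the cyclic and binary dihedral groups every irreducible has dimension at most $2$, so the $\tau(g_i)$-eigenvalue partition is forced to be $(1)$ or $(1,1)$ and there is nothing to prove; assume therefore $G^\flat\in\{\bt,\bo,\by\}$, so that the natural representation $\sigma$ embeds $G^\flat$ in $\SL_2(\CC)$. Put $n=\dim V$; then $g_i$ has order $2\nu_i$ (because $g_i^{\nu_i}=z$ and $z^2=1$) and $\det\sigma(g_i)=1$, so $\sigma(g_i)$ is $\SL_2(\CC)$-conjugate to $\diag(\zeta,\zeta^{-1})$ for a primitive $2\nu_i$-th root of unity $\zeta$. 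The heart of the proof is the claim that the eigenvalue multiplicities of $\tau(g_i)$ on $V$ coincide with those of $\diag(\zeta,\zeta^{-1})$ acting on $S^{n-1}U\cong S^{n-1}\CC^2$. Granting this: the eigenvalues on $S^{n-1}\CC^2$ are $\zeta^{n-1-2k}$ for $0\le k\le n-1$, and $\zeta^{n-1-2k}=\zeta^{n-1-2k'}$ exactly when $\nu_i\mid k-k'$, so the multiplicity partition is the list of fibre-sizes of the reduction map $\{0,1,\dots,n-1\}\to\ZZ/\nu_i\ZZ$. Writing $n=q\nu_i+s$ with $0\le s<\nu_i$, this partition has $s$ parts equal to $q+1$ and $\nu_i-s$ parts equal to $q$, which for $n\le 6$ and $\nu_i\in\{2,3,4,5\}$ is exactly Table~\ref{tab:multiplicities of eigenvalues}; in particular it depends only on $n$ and on $i$ (through $\nu_i$), not on the group $G^\flat$.

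\textbf{Step 3 — the comparison with $S^{n-1}U$, and the main obstacle.} Since twisting a representation of the cyclic group $\langle g_i\rangle$ by a linear character merely rotates the eigenvalue multiset and does not change its partition, it suffices to show that $\tau|_{\langle g_i\rangle}$ agrees with $(S^{n-1}U)|_{\langle g_i\rangle}$ up to such a twist. Conceptually this is forced by the $\SL_2(\CC)$-structure: Clebsch--Gordan gives $(S^dU)\otimes U\cong S^{d+1}U\oplus S^{d-1}U$, the McKay correspondence gives $W\otimes U\cong\bigoplus_{W'\sim W}W'$ along the edges of the affine $\mathrm{ADE}$ graph of $G^\flat$, and propagating outward from the trivial node one obtains the desired agreement (up to a linear twist) for every irreducible $V$. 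More concretely — and this is what one actually verifies in the finitely many cases — one checks that $\tau(g_i)$ has exactly $\min(n,\nu_i)$ distinct eigenvalues with multiplicities differing by at most one; for $\dim V\le 6$ this is a short computation from the character tables of Section~\ref{sec:characters} using that the multiplicity of $\zeta^a$ equals $\frac{1}{2\nu_i}\sum_{t}\zeta^{-at}\chi(g_i^t)$. The genuinely delicate point is to carry this out \emph{uniformly} over $\bt,\bo,\by$ rather than group-by-group, which is where one must exploit that each $g_i$ is a rotation of the sphere and that every irreducible of $G^\flat$ sits inside a symmetric power of the natural module; one must also be careful that $V|_{\langle g_i\rangle}$ is in general \emph{not} isomorphic to $(S^{n-1}U)|_{\langle g_i\rangle}$ (for a two-dimensional spinorial module the eigenvalue set of $\tau(\gal)$ can be a nontrivial rotate $c\{\zeta,\zeta^{-1}\}$ of $\{\zeta,\zeta^{-1}\}$), so any inductive bookkeeping has to be done modulo linear twists, or directly at the level of partitions.
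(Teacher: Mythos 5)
Your Step 1 is fine and matches the way the paper handles the section. The problem is Step 3, which you yourself flag as "the genuinely delicate point" and then do not close: the entire theorem rests on the claim that every $n$-dimensional irreducible of $G^\flat$, restricted to $\langle g_i\rangle$, has the same eigenvalue multiplicities as $S^{n-1}U$ (equivalently, has exactly $\min(n,\nu_i)$ distinct eigenvalues with multiplicities differing by at most one), and neither of your two proposed routes establishes this. The McKay/Clebsch--Gordan propagation does not obviously work: at a node $W$ of the McKay graph one only knows that the \emph{direct sum} of the neighbours of $W$ restricts to $\langle g_i\rangle$ as $W|_{\langle g_i\rangle}\otimes U|_{\langle g_i\rangle}$, and there is no a priori rule for distributing this restriction among the individual neighbouring irreducibles (which have different dimensions), so the induction "outward from the trivial node" does not close modulo linear twists without further input. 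Your fallback -- a case-by-case computation of eigenvalue multiplicities from the character tables -- would work but is not carried out, and in any case abandons exactly the uniformity over $\bt$, $\bo$, $\by$ that the theorem asserts.

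The paper closes this gap with a counting argument you are missing. Writing $(m_{i,1},\dots,m_{i,k_i})$ for the partition at $g_i$, one has $\dim\End(V)^{\langle g_i\rangle}=\sum_r m_{i,r}^2$; since $\End(V)\cong V\otimes V^\ast$ is nonspinorial and $\dim\End(V)^{G^\flat}=1$ by Schur, Lemma \ref{lem:dimV^g} gives the identity $\sum_{i\in\Omega}\sum_r m_{i,r}^2=n^2+2$. Combined with the bound $k_i\le\min\{\nu_i,n\}$ (because $\tau(s(g_i))^{\nu_i}$ is a scalar, again by Schur), and the elementary fact that the balanced partition is the \emph{unique} minimiser of $\sum_r m_r^2$ among partitions of $n$ into at most $K$ parts, one checks that the three minima already sum to $n^2+2$; hence equality is forced at every $i$ simultaneously and the partitions must be the balanced ones of your Step 2. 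If you add this identity (or an equivalent uniform argument) to your Step 3, your proof is complete; as written, the key claim is asserted rather than proved.
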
 
\begin{proof}
Cyclic groups only have one-dimensional irreducible representations, hence these only play a part in the trivial first row of Table \ref{tab:multiplicities of eigenvalues}. In the proof below we assume that $G^\flat$ is non-cyclic.

We denote the multiplicities of eigenvalues of $\tau(s(g_i))$ by $(m_{i,1},\ldots,m_{i,k_i})$. Consider the action of $s(g_i)$ on $V\otimes V^\ast=\End (V)$. Different choices of coset representative of $s(g_i)Z$ only differ by a central element, which acts trivially on $\End (V)$ by Schur's Lemma. In other words, the action of $G^\flat$ on $\End (V)$ is nonspinorial; it is a $G$-module. 

In a basis where $\tau(s(g_i))$ is diagonal, one can quickly check that 
\[\dim \End (V)^{\langle g_i\rangle}=\sum_{r=1}^{k_i} m_{i,r}^2.\] 
By Schur's Lemma, $\dim \End (V)^{G^\flat}=1$ so that Lemma \ref{lem:dimV^g} for the non-cyclic groups reads
\begin{align*}
\sum_{i\in \Omega}\sum_{r=1}^{k_i} m_{i,r}^2
&=\sum_{i\in \Omega}\dim \End (V)^{\langle g_i\rangle}\\
&=\dim\End (V)+2\dim \End (V)^{G^\flat}=n^2+2,
\end{align*}
where $n=\dim V$.

Now, using the order $\nu_i$ of $g_i$, one can find a lower bound for each dimension $\sum_{r=1}^{k_i} m_{i,r}^2$. This can be done by arguing that $\tau(s(g_i))$ can have at most $\min\{\nu_i,n\}$ distinct eigenvalues, i.e.\[k_i\le \min\{\nu_i,n\}.\] Indeed $s(g_i)^{\nu_i}\in\ker\pi< Z(G^\flat)$ and $\tau$ is irreducible so Schur's Lemma ensures $\tau(s(g_i))^{\nu_i}$ is a scalar. Each eigenvalue of $\tau(g_i)$ is a $\nu_i$-th root of this scalar, which leaves $\nu_i$ options.

Observe that $\min\{\nu_i,n\}$ is independent of the choice of binary polyhedral group. 
This follows because when the dimension $n$ of the irreducible representation increases, certain groups drop out of the equation, see Section \ref{sec:characters}. This happens precisely at the number $n$ where one would otherwise see variation in $\min\{\nu_i,n\}$, cf.~Table \ref{tab:number of distinct eigenvalues}.
\begin{center}
\begin{table}[h!] 
\caption{The minima $\min\{\nu_i,n\}$ of the orders $\nu_i$ and the dimension of a simple module of a binary polyhedral group.}
\label{tab:number of distinct eigenvalues}
\begin{center}
\begin{tabular}{ccccc} \hline
$n$ & involved groups & $\al$ & $\be$ & $\ga$\\
\hline
$1$ & $\zn{N},\;\bd{N},\;\bt,\;\bo,\;\by$ & $1$ & $1$ & $1$ \\
$2$ & $\bd{N},\;\bt,\;\bo,\;\by$ & $2$ & $2$ & $2$ \\
$3$ & $\bt,\;\bo,\;\by$ & $3$ & $3$ & $2$ \\
$4$ & $\bo,\;\by$ & $4$ & $3$ & $2$\\
$5$ & $\by$ & $5$ & $3$ & $2$\\
$6$ & $\by$ & $5$ & $3$ & $2$\\
\hline 
\end{tabular}
\end{center}
\end{table}
\end{center}

Consider the map $D:(m_1,\ldots,m_k)\mapsto \sum_{r=1}^k m_r^2$. We will look for a partition $(m_1,\ldots,m_k)$ of $n$ that minimises $D$. If we move a unit in the partition then the value of $D$ changes by
\[D\left((m_1,\ldots,m_k)\right)-D\left((m_1,\ldots,m_r+1,\ldots,m_{r'}-1,\ldots,m_k)\right)=2(m_r-m_{r'}+1).\] 
We see that $D$ can be decreased by this move if and only if there are two parts $m_r$ and $m_{r'}$ such that $|m_r-m_{r'}|>1$. Such a partition is therefore not minimizing. 

But for each $K \le n$ there is only one partition $(m_1,\ldots,m_K)$ of $n$ where any two parts differ by at most one. Hence this is a minimiser and the global minimiser over all partitions with at most $K$ parts. These minimisers, with $K=K_i=\min\{\nu_i, n\}$, are listed in Table \ref{tab:multiplicities of eigenvalues}.
Finally, we check that all the minimisers together satisfy the equation $\sum_{i\in \Omega}\sum_{r=1}^{k_i} m_{i,r}^2=n^2+2$ that we established earlier, therefore no other partitions are possible.
\end{proof}


\section{Invariant Vectors and Fourier Transforms}
\label{sec:invariant vectors}

In order to study Automorphic Lie Algebras through classical invariant theory, one needs to get a handle on invariant vectors, also known as equivariant vectors,
\[(V_\chi\otimes R)^{G^\flat},\]
where $V_\chi$ is a finite dimensional $G^\flat$-module and $R=\CC[U]$ the polynomial ring on the natural representation $U$ of $G^\flat$. 
We start with the example of the dihedral group, where all these spaces of invariant vectors can be found by hand. They are apparently well known, but a good reference is hard to find.

\subsection{Vectors with Dihedral Symmetry}
\label{sec:D_N-invariant vectors}


In Section \ref{sec:characters} we described the characters of the dihedral group. Now we want to consider explicit matrices for the representation $\rho:\DD_N\rightarrow \GL(V_{\psi_j})$. The matrices are only determined up to conjugacy, i.e.~choice of basis. We will consider here the choice
\begin{equation}
\label{eq:standard matrices}
 \rho_r=\begin{pmatrix}\omega_N^{{j}}&0\\0&\omega_N^{N-{j}}\end{pmatrix} ,\qquad \rho_s=\begin{pmatrix}0&1\\1&0\end{pmatrix},
\end{equation}
where $\omega_N=e^{\frac{2\pi i}{N}}$.
By confirming that the group relations $\rho_r^N=\rho_s^2=(\rho_r\rho_s)^2=\Id$ hold and that the trace $\psi_j=\tr \circ\rho$ is given by (\ref{eq:D_N two-dimensional characters}) one can check that this is indeed the correct representation. 

If one knows the space of invariants $V^G$, when $G$ is represented by $\rho$, then one can readily find the space of invariants belonging to an equivalent representation $\rho'$, because the invertible transformation $T$ that relates the equivalent representations, $T\rho_g=\rho'_g T$, also relates the spaces of invariants: $V^{\rho'(G)}=TV^{\rho(G)}$.

Suppose $\{X,Y\}$ is a basis for $V_{\psi_1}^\ast$, corresponding to (\ref{eq:standard matrices}).
One then finds the relative invariant forms (\ref{eq:Fs}) as in Example \ref{ex:D_N invariant forms}:
\begin{equation}
\label{eq:Fs}
\fal=XY\,,\qquad\fbe=\frac{X^N+Y^N}{2}\,,\qquad \fga=\frac{X^N-Y^N}{2}.
\end{equation} 
These forms satisfy one algebraic relation:
\begin{equation}
\label{eq:abg relation}
\fal^N-\fbe^2+\fga^2=0.
\end{equation}
We will show that the invariant vectors over $R=\CC[X,Y]$ are given by
\begin{equation}
\label{eq:D_N-invariant vectors}
\big(V_{\psi_j}\otimes  R\big)^{\DD_N}=\left( \begin{pmatrix} X^{j}\\Y^{j} \end{pmatrix}\oplus \begin{pmatrix}Y^{N-j}\\X^{N-j}\end{pmatrix}\right) \otimes\CC[\fal,\fbe]
\end{equation}
where the sum is direct over the ring $\CC[\fal,\fbe]$.

An object is invariant under a group action if it is invariant under the action of all generators of a group. To find all $\DD_N=\langle r, s \rangle$-invariant vectors we first look for the $\langle r \rangle=\zn{N}$-invariant vectors and then average over the action of $s$ to obtain all dihedral invariant vectors.

The space of invariant vectors is a module over the ring of invariant forms. When searching for $\zn{N}$-invariant vectors, one can therefore look for invariants modulo powers of the $\zn{N}$-invariant forms $XY$, $X^N$ and $Y^N$. 
Moreover, we use a basis $\{e_1, e_2\}$ that diagonalises the cyclic action, cf.~(\ref{eq:standard matrices}).
Therefore, one only needs to investigate the vectors $X^d e_i$ and $Y^{d} e_i$ for $d \in\{0,\ldots,N-1\}$ and $i\in\{1,2\}$.

We have $r X^d=\omega_N^{-d} X^d$ and $r e_1=\omega_N^{{j}}e_1$ hence
\[r X^d e_1 =  \omega_N^{j-d} X^d e_1.\] We want to solve $\omega_N^{j-d}=1$, i.e.~$d-j\in \ZZ N$, and find $d\in(j+\ZZ N)\cap\{0,\ldots,N-1\}=j$. That is, $X^{j} e_1$ is invariant under the action of $\langle r \rangle\cong \zn{N}$.

Now consider the next one, $rY^{d} e_1=\omega_N^{d+j}Y^{d} e_1$. We solve $\omega_N^{{d+j}} =1$ i.e.~$d+j \in \ZZ N$. This implies $d\in (N-j+\ZZ N)\cap \{0, \ldots, N-1\}=N-j$, therefore $rY^{N-j} e_1=Y^{N-j} e_1$ is invariant.

Similarly one finds the invariant vectors $Y^{j} e_2$ and $X^{N-j} e_2$, and thus we show that the invariant vectors $\left(V_{\psi_j}\otimes\CC[X,Y]\right)^{\zn{N}}$ are given by 
\begin{equation*} 
\left( \begin{pmatrix} X^{j}\\0 \end{pmatrix} + \begin{pmatrix} 0\\Y^{j} \end{pmatrix} + \begin{pmatrix} Y^{N-j}\\0 \end{pmatrix} + \begin{pmatrix}0\\X^{N-j}\end{pmatrix}\right) \otimes\CC[X,Y]^{\zn{N}}\,.
\end{equation*}
If we use the fact $\CC[X,Y]^{\zn{N}}=(1\oplus \fga)\otimes\CC[\fal,\fbe]$ this space is generated as a $\CC[\fal,\fbe]$-module by the vectors
\begin{align*}
&\begin{pmatrix} X^{j}\\0 \end{pmatrix}, 
\begin{pmatrix} 0\\Y^{j} \end{pmatrix}, 
\begin{pmatrix} Y^{N-j}\\0 \end{pmatrix}, 
\begin{pmatrix}0\\X^{N-j}\end{pmatrix},\\
&\fga\begin{pmatrix} X^{j}\\0 \end{pmatrix}, 
\fga\begin{pmatrix} 0\\Y^{j} \end{pmatrix}, 
\fga\begin{pmatrix} Y^{N-j}\\0 \end{pmatrix}, 
\fga\begin{pmatrix}0\\X^{N-j}\end{pmatrix}.
\end{align*}
It turns out the above vectors are dependent over the ring $\CC[\fal,\fbe]$. One finds for instance that 
\[\fga\begin{pmatrix} X^{j}\\0 \end{pmatrix}=P(\fal,\fbe)\begin{pmatrix} X^{j}\\0 \end{pmatrix}+Q(\fal,\fbe)\begin{pmatrix} Y^{N-j}\\0 \end{pmatrix}\] 
if $P(\fal,\fbe)=\fbe$ and $Q(\fal,\fbe)=-\fal^{j}$, hence this vector is redundant. Similarly, the other vectors with a factor $\fga$ can be expressed in terms of the vectors without this factor.

The remaining vectors between the brackets are independent over the ring $\CC[\fal,\fbe]$. Indeed, let $P,Q\in\CC[\fal,\fbe]$ and consider the equation 
\[PX^{j}+QY^{N-j}=0\,.\] 
If the equation is multiplied by $Y^{j}$ we find 
\[P\fal^{j} +QY^{N}=P\fal^{j} +Q (\fbe-\fga)=0\,.\] Now one can use the fact that all terms are invariant under the action of $s$ except for $Q\fga$ to see that $Q=0$, and hence $P=0$.  Similarly $PY^{j}+QX^{N-j}=0$ implies $P=Q=0$. Therefore we have a direct sum
\[
\left(V_{\psi_j}\otimes R\right)^{\zn{N}}
=\left( \begin{pmatrix} X^{j}\\0 \end{pmatrix}\oplus \begin{pmatrix} 0\\Y^{j} \end{pmatrix}\oplus \begin{pmatrix} Y^{N-j}\\0 \end{pmatrix}\oplus \begin{pmatrix}0\\X^{N-j}\end{pmatrix}\right) \otimes\CC[\fal,\fbe]\,.
 \] 

To obtain $\DD_N$-invariants we apply the projection $\frac{1}{2}(1+s)$.
Observe that the $\DD_N$-invariant polynomials move through this operator so that one only needs to compute 
\[\frac{1}{2}(1+s)\begin{pmatrix} X^{j}\\0 \end{pmatrix}
=\frac{1}{2}\begin{pmatrix} X^{j}\\Y^{j} \end{pmatrix}
=\frac{1}{2}(1+s)\begin{pmatrix} 0\\Y^{j} \end{pmatrix}\]
and 
\[\frac{1}{2}(1+s)\begin{pmatrix} Y^{N-j}\\0 \end{pmatrix}
=\frac{1}{2}\begin{pmatrix} Y^{N-j}\\X^{N-j} \end{pmatrix}
=\frac{1}{2}(1+s)\begin{pmatrix} 0\\X^{N-j} \end{pmatrix}.
\]
These two vectors are independent over the ring by the previous reasoning, and we have obtained (\ref{eq:D_N-invariant vectors}). 

\begin{Remark}
If one allows the representation with basis $\{X,Y\}$ to be non-faithful, several more cases appear. However, they are not more interesting than what we have seen so far, which is why we decided not to include them in the discussion. Summarised, the more general situation \[\left(V_{\psi_j}\otimes \CC[V_{\psi_{j'}}]\right)^{\DD_N}\]  is as follows.
Let $\rho_j$ be the representation with character $\psi_j$. If $\bigslant{\DD_N}{\ker \rho_j}$ is a quotient group of $\bigslant{\DD_N}{\ker \rho_{j'}}$ then everything is the same as above except that $N$ will be replaced by $N'=\frac{N}{\gcd(N,j')}$ since $\rho_{j'}(\DD_N)\cong \DD_{N'}$. If on the other hand  $\bigslant{\DD_N}{\ker \rho_j}$ is not a quotient group of $\bigslant{\DD_N}{\ker \rho_{j'}}$, then there are no nonzero invariants. Indeed, then the character $\overline{\psi}_j=\psi_j$ cannot appear in the $\bigslant{\DD_N}{\ker \rho_{j'}}$-module $\CC[V_{\psi_{j'}}]$, cf.~Section \ref{sec:fourier transform}.
\end{Remark}

The results of this subsection are summarised in Table \ref{tab:invariants, N odd} and Table \ref{tab:invariants, N even}, where the ground forms are given by (\ref{eq:Fs}).

\begin{table}[h!]
\caption{Module generators $\eta_i$ in $(V_\chi\otimes R)^{\DD_N}=\bigoplus_i\CC[\fal,\fbe]\eta_i$, $N$ odd.}
\label{tab:invariants, N odd}
\begin{center}
\begin{tabular}{cccc}\hline
$\chi$ & $\chi_1$ & $\chi_2$ & $\psi_j$\\
\hline
$\eta_i$ & $1$ & ${\fga}$ & $\begin{pmatrix}X^j\\Y^j\end{pmatrix},\;\begin{pmatrix}Y^{N-j}\\X^{N-j}\end{pmatrix}$\\\hline
\end{tabular}
\end{center}
\end{table}

\begin{table}[h!]
\caption{Module generators $\eta_i$ in $(V_\chi\otimes R)^{\DD_{2N}}=\bigoplus_i\CC[\fal,\fbe^2]\eta_i$.}
\label{tab:invariants, N even}
\begin{center}
\begin{tabular}{cccccc}\hline
$\chi$ & $\chi_1$ & $\chi_2$ & $\chi_3$ & $\chi_4$ & $\psi_j$\\
\hline
$\eta_i$ & $1$ & $\fbe\fga$ & $\fbe$ & $\fga$ & $\begin{pmatrix}X^j\\Y^j\end{pmatrix},\;\begin{pmatrix}Y^{2N-j}\\X^{2N-j}\end{pmatrix}$\\\hline
\end{tabular}
\end{center}
\end{table}

With regard to Table \ref{tab:invariants, N even} we recall that the ground forms $F_i$ are defined by the action of the group on the Riemann sphere, and is therefore related to the polyhedral group $G=\DD_N$ rather than the Schur cover of choice $G^\flat=\DD_{2N}$.

\subsection{Fourier Transform}
\label{sec:fourier transform}

All the information of invariant vectors is contained in the polynomial ring $R$. Indeed, the module of invariant vectors $(V_\chi\otimes R)^G$ and the isotypical component $R^{\overline{\chi}}$ are equivalent in the following sense. 
There is precisely one invariant in $V_\psi\otimes R$ for each copy of $V_{\overline\psi}$ in $R$ and nothing more, because $(V_\psi\otimes V_\chi)^G\cong\Hom_G(V_{{\psi}},V_{\overline{\chi}})$ and Schur's Lemma (Section \ref{sec:representation theory}) states the latter space is one-dimensional if $\chi= \psi$ and zero otherwise. In particular, we have the relation of Poincar\'e series
\begin{equation}
\label{eq:generating function of forms and vectors}
P(R^{\overline{\chi}},t)=\chi(1)P((V_{\chi}\otimes R)^G,t).
\end{equation}
But we have in fact more than that. Fourier decomposition allows us to construct invariant vectors from the forms in $R^{\overline{\chi}}$.

\begin{Example}[]
If $e_1=\begin{pmatrix}1&0\end{pmatrix}^T$ and $e_2=\begin{pmatrix}0&1\end{pmatrix}^T$ are basis vectors for an irreducible $G$-module and $\{X, Y\}$ is its dual basis, then \[X e_1+Y e_2=\begin{pmatrix} X\\ Y\end{pmatrix}\] is the unique invariant vector in the tensor product of the two representations, the trace of the bases.

There is of course more in this four-dimensional tensor product. For example, if the original representation is the two-dimensional irreducible representation $\psi$ of $\DD_3$ then the tensor product has character $\overline{\psi}\psi=\epsilon+\chi+\psi$. The $\chi$-component is $\CC(Xe_1-Ye_2)$ and the $\psi$-component has basis $\{ Xe_2,Ye_1\}$.

Another (dual) basis for $\psi$ in $R$ is $\{Y^2,X^2\}$, cf.~Example \ref{ex:D3 decomposition}, so a second invariant vector is \[Y^2 e_1+X^2 e_2=\begin{pmatrix}Y^2\\X^2\end{pmatrix}.\] Compare with (\ref{eq:D_N-invariant vectors}).
\end{Example}

In general the Fourier transform can be described as follows.
Let \(W\) be a finite dimensional module of a finite group \(G^\flat\) and let  $\{w_i\;|\; i=1,\ldots,\dim W\}$ be a basis of $W$. Then \( W \) can be decomposed as a direct sum of irreducible representations of \(G^\flat\)
as follows. 

Let \(V\) be such an irreducible  \(G^\flat\)--representation and let $\{v^j\;|\;j=1,\ldots,\dim V^\ast\}$ be a basis of \(V^\ast\). Let  \((\chi_W,\chi_V)\) be the multiplicity of $V$ in $W$ (that is, \(V\)  occurs as a direct summand in \(W\) \((\chi_W,\chi_V)\) times) and consider the space of invariants $(W\otimes V^{\ast})^{G^\flat} $ with basis
\[
\Big\{u^{k}=\sum_{i,j}u^{k}_{i,j} \,w_i \otimes v^j\;|\; k=1,\ldots,(\chi_W,\chi_V)\Big\}\,.
\]
The $u^k$ are traces of the basis of $V^\ast$ and its canonical dual basis, a basis for $V$. By the expression for $u^k$ we find $(\chi_W,\chi_V)$ $V$-bases $\{v_j^k=\sum_{i}\eta_{i,j}^k w_i\;|\;j=1,\ldots,\dim V\}$ in $W$.
In practice we take a general element $\sum_{i,j} C_{i,j} \,w_i \otimes v^j$ in $W\otimes V^\ast$ and require this element to be invariant under the action of the generators of $G^\flat$ to obtain elements $U^k=\sum_{i,j}\eta^{k}_{i,j} \,w_i \otimes v^j$.

If we now do the same construction for $U\otimes V$ we find $V^\ast$-bases in $U$. Taking the trace with each $V$-basis in $W$ results in  $(\chi_W,\chi_V)(\chi_U,\overline{\chi_V})$ linearly independent elements of $(W\otimes U)^{G^\flat}$. The space spanned by these elements will be denoted by  $(W\otimes U)_{\chi_V}^{G^\flat}$. We have
\[(W\otimes U)^{G^\flat}=\bigoplus_{\chi\in\Irr G^\flat}(W\otimes U)_\chi^{G^\flat}.\]
This method can be applied to $V\otimes R$ to find invariant vectors.

\subsection{Evaluations}

Obtaining explicit descriptions of invariant vectors as a set of generators over a ring of invariant forms, e.g.~Section \ref{sec:D_N-invariant vectors}, is often a substantial computational problem. On the other hand, describing the finite dimensional vector space that one would get after evaluating the space of invariant vectors in a single point turns out to be much simpler. One can circumvent the problem of determining all invariant vectors by using the regular representation of the group and an interpolating function.

\begin{Proposition}
\label{prop:evaluating invariant vectors}
Suppose a finite group $G$ acts on a vector space $V$ and the Riemann sphere $\overline{\CC}$. The space of invariant $V$-valued rational maps $\left(V\otimes\mero_\Gamma\right)^{G}$ can be evaluated at a point $\mu$ in its holomorphic domain $\overline{\CC}\setminus \Gamma$ to obtain a vector subspace of $V$. This results in
\[\left(V\otimes\mero_\Gamma\right)^{G}(\mu)=V^{G_\mu}\]
where $G_\mu=\{g\in G\;|\;g\mu=\mu\}$ is the stabiliser subgroup.
\end{Proposition}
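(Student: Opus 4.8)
The plan is to prove the equality $\left(V\otimes\mero_\Gamma\right)^{G}(\mu)=V^{G_\mu}$ by the two inclusions; the inclusion $\subseteq$ is formal. Viewing an invariant element $a\in\left(V\otimes\mero_\Gamma\right)^{G}$ as a $V$-valued rational map on $\overline{\CC}$, holomorphic off $\Gamma$, the diagonal $G$-action unwinds to the equivariance $a(g\lambda)=g\cdot a(\lambda)$ for all $g\in G$. Evaluating at $\lambda=\mu$ and letting $g$ run through $G_\mu$ gives $a(\mu)=a(g\mu)=g\cdot a(\mu)$, so $a(\mu)\in V^{G_\mu}$; as $\mathrm{ev}_\mu$ is linear its image is a subspace of $V$ contained in $V^{G_\mu}$.

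For the reverse inclusion I would fix $v\in V^{G_\mu}$ and build an invariant map taking the value $v$ at $\mu$, following the averaging recipe announced before the proposition. The orbit $G\mu$ is finite, of size $k=[G:G_\mu]$, and disjoint from $\Gamma$ because $\mu\notin\Gamma$ and $\Gamma$ is $G$-invariant. Granting an \emph{interpolating function} $h\in\mero_\Gamma$ with $h(\mu)=1$ and $h(\nu)=0$ for every $\nu\in G\mu\setminus\{\mu\}$, I would put
\[
a(\lambda)=\frac{1}{|G_\mu|}\sum_{g\in G}\bigl(g\cdot v\bigr)\,h(g^{-1}\lambda).
\]
This lies in $V\otimes\mero_\Gamma$ since each $\lambda\mapsto h(g^{-1}\lambda)$ has poles only in $g\Gamma=\Gamma$, and the substitution $g\mapsto g_0g$ shows $a(g_0\lambda)=g_0\cdot a(\lambda)$, so $a$ is invariant. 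Finally $h(g^{-1}\mu)$ is $1$ for $g\in G_\mu$ and $0$ otherwise, while $g\cdot v=v$ on $G_\mu$, so the sum collapses to $a(\mu)=v$; this gives $V^{G_\mu}\subseteq\left(V\otimes\mero_\Gamma\right)^{G}(\mu)$ and hence equality.

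The one genuine point — and the step where I expect to do some actual (if light) work — is the existence of $h$. I would obtain it from the genus-zero geometry of $\overline{\CC}$: pick $p\in\Gamma$ (non-empty, as $\Gamma$ is an orbit) and a coordinate in which $p=\infty$; then $\mu$ and the members of $G\mu\setminus\{\mu\}$ are all affine and pairwise distinct, so the Lagrange-type polynomial $h(z)=\prod_{\nu\in G\mu\setminus\{\mu\}}\frac{z-\nu}{\mu-\nu}$ is holomorphic on $\overline{\CC}\setminus\{p\}\supseteq\overline{\CC}\setminus\Gamma$ and has the prescribed values (and $h\equiv1$ if $\mu$ is $G$-fixed, i.e.~$k=1$). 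Alternatively, by Riemann--Roch the space of rational functions with at most a pole of order $k-1$ at $p$ has dimension $k$; the $k-1$ vanishing conditions at the points of $G\mu\setminus\{\mu\}$ cut it down to a nonzero line, and a nonzero function there cannot vanish at $\mu$ as well since it has at most $k-1$ zeros, so it can be normalised to have $h(\mu)=1$.

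Lastly I would remark that specialising this computation to $V=\regrep{G}$, the regular representation, recovers the identity $\left(\regrep{G}\otimes\mero_\Gamma\right)^{G}(\mu)=\regrep{G}^{G_\mu}$ alluded to in the text, and that the general case could instead be deduced from the regular one, since every $G$-module is a direct summand of a sum of copies of $\regrep{G}$ and the three operations $(-\otimes\mero_\Gamma)^{G}$, $(-)^{G_\mu}$ and $\mathrm{ev}_\mu$ all respect such direct-sum decompositions. Beyond this bookkeeping I foresee no obstacle.
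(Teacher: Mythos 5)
Your proof is correct. Both inclusions are sound: the equivariance $a(g\lambda)=g\cdot a(\lambda)$ is indeed the right unwinding of the action $\rho(g)=\psi(g)\otimes g^\ast$, the averaged element $a=\frac{1}{|G_\mu|}\sum_{g\in G}(g\cdot v)\,h(g^{-1}\lambda)$ is manifestly invariant and collapses to $v$ at $\mu$ exactly because $h$ singles out the coset $G_\mu$ and $v$ is $G_\mu$-fixed, and your Lagrange construction of $h$ (after sending a point of $\Gamma$ to $\infty$) legitimately lands in $\mero_\Gamma$ since $G\mu\cap\Gamma=\emptyset$.

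The route, however, is genuinely different from the paper's. The paper proves only the trivial inclusion directly and then forces equality by a dimension count in the regular representation: it computes $\dim \regrep{G}^{G_\mu}=[G:G_\mu]$ (Lemma \ref{lem:dimkG^H}), exhibits that many linearly independent evaluated invariants in $\left(\regrep{G}\otimes\mero_\Gamma\right)^G(\mu)$ by running an interpolating polynomial over a transversal of $G_\mu$ and checking that the resulting square matrix has exactly one nonzero entry in each row and column, and then distributes the inequality over the isotypical pieces $\chi(1)V_\chi$. You instead prove surjectivity onto $V^{G_\mu}$ directly in $V$, producing an explicit invariant preimage of each $v\in V^{G_\mu}$ by averaging $v\otimes h$. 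Your argument is shorter and more constructive, avoids the transversal and determinant bookkeeping entirely, and isolates the single analytic input (an interpolating function in $\mero_\Gamma$) just as cleanly as the paper's; your closing remark about deducing the general case from $\regrep{G}$ by direct summands is essentially the reduction the paper performs, only it does so via characters rather than via an explicit splitting. The one thing the paper's detour buys is that Lemma \ref{lem:dimkG^H} and the identity $\sum_\chi\chi(1)\dim V_\chi^{G_\mu}=[G:G_\mu]$ are reused elsewhere in the text; as a proof of this proposition alone, yours is the more economical one. A minor difference of mechanism worth noting: the paper keeps the interpolating polynomial with its pole at $\infty$ and then divides by $(\lambda-\bar\mu)^{d_\mu}$ to push the pole into $\Gamma$ (which rescales but does not kill the nonzero matrix entries), whereas you change coordinates first so the Lagrange polynomial's pole already lies in $\Gamma$ and the prescribed values $\delta$ survive exactly; both are fine.
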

\begin{proof}
By definition \begin{equation}\label{eq:evaluation trivial inclusion}\left(V\otimes\mero_\Gamma\right)^{G}(\mu)\subset V^{G_\mu}.\end{equation} Lemma \ref{lem:dimkG^H} gives information on the right hand side of this inclusion: 
\[\sum_{\chi\in \Irr(G)}\chi(1)\dim V_\chi^{G_\mu}=\dim\left(\bigoplus_{\chi\in \Irr(G)} \chi(1)V_\chi\right)^{G_\mu}=\dim(\regrep{G})^{G_\mu}=[G:G_\mu].\]
For brevity we define
\[d_\mu=[G:G_\mu]=|G\mu|.\] 
Now it is sufficient to show that 
\[\dim\left(\regrep{G} \otimes\mero_\Gamma\right)^{G}(\mu)=\sum_{\chi\in\Irr(G)}\chi(1)\dim\left(V_\chi\otimes\mero_\Gamma\right)^{G}(\mu)\ge d_\mu\]
so that the left hand side of (\ref{eq:evaluation trivial inclusion}) is at least as big as the right hand side.

A $|G|$-tuple of functions $f_g\in\mero_\Gamma$ defines a vector $\sum_{g\in G} f_g g\in \regrep{G}\otimes \mero_\Gamma$, which is invariant if and only if $\sum_{g\in G} f_g(\lambda) g=h\sum_{g\in G} f_g(\lambda) g=\sum_{g\in G} f_g(h^{-1}\lambda) hg=\sum_{g\in G} f_{h^{-1}g}(h^{-1}\lambda) g$, i.e.
\[f_g(\lambda)=f_{hg}(h\lambda),\qquad \forall g,h\in G,\; \forall \lambda\in \overline{\CC}.\]
In particular, a tuple of rational functions $(f_g\,|\,g\in G)$ related to an invariant is defined by one function, e.g.~$f_1$, since $f_g(\lambda)=f_{g^{-1}g}(g^{-1}\lambda)=f_1(g^{-1}\lambda)$. Conversely, any one function $f_1\in\mero_\Gamma$ gives rise to an invariant in $\regrep{G}\otimes \mero_\Gamma$.

Consider a left transversal $\{h_1,\ldots,h_{d_\mu}\}\subset G$ of $G_\mu$. That is, a set of representatives of left $G_\mu$-cosets. We have a disjoint union
\[G=\bigsqcup_{i=1}^{d_\mu} h_i G_\mu.\]
Define ${d_\mu}$ vectors $v_i\in\CC^{|G|}$ by $v_i=(f(h_i^{-1}g^{-1}\mu)\,|\,g\in G)$ where $f\in\mero_\Gamma$. Then $v_i\in\left(\regrep{G}\otimes\mero_\Gamma\right)^{G}(\mu)$. We are done if we can show that these vectors are linearly independent for at least one choice of $f$. 

If $g^{-1}$ and $g'^{-1}$ are in the same coset $h_j G_\mu$ then $f(h_i^{-1}g^{-1}\mu)=f(h_i^{-1}g'^{-1}\mu)=f(h_i^{-1}h_j\mu)$. Therefore we might as well restrict our attention to the square matrix $f(h_i^{-1}h_j\mu)$.

The point $\mu$ appears precisely once in every row and every column of $(h_i^{-1}h_j \mu)$. Indeed, if $h_i^{-1}h_j \mu=h_i^{-1}h_{j'} \mu$ then $h_j^{-1}h_{j'}\in G_\mu$, i.e.~$h_jG_\mu\ni h_{j'}$ and therefore $h_j=h_{j'}$. On the other hand, if $h_i^{-1}h_j \mu=h_{i'}^{-1}h_{j} \mu=\mu$ then $h_{i}^{-1},h_{i'}^{-1}\in G_\mu h_j^{-1}$ i.e.~$h_{i},h_{i'}\in h_j G_\mu$ which implies $h_{i}=h_{i'}=h_j$.

The proof follows by showing existence of a function $f\in\mero_\Gamma$ such that the determinant $\det(f(h_i^{-1}h_j \mu))\ne 0$.
But for any numbers $(c_1, \ldots, c_{d_\mu})$ there exists an interpolating polynomial $p$ of degree $\le d_\mu\le |G|$ such that $p(h_i \mu)=c_i$ if the points $\{h_i \mu\;|\;i=1,\ldots,d_\mu\}$ are all distinct. Indeed, this follows from the well known \emph{Vandermonde determinant}. This polynomial $p$, which has a pole at infinity, defines an invariant vector with poles at the orbit $G\infty$. To move the pole of $p$ to a point $\bar{\mu}\in\Gamma$ we simply multiply by a power of $\frac{1}{\lambda-\bar{\mu}}$:
\[f=\frac{p}{(\lambda-\bar{\mu})^{d_\mu}}.\]

If, in this construction, we choose $p$ to take the values $c_i=\delta_{1i}$, Kronecker delta, then $f(h_i^{-1}h_j \mu)$ is a matrix with precisely one nonzero entry at each row and each column, and we see that $\det(f(h_i^{-1}h_j \mu))\ne 0$, as desired.
\end{proof}

\begin{Remark}
Proposition \ref{prop:evaluating invariant vectors} can be generalised to other than just rational function spaces. For example, multivariate polynomials $\CC[U]$. The proof will only differ in the existence of an interpolating function in the chosen function space. This can get rather complicated and may introduce more constraints on the orbit of the transversal $\{h_i \mu\;|\;i=1,\ldots,d_\mu\}$, cf.~\cite{olver2006multivariate}.
\end{Remark}

\begin{Example}
To illustrate the proof of Proposition \ref{prop:evaluating invariant vectors}, we look at the dihedral group $\DD_3=\langle r,s\;|\;r^3=s^2=(rs)^2=1\rangle$ and consider the representation $V=V_{\psi_1}$ with basis corresponding to (\ref{eq:standard matrices}) and we use the action on the Riemann sphere \[r\lambda=\omega_3 \lambda,\qquad s\lambda=\frac{1}{\lambda},\]
where $\omega_3$ is a primitive cube root of unity.

If $\mu\in \overline{\CC}$ is an arbitrary point, we have orbit
\[
\DD_3 \mu=\left\{\mu, r\mu, r^2\mu, s\mu, rs\mu, r^2s\mu\right\}
=\left\{\mu, \omega_3\mu, \omega_3^2\mu, \frac{1}{\mu}, \frac{\omega_3}{\mu}, \frac{\omega_3^2}{\mu}\right\}.
\]
The stabiliser subgroups are
\[
(\DD_3)_\mu=\left\{
\begin{array}{ll}
\langle r \rangle & \text{if }\mu\in\{0,\infty\},\\
\langle s \rangle & \text{if }\mu\in\{1,-1\},\\
\langle rs \rangle & \text{if }\mu\in\{\omega_3^2,-\omega_3^2\},\\
\langle r^2s \rangle & \text{if }\mu\in\{\omega_3,-\omega_3\},\\
1 & \text{otherwise.}\\
\end{array}\right.
\]
In the proof of Proposition \ref{prop:evaluating invariant vectors} we consider invariant vectors in the regular representation $\left(\CC\DD_3\otimes \mero_\Gamma\right)^{\DD_3}$ and evaluate them at $\mu$. The claim is that the dimension of the space one ends up with equals the size of the orbit, $|\DD_3 \mu|$. 

We will work out the details of two cases. First we take \[\mu=1\] so that $(\DD_3)_\mu=\langle s \rangle$ and $|\DD_3 \mu|=3$. A transversal for the stabiliser subgroup is \[\{1, r, r^2\}\] which happens to be another subgroup, making it easier to see that the matrix $(r^{-i}r^j \mu)$ has precisely one $\mu$ in each row and in each column.

For any polynomial $p$, the vector $(p, p\circ r^{-1}, p\circ r^{-2} , p\circ s^{-1} , p\circ (rs)^{-1} , p\circ (r^2s)^{-1})$ is an invariant in $\CC\DD_3\otimes \mero$.
In the proof we claim that there is an interpolating polynomial $p$ of degree at most $2$ such that $p(r^i \mu)=\delta_{0i}$. This works with
\[p=\nicefrac{1}{3}(1+\lambda+\lambda^2)
.\]
If we allow poles at $\Gamma=\{0,\infty\}$ then $p=f$ is already in the correct function space $\mero_\Gamma$.
The three polynomials $p,\; p\circ r^{-1}$ and $p\circ r^{-2}$ yield invariant vectors which, evaluated at $\mu=1$, are 
\begin{align*}
&(1,0,0,1,0,0)\\
&(0,0,1,0,0,1)\\
&(0,1,0,0,1,0)
\end{align*}
and therefore $\dim(\CC\DD_3\otimes \mero_\Gamma)^{\DD_3} (\mu)\ge3=|\DD_3 \mu|$.

For the second example we consider a full orbit, \[\mu=i,\] so that $(\DD_3)_\mu=1$ and $|\DD_3 \mu|=6$. The transversal of the stabiliser group is the full group $\DD_3$.
The polynomial 
\[p=\nicefrac{1}{6}(1-i\lambda-\lambda^2+i\lambda^3+\lambda^4-i\lambda^5)\]
satisfies $p(g^{-1}\mu)=\delta_{1g}$ and the invariant vectors generated by $p\circ g^{-1}$ evaluate at $\mu=i$ to the six canonical basis elements of $\CC\DD_3$.
\end{Example}

\section{Homogenisation}
\label{sec:homogenisation}

Automorphic Lie Algebras are rational objects on a Riemann surface. In this section we take some preparatory steps to aid the transition from polynomial to rational structures.

It is well known that the field of meromorphic, or rational functions on the Riemann sphere $\mero$ can be identified with the field of quotients of forms in two variables of the same degree, as follows. We start by defining
\[\lambda=\frac{X}{Y}\] to identify the Riemann sphere $\overline{\CC}\ni\lambda$ with the projective space $\CC P^1\ni(X,Y)$. 
To a polynomial $p(\lambda)$ of degree $d$ one can relate a form $P(X,Y)$ of the same degree by \begin{equation}
\label{eq:polynomial to form}
p\left(\frac{X}{Y}\right)=Y^{-d}P(X,Y).
\end{equation}
In particular $p(\lambda)=P(\lambda,1)$.
Then, any rational function in $\lambda$ becomes \[\frac{p(\lambda)}{q(\lambda)}=\frac{Y^{-\deg p}P(X,Y)}{Y^{-\deg q}Q(X,Y)}=\frac{Y^{\deg q}P(X,Y)}{Y^{\deg p}Q(X,Y)}\] a quotient of two forms (automatically of the same degree). There are however different forms resulting in the same quotient. The other way around, a quotient of two forms of identical degree $d$ is a rational function of $\lambda$, \[\frac{P(X,Y)}{Q(X,Y)}=\frac{Y^{-d}P(X,Y)}{Y^{-d}Q(X,Y)}=\frac{p(\lambda)}{q(\lambda)}.\]  

M\"obius transformations on $\lambda$ and linear transformation on $(X,Y)$ by the same matrix $g=\begin{pmatrix}a&b\\c&d\end{pmatrix}$ commute with the map  $\phi:(X,Y)\mapsto \lambda$:
\[g\phi(X,Y)=g \lambda=\frac{a\lambda+b}{c\lambda+d}=\frac{aX+bY}{cX+dY}=\phi(aX+bY,cX+dY)=\phi(g(X,Y)).\]
However, two matrices yield the same M\"obius transformation if and only if they are scalar multiples of one another. Therefore we allow the action on $(X,Y)\in \CC^2$ to be projective in order to cover all possible actions on $\overline{\CC}$; we require homomorphisms $\rho:G\rightarrow \PGL(\CC^2)$.

The polynomial framework has some advantages over the rational framework. Most notably, the results of classical invariant theory (cf.~Section \ref{sec:classical invariant theory}) are at our disposal. Trying to exploit this, we will work over the polynomial ring for as long as it is fruitful, before going to the rational functions. 

The transition process from forms to rational functions is called \emph{homogenisation} (even though this term is overused and there is some ambiguity). A form in two variables, which is a polynomial of homogeneous degree in $X$ and $Y$, is \emph{homogenised} if it is divided by a form of the same degree, to obtain an object of degree $0$, or a rational function of $\frac{X}{Y}=\lambda$. Equation (\ref{eq:polynomial to form}) is an example of this process. The denominator determines the location of the poles on the Riemann sphere, e.g.~$\infty$ in the case of (\ref{eq:polynomial to form}). 

By defining two operators, we formalise this process while at the same time creating an intermediate position, which will be our preferred place to work.
Concretely, we define \emph{prehomogenisation} $\p{}$ and \emph{homogenisation} $\h{}$, which will take us from $(V\otimes R)^{G^\flat}$ to $(V~\!\otimes~\!\mero_{\Gamma})^G$ in two steps. By taking just the first step, one can study $(V\otimes \mero_{\Gamma})^G$ while holding on to the degree information of the homogeneous polynomials in $R$. 

\begin{Definition}[Prehomogenisation] 
\label{def:prehomogenisation}
Let $U$ be a vector spaces and $R=\CC[U]$ a polynomial ring.
Define $\p{d}:R\rightarrow R$ to be the linear projection operator   
\[\p{d} P=\left\{
\begin{tabular}{ll}
$P$&$\text{if }d\,|\deg P$,\\
$0$&$\text{otherwise,}$
\end{tabular}
\right.\]
killing all forms of degree not divisible by $d$.
\end{Definition}
Elements of $V$ are considered forms of degree zero in the tensor product $V\otimes R$ with a polynomial ring. The prehomogenisation operator is thus extended $\p{d}:V\otimes R\rightarrow V\otimes R$ sending a basis element $v\otimes P$ to \[\p{d}(v\otimes P)=\p{d}v\otimes \p{d}P=v\otimes \p{d}P.\]
Notice that the prehomogenisation operator is not a morphism of (ring) modules, since it does not respect products. Indeed, one can take two forms such that their degrees are no multiples of $d$ but the sum of their degrees is, so that both forms are annihilated by $\p{d}$, but their product is not. This is the problematic part of the transition between forms and rational functions. But now that this is captured in the prehomogenisation procedure, we can define a second map that \emph{does} behave well with respect to products.

\begin{Definition}[Homogenisation]
\label{def:homogenisation}
Let $\Gamma\subset\overline{\CC}$ be a finite subset and $F_\Gamma$ the form of degree $|\Gamma|$ related to $f(\lambda)=\prod_{\mu\in\Gamma}(\lambda-\mu)$ through (\ref{eq:polynomial to form}), cf.~Definition \ref{def:form of orbit}. If $|\Gamma| \;|\; d$ we define 
\[\h{\Gamma}:\p{d}R\rightarrow \mero_\Gamma\]
to be the linear map sending a form $P$ to 
\[\h{\Gamma} P=\frac{P}{F_\Gamma^r}\]
where $r|\Gamma|=\deg P$.
\end{Definition}
The homogenisation map generalises to $\h{\Gamma}:V\otimes \p{d}R\rightarrow V\otimes\mero_\Gamma$ in the same manner as the prehomogenisation map by acting trivially on $V$.

The next lemma shows one can find all invariant vectors $(V\otimes\mero_{\Gamma})^G$ by considering quotients of invariant vectors and invariant forms whose degrees are multiples of $|G|$.
\begin{Lemma}
\label{lem:only quotients of invariants}
Let $V$ and $\overline{\CC}$ be $G$-modules, $\Gamma\in\bigslant{\overline{\CC}}{G}$ and $\bar{v}\in (V\otimes\mero_{\Gamma})^G$. Then there exists a number $e\in \NN\cup\{0\}$ and an invariant vector $v\in(V\otimes R)^{G^\flat}$ such that $\bar{v}=v F_\Gamma^{-e\nu_\Gamma}$, with $F_\Gamma$ and $\nu_\Gamma$ as in Definition \ref{def:form of orbit}. In particular, the map 
\[\h{\Gamma}:\left(V\otimes \p{d}R\right)^{G^\flat}\rightarrow \left(V\otimes\mero_\Gamma\right)^{G}\] is surjective if $|G|$ divides $d$.
\end{Lemma}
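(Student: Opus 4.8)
The plan is to show directly that every $\bar v\in(V\otimes\mero_\Gamma)^G$ equals $\h{\Gamma}(v)$ for a suitable homogeneous $v\in(V\otimes R)^{G^\flat}$ of degree a multiple of $|G|$; both conclusions of the Lemma then follow at once, the displayed formula $\bar v=vF_\Gamma^{-e\nu_\Gamma}$ being exactly Definition \ref{def:homogenisation} with $e\nu_\Gamma=\deg v/|\Gamma|$. First I would check that $\h{\Gamma}$ really does map $(V\otimes\p{|G|}R)^{G^\flat}$ into $(V\otimes\mero_\Gamma)^{G}$: for a homogeneous $G^\flat$-invariant $v$ of degree $e|G|=e\nu_\Gamma|\Gamma|$ one has $\h{\Gamma}v=v/(F_\Gamma^{\nu_\Gamma})^{e}$, and $F_\Gamma^{\nu_\Gamma}\in R^{G^\flat}$ — by Lemma \ref{lem:finuiinvariant} when $\Gamma$ is exceptional, and because $F_\Gamma$ is then a linear combination of $\fal^{\nal}$ and $\fbe^{\nbe}$ (cf.\ the discussion around (\ref{eq:relation ground forms})) when $\Gamma$ is generic. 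Hence $\h{\Gamma}v$ is a $G^\flat$-invariant element of degree $0$, and so is fixed by $G$ (the central element $z$ acts trivially on degree-$0$ elements of $R$ and on the $G$-module $V$).

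For surjectivity I would take $\bar v\in(V\otimes\mero_\Gamma)^{G}$ and clear denominators. Writing $\bar v=\sum_i f_i w_i$ in a basis $\{w_i\}$ of $V$, each $f_i\in\mero_\Gamma$ is a quotient of two forms of equal degree whose denominator vanishes only on $\Gamma$; since $\cN(\Gamma)=RF_\Gamma$ and $F_\Gamma$ has simple zeros, that denominator divides $F_\Gamma^{m_i}$ for $m_i$ large, so $f_i=P_i/F_\Gamma^{m_i}$ with $P_i$ a form of degree $m_i|\Gamma|$. Putting $m=\max_i m_i$ and then enlarging $m$ to the smallest multiple $e\nu_\Gamma\ge m$ of $\nu_\Gamma$ (multiplying numerator and denominator by the appropriate power of $F_\Gamma$) gives $\bar v=u/F_\Gamma^{e\nu_\Gamma}$, where $u=\sum_i P_iF_\Gamma^{e\nu_\Gamma-m_i}w_i\in V\otimes R$ is homogeneous of degree $e\nu_\Gamma|\Gamma|=e|G|$.

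It then remains to observe that $u$ is automatically $G^\flat$-invariant: rearranging, $u=F_\Gamma^{e\nu_\Gamma}\,\bar v$, the factor $F_\Gamma^{e\nu_\Gamma}=(F_\Gamma^{\nu_\Gamma})^{e}$ lies in $R^{G^\flat}$ by the first paragraph, and $\bar v$ is $G^\flat$-invariant because it is $G$-invariant and $z$ acts trivially on it, so the product $u$ is $G^\flat$-invariant. Being homogeneous of degree $e|G|$ it lies in $(V\otimes\p{|G|}R)^{G^\flat}$, and $\h{\Gamma}u=u/F_\Gamma^{e\nu_\Gamma}=\bar v$, so $v:=u$ works; the trivial vector is covered by $e=0$. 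No symmetrisation step is needed. The one place where the hypotheses genuinely bite is the invariance of the denominator $F_\Gamma^{\nu_\Gamma}$, and the only delicate subcase there is the generic orbit, which is not handled by Lemma \ref{lem:finuiinvariant} directly and must be reduced to it via the linear relation among the $F_i^{\nu_i}$; everything else is routine manipulation of forms and their degrees.
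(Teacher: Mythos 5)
Your proof is correct and follows essentially the same route as the paper's: clear denominators to a power of $F_\Gamma^{\nu_\Gamma}$, then deduce $G^\flat$-invariance of the numerator from the invariance of $F_\Gamma^{\nu_\Gamma}$ (Lemma \ref{lem:finuiinvariant}, supplemented for generic orbits by the linear relation among the $F_i^{\nu_i}$) together with the invariance of $\bar v$ itself. The only point left implicit is the surjectivity claim for general $d$ with $|G|\mid d$ rather than just $d=|G|$; since $e$ is constrained only from below, one simply enlarges it to a multiple of $d/|G|$, exactly as the paper does.
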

\begin{proof}
If $\bar{v}$ is constant then the statement follows by taking $e=0$ and $v=\bar{v}$. Suppose $\bar{v}$ is not constant. By the identification $\overline{\CC}\cong\CC P^1$, there is a vector $v'\in V\otimes R$ and a polynomial $F'\in R$ of the same homogeneous degree, such that $\bar{v}=\frac{v'}{F'}$, as described in the introduction of this section. The assumption on the poles of $\bar{v}$ gives $\cV(F')\subset\Gamma$ and since $\bar{v}$ is not constant $\cV(F')\ne \emptyset$. The invariance of $\bar{v}$ implies that the order of the poles is constant on an orbit. Therefore $F'=F_\Gamma^p$ for some $p\in\NN$.
Choose $e\in\NN$ such $q=e\nu_\Gamma-p\ge 0$. Put $v=v'F_\Gamma^q$. Then $\bar{v}=\frac{v'}{F'}=\frac{v'}{F_\Gamma^p}=\frac{v'F_\Gamma^q}{F_\Gamma^{p+q}}=\frac{v}{F_\Gamma^{e\nu_\Gamma}}$. By Lemma \ref{lem:finuiinvariant} $F_\Gamma^{\nu_\Gamma}$ is invariant, hence invariance of $\bar{v}$ implies invariance of $v$.

To see that $\h{\Gamma}:\left(V\otimes \p{d}R\right)^{G^\flat}\rightarrow \left(V\otimes\mero_\Gamma\right)^{G}$ is surjective if $d=d'|G|$, notice that one can choose $e$ such that $e=e'd'$, as there is only a lower bound for $e$. Then $\deg v=e|G|=e'd'|G|=e'd$, hence $v\in\left(V\otimes \p{d}R\right)^{G^\flat}$.
\end{proof}

\begin{Example}
\label{ex:automorphic functions}
If we apply Lemma \ref{lem:only quotients of invariants} to the trivial $G$-module $V=\CC$, one obtains the automorphic functions of $G$ with poles on a specified orbit. If $f\in\mero^G_{\Gamma}$ the invariance and pole restriction imply that the denominator of $f$ is a power of $F_{\Gamma}$. By Lemma \ref{lem:only quotients of invariants} we can assume that it is in fact a power of $F_\Gamma^{\nu_\Gamma}$
and that the numerator is a form in $\CC[\fal^{\nal},\fbe^{\nbe},\fga^{\nga}]$. The numerator factors into elements of the two-dimensional vector space $\bigslant{\CC\fal^{\nal}+\CC\fbe^{\nbe}+\CC\fga^{\nga}}{\fal^{\nal}+\fbe^{\nbe}+\fga^{\nga}}$. One line in this space is $\CC F_\Gamma^{\nu_\Gamma}$. Any vector outside this line will generate the ring of automorphic functions, e.g.
\[
\mero^G_{\Gamma}=\CC\left[\h{\Gamma}F_{i}^{\nu_{i}}\right],\qquad \Gamma_i\ne\Gamma\,,
\] 
where we recall that $\h{\Gamma}F_{i}^{\nu_{i}}=\frac{F_{i}^{\nu_{i}}}{F_\Gamma^{\nu_\Gamma}}$.

\end{Example}

We define equivalence classes \emph{$\textrm{mod }F$} on $R$ by 
\[P\textrm{ mod }F = Q\textrm{ mod }F\Leftrightarrow\exists r\in\ZZ : P=F^r Q.\]
That is, $P$ and $Q$ are equivalent $\textrm{mod }F$ if their quotient is a unit in the \emph{localisation} \cite{eisenbud1995commutative} $R_F=\left\{\frac{P}{F^r}\;|\;P\in R, r\in\NN_0\right\}$, where $\NN_0=\NN\cup\{0\}$.
\begin{Lemma}
\label{lem:prehomogenisation mod f is homogenisation}
Let $V$ and $\overline{\CC}$ be $G$-modules and $\Gamma\subset\overline{\CC}$ a $G$-orbit. If $|G|$ divides $d$ then there is an isomorphism of modules
\[\p{d}(V\otimes R)^{G^\flat}\textrm{mod }F_\Gamma \cong \left(V\otimes \mero_\Gamma\right)^G,\]
where $F_\Gamma$ is given in Definition \ref{def:form of orbit}.
\end{Lemma}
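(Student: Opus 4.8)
The plan is to show that the homogenisation operator $\h{\Gamma}$ of Definition~\ref{def:homogenisation} itself induces the isomorphism. Since $G^\flat$ preserves the grading of $R$, the module $\p{d}(V\otimes R)^{G^\flat}$ is the direct sum of its homogeneous pieces, and on a homogeneous invariant vector $w$ of degree $k$ — where $|\Gamma|\mid|G|\mid d\mid k$ — the operator acts by $\h{\Gamma}w = w/F_\Gamma^{\,k/|\Gamma|}\in\mero_\Gamma$. I will prove that $\h{\Gamma}$ descends to a well-defined, bijective map $\overline{\h{\Gamma}}\colon\p{d}(V\otimes R)^{G^\flat}\textrm{ mod }F_\Gamma\to(V\otimes\mero_\Gamma)^G$ that is moreover a homomorphism of modules over the ring $\p{d}R^{G^\flat}\textrm{ mod }F_\Gamma$.

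First I would check well-definedness and linearity. Lemma~\ref{lem:finuiinvariant} gives that $F_\Gamma^{\nu_\Gamma}$ is a $G^\flat$-invariant form of degree $|G|$; hence if $w=F_\Gamma^{r}w'$ for homogeneous invariant vectors $w,w'$ in $\p{d}(V\otimes R)^{G^\flat}$, then writing $k=r|\Gamma|+\deg w'$ the powers of $F_\Gamma$ cancel and $\h{\Gamma}w=\h{\Gamma}w'$, so $\overline{\h{\Gamma}}$ is well defined on classes mod $F_\Gamma$. Next, $\h{\Gamma}(Pw)=\h{\Gamma}P\cdot\h{\Gamma}w$ for homogeneous $P\in\p{d}R^{G^\flat}$, so $\overline{\h{\Gamma}}$ is linear over $\p{d}R^{G^\flat}\textrm{ mod }F_\Gamma$, and by the $V=\CC$ case (cf.\ Example~\ref{ex:automorphic functions}) that ring is carried by $\h{\Gamma}$ onto the ring $\mero^G_\Gamma$ of automorphic functions with poles on $\Gamma$, making $\overline{\h{\Gamma}}$ a homomorphism of $\mero^G_\Gamma$-modules.

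For bijectivity, surjectivity is immediate from the last assertion of Lemma~\ref{lem:only quotients of invariants}: any invariant rational vector is $v\,F_\Gamma^{-e\nu_\Gamma}$ with $v$ a homogeneous invariant vector, and $e$ can be enlarged so that $\deg v=e|G|$ is divisible by $d$, placing $v$ in $\p{d}(V\otimes R)^{G^\flat}$ — this is the one step that uses the hypothesis $|G|\mid d$. For injectivity I would take homogeneous invariant vectors $w_1,w_2\in\p{d}(V\otimes R)^{G^\flat}$ with $\h{\Gamma}w_1=\h{\Gamma}w_2$ and degrees $k_1\le k_2$, clear denominators to the identity $w_1F_\Gamma^{\,k_2/|\Gamma|}=w_2F_\Gamma^{\,k_1/|\Gamma|}$ in $V\otimes R$, and use that $R$ is an integral domain and $V\otimes R\cong R^{\dim V}$ is torsion-free to cancel $F_\Gamma^{\,k_1/|\Gamma|}$; what remains, $w_2=F_\Gamma^{\,(k_2-k_1)/|\Gamma|}w_1$, says exactly that $w_1$ and $w_2$ represent the same class mod $F_\Gamma$. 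Together with the first paragraph this makes $\overline{\h{\Gamma}}$ an isomorphism of modules.

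The point demanding care throughout is that prehomogenisation $\p{d}$ is not a morphism of $R$-modules (it does not respect products, as remarked after Definition~\ref{def:prehomogenisation}), so neither is $\h{\Gamma}\circ\p{d}$. Every part of the argument — well-definedness on classes, ring- and module-linearity, injectivity — therefore has to be carried out on \emph{homogeneous} representatives, on which $\h{\Gamma}$ does behave multiplicatively, and one must keep track of when a given class, or a given element of $(V\otimes\mero_\Gamma)^G$, admits such a representative of degree divisible by $d$; the divisibility hypothesis $|G|\mid d$ is precisely what makes this always possible.
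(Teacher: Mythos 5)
Your proof is correct and follows essentially the same route as the paper's: both arguments exhibit the homogenisation map $\h{\Gamma}$ as the isomorphism, invoking Lemma \ref{lem:only quotients of invariants} for surjectivity and identifying the classes mod $F_\Gamma$ as the fibres. You simply spell out the well-definedness and injectivity steps (via homogeneous representatives and torsion-freeness of $V\otimes R$) that the paper leaves implicit.
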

\begin{proof}
The linear map $\h{\Gamma}:\p{d}(V\otimes R)^{G^\flat}\rightarrow(V\otimes \mero_\Gamma)^{G}$ respects products and is therefore a homomorphism of modules. Restricted to the ring $\p{d} R$, the kernel $\h{\Gamma}^{-1}(1)$ is the equivalence class of the identity, $1\textrm{ mod }F_\Gamma$, and by Lemma \ref{lem:only quotients of invariants} the map is surjective.
\end{proof}

The prehomogenisation projection $\p{d}$ only concerns degrees. Therefore it makes sense to define it on Poincar\'e series. Since there is little opportunity for confusion, we use the same name,
\[\p{d}:\NN_0[t]\rightarrow\NN_0[t]\,,\quad \sum_{r\ge0}k_r t^r\mapsto\sum_{d\,|\,r}k_r t^r\,.\]
Equivalently, one can define the prehomogenisation of Poincar\'e series by \[\p{d}P(R,t)=P(\p{d}R,t)\] for any graded ring $R$.

\begin{Example}[$\mero^{\DD_N}_{\Gamma}$]
\label{ex:automorphic forms} The case of the trivial representation $V=V_{\triv}$ corresponds to automorphic functions, which were already found in Example \ref{ex:automorphic functions}. This time we apply Lemma \ref{lem:prehomogenisation mod f is homogenisation}.
First we assume that $N$ is odd so that we may assume that $\DD_N^\flat=\DD_N$ and $ R^{\DD_N^\flat}= R^{\DD_N}=\CC[\fal,\fbe]$. Recall that $\dal=\deg \fal=2$ and $\dbe=\deg \fbe=N$ and the two forms are algebraically independent. 
The prehomogenisation projection $\p{2N}$ maps
\[
P\left( R^{\DD_N^\flat},t\right)=\frac{1}{(1-t^2)(1-t^N)}=\frac{(1+t^2+t^4+\ldots+t^{2N-2})(1+t^N)}{(1-t^{2N})^2}\mapsto\frac{1}{(1-t^{2N})^2}
\]
and one finds \[\p{2N}\CC[\fal,\fbe]=\CC[\fal^{\nal},\fbe^{\nbe}]\,.\]
If $N$ is even we may use the Schur cover $\DD_{2N}$ (cf.~Section \ref{sec:binary polyhedral groups}) with invariant forms $ R^{\DD_{2N}}=\CC[\fal,\fbe^{2}]$, which are mapped onto the same ring under $\p{2N}$.

The quotient $\CC[\fal^{\nal},\fbe^{\nbe}]\textrm{ mod }F_\Gamma$ is equivalent to $\CC[F_i^{\nu_i}]$ for $i\in\Omega$ such that $F_i\ne F_\Gamma$. Notice that this ring is isomorphic to $\mero^G_{\Gamma}=\CC\left[\h{\Gamma}{F_{i}^{\nu_{i}}}\right]$, $\Gamma_i\ne\Gamma$, from Example \ref{ex:automorphic functions}.

\end{Example}

\section{Squaring the Ring}
\label{sec:squaring the ring}

This section contains the core of various invariants of Automorphic Lie Algebras.
We start by studying the exponent $\|G^\flat\|$ of the binary polyhedral groups (cf.~Definition \ref{def:exponent of a group}).
\begin{Lemma}
\label{lem:exponent}
Let $G^\flat$ be a binary polyhedral group corresponding to the polyhedral group $G$. Then
\[\|G^\flat\|=2\lcm(\nal, \nbe, \nga)=2\|G\|=\frac{2|G|}{|M(G)|}=\left\{\begin{array}{ll}|G|&\text{if }G\in\left\{\DD_{2M},\TT,\OO,\YY\right\}\\2|G|&\text{if }G\in\left\{\zn{N},\DD_{2M+1}\right\}\end{array}\right.\]
where $M(G)$ is the Schur multiplier given in Theorem \ref{thm:schur multipliers of polyhedral groups} and $\lcm$ stands for \emph{least common multiple}.
\end{Lemma}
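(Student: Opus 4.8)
The plan is to compute the exponent of $G^\flat$ directly from the presentation $G^\flat=\langle \gal,\gbe,\gga\mid\gal^{\nal}=\gbe^{\nbe}=\gga^{\nga}=\gal\gbe\gga\rangle$ and the structure of its cyclic subgroups, and then to verify the chain of equalities term by term using earlier results. Recall from Section \ref{sec:polyhedral groups} that $\|G\|=\lcm(\nal,\nbe,\nga)$, since every element of $G$ lies in some $\zn{\nu_i}$ and conversely each $\nu_i$ is realised as an element order. The first equality $\|G^\flat\|=2\lcm(\nal,\nbe,\nga)$ is the heart of the lemma; the middle equalities $2\|G\|=\frac{2|G|}{|M(G)|}$ then follow from Table \ref{tab:various properties of polyhedral groups} together with equation (\ref{eq:order is exponent times schur multiplier}), $|G|=\|G\|\,|M(G)|$, and the explicit case split is just reading off Theorem \ref{thm:schur multipliers of polyhedral groups}.

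For the main equality, first I would recall the covering homomorphism $\pi:G^\flat\rightarrow G$ with central kernel $Z=\langle z\rangle\cong\zn 2$, where $z=\gal\gbe\gga$, established in Section \ref{sec:binary polyhedral groups}. The key structural fact is that the preimage $\pi^{-1}(\langle \pi g_i\rangle)$ of each cyclic subgroup $\langle\pi g_i\rangle\cong\zn{\nu_i}$ of $G$ is a cyclic group of order $2\nu_i$ generated by $g_i$ (this is exactly the content of the relation $g_i^{\nu_i}=z$ in the presentation, so $g_i$ has order $2\nu_i$ when $\nu_i$ is such that $z\neq 1$, which holds for all binary polyhedral groups since they have even order and contain $-\Id$). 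Since every nontrivial element of $G^\flat$ maps into some cyclic stabiliser $\langle\pi g_i\rangle$ — again using the covering-by-stabilisers decomposition $G\setminus\{1\}=\bigcup_{i\in\Omega,\lambda\in\Gamma_i}G_\lambda\setminus\{1\}$ from Section \ref{sec:polyhedral groups} — every element of $G^\flat$ lies in $\langle g_i\rangle$ for some $i$, a cyclic group of order $2\nu_i$. Hence $\|G^\flat\|$ divides $\lcm(2\nal,2\nbe,2\nga)=2\lcm(\nal,\nbe,\nga)$. Conversely $g_\al$ has order $2\nal$, and since $-\Id=z$ is in every $\langle g_i\rangle$, one can argue that an element of order exactly $2\lcm(\nal,\nbe,\nga)$ exists, or more cheaply: $\|G^\flat\|$ is a multiple of each $2\nu_i$ (each being an element order), so $\|G^\flat\|$ is a multiple of $\lcm(2\nal,2\nbe,2\nga)$, giving the reverse divisibility and hence equality.

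The one subtlety — and what I expect to be the main obstacle — is the cyclic case $G=\zn N$, where the presentation degenerates ($\gga$ is redundant, $(\nal,\nbe,\nga)=(N,N,1)$) and $G^\flat=\zn{2N}$, so $\|G^\flat\|=2N=2\lcm(N,N,1)$; this must be checked separately since the ``covered by three stabilisers'' picture and Table \ref{tab:orders of non-cyclic polyhedral groups} do not literally apply. Similarly, in the dihedral case one should make sure that the relevant cyclic subgroups of $\bd N$ really do have the orders $2N,4,4$ claimed: the cyclic subgroup $\langle r\rangle$ of $\bd N$ has order $2N$ and $s$ has order $4$ (since $s^2=r^N$ has order $2$), consistent with $2(\nal,\nbe,\nga)=(2N,4,4)$. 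Once these edge cases are dispatched, I would assemble the numerical chain: $2\lcm(\nal,\nbe,\nga)=2\|G\|$ by definition of $\|G\|$; $2\|G\|=2|G|/|M(G)|$ by (\ref{eq:order is exponent times schur multiplier}); and the final piecewise expression by substituting $|M(G)|=2$ for $\DD_{2M},\TT,\OO,\YY$ and $|M(G)|=1$ for $\zn N,\DD_{2M+1}$ from Theorem \ref{thm:schur multipliers of polyhedral groups}.
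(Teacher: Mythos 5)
Your proposal is correct and follows essentially the same route as the paper: the upper divisibility $\|G^\flat\|\mid 2\lcm(\nal,\nbe,\nga)$ comes from covering $G^\flat$ by the preimages $\pi^{-1}G_\lambda$ of the stabiliser subgroups (each of order $|Z|\nu_i$), the lower divisibility from the fact that $g_i^{\nu_i}=z\neq 1$ forces $g_i$ to have order $2\nu_i$, and the remaining equalities are read off from Table \ref{tab:various properties of polyhedral groups} and Theorem \ref{thm:schur multipliers of polyhedral groups}. The only cosmetic differences are that the paper needs no cyclicity of the preimages (Lagrange suffices) and that a general element lies in a \emph{conjugate} of $\langle g_i\rangle$ rather than in $\langle g_i\rangle$ itself, which of course does not affect element orders.
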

\begin{proof}
We will prove that $\|G^\flat\|=2\lcm(\nal, \nbe, \nga)$. The other equalities can be found in Table $\ref{tab:various properties of polyhedral groups}$. The polyhedral group
$G$ is covered by stabiliser subgroups $G_\lambda$, $\lambda\in\overline{\CC}$. Thus, if $\pi:G^\flat\rightarrow G$ is any extension with kernel $Z$,  then $G^\flat$ is covered by the preimages $\pi^{-1}G_\lambda$, $\lambda\in\overline{\CC}$. The order of an element $h\in G^\flat$ thus divides the order of the subgroup $\pi^{-1}G_\lambda$ containing it, which is $|Z|\nu_i$. Hence the exponent divides the least common multiple of these;
\[\|G^\flat\|\;|\;\lcm(|Z|\nal, |Z|\nbe, |Z|\nga)=|Z|\lcm(\nal, \nbe, \nga).\]
If we assume that $G^\flat$ is the binary polyhedral group then it is clear from the presentation $G^\flat=\langle \gal, \gbe, \gga\;|\; \gal^{\nal}=\gbe^{\nbe}=\gga^{\nga}=\gal \gbe \gga\rangle$ and the fact that $\gal\gbe\gga\in\zn{2}$ that the order of $g_i$ is $2\nu_i$, so that
\[2\lcm(\nal, \nbe, \nga)=\lcm(2\nal,2 \nbe, 2\nga)\;|\;\|G^\flat\|\] 
and $\|G^\flat\|=2\lcm(\nal, \nbe, \nga)$ as desired.
\end{proof}
Notice that if we would define $G^\flat$ to be a Schur cover, equal to the binary polyhedral group when possible, then $\|G^\flat\|=|G|$. For now we wish to use some results that are specific to $\SL_2(\CC)$.
Let 
$\chi$ be the character of the natural representation of $G^\flat$, i.e.~the monomorphism $\sigma:G^\flat\rightarrow \SL(\natrep)=\SL_2(\CC)$, and denote its symmetric tensor of degree $h$ by \[\chi_h=\chi_{S^hU}.\]
The \emph{Clebsch-Gordan decomposition} \cite{fossum} for $\SL_2(\CC)$-modules
\begin{equation}
\label{eq:CG}
\natrep\otimes S^h\natrep=S^{h+1}\natrep\oplus S^{h-1}\natrep, \quad h\ge 2,
\end{equation} 
can be conveniently used to find the decomposition of $\chi_h$ when $h$ is a multiple of $\|G^\flat\|$.
To this end, we write the Clebsch-Gordan decomposition in terms of the characters 
\begin{equation}
\label{eq:CGmatrix}
\begin{pmatrix} \chi_h\\\chi_{h-1}\end{pmatrix}=
\begin{pmatrix}\chi&-1\\1&0\end{pmatrix}
\begin{pmatrix} \chi_{h-1}\\\chi_{h-2}\end{pmatrix}
\end{equation}
with boundary conditions
\[\chi_{-1}=0,\qquad \chi_0=\triv.\]
The function $\chi_{-1}$ might not have meaning as a character but it gives a convenient boundary condition resulting in the correct solution, with $\chi_1=\chi$. 


\begin{Lemma}\label{lem:character degree reduction} 
If $g\in G^\flat$ has order $\nu>2$ and $\chi_h$ is the character of the $h$-th symmetric power of the natural representation of $G^\flat$, then \[\chi_{h+\nu}(g)=\chi_h(g)\] for all $h\in\ZZ$.
\end{Lemma}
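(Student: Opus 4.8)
The plan is to work directly with the eigenvalues of $\sigma(g)$, where $\sigma:G^\flat\to\SL_2(\CC)$ is the natural representation. Since $g$ has finite order $\nu$, its minimal polynomial divides $t^\nu-1$, which has distinct roots, so $\sigma(g)$ is diagonalisable. I would write its eigenvalues as $\zeta$ and $\zeta'$; as $\sigma(g)\in\SL_2(\CC)$ we have $\zeta\zeta'=1$, so $\zeta'=\zeta^{-1}$, and $\zeta^\nu=1$ because $\sigma(g)^\nu=\Id$. The hypothesis $\nu>2$ ensures $g\ne\pm\Id$, hence $\zeta\ne\zeta^{-1}$ and $\sigma(g)$ has two distinct eigenvalues — although in the end only the relation $\zeta^\nu=1$ is needed.

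Next I would compute $\chi_h(g)$ in closed form. In an eigenbasis $\{e_1,e_2\}$ of $\sigma(g)$ with eigenvalues $\zeta,\zeta^{-1}$, the space $S^h\natrep$ has basis $\{e_1^{\,h-k}e_2^{\,k}\mid 0\le k\le h\}$, on which $\sigma(g)$ acts with eigenvalue $\zeta^{h-2k}$. Hence, for $h\ge 0$,
\[
\chi_h(g)=\sum_{k=0}^{h}\zeta^{h-2k}=\frac{\zeta^{h+1}-\zeta^{-(h+1)}}{\zeta-\zeta^{-1}},
\]
the familiar $\SL_2(\CC)$ character formula, legitimate here since $\zeta\ne\zeta^{-1}$. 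One checks this expression satisfies $\chi_{-1}(g)=0$, $\chi_0(g)=1$ and the recursion $\chi_h(g)=\chi(g)\chi_{h-1}(g)-\chi_{h-2}(g)$ coming from (\ref{eq:CGmatrix}), so it yields the correct value of $\chi_h(g)$ for every $h\in\ZZ$ once the recursion is used to extend $\chi_h$ to negative indices. Equivalently, one may argue purely via the linear recurrence: its characteristic polynomial is $(t-\zeta)(t-\zeta^{-1})$, so $\chi_h(g)=A\zeta^h+B\zeta^{-h}$, and the boundary conditions $\chi_{-1}=0$, $\chi_0=\triv$ pin down $A$ and $B$, recovering the same formula.

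Finally, the periodicity is immediate: since $\zeta^{\pm\nu}=1$,
\[
\chi_{h+\nu}(g)=\frac{\zeta^{h+\nu+1}-\zeta^{-(h+\nu+1)}}{\zeta-\zeta^{-1}}=\frac{\zeta^{h+1}-\zeta^{-(h+1)}}{\zeta-\zeta^{-1}}=\chi_h(g)
\]
for all $h\in\ZZ$. There is essentially no serious obstacle; the only point requiring a little care is justifying that $\sigma(g)$ is semisimple with reciprocal eigenvalues, and that the closed formula for $\chi_h(g)$ is the correct extension of the recursively defined $\chi_h$ to all integers $h$ via the boundary conditions $\chi_{-1}=0$, $\chi_0=\triv$.
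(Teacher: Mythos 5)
Your proof is correct and rests on the same mechanism as the paper's: the Clebsch--Gordan recurrence (\ref{eq:CGmatrix}) has characteristic roots $\omega^{\pm 1}$, which are distinct $\nu$-th roots of unity precisely when $\nu>2$, forcing the sequence $h\mapsto\chi_h(g)$ to be $\nu$-periodic. The only cosmetic difference is that you write out the resulting closed form $\chi_h(g)=\bigl(\zeta^{h+1}-\zeta^{-(h+1)}\bigr)/\bigl(\zeta-\zeta^{-1}\bigr)$ explicitly, whereas the paper argues directly that the recurrence matrix $M(g)$ is diagonalisable with eigenvalues $\omega^{\pm1}$ and hence has order $\nu$.
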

\begin{proof}
Let $\omega$ and $\omega^{-1}$ be the eigenvalues of $g$'s representative in $\SL_2(\CC)$. In particular $\omega^\nu=1$. The matrix which defines the linear recurrence relation (\ref{eq:CGmatrix}) becomes \[M(g)=\begin{pmatrix}\omega+\omega^{-1}&-1\\1&0\end{pmatrix}.\] We check that this matrix has eigenvalues $\omega^{\pm 1}$ as well.
If the eigenvalues are distinct, i.e.~$\omega^2  \ne 1$, i.e.~$\nu>2$, then $M(g)$ is similar to $\diag(\omega,\omega^{-1})$ and has order $\nu$, proving the lemma. Notice that $M(g)$ is not diagonalisable if $\nu=1$ or $\nu=2$. 
\end{proof}

Recall from Section \ref{sec:binary polyhedral groups} the short exact sequence
$1\rightarrow \zn{2}\rightarrow G^\flat \xrightarrow{\pi} G\rightarrow 1$.
The only maps in  $\SL_2(\CC)$ whose square is the identity are $\pm\Id$. Therefore
\begin{equation} 
\label{lem:order 2 in ker pi}
g\in \ker\pi \Leftrightarrow g^2=1.
\end{equation}
Moreover, if $1\ne z\in\ker\pi$ then its $\SL_2(\CC)$-representative is $-\Id$ and its action on a form $F\in S^hU\cong S^hU^\ast$ is given by \[zF=(-1)^h F.\]
Let $\regchar{G}$ and $\regchar{G^\flat}$ be the characters of the regular representations (cf.~Section \ref{sec:character theory}) of $G$ and $G^\flat$ respectively, i.e.
\[
\pi^\ast \regchar{G}(g)=\left\{\begin{array}{ll} |G|&\pi(g)=1,\\ 0&\pi(g)\ne 1,\end{array}\right.\qquad
\regchar{G^\flat}(g)=\left\{\begin{array}{ll} 2|G|&g=1,\\ 0&g\ne 1,\end{array}\right.\]
where $\pi^\ast$ denotes the pullback of $\pi$, pulling back a map $f$ on $G$ to a map $\pi^\ast f$ on $G^\flat$ given by $\pi^\ast f(g)=f(\pi(g))$.

\begin{Theorem}
\label{thm:dimV}
Let $G$ be a polyhedral group and $G^\flat$ its corresponding binary group, with epimorphism $\pi:G^\flat\rightarrow G$.
Let $\chi_h$ be the character of the $h$-th symmetric power of the natural representation of $G^\flat$, and $\regchar{G}$ and $\triv$ the character of the regular and trivial representation respectively. Then
\begin{align*}
\chi_{m|G|}&=m \pi^\ast \regchar{G}+\triv,\\
\chi_{m|G|-1}&=m (\regchar{G^\flat}-\pi^\ast \regchar{G}),
\end{align*}
for all $m\in\frac{2}{|M(G)|}\NN_0$.
\end{Theorem}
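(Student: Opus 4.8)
The plan is to prove both equalities as identities of class functions on $G^\flat$, verified value by value; since the left-hand sides are genuine characters, this also identifies $S^{m|G|}\natrep$ and $S^{m|G|-1}\natrep$ as $G^\flat$-modules up to isomorphism. The only inputs beyond bookkeeping are Lemma \ref{lem:exponent}, Lemma \ref{lem:character degree reduction}, and the short exact sequence $1\rightarrow\zn2\rightarrow G^\flat\xrightarrow{\pi} G\rightarrow 1$. First the arithmetic: writing $m=\tfrac{2t}{|M(G)|}$ with $t\in\NN_0$, Lemma \ref{lem:exponent} gives $\|G^\flat\|=\tfrac{2|G|}{|M(G)|}$, hence $m|G|=t\,\|G^\flat\|$ is a non-negative multiple of the exponent of $G^\flat$; moreover it is even, because $\|G^\flat\|=2\lcm(\nal,\nbe,\nga)$.

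Case $g\in\ker\pi$. By $(\ref{lem:order 2 in ker pi})$ these are exactly the elements with $g^2=1$, so $\ker\pi=\{1,z\}$ with $z=\gal\gbe\gga$. For $g=1$ one has $\chi_h(1)=\dim S^h\natrep=h+1$, so $\chi_{m|G|}(1)=m|G|+1=m\,\pi^\ast\regchar{G}(1)+\triv(1)$ and $\chi_{m|G|-1}(1)=m|G|=m\big(2|G|-|G|\big)=m\big(\regchar{G^\flat}(1)-\pi^\ast\regchar{G}(1)\big)$. For $g=z$, its representative in $\SL_2(\CC)$ is $-\Id$, which acts on $S^h\natrep$ by $(-1)^h$, so $\chi_h(z)=(-1)^h(h+1)$; since $m|G|$ is even this gives $\chi_{m|G|}(z)=m|G|+1$ and $\chi_{m|G|-1}(z)=-m|G|$. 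On the other side $\pi(z)=1$ forces $\pi^\ast\regchar{G}(z)=|G|$, while $\regchar{G^\flat}(z)=0$ and $\triv(z)=1$, so both sides agree.

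Case $g\notin\ker\pi$. Then $g^2\ne1$ by $(\ref{lem:order 2 in ker pi})$, hence $\nu:=\mathrm{ord}(g)>2$; also $g\ne1$ and $\pi(g)\ne1$, so on the right $\regchar{G^\flat}(g)=0=\pi^\ast\regchar{G}(g)$, leaving $m\,\pi^\ast\regchar{G}(g)+\triv(g)=1$ and $m\big(\regchar{G^\flat}(g)-\pi^\ast\regchar{G}(g)\big)=0$. For the left-hand sides, $\nu$ divides $\|G^\flat\|$ by definition of the exponent, hence divides $m|G|=t\|G^\flat\|$; applying the $\nu$-periodicity of Lemma \ref{lem:character degree reduction} (valid for all shifts in $\ZZ$) together with the boundary values $\chi_0=\triv$, $\chi_{-1}=0$ yields $\chi_{m|G|}(g)=\chi_0(g)=1$ and $\chi_{m|G|-1}(g)=\chi_{-1}(g)=0$. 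This exhausts all $g\in G^\flat$ and establishes both identities.

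I expect the main point requiring care to be the interplay between the hypothesis $m\in\tfrac{2}{|M(G)|}\NN_0$ and the exponent formula of Lemma \ref{lem:exponent}: it is exactly this hypothesis that makes $m|G|$ a multiple of $\|G^\flat\|$ (so that Lemma \ref{lem:character degree reduction} can be invoked at every element of order $>2$), and separately forces $m|G|$ to be even, which controls the sign $(-1)^{m|G|}$ at the central element $z$ — a case Lemma \ref{lem:character degree reduction} cannot handle, since $\mathrm{ord}(z)=2$. Everything else reduces to evaluating the characters of the regular and trivial representations, which is immediate.
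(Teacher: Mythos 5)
Your proof is correct and follows essentially the same route as the paper's: both argue pointwise on $G^\flat$, splitting into $g\in\ker\pi$ (where the parity of $m|G|$ controls the sign of the central element's action on $S^h\natrep$) and $g\notin\ker\pi$ (where $\mathrm{ord}(g)>2$ lets Lemma \ref{lem:character degree reduction} reduce $\chi_{m|G|}$ and $\chi_{m|G|-1}$ to the boundary values $\chi_0=\triv$ and $\chi_{-1}=0$ via Lemma \ref{lem:exponent}). Your write-up is if anything slightly more explicit than the paper's, spelling out the evaluation at $z$ for the first identity and the reason $m|G|$ is even.
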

\begin{proof} The statement is trivial for $m=0$. We prove the first equality for $m\in\frac{2}{|M(G)|}\NN$ by evaluating both sides of the equation at each element of $G^\flat$. 

First of all, if $\pi(g)=1$ then $(m \pi^\ast \regchar{G}+\triv)(g)=m|G|+1$. On the other hand, since $m|G|$ is an even number, the action of $g\in\ker\pi$ on $\chi_{m|G|}$ is trivial and $\chi_{m|G|}(g)=\dim \chi_{m|G|}=m|G|+1$.

If $\pi(g)\ne 1$ then $(m \pi^\ast \regchar{G}+\triv)(g)=1$. 
Also, by (\ref{lem:order 2 in ker pi}), $g$ has order $\nu>2$. Since $\nu\,|\,\|G^\flat\|=\frac{2|G|}{|M(G)|}\;|\;m|G|$ by Lemma \ref{lem:exponent}, one can apply Lemma \ref{lem:character degree reduction} to find $\chi_{m|G|}(g)=\chi_0(g)=1$.

The second equality follows in the same manner. Left and right hand sides clearly agree on the trivial group element. Inserting the nontrivial central element $1\ne z\in\ker\pi$ in the right hand side gives $m (\regchar{G^\flat}(z)-\pi^\ast \regchar{G}(z))=m(0-|G|)=-m|G|$. On the other hand, this element $z$ acts as multiplication by $-1$ on forms of odd degree $m|G|-1$, hence $\chi_{m|G|-1}(z)=-\dim \chi_{m|G|-1}=-m|G|$.

Now let $\pi(g)\ne 1$ so that $m (\regchar{G^\flat}(g)-\pi^\ast \regchar{G}(g))=0$.
Again, by (\ref{lem:order 2 in ker pi}), $g$ has order $\nu>2$, and since $\nu\,|\,\|G^\flat\|=\frac{2|G|}{|M(G)|}\;|\;m|G|$ by Lemma \ref{lem:exponent}, one can apply Lemma \ref{lem:character degree reduction} to find $\chi_{m|G|-1}(g)=\chi_{-1}(g)=0$ by the boundary conditions of the recurrence relation (\ref{eq:CGmatrix}).
\end{proof}

\begin{Example}
As an illustration we compute the first twelve symmetric powers of the natural representation of the tetrahedral group, using the Clebsch-Gordan decomposition (\ref{eq:CG}), and list them in Table \ref{tab:decomposition of symmetric powers} (possibly the easiest way to do this is using the \emph{McKay correspondence}, cf.~\cite{dolgachev2009mckay,springer1987poincare}). We check Theorem \ref{thm:dimV} for $G=\TT$ and $m=1$. 
Notice also that all even powers are nonspinorial and all odd powers are spinorial. 

For specific irreducible characters $\psi$, one can find $\psi(1)$ copies at lower degrees. For instance, $\TT_4$ appears twice at degree $6$, $\TT_5$ and $\TT_6$ appear twice at degree $9$, and $\TT_7$ can be found thrice at degree $10$.
\begin{center}
\begin{table}[h!]
\caption{Decomposition of symmetric powers $S^h\btiiii$, $h\le \|\TT\|$.}
\label{tab:decomposition of symmetric powers}
\begin{center}
\begin{tabular}{ccc}\hline
$h$ & $S^h\btiiii$ & $\btiiii\otimes S^{h}\btiiii$  \\
\hline
$-1$ & $0$ & $0$\\
$0$ & $\bti$ & $\btiiii$ \\
$1$ & $\btiiii$ & $\btiiii\btiiii=\bti + \btiiiiiii$\\
$2$ & $\btiiiiiii$ & $\btiiii\btiiiiiii=\btiiii + \btiiiii+\btiiiiii$ \\
$3$ & $\btiiiii+\btiiiiii$ & $\btii+\btiii+2\btiiiiiii$ \\
$4$ & $\btii+\btiii+\btiiiiiii$ & $\btiiii+2\btiiiii+2\btiiiiii$\\
$5$ & $\btiiii+\btiiiii+\btiiiiii$ & $\bti+\btii+\btiii+3\btiiiiiii$\\
$6$ & $\bti+2\btiiiiiii$ & $\btiiii+2(\btiiii+\btiiiii+\btiiiiii)$\\
$7$ & $2\btiiii+\btiiiii+\btiiiiii$ & $2\bti+4\btiiiiiii+\btii+\btiii$\\
$8$ & $\bti+\btii+\btiii+2\btiiiiiii$ & $3(\btiiii+\btiiiii+\btiiiiii)$ \\
$9$ & $\btiiii+2(\btiiiii+\btiiiiii)$ & $\bti+2(\btii+\btiii)+5\btiiiiiii$ \\
$10$& $\btii+\btiii+3\btiiiiiii$ & $4(\btiiiii+\btiiiiii)+3\btiiii$ \\
$11$& $2(\btiiii+\btiiiii+\btiiiiii)$ &  $2(\bti+\btii+\btiii+3\btiiiiiii)$\\
$12$& $2\bti+\btii+\btiii+3\btiiiiiii$ &\\\hline
\end{tabular}
\end{center}
\end{table}
\end{center}
\end{Example}
Theorem \ref{thm:dimV} has many nice consequences. For instance, the first equation in the theorem is equivalent to the following Poincar\'e series of the image of the prehomogenisation operator $\p{m|G|}$ of a space of invariant vectors,
\begin{equation}
\label{eq:prehomogenised generating function}
\p{m|G|}P\left(R^\chi,t\right)=
\left\{
\begin{array}{ll}
0& \text{if $\chi$ is spinorial,}\\
\frac{1+(m-1)t^{m|G|}}{(1-t^{m|G|})^2} & \text{if $\chi=\triv$,}\\
\frac{m\chi(1)^2 t^{m|G|}}{(1-t^{m|G|})^2}  & \text{otherwise,}
\end{array}\right.
\end{equation}
where $m$ is a multiple of $\frac{2}{|M(G)|}$. Notice that$\frac{2}{|M(G)|}=1$ for most of the interesting groups, namely for $\TT,\;\OO,\;\YY$ and $\DD_{2M}$, in which case one can obtain the simplest formula by taking $m=1$.

The Poincar\'e series (\ref{eq:prehomogenised generating function}) suggest, but do not prove, the existence of a set of primary and secondary invariants such that the modules of invariant vectors have the form given in Proposition \ref{prop:dimV generators} below.
The fact that such invariants exist nonetheless can be established using the same technical result that underlies the fact that isotypical components of $R$ are Cohen-Macaulay, namely Proposition \ref{prop:hsop}.
\begin{Proposition}
\label{prop:dimV generators}
Let $G^\flat$ be a binary polyhedral group, $V$ a $G^\flat$-module and $U$ the natural $G^\flat$-module. Let $R=\CC[U]$ be the polynomial ring containing the ground forms $F_i$, $i\in\Omega$ of degree $\frac{|G|}{\nu_i}$. Then
\[\p{m|G|}(V\otimes R)^{G^\flat}=
\left\{
\begin{array}{ll}
0& \text{if $V$ is spinorial,}\\
\CC[F_i^{m\nu_i},F_j^{m\nu_j}](1\bigoplus_{r=1}^{m-1} F_i^{r\nu_i}F_j^{(m-r)\nu_j})& \text{if $V$ is trivial,}\\
\bigoplus_{r=1}^{m\dim V} \CC[F_i^{m\nu_i},F_j^{m\nu_j}]\zeta_r & \text{otherwise,}
\end{array}
\right.
\]
for some invariant vectors $\zeta_1,\ldots,\zeta_{m\dim V}$ of degree $m|G|$.
\end{Proposition}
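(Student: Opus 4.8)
The plan is to deduce the module structure from the corresponding Poincar\'e series computation (\ref{eq:prehomogenised generating function}) together with the Cohen-Macaulay machinery of Section \ref{sec:classical invariant theory}, applied not to $R^\chi$ directly but to the image of the prehomogenisation operator. First I would observe that, by the Fourier correspondence (Section \ref{sec:fourier transform}), $(V\otimes R)^{G^\flat}$ decomposes according to the isotypical components of $R$, so it suffices to handle an irreducible $V=V_\chi$ and note $P((V_\chi\otimes R)^{G^\flat},t)=\frac{1}{\chi(1)}P(R^{\overline\chi},t)$ by (\ref{eq:generating function of forms and vectors}); in the spinorial case Lemma \ref{lem:dimV^g} (or directly Theorem \ref{thm:dimV}, since $\chi_{m|G|}$ contains only nonspinorial constituents) shows there are no invariant vectors of degree divisible by $|G|$, giving the first line.

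For the nontrivial nonspinorial case I would argue as follows. The ring $S:=\CC[F_i^{m\nu_i},F_j^{m\nu_j}]$ is a polynomial ring in two algebraically independent forms of degree $m|G|$ (independence because $F_i^{\nu_i}$ and $F_j^{\nu_j}$ are independent by the discussion around (\ref{eq:relation ground forms}), and raising to the $m$-th power preserves this), and these two forms lie in $\p{m|G|}R^{G^\flat}$ by Lemma \ref{lem:finuiinvariant}. Hence $M:=\p{m|G|}(V_\chi\otimes R)^{G^\flat}$ is a graded $S$-module. The key point is that $M$ is a \emph{free} $S$-module: one invokes the Cohen-Macaulay property. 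By Theorem \ref{thm:cohen-macaulay} the isotypical component $R^{\overline\chi}$ is a free module over some polynomial ring $\CC[\theta_1,\theta_2]$; after passing to the prehomogenised piece one still has a finitely generated torsion-free, in fact free, module over the $2$-variable polynomial ring $S$ — here I would use Proposition \ref{prop:hsop}, exactly as the excerpt flags, noting that $\{F_i^{m\nu_i},F_j^{m\nu_j}\}$ is a homogeneous system of parameters for $M$ so freeness over one h.s.o.p.\ transfers to freeness over this one. Then the number $k$ of free generators and the sum of their degrees are read off from the Laurent expansion of $P(M,t)$ around $t=1$, exactly in the style of Proposition \ref{prop:stanley}: writing $P(M,t)=\frac{\sum_{r=1}^k t^{e_r}}{(1-t^{m|G|})^2}$ and matching with (\ref{eq:prehomogenised generating function}) forces $\frac{k}{(m|G|)^2}=\frac{A}{1}$ where $A=m\chi(1)^2$, so $k=m\chi(1)^2\cdot$ — wait, one must be careful: (\ref{eq:prehomogenised generating function}) concerns $R^\chi$ while $M$ concerns $(V_\chi\otimes R)^{G^\flat}$, which differ by the factor $\chi(1)$, so $k=m\chi(1)=m\dim V_\chi$, giving $m\dim V$ generators in the case $\dim V=\chi(1)$; for reducible $V$ one sums over isotypical components and the total is $m\dim V$. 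All generators have degree $m|G|$ because the numerator $\frac{m\chi(1)^2t^{m|G|}}{(1-t^{m|G|})^2}$ in (\ref{eq:prehomogenised generating function}), divided by $\chi(1)$ and rewritten over $(1-t^{m|G|})^2$, has numerator $m\chi(1)t^{m|G|}$, i.e.\ $m\chi(1)$ generators each of degree exactly $m|G|$.

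For the trivial case $V=\triv$ one repeats the computation with the middle line of (\ref{eq:prehomogenised generating function}): $P=\frac{1+(m-1)t^{m|G|}}{(1-t^{m|G|})^2}$ exhibits the free $S$-basis $\{1\}\cup\{F_i^{r\nu_i}F_j^{(m-r)\nu_j}\mid r=1,\dots,m-1\}$ (these $m-1$ products all have degree $m|G|$ and are $S$-linearly independent since $F_i^{\nu_i},F_j^{\nu_j}$ satisfy no algebraic relation beyond the one already absorbed), and by freeness (Proposition \ref{prop:hsop} again) this must be the whole module. I expect the main obstacle to be the freeness claim: establishing that $\p{m|G|}(V\otimes R)^{G^\flat}$ is genuinely a free $S$-module rather than merely finitely generated, since the Poincar\'e series alone does not prove freeness (a non-free module can have a rational Poincar\'e series of the same shape). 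The resolution is to note that $(V\otimes R)^{G^\flat}$ is a direct summand (as a module over the invariant ring) of $V\otimes R$, which is Cohen-Macaulay over $R^{G^\flat}$, hence $(V\otimes R)^{G^\flat}$ is Cohen-Macaulay of Krull dimension $2$; a Cohen-Macaulay module of dimension $2$ over a $2$-dimensional polynomial ring generated by a homogeneous system of parameters is free (Auslander-Buchsbaum / the Proposition \ref{prop:hsop} mechanism), and prehomogenisation, being a projection onto a sub-sum of graded pieces that is itself a module over $S$, preserves this. Once freeness is in hand, everything else is the degree bookkeeping above.
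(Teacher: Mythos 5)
Your proposal is correct and follows essentially the same route as the paper: the spinorial and trivial cases are read off from the prehomogenised Poincar\'e series (\ref{eq:prehomogenised generating function}), and the general case is handled by showing that the prehomogenised module inherits freeness from the Cohen--Macaulay structure of $(V\otimes R)^{G^\flat}$ and then invoking Proposition \ref{prop:hsop} to pass to the homogeneous system of parameters $\{F_i^{m\nu_i},F_j^{m\nu_j}\}$, with the generator count and degrees then dictated by the series. The only steps the paper spells out more explicitly are that prehomogenising a free $\CC[\theta_1,\theta_2]$-module yields a free $\CC[\theta_1^{\nu_1},\theta_2^{\nu_2}]$-module (by exhibiting the basis of elements $\theta_1^{a_1}\theta_2^{a_2}\eta_j$ of degree divisible by $m|G|$) and that $\{F_i^{m\nu_i},F_j^{m\nu_j}\}$ really is an h.s.o.p.\ for the prehomogenised module; both are routine, and you correctly identified freeness, not the Poincar\'e series, as the crux.
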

\begin{proof}
For the case of the trivial representation, recall that $F_i^{\nu_i}$ is an invariant form,  by Lemma \ref{lem:finuiinvariant}, and has degree $|G|$. Therefore \[\CC[F_i^{m\nu_i},F_j^{m\nu_j}](1\bigoplus_{r=1}^{m-1} F_i^{r\nu_i}F_j^{(m-r)\nu_j})\subset\p{m|G|}(V_\triv\otimes R)^{G^\flat}.\] Since $F_i$ and $F_j$ are algebraically independent, the Poincar\'e series (\ref{eq:prehomogenised generating function}) gives equality.

In the last case, recall first that each copy of $V_{\overline{\chi}}$ in $R$ contributes one invariant vector to $(V_\chi \otimes R)^{G^\flat}$, cf.~(\ref{eq:generating function of forms and vectors}).


The original module, before prehomogenisation, is Cohen-Macaulay (Theorem \ref{thm:cohen-macaulay}) \[(V\otimes R)^{G^\flat}=\bigoplus_{r=1}^{k}\CC[\theta_1, \theta_2]\eta_r.\] 
We will show that the prehomogenised module is Cohen-Macaulay as well, allowing us to apply Proposition \ref{prop:hsop}.

The prehomogenisation operator is linear so that it moves through the sum. Moreover, since it merely eliminates certain elements, no relations can be introduced, therefore the sum stays direct.
\[\p{m|G|}(V\otimes R)^{G^\flat}=
\p{m|G|}\bigoplus_{r=1}^{k}\CC[\theta_1, \theta_2]\eta_r=
\bigoplus_{r=1}^{k} \p{m|G|}\CC[\theta_1, \theta_2]\eta_r=
\bigoplus_{r=1}^{k'} \CC[\theta_1^{\nu_1}, \theta_2^{\nu_2}]\tilde{\eta}_r\]
where $m|G|$ divides $\nu_i\deg\theta_i$ (this always holds for some integer $\nu_i$ because $\theta_i$ is a polynomial in ground forms, all whose degrees divide $|G|$) and 
\[\{\tilde{\eta}_1,\ldots\tilde{\eta}_{k'}\}=\{\theta_1^{a_1} \theta_2^{a_2}\eta_j\;|\; m|G|\text{ divides } \deg\theta_1^{a_1} \theta_2^{a_2}\tilde{\eta}_j,\;0\le a_i<\nu_i,\;0\le j\le k\}.\]
This shows that $\p{m|G|}(V\otimes R)^{G^\flat}$ is a free $\CC[\theta_1^{\nu_1}, \theta_2^{\nu_2}]$-module.
Therefore, by Proposition \ref{prop:hsop}, $\p{m|G|}(V\otimes R)^{G^\flat}$ is free over any homogeneous system of parameters. The proof is done if we show that $\{F_i^{m\nu_i},F_j^{m\nu_j}\}$ is a homogeneous system of parameters, i.e.~we need to show that $\p{m|G|}(V\otimes R)^{G^\flat}$ is finitely generated over $\CC[F_i^{m\nu_i},F_j^{m\nu_j}]$. But the parameters $\theta_i^{\nu_i}$ are invariant forms whose degrees divide $m|G|$. Therefore $\CC[\theta_1^{\nu_1}, \theta_2^{\nu_2}]\subset \p{m|G|}R^{G^\flat}=\CC[F_i^{m\nu_i},F_j^{m\nu_j}](1\bigoplus_{r=1}^{m-1} F_i^{r\nu_i}F_j^{(m-r)\nu_j})$ and
\begin{align*}
\p{m|G|}(V\otimes R)^{G^\flat}&=\bigoplus_{r=1}^{k'} \CC[\theta_1^{\nu_1}, \theta_2^{\nu_2}]\tilde{\eta}_r\\
&=\sum_{r=1}^{k'} \CC[F_i^{m\nu_i},F_j^{m\nu_j}](1\bigoplus_{s=1}^{m-1} F_i^{s\nu_i}F_j^{(m-s)\nu_j})\tilde{\eta}_r\\
&=\sum_{r=1}^{k''} \CC[F_i^{m\nu_i},F_j^{m\nu_j}]\tilde{\tilde{\eta}}_r
\end{align*}
where $\{\tilde{\tilde{\eta}}_r\;|\;r=1\ldots k''\}=\{\tilde{\eta}_r,\; F_i^{s\nu_i}F_j^{(m-s)\nu_j}\tilde{\eta}_r\;|\;r=1\ldots k',\; s=1,\ldots, m-1\}$, i.e.~$\{F_i^{\nu_i},F_j^{\nu_j}\}$ is indeed a homogeneous system of parameters. Now, by Proposition \ref{prop:hsop} there exist elements $\zeta_r$ that freely generate $\p{m|G|}(V\otimes R)^{G^\flat}$ over $\CC[F_i^{m\nu_i},F_j^{m\nu_j}]$ and the Poincar\'e series (\ref{eq:prehomogenised generating function}) tells us the number and degrees of these generators.
\end{proof}

\begin{Example}[$R^{\btii}$]
The $\btii$-component of $R=\CC[\tnat]$ can be expressed in terms of ground forms by
\[R^{\btii}=\CC[\fga,\fal\fbe](\fal\oplus \fbe^2)\]
and we have the Poincar\'e series
\begin{align*}
P\left(R^{\btii},t\right)&=\frac{\al t^4+\be^2t^{8}}{(1-\ga t^6)(1-\al\be t^8)}\\
&=\frac{(\al t^4+\be^2t^{8})(1+\ga t^6)(1+\al\be t^8+(\al\be)^2 t^{16})}{(1-\ga^2 t^{12})(1-(\al\be)^3 t^{24})},
\end{align*}
where $\al,\be,\ga$ are dummy symbols added for convenience.
Eliminating all but multiples of $12$ one finds
\[\p{12}P(R^{\btii},t)=\frac{(\al (\al\be) t^{12}+\be^2(\al\be)^2t^{24})}{(1-\ga^2 t^{12})(1-(\al\be)^3 t^{24})}\]
and the corresponding module
\[\p{12}R^{\btii}=\CC[\fga^2,(\fal\fbe)^3](\fal^2\fbe \oplus \fal^2\fbe^4).\]
By Proposition \ref{prop:dimV generators} we can also express this as
\[\p{12}R^{\btii}=\CC[\fal^3,\fbe^3]\fal^2\fbe.\]
\end{Example}
In order to appreciate the result on independence of Proposition \ref{prop:dimV generators} an example where $\chi(1)>1$ would be better suited. Unfortunately, these examples quickly grow out of control and do not make nice reading. A recurring practical problem for this subject.

In the next theorem we combine the results so far to show that, as a module over the automorphic functions, the Automorphic Lie Algebras have a very simple structure.
But first we strip down the notation,
\begin{equation}
\label{eq:automorphic function}
\ii=\h{\Gamma}F_i^{\nu_i}=\frac{F_i^{\nu_i}}{F_\Gamma^{\nu_\Gamma}},\qquad i\in\Omega,
\end{equation}
where we use $\h{\Gamma}$ from Definition \ref{def:homogenisation}. In other words, $\ii$ is a meromorphic function on $\overline{\CC}$ with divisor $\nu_i\Gamma_i-\nu_\Gamma \Gamma$.
Thus we suppress the orbit of poles in the notation but keep the orbit of zeros. Recall that the dependence on the group $G$ was already suppressed in the notation for the ground forms $F_i$.
It is clear that \[\ii=1\Leftrightarrow \Gamma=\Gamma_i.\] If $\ii\ne1$ then it is an example of a \emph{simple automorphic function} \cite{ford1951automorphic} or a \emph{primitive automorphic function} \cite{LM05comm,Lombardo}.

In Section \ref{sec:ground forms}, in particular (\ref{eq:relation ground forms}),  we found that there exist complex numbers $c_i$ and $z_i$ such that  $F_\Gamma^{\nu_\Gamma}=c_{\al}\fal^\nal+c_\be\fbe^\nbe$ and
\[c_\al\ial+c_\be\ibe=1,\qquad z_\al \ial+z_\be\ibe+z_\ga\iga=0. \] Using these relations any of these automorphic functions can be linearly expressed in any other non-constant one. In particular, if $\Gamma_i\ne\Gamma\ne\Gamma_j$, then \[\CC[\II_i]=\CC[\II_j]\] and from now on we will write $\CC[\II]$ for this polynomial ring.

\begin{Theorem}
\label{thm:dimV generators}
If $G<\Aut(\overline{\CC})$ is a finite group acting on a vector space $V$, then the space of invariant vectors \[\left(V\otimes\mero_\Gamma\right)^G\] is a free $\CC[\II]$-module generated by $\dim V$ vectors. In particular this holds for Automorphic Lie Algebras.
\end{Theorem}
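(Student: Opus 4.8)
The plan is to reduce the statement about invariant vectors over the ring of automorphic functions $\mero_\Gamma^G$ to the corresponding statement about prehomogenised invariant vectors over the polynomial ring $\CC[F_i^{m\nu_i}, F_j^{m\nu_j}]$, which was settled in Proposition \ref{prop:dimV generators}. First I would choose $m$ to be the smallest admissible multiple of $2/|M(G)|$ and set $d=m|G|$, so that $|G| \mid d$. By Lemma \ref{lem:prehomogenisation mod f is homogenisation} there is an isomorphism of modules
\[
\p{d}(V\otimes R)^{G^\flat}\ \textrm{mod}\ F_\Gamma \;\cong\; \left(V\otimes \mero_\Gamma\right)^G .
\]
The left-hand side is computed by Proposition \ref{prop:dimV generators}: when $V$ is nonspinorial and nontrivial it is a free $\CC[F_i^{m\nu_i},F_j^{m\nu_j}]$-module on $m\dim V$ generators $\zeta_1,\ldots,\zeta_{m\dim V}$, each homogeneous of degree $d=m|G|$. (Spinorial $V$ do not occur here, since $V$ is a $G$-module, hence nonspinorial as a $G^\flat$-module; and for $V$ trivial one checks the count $\dim V = 1$ directly from the explicit form in Proposition \ref{prop:dimV generators}.)

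Next I would track what the functor $(-)\ \textrm{mod}\ F_\Gamma$ does to this free module. Passing $\textrm{mod}\ F_\Gamma$ amounts to inverting $F_\Gamma$ and then quotienting by the relation $F_\Gamma \equiv 1$, equivalently applying $\h{\Gamma}$. Under $\h{\Gamma}$ the primary invariants $F_i^{m\nu_i}, F_j^{m\nu_j}$ become automorphic functions $\II_i^m, \II_j^m \in \CC[\II]$ (using the notation $\II_i = \h{\Gamma}F_i^{\nu_i}$ of \eqref{eq:automorphic function}), and by the linear relations $c_\al\ial+c_\be\ibe=1$ recalled just before the theorem, $\CC[F_i^{m\nu_i},F_j^{m\nu_j}]\ \textrm{mod}\ F_\Gamma$ is a subring of $\CC[\II]$ of finite index as a module; in fact $\CC[\II]$ is free of rank $m$ over it (this is exactly the $m$-fold redundancy visible in the trivial-representation line of Proposition \ref{prop:dimV generators}, where $\p{m|G|}R^{G^\flat}=\CC[F_i^{m\nu_i},F_j^{m\nu_j}](1\oplus\cdots\oplus F_i^{(m-1)\nu_i}F_j^{\nu_j})$, a free module of rank $m$). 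Therefore $\left(V\otimes\mero_\Gamma\right)^G$, being free of rank $m\dim V$ over the rank-$m$-smaller ring, is free of rank $\dim V$ over $\CC[\II]$: explicitly, one takes the images $\h{\Gamma}\zeta_r$ and reorganises the $m\dim V$ generators over $\CC[F_i^{m\nu_i},F_j^{m\nu_j}]\ \textrm{mod}\ F_\Gamma$ into $\dim V$ generators over $\CC[\II]$ by absorbing the rank-$m$ basis $\{1,\II_i,\ldots,\II_i^{m-1}\}$ of $\CC[\II]$ over the subring.

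The main obstacle I anticipate is the bookkeeping in this last reorganisation step: one must verify that absorbing the extra factor of $\CC[\II]$ over its index-$m$ subring genuinely converts $m\dim V$ free generators into $\dim V$ free generators, i.e.\ that no relations are created and none are needed. This is really the assertion that if $S\subset T$ is an extension of polynomial rings with $T$ free of rank $m$ over $S$, and $M$ is free of rank $n$ over $S$, then $M\otimes_S T$ is free of rank $n/m$ over $T$ — which only makes sense, and is only true, because here $M = \left(V\otimes\mero_\Gamma\right)^G$ is already a $T=\CC[\II]$-module to begin with, so $M\otimes_S T$ is a quotient of $M$ and one must identify it with $M$ itself. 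Cleanly, the argument is: $M$ is a finitely generated torsion-free module over the PID-like ring $\CC[\II]$ (a polynomial ring in one variable over $\CC$, hence a PID), so $M$ is automatically free; its rank is then forced by the Poincaré series identity \eqref{eq:generating function of forms and vectors} together with \eqref{eq:prehomogenised generating function}, or simply by evaluating at a generic point using Proposition \ref{prop:evaluating invariant vectors}, which gives $\dim \left(V\otimes\mero_\Gamma\right)^G(\mu) = \dim V^{G_\mu} = \dim V$ for $G_\mu$ trivial. So the slick route — which I would take if the $\textrm{mod}\ F_\Gamma$ bookkeeping proves delicate — is: finite generation and torsion-freeness over the PID $\CC[\II]$ give freeness, and a rank count at a generic evaluation gives the rank $\dim V$. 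The final sentence of the theorem, the specialisation to Automorphic Lie Algebras, is immediate by taking $V = \mf{g}(V)$.
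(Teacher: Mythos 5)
Your reduction to Proposition \ref{prop:dimV generators} via Lemma \ref{lem:prehomogenisation mod f is homogenisation} is the same skeleton as the paper's proof, and your ``slick route'' to freeness --- $\left(V\otimes\mero_\Gamma\right)^G$ is finitely generated and torsion-free over the one-variable polynomial ring $\CC[\II]$, hence free over a PID --- is a legitimate and arguably cleaner replacement for the paper's explicit independence check (the paper instead evaluates a putative relation $\sum p_r\bar{\zeta}_r=0$ at generic points and uses Proposition \ref{prop:evaluating invariant_vectors}\hspace{0pt} --- sorry, Proposition \ref{prop:evaluating invariant vectors} --- to kill the $p_r$). The evaluation argument you cite does give the rank bound from below: the fibre at a generic $\mu$ is a quotient of $M/\mathfrak{m}_\mu M\cong\CC^{\mathrm{rank}}$ and has dimension $\dim V^{G_\mu}=\dim V$, so $\mathrm{rank}\ge\dim V$.

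The gap is the upper bound $\mathrm{rank}\le\dim V$, i.e.\ generation by $\dim V$ elements, precisely for the groups with trivial Schur multiplier ($\zn{N}$ and $\DD_{2M-1}$), where Proposition \ref{prop:dimV generators} forces $m=2$ and hands you $2\dim V$ generators. Your proposed repair --- that $\CC[F_i^{m\nu_i},F_j^{m\nu_j}]\;\mathrm{mod}\;F_\Gamma$ is a subring of $\CC[\II]$ over which $\CC[\II]$ is free of rank $m$ --- is false in general: using the affine relations $c_\al\ial+c_\be\ibe=1$ one computes, e.g.\ for $\zn{N}$ with a generic $\Gamma$, that $\CC[\ial^2,\ibe^2]=\CC[\ial^2,(1-c_\al\ial)^2/c_\be^2]=\CC[\ial]$, so the ``subring'' is already everything and the $m\dim V$ images $\h{\Gamma}\zeta_r$ satisfy genuine $\CC[\II]$-relations that your bookkeeping does not account for. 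Evaluation cannot supply the missing inequality either, since the fibre dimension only bounds the rank from below; and the Poincar\'e series control only the graded module before homogenisation. The paper closes exactly this case by observing that $\zn{N}$ and $\DD_{2M-1}$ are their own Schur covers, so one may run the whole computation with $G$ in place of $G^\flat$ at degree $|G|$ (for the dihedral case using the explicit invariants of Section \ref{sec:D_N-invariant vectors}), which yields $\dim V$ generators directly. Either adopt that case split, or supply an independent upper bound such as $\dim_{\mero^G}(V\otimes\mero)^G=\dim V$ by Galois descent; as written, your argument leaves the rank pinned only between $\dim V$ and $2\dim V$ for these groups.
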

\begin{proof}
Most of the work that goes into this proof has been done above. If $G$ is one of the groups $\DD_{2M}$, $\TT$, $\OO$ or $\YY$ we can take the result of Proposition \ref{prop:dimV generators}, with $m=1$, as starting point. In Example \ref{ex:automorphic functions} we showed that $\h{\Gamma}\CC[F_i^{\nu_i},F_j^{\nu_j}]=\CC[\II]$. For nontrivial irreducible $G$-modules $V$ we first apply Lemma \ref{lem:only quotients of invariants} and then Proposition \ref{prop:dimV generators} to find \begin{align*}
(V\otimes\mero_\Gamma)^G&=\h{\Gamma}\p{|G|}(V\otimes R)^{G^\flat}
=\h{\Gamma}\bigoplus_{r=1}^{\dim V}\CC[F_i^{\nu_i},F_j^{\nu_j}]\zeta_r\\
&=\sum_{r=1}^{\dim V}\CC[\h{\Gamma}F_i^{\nu_i},\h{\Gamma}F_j^{\nu_j}]\h{\Gamma}\zeta_r
=\sum_{r=1}^{\dim V}\CC[\II]\bar{\zeta}_r
\end{align*}
where $\bar{\zeta}_r=\h{\Gamma}\zeta_r=\frac{\zeta_r}{F_\Gamma}$. 
The remaining groups, $\zn{N}$ and $\DD_{2M-1}$, equal their own Schur cover by Theorem \ref{thm:schur multipliers of polyhedral groups}, so we may replace $G^\flat$ by $G$ in the above computation and considering the dihedral invariants found in Section \ref{sec:D_N-invariant vectors} one finds the same result, namely $\dim V$ generators $\bar{\zeta}_r$. This prehomogenisation will be carried out explicitly in Example \ref{ex:dihedral alia}.

To prove the claim we need to establish independence of the invariant vectors $\bar{\zeta}_r$ over $\CC[\II]$. Suppose \[p_1(\II)\bar{\zeta}_1+\ldots+p_{\dim V}(\II)\bar{\zeta}_{\dim V}=0, \qquad p_r\in\CC[\II].\]
Evaluated anywhere in the domain $D=\overline{\CC}\setminus(\Gamma\cup\Gal\cup\Gbe\cup\Gga)$, the vectors $\bar{\zeta}_r$ are independent over $\CC$, by Proposition \ref{prop:evaluating invariant vectors}. Hence the functions $p_r$ all vanish on $D$. Because $D$ is not a discrete set in $\overline{\CC}$ and the functions $p_r$ are rational, they must be identically zero.
\end{proof}

\section{Determinant of Invariant Vectors}
\label{sec:determinant of invariant vectors}

In this section use the notion of a \emph{divisor} on a Riemann surface $\Sigma$. A divisor is a formal $\ZZ$-linear combination of points of $\Sigma$. If a meromorphic function $f\in\cM(\Sigma)$ has zeros $\lambda_1,\ldots,\lambda_l\in\Sigma$ of order $d_1,\ldots,d_l\in\NN$ respectively and poles at $\mu_1,\ldots,\mu_m\in\Sigma$ of order $e_1,\ldots,e_m\in\NN$ respectively than the divisor of $f$ is given by 
\[(f)=d_1\lambda_1+\ldots+d_m\lambda_m-e_1\mu_1-\ldots-e_m\mu_m.\]
The results of the previous section allow us to assign a divisor on the Riemann sphere to each character of a polyhedral group. This will play an important role in the development of the theory of Automorphic Lie Algebras in this thesis.
\begin{Definition}[Determinant of invariant vectors]
\label{def:determinant of invariant vectors}
Let $\chi$ be a nontrivial irreducible character of a polyhedral group $G$ and let $\{P_1^i,\ldots,P_{\chi(1)}^i\}$, $i=1,\ldots,\chi(1)$ be bases 
of $G$-modules affording $\chi$, such that $\{P^i_j\;|\;i,j=1,\ldots,\chi(1)\}$ spans $R^\chi_{|G|}$, where $R$ is the polynomial ring on the natural representation of the binary polyhedral group $G^\flat$. Moreover, suppose all of the bases $\{P_1^i,\ldots,P_{\chi(1)}^i\}$ are the same in a $G$-module sense: the matrices representing $G$ with respect to $\{P_1^i,\ldots,P_{\chi(1)}^i\}$ are identical for all $i$. Define
\[\det\left(R^\chi_{|G|}\right)=\CC\det(P^i_j)\]
and extend the definition to reducible characters of $G$ by the rules
\begin{align*}
&\det\left(R^{\triv}\right)=\CC,\\
&\det\left(R^{\chi+\psi}_{|G|}\right)=\det\left(R^{\chi}_{|G|}\right)\det\left(R^{\psi}_{|G|}\right).
\end{align*}
We will call $\det\left(R^\chi_{|G|}\right)$ the \emph{determinant of invariant $\chi$-vectors}.
\end{Definition}
The determinant $\det\left(R^\chi_{|G|}\right)$ of Definition \ref{def:determinant of invariant vectors} is well defined and
\begin{equation}
\label{eq:determinant of invariant vectors is relative invariant}
\det\left(R^\chi_{|G|}\right)\subset R^{\det\chi}_{(\chi(1)-(\chi,\triv))|G|}
\end{equation}
for all characters $\chi$ of $G$, where $\det\chi(g)=\det(\rho_\chi(g))$ if $\rho_\chi:G^\flat\rightarrow \GL(V)$ is the representation affording the character $\chi$.
The degree of $\det\left(R^\chi_{|G|}\right)$ follows immediately and the way the group acts on it follows readily as well. To be explicit, 
let $\{P_1^i,\ldots,P_{\chi(1)}^i\}$, $i=1,\ldots,\chi(1)$, be bases for $G^\flat$-modules satisfying the conditions of Definition \ref{def:determinant of invariant vectors}. The action of $g\in G^\flat$ is
\begin{align*}
g\det(P^i_j)&=\det(gP^i_j)\\
&=\det\left(\rho_{\chi}(g)(P^1_1,\ldots,P^1_{\chi(1)}),\ldots,\rho_{\chi}(g)(P^{\chi(1)}_1,\ldots,P^{\chi(1)}_{\chi(1)})\right)\\
&=\det(\rho_{\chi}(g))\det(P^i_j)=\det\chi(g)\det(P^i_j),
\end{align*}
hence (\ref{eq:determinant of invariant vectors is relative invariant}).

Because we forget about scalar factors in this section it is natural to use the language of divisors.
\begin{Definition}[Divisor of invariant vectors]
If $\chi$ is a character of a finite group $G<\Aut(\overline{\CC})$ and $\Gamma\in\bigslant{\overline{\CC}}{G}$ then
\[(\chi)_\Gamma=\left(\h{\Gamma}\det\left(R^\chi_{|G|}\right)\right)\]
is called the \emph{divisor of invariant vectors.}
\end{Definition}
Notice that $(\triv)_\Gamma=0$ and $(\chi+\psi)_\Gamma=(\chi)_\Gamma+(\psi)_\Gamma$. A general formula for the divisor of invariant vectors can be expressed using the following half integers.
\begin{Definition}[$\kappa(\chi)$]
\label{def:kappa(chi)}
Let $V$ be a module of a polyhedral group $G$ affording the character $\chi$. We define the half integer $\kappa(\chi)_i$ by
\[\kappa(\chi)_i=\nicefrac{1}{2}\;\codim V^{\langle g_i\rangle}=\frac{\chi(1)}{2}-\frac{1}{2\nu_i}\sum_{j=0}^{\nu_i-1}\chi(g_i^j),\qquad i\in\Omega,\]
where $\codim V^{\langle g_i\rangle}=\dim V-\dim V^{\langle g_i\rangle}$.
\end{Definition}
Lemma \ref{lem:dimV^g} translates to 
\begin{equation}\label{eq:summ_i}\sum_{i\in\Omega}\kappa(\chi)_i=\dim V_\chi-\dim V^G_\chi=\chi(1)-(\chi,\epsilon).\end{equation}

\begin{Theorem}
\label{thm:determinant of invariant vectors}
Let $\chi$ be a character of a polyhedral group $G<\Aut(\overline{\CC})$ and let $\Gamma\in\bigslant{\overline{\CC}}{G}$. Then $\chi$ is real valued if and only if
\[(\chi)_\Gamma=\sum_{i\in\Omega}\kappa(\chi)_i(\nu_i\Gamma_i-\nu_\Gamma \Gamma),\]
i.e.~$\det\left(R^\chi_{|G|}\right)=\CC\prod_{i\in\Omega}F_i^{\nu_i \kappa(\chi)_i}$ and $\h{\Gamma}\det\left(R^\chi_{|G|}\right)=\CC\prod_{i\in\Omega}\ii^{\kappa(\chi)_i}$. Here we identify a subset $\Gamma\subset\overline{\CC}$ with the divisor $\sum_{\gamma\in\Gamma}\gamma$.
\end{Theorem}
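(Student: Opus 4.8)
The plan is to prove the formula for the determinant of invariant $\chi$-vectors and then deduce the divisor formula and the reality criterion. The key geometric idea is that the common zeros of the basis forms $\{P^i_j\mid i,j=1,\dots,\chi(1)\}$ spanning $R^\chi_{|G|}$ are forced, by equivariance, to be supported on the exceptional orbits $\Gal, \Gbe, \Gga$, with prescribed multiplicities governed by $\kappa(\chi)_i$. Concretely, $\det(P^i_j)$ is a relative invariant of $G^\flat$ of degree $(\chi(1)-(\chi,\triv))|G|$ by \eqref{eq:determinant of invariant vectors is relative invariant}, hence (by the description of $R^{[G^\flat,G^\flat]}$ in Section \ref{sec:ground forms}, together with the fact that a relative invariant has its zero set a union of $G$-orbits) it must be a scalar times a monomial $\prod_{i\in\Omega}F_i^{a_i}$ in the ground forms for some nonnegative integers $a_i$. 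The heart of the argument is to identify $a_i = \nu_i\kappa(\chi)_i$; the degree count $\sum_i a_i d_i = (\chi(1)-(\chi,\triv))|G|$ together with $d_i\nu_i=|G|$ and \eqref{eq:summ_i} shows that the multiset $(a_i/\nu_i)$ sums to the right total, but pinning down each $a_i$ individually requires a local computation.

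First I would set up the local analysis at a point $\lambda_i\in\Gamma_i$ with stabiliser $\langle g_i\rangle$ of order $\nu_i$ in $G$ (order $2\nu_i$ in $G^\flat$). I would pick the basis forms adapted so that, near the line $\ell_i\subset U$ corresponding to $\lambda_i$, each $P^k_j$ vanishes to an order determined by the eigenvalue of $g_i$ acting on the corresponding weight space. Evaluating (or taking the leading term of) the matrix $(P^i_j)$ along $\ell_i$, one sees that the vanishing order of $\det(P^i_j)$ at $\ell_i$ equals $\sum$ of the vanishing orders of a transversal of basis forms, which by the equivariance and Schur's Lemma (each of the $\chi(1)$ ``copies'' $i$ contributes the same pattern of weights) equals $\chi(1)$ times the average defect, i.e. $\chi(1)\cdot(\text{something})$ — more precisely it should come out to $\nu_i\kappa(\chi)_i/d_i \cdot d_i$... this is where I must be careful. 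The cleanest route: the ground form $F_i$ vanishes to order exactly $1$ at each of the $d_i$ lines of $\Gamma_i$, so $F_i^{a_i}$ vanishes to order $a_i$ at $\ell_i$; meanwhile a direct count of how $(P^i_j)$ degenerates when restricted to $\ell_i$ — using that $\dim V^{\langle g_i\rangle} = \chi(1) - 2\kappa(\chi)_i$ of the rows/columns ``survive'' and the rest drop rank — yields vanishing order $\nu_i\kappa(\chi)_i$ for $\det(P^i_j)$ at $\ell_i$. Matching gives $a_i=\nu_i\kappa(\chi)_i$, and since $\nu_i\kappa(\chi)_i$ is a genuine integer (even though $\kappa(\chi)_i$ may be a half-integer, $\nu_i\kappa(\chi)_i = \tfrac{\nu_i}{2}\codim V^{\langle g_i\rangle}$, and one checks integrality from the eigenvalue structure / Table \ref{tab:multiplicities of eigenvalues}), the monomial $\prod_i F_i^{\nu_i\kappa(\chi)_i}$ is well defined.

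Next, for the \emph{reality} equivalence: the determinant of invariant vectors $\det(R^\chi_{|G|})$ is only defined (Definition \ref{def:determinant of invariant vectors}) when the $\chi(1)$ bases can be chosen ``the same in a $G$-module sense'', which, via $(V_\chi\otimes R)^{G^\flat}\cong \Hom_{G^\flat}(V_{\overline\chi}, R)$ and the Fourier transform of Section \ref{sec:fourier transform}, requires precisely that $\chi$ and $\overline\chi$ pair up correctly so that $R^\chi_{|G|}$ carries $\chi(1)$ isomorphic copies of a single module — and by Schur/Frobenius considerations this matching is possible exactly when $\chi$ is real valued (when $\chi$ is not real valued the two isotypic pieces $R^\chi$ and $R^{\overline\chi}$ are distinct and the ``square'' structure producing a square matrix $(P^i_j)$ fails). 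So I would argue: if $\chi$ is real valued, the construction goes through and the monomial formula holds by the above; conversely, if the divisor/monomial formula holds then in particular $\det(R^\chi_{|G|})$ is defined, forcing $\chi$ real valued. Finally, applying $\h{\Gamma}$ (Definition \ref{def:homogenisation}) and recalling $\ii = \h{\Gamma}F_i^{\nu_i}$ from \eqref{eq:automorphic function}, and that $\h{\Gamma}$ replaces each $F_i^{\nu_i}$ by $\ii$ and $F_\Gamma^{\nu_\Gamma}$ by $1$, turns $\prod_i F_i^{\nu_i\kappa(\chi)_i}$ into $\prod_i \ii^{\kappa(\chi)_i}$ and its divisor into $\sum_i \kappa(\chi)_i(\nu_i\Gamma_i-\nu_\Gamma\Gamma)$, using that the total degrees match by \eqref{eq:summ_i} so the homogenisation is by the correct power of $F_\Gamma$. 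The reducible case follows from the multiplicativity rules in the definitions, since $\kappa(\chi+\psi)_i=\kappa(\chi)_i+\kappa(\psi)_i$.

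\textbf{Main obstacle.} The delicate point is the local rank computation at each $\ell_i$: establishing that $\det(P^i_j)$ vanishes to order \emph{exactly} $\nu_i\kappa(\chi)_i$ there, not merely $\geq$ something, and simultaneously that it does \emph{not} vanish on any non-exceptional orbit. The upper bound on the total vanishing (degree count) plus the lower bounds at each $\ell_i$ must coincide exactly, which is where one genuinely uses the structure of $\SL_2(\CC)$-weights (Theorem \ref{thm:multiplicities of eigenvalues}) and the identity $\sum_i\kappa(\chi)_i = \chi(1)-(\chi,\triv)$ from Lemma \ref{lem:dimV^g}; reconciling the half-integrality of $\kappa(\chi)_i$ with the integrality of exponents of ground forms is the subtle bookkeeping that makes the argument work.
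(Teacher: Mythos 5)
Your skeleton for the main formula is the same as the paper's: relative invariance plus the non-vanishing of evaluations off the exceptional orbits forces $\det(R^\chi_{|G|})$ to be a monomial $\prod_i F_i^{\delta_i}$; local analysis at each $\Gamma_i$ gives $\delta_i\ge\nu_i\kappa(\chi)_i$; and the degree count $\sum_i\delta_i/\nu_i=\chi(1)-(\chi,\triv)=\sum_i\kappa(\chi)_i$ from \eqref{eq:summ_i} forces equality. However, the step you flag as the ``main obstacle'' is genuinely the heart of the proof and is not just bookkeeping, and your heuristic for it is not quite right. It is false that each of the $2\kappa(\chi)_i$ degenerating columns individually vanishes to order $\nu_i/2$. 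The paper's argument splits the eigenvalues of $\tau(g_i)$ (for $\chi$ real valued) into the eigenvalue $-1$ block, whose $\kappa_-$ coefficient functions each vanish to order at least $\nu_i/2$ because $h\circ g_i^{-1}=-h$ forces $dk\in\nicefrac{\nu_i}{2}+\ZZ\nu_i$ in a local coordinate, and the $\kappa_c$ conjugate pairs, where only the \emph{product} of the two coefficient functions in a pair is guaranteed a zero of order $\ge\nu_i$ (one member of the pair can vanish to low order provided the other compensates, since $kd-j_s\in\ZZ\nu_i$ and $kd'+j_s\in\ZZ\nu_i$ only constrain $d+d'$). The total $(\nicefrac{1}{2}\kappa_-+\kappa_c)\nu_i=\nu_i\kappa(\chi)_i$ comes out right, but only via this pairing; without it the lower bound does not close against the degree count. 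Note also that this is precisely where real-valuedness of $\chi$ enters the ``if'' direction: it is what guarantees the non-real eigenvalues of $\tau(g_i)$ come in conjugate pairs.

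The second gap is in your ``only if'' direction, and here your reasoning is wrong rather than incomplete. You claim the determinant is only \emph{defined} when $\chi$ is real valued because the square matrix $(P^i_j)$ cannot be formed otherwise. But by Theorem \ref{thm:dimV}, $R_{|G|}\cong\regrep{G}\oplus\CC$ as a $G$-module, so $R^\chi_{|G|}$ has dimension $\chi(1)^2$ for \emph{every} nontrivial irreducible $\chi$, real or not, and Definition \ref{def:determinant of invariant vectors} applies. Indeed the paper computes $\det R^{\btii}_{12}=\CC\fal^2\fbe$ and $\det R^{\btiii}_{12}=\CC\fal\fbe^2$ for the two complex-type characters of the tetrahedral group: the determinants exist, they simply fail to equal $\prod_i F_i^{\nu_i\kappa(\chi)_i}$ (consistently with the fact that $\nu_\al\kappa(\btii)_\al=\nicefrac{3}{2}$ is not an integer, so the claimed monomial is not even a form). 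The paper's ``only if'' is an exhaustion: $\btii$ and $\btiii$ are the only non-real irreducible characters of polyhedral groups, and the formula is checked to fail for them directly; additivity of $\kappa$ and of the determinant then handles reducible characters. You would need to replace your argument with something of this kind.
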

\begin{proof}
Let $g\in G$ and $\omega\in\CC^\ast$ have order $\nu$. If $\tau:G\rightarrow\GL(V)$ is the representation affording a real valued character $\chi$ then the eigenvalues of $\tau(g)$ are powers of $\omega$ and the nonreal eigenvalues come in conjugate pairs. We denote the multiplicity of the eigenvalue $1$ and $-1$ by $\kappa_+$ and $\kappa_{-}$ respectively and the number of conjugate pairs by $\kappa_c$. That is, $\kappa_+=\dim V^{\langle g\rangle}$, $\kappa_++\kappa_-=\dim V^{\langle g^2\rangle}$ and $\kappa_++\kappa_-+2\kappa_c=\chi(1)$. If $g=g_i$ then $\nicefrac{1}{2}\,\kappa_-+\kappa_c=\kappa(\chi)_i$. 

There is a basis for $V$ such that \[\tau(g)=\diag(1,\ldots,1,-1,\ldots,-1,\omega^{j_1},\omega^{-j_1},\ldots,\omega^{j_{\kappa_c}},\omega^{-j_{\kappa_c}}).\] 
Denote this basis by $\{u_1,\ldots, u_{\kappa_+},v_1,\ldots, v_{\kappa_-},w_1,x_1,\ldots,w_{\kappa_c},x_{\kappa_c}\}$. Let the meromorphic invariant vectors be $\left(V_\chi\otimes\mero_\Gamma\right)^G=\CC[\II]\langle \bar{v}_1,\ldots, \bar{v}_{\chi(1)}\rangle$ and express each generator $\bar{v}_r$ in the aforementioned basis
\begin{multline}
\bar{v}_r=f^r_1 u_1+\ldots+ f^r_{\kappa_+}u_{\kappa_+}+h^r_1v_1+\ldots+h^r_{\kappa_-}v_{\kappa_-}\\+p^r_1 w_1+q^r_1 x_1+\ldots+p^r_{\kappa_c} w_{\kappa_c}+q^r_{\kappa_c} x_{\kappa_c}
\end{multline}
where $f^r_s,\:h^r_s,\;p^r_s,\;q^r_s\in\mero_\Gamma$. Notice that $\h{\Gamma}\det\left(R^\chi_{|G|}\right)$ is given by
\begin{equation}
\label{eq:determinant of invariant vectors}
\begin{vmatrix}
f^1_1&\cdots& f^1_{\kappa_+}&h^1_1&\cdots& h^1_{\kappa_-}&p^1_1 &q^1_1&\cdots&p^1_{\kappa_c} & q^1_{\kappa_c} \\
\vdots&&\vdots&\vdots&&\vdots&\vdots&\vdots&&\vdots&\vdots\\
f^{\chi(1)}_1&\cdots& f^{\chi(1)}_{\kappa_+}&h^{\chi(1)}_1&\cdots& h^{\chi(1)}_{\kappa_-}&p^{\chi(1)}_1 &q^{\chi(1)}_1&\cdots&p^{\chi(1)}_{\kappa_c} & q^{\chi(1)}_{\kappa_c} 
\end{vmatrix}\end{equation} up to scalar.

Let $\mu\in\overline{\CC}$ and $\langle g \rangle=G_\mu$. By Proposition \ref{prop:evaluating invariant vectors} $\left(V_\chi\otimes\mero_\Gamma\right)^G(\mu)=V^{\langle g \rangle}$ hence $h^r_s(\mu)=p^r_s(\mu)=q^r_s(\mu)=0$. We fix one meromorphic function $h^r_s$. There exists a local coordinate transformation $\phi:U\rightarrow V\subset\CC$ from a neighbourhood $U$ of $\mu$ on the Riemann sphere to a neighbourhood $V$ of $\phi(\mu)=0$ that transforms the function $h^r_s$ to $\tilde{h}^r_s(t)=h^r_s\circ \phi^{-1}(t)=t^d$. This follows from the well known characterisation of the local behaviour of holomorphic mappings. We are interested in the order $d$ of the zero. Since $g(\mu)=\mu$ we can define the map $\tilde{g}=\phi\circ g\circ\phi^{-1}$ between two neighbourhoods of $0$ in $V$, and on a small enough neighbourhood we have $\tilde{g}^\nu=\id$.

If we plug in the invariance assumption $h^r_s\circ g^{-1}(\lambda)=-h^r_s(\lambda)$ we find, locally, $\tilde{h}^r_s\circ \tilde{g}^{-1}(t)=-\tilde{h}^r_s(t)$, that is $(\tilde{g}^{-1}(t))^d=-t^d$. Therefore $\tilde{g}$ is linear, and given its order $\nu$ we conclude $\tilde{g}^{-1}(t)=\omega^{-k}t$ where $\gcd(k,\nu)=1$. Again using $(\tilde{g}^{-1}(t))^d=-t^d$ shows $dk\in \nicefrac{\nu}{2}+\ZZ\nu\subset \ZZ\nicefrac{\nu}{2}$. In particular, $d\ge \nicefrac{\nu}{2}$ (since $\gcd(k,\nicefrac{\nu}{2})=1$) and each column $h^\cdot_s$ in (\ref{eq:determinant of invariant vectors}) adds at least $\frac{\nu}{2}\mu$ to the divisor $(\chi)_\Gamma$.

A similar local description $\tilde{p}^r_s(t)=t^d$ of $p^r_s$ near $\mu$ leads to
\[(\omega^{-k} t)^d=\tilde{p}^r_s(\tilde{g}^{-1}(t))=\omega^{-j_s} \tilde{p}^r_s=\omega^{-j_s}t^d,\] which implies $kd-j_s\in\ZZ\nu$. Repeating this trick for a function $q^{r'}_s$ shows it has a zero at $\lambda=\mu$ of order $d'$ such that  $kd'+j_s\in\ZZ\nu$. 
Therefore $k(d+d')\in\ZZ\nu$ and since $\gcd(k,\nu)=1$ the product $p^r_s q^{r'}_s$ has a zero of order $d+d'\ge\nu$.
In particular, the pair of columns $p^\cdot_s$ and $q^{\cdot}_s$ in equation (\ref{eq:determinant of invariant vectors}) adds at least $\nu\mu$ to the divisor $(\chi)_\Gamma$. Combined with the functions $h^r_s$ the coefficient of $\mu$ in $(\chi)_\Gamma$ is at least $\left(\nicefrac{1}{2}\,\kappa_-+\kappa_c\right)\nu$.


The determinant of invariant vectors has the form $\det\left(R^\chi_{|G|}\right)=\prod_{i\in\Omega}F_i^{\delta_i}$, for some orders $\delta_i\in\NN_0$, since it is a relative invariant, cf.~(\ref{eq:determinant of invariant vectors is relative invariant}), and by Proposition \ref{prop:evaluating invariant vectors} can only vanish on an exceptional orbit. By the above we have $\delta_i\ge \nu_i \kappa(\chi)_i$. Now we show that equality must occur because $\sum_{i\in\Omega}\frac{\delta_i}{\nu_i}=\sum_{i\in\Omega}\kappa(\chi)_i$. Indeed  $\sum_{i\in\Omega}\frac{\delta_i}{\nu_i}=\sum_{i\in\Omega}\frac{\delta_i\deg{F_i}}{|G|}=|G|^{-1}\deg \det\left(R^\chi_{|G|}\right)=\chi(1)-(\chi,\triv)$ and this equals $\sum_{i\in\Omega}\kappa(\chi)_i$ by equation (\ref{eq:summ_i}).

There are precisely two nonreal valued irreducible characters of polyhedral group: $\btii$ and $\btiii$. 
Example \ref{ex:determinant of invariant forms} shows that here the formula of the theorem does not apply.
\end{proof}

The numbers $\kappa(\chi)_i$ are important for Automorphic Lie Algebras due to Theorem \ref{thm:determinant of invariant vectors}. We list them in Table \ref{tab:1/2codimV^g}. Notice that  $\nu_i \kappa(\chi)_i$, $i\in\Omega$, are integers if and only if $\chi$ is real valued (cf.~Section \ref{sec:invariant bilinear forms} and \ref{sec:characters}).
\begin{center}
\begin{table}[h!] 
\caption{Half integers $\kappa(\chi)_i$, in rows $3$, $4$ and $5$, and $\nu_i \kappa(\chi)_i$ in rows $6$, $7$ and $8$.}
\label{tab:1/2codimV^g}
\begin{center}
\begin{tabular}{cccccccccccccccccccc} \hline
$ $&$\chi_2$&$\chi_3$&$\chi_4$&$\psi_j$&$\btii$&$\btiii$&$\btiiiiiii$&$
\boii$&$\boiii$&$\boiiiiii$&$\boiiiiiii$&$\byiiii$&$\byiiiii$&$\byiiiiii$&$\byiiiiiiii$\\
\hline
 &$1$&$1$&$1$&$2$&$1$&$1$&$3$&$1$&$2$&$3$&$3$&$3$&$3$&$4$&$5$\\
\hline
$\al$&$0$&$\nicefrac{1}{2}$&$\nicefrac{1}{2}$&$1$&$\nicefrac{1}{2}$&$\nicefrac{1}{2}$&$1$&

$\nicefrac{1}{2}$&$\nicefrac{1}{2}$&$\nicefrac{3}{2}$&$1$&$1$&$1$&$2$&$2$\\

$\be$&$\nicefrac{1}{2}$&$\nicefrac{1}{2}$&$0$&$\nicefrac{1}{2}$&$\nicefrac{1}{2}$&$\nicefrac{1}{2}$&$1$

&$0$&$1$&$1$&$1$&$1$&$1$&$1$&$2$\\

$\ga$&$\nicefrac{1}{2}$&$0$&$\nicefrac{1}{2}$&$\nicefrac{1}{2}$&$0$&$0$&$1$&

$\nicefrac{1}{2}$&$\nicefrac{1}{2}$&$\nicefrac{1}{2}$&$1$&$1$&$1$&$1$&$1$\\

\hline 

$\al$&$0$&$\nicefrac{N}{2}$&$\nicefrac{N}{2}$&$N$&$\nicefrac{3}{2}$&$\nicefrac{3}{2}$&$3$&

$2$&$2$&$6$&$4$&$5$&$5$&$10$&$10$\\

$\be$&$1$&$1$&$0$&$1$&$\nicefrac{3}{2}$&$\nicefrac{3}{2}$&$3$

&$0$&$3$&$3$&$3$&$3$&$3$&$3$&$6$\\

$\ga$&$1$&$0$&$1$&$1$&$0$&$0$&$2$&

$1$&$1$&$1$&$2$&$2$&$2$&$2$&$2$\\

\hline 
\end{tabular}
\end{center}
\end{table}
\end{center}
We end this chapter calculating various determinants of invariant vectors.
\begin{Example}[Determinants of $\DD_N$-invariant vectors]\label{ex:determinant of dihedral invariant vectors}
For the dihedral group we have explicit descriptions for the isotypical components $R^\chi$ at hand, see (\ref{eq:Fs}) and Table \ref{tab:invariants, N odd} and Table \ref{tab:invariants, N even}. Therefore we can readily check Theorem \ref{thm:determinant of invariant vectors} for this group by computing all the determinants and compare in each case the exponents of $\fal$, $\fbe$ and $\fga$ to the relevant column in Table \ref{tab:1/2codimV^g}.  First notice that all representations of $\DD_N$ are of real type.


We start with $\chi_2$. If $N$ is odd then $R^{\chi_2}_{2N}=\left(\CC[\fal,\fbe]\fga\right)_{2N}=\CC\fbe\fga$. If $N$ is even, we use the extension $G^\flat=\DD_{2N}$ and also find $R^{\chi_2}_{2N}=\left(\CC[\fal,\fbe^2]\fbe\fga\right)_{2N}=\CC\fbe\fga$. The exponents $(0,1,1)$ are indeed identical to $(\nal \kappa(\chi_2)_\al,\, \nbe \kappa(\chi_2)_\be,\, \nga \kappa(\chi_2)_\ga)$ as found in the $\chi_2$-column of Table \ref{tab:1/2codimV^g}.


For $\chi_3$ and $N$ even, $G^\flat=\DD_{2N}$, one finds $R^{\chi_3}_{2N}=\left(\CC[\fal,\fbe^2]\fbe\right)_{2N}=\CC\fal^{\frac{N}{2}}\fbe$, and the last linear character gives $R^{\chi_4}_{2N}=\left(\CC[\fal,\fbe^2]\fga\right)_{2N}=\CC\fal^{\frac{N}{2}}\fga$.

Now we consider the two-dimensional representations, when $N$ is odd;
\begin{align*}
\det R^{\psi_j}_{2N}&=\det\left(\CC[\fal,\fbe]\begin{pmatrix}X^j&Y^j\\Y^{N-j}&X^{N-j}\end{pmatrix}\right)_{2N}\\
&=\left\{
\begin{array}{ll}
\CC\det\begin{pmatrix}\fal^{\frac{N-j}{2}}\fbe X^j&\fal^{\frac{N-j}{2}}\fbe Y^j\\\fal^{\frac{N+j}{2}} Y^{N-j}&\fal^{\frac{N+j}{2}}X^{N-j}\end{pmatrix}= \CC\fal^N\fbe\fga& j\text{ odd,} \\
\CC\det\begin{pmatrix}\fal^{\frac{2N-j}{2}} X^j&\fal^{\frac{2N-j}{2}}Y^j\\\fal^{\frac{j}{2}}\fbe Y^{N-j}&\fal^{\frac{j}{2}}\fbe X^{N-j}\end{pmatrix}=\CC\fal^N\fbe\fga& j\text{ even.}
\end{array}\right.
\end{align*}

If $N$ is even and $G^\flat=\DD_{2N}$, we only consider $2j$ because the statement of Theorem \ref{thm:determinant of invariant vectors} only concerns representations of $G$, not for instance spinorial representation of $G^\flat$;
\begin{align*}
\det R^{\psi_{2j}}_{2N}&=\det\left(\CC[\fal,\fbe^2]\begin{pmatrix}X^{2j}&Y^{2j}\\Y^{2N-{2j}}&X^{2N-{2j}}\end{pmatrix}\right)_{2N}\\
&=
\CC\det\begin{pmatrix}\fal^{N-j} X^{2j}&\fal^{N-j} Y^j\\\fal^{j} Y^{2N-2j}&\fal^{j}X^{2N-2j}\end{pmatrix}=\CC\fal^N (X^{2N}-Y^{2N})=\CC\fal^N\fbe\fal.
\end{align*}
This confirms Theorem \ref{thm:determinant of invariant vectors} for all $\DD_N$-representations.
\end{Example}

The other examples that are feasible by hand are the remaining one-dimensional characters. These include the only representations of the polyhedral groups that are not real valued, namely $\btii$ and $\btiii$, cf.~Section \ref{sec:characters}. 
\begin{Example}[One-dimensional characters]\label{ex:determinant of invariant forms}
Let the characters of the ground forms be indexed by $\Omega$, \[gF_i=\chi_i(g)F_i,\qquad i\in\Omega.\] 
For the tetrahedral group, the degrees of the ground forms are $\dal=\dbe=4$ and $\dga=6$, cf.~Table \ref{tab:various properties of polyhedral groups}. Necessarily $\chi_\al=\btii$ and $\chi_\be=\btiii$, or the other way around. There is no way to distinguish (but they cannot be equal because there is a degree 8 invariant, cf.~(\ref{eq:gf of invariant forms})) so we take the first choice. The last ground form is invariant, $\chi_\ga=\bti$, because $\chi_\ga^\nga=\bti$,  $\nga=2$, and the only element in $\cA\TT=\zn{3}$ that squares to the identity is the identity. Thus we get
\begin{align*}
&R^{\btii}_{12}=\left(\CC[\fal\fbe,\fga](\fal\oplus \fbe^2)\right)_{12}=\CC\fal^2\fbe,\\
&R^{\btiii}_{12}=\left(\CC[\fal\fbe,\fga](\fal^2\oplus\fbe)\right)_{12}=\CC\fal\fbe^2,\\
&\det R^{\btii+\btiii}_{12}=R^{\btii}_{12}R^{\btiii}_{12}=\CC\fal^3\fbe^3.
\end{align*}
We see that the formula of Theorem \ref{thm:determinant of invariant vectors} indeed does not hold for the characters $\btii$ and $\btiii$ of complex type, but it does for the real valued character $\btii+\btiii$.

The remaining one-dimensional character to check is $\boii$. We have $\chi_\al=\boii$, $\chi_\be=\boi$ and $\chi_\ga=\boii$ and
$R^{\boii}_{24}=\left(\CC[\fal^2,\fbe](\fal\oplus\fga)\right)_{24}=\CC\fal^2\fga$,
as promised.
\end{Example}


\chapter[Group Actions and Lie Brackets]{Group Actions and Lie Brackets}\label{ch:B}

In this chapter we consider Lie algebras, represented by endomorphisms on a finite dimensional vector space. Taking this space $V$ to be a module of a group $G$, one has an induced action of $G$ on the Lie algebra of all endomorphisms $\End(V)\cong V\otimes V^\ast$. Now it is important to know which Lie subalgebras $\mf{g}(V)<\End(V)$ are also submodules of $G$. Or the other way around, which $G$-submodules of $V\otimes V^\ast$ are Lie algebras? Let us start with an example. 
\begin{Example}
Let $V$ be an irreducible representation of the dihedral group $\DD_N$. We are interested in subspaces $\mf{g}(V)<\mf{gl}(V)$ which are both a Lie subalgebra and a $\DD_N$-submodule. 
The dimension of $V$ is $1$ or $2$ (cf.~Section \ref{sec:characters}). In the first case the action on $\mf{gl}(V)$ is trivial and the Automorphic Lie Algebras $\alia{g}{V}{\Gamma}=\mero^G_\Gamma$ are the rings of automorphic functions which are analytic outside $\Gamma$. 

Let $V$ now be a two-dimensional $\DD_N$-module affording the character $\psi_j$. Then the $\DD_N$-module $\mf{gl}(V)$ has character $\psi_j^2=\chi_1+\chi_2+\psi_{2j}$ and all $\DD_N$-submodules $\mf{g}(V)<\mf{gl}(V)$ are given by a subset of these characters. 
In the basis corresponding to (\ref{eq:standard matrices}), the decomposition is given by
\[
\mf{gl}(V)^{\chi_1}=\splitk \begin{pmatrix}1&0\\0&1\end{pmatrix},\quad
\mf{gl}(V)^{\chi_2}=\splitk \begin{pmatrix}1&0\\0&-1\end{pmatrix},\quad
\mf{gl}(V)^{\psi_{2j}}=\splitk \begin{pmatrix}0&1\\0&0\end{pmatrix}
\oplus \splitk \begin{pmatrix}0&0\\1&0\end{pmatrix}.
\]

Now we add the condition that $\mf{g}(V)$ be a Lie algebra. To this end we compute some commutator brackets. Clearly $\mf{gl}(V)^{\chi_1}\subset Z(\mf{gl}(V))$ and for the other components we find
\begin{align*}
&[\mf{gl}(V)^{\chi_2},\mf{gl}(V)^{\chi_2}]=0,\\ 
&[\mf{gl}(V)^{\chi_2},\mf{gl}(V)^{\psi_{2j}}]=\mf{gl}(V)^{\psi_{2j}},\\
&[\mf{gl}(V)^{\psi_{2j}},\mf{gl}(V)^{\psi_{2j}}]=\mf{gl}(V)^{\chi_{2}}.
\end{align*}
The only restriction on the submodule $\mf{g}(V)$ coming from the requirement that it be a Lie subalgebra, is that if $\mf{g}(V)$ contains the $\psi_{2j}$-summand, then it also contains the $\chi_2$-summand.

Lie algebras of all dimensions $\le 4$ are available. The only noncommutative cases are $\mf{sl}(V)$, affording $\chi_2+\psi_{2j}$, and $\mf{gl}(V)$. 
Since we have a Lie algebra direct sum \[\alia{gl}{V}{\Gamma}=\mero^G_\Gamma\Id\oplus\alia{sl}{V}{\Gamma}\] considering $\mf{sl}(V)$ is sufficient for Automorphic Lie Algebras with dihedral symmetry.
\end{Example}

The two conditions for the linear subspace $\mf{g}(V)<\End(V)$ are independent. In the example we have seen $G$-submodules which are not Lie algebras, e.g.~$\mf{gl}(V)^{\psi_{2j}}$. As an example of a Lie subalgebra of $\End(V)$ which is not a submodule, one can consider for instance a one-dimensional subspace of $\mf{gl}(V)^{\psi_{2j}}$, or something more interesting such as $\mf{so}(V)<\End(V)$ when $V$ is not a representation of real type (cf.~Section \ref{sec:invariant bilinear forms}).

In Section \ref{sec:group decomposition of simple lie algebras} we will study the spaces $\mf{g}(V)$ assuming they have the two structures we need. The results discussed in this section are especially useful for the purpose of computing Automorphic Lie Algebras explicitly. Section \ref{sec:inner automorphisms} explains why the restriction to automorphisms on $\mf{g}(V)$ induced by $V$ is not so severe as it might seem, as polyhedral groups can only act by \emph{inner automorphisms} on many of the classical simple Lie algebras. This chapter will be concluded with Section \ref{sec:evaluating alias} where we describe the Lie algebras of $G$-invariant matrices over $\mero$, evaluated in a point of the Riemann sphere. In order to understand the final chapter and the main results of this thesis one only needs Lemma \ref{lem:determinant of automorphism} and Theorem \ref{thm:evaluated alias} of this chapter.

\section{Group Decomposition of Simple Lie Algebras}
\label{sec:group decomposition of simple lie algebras}

If a space $\mf{g}(V)$ is assumed to be a Lie algebra and $G$-module, one can deduce a few facts about the relation between these structures, which is the content of this section.
We start with a lemma on how to extract the group-module structure from the Lie algebra structure.
\begin{Lemma}
\label{lem:characters of Lie algebras} Let $G$ be a finite group, $V$ a $G$-module and suppose a Lie algebra $\mf{g}<V\otimes V^\ast$ is also preserved by $G$. If $\mf{g}$ is $\mf{sl}(V)$, $\mf{so}(V)$ or $\mf{sp}(V)$ then the character of this representation is given by 
\begin{equation}
\label{eq:characters of Lie algebras}
\begin{array}{l}
\chi_{\mf{sl}(V)}(g)=\chi_V(g)\overline{\chi_V}(g)-1,\\
\chi_{\mf{so}(V)}(g)=\nicefrac{1}{2}\left(\chi_V(g)^2-\chi_V(g^2)\right),\\
\chi_{\mf{sp}(V)}(g)=\nicefrac{1}{2}\left(\chi_V(g)^2+\chi_V(g^2)\right),
\end{array}
\end{equation}
for all $g\in G$, respectively. 
\end{Lemma}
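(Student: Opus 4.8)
\emph{Proof sketch.} All three cases reduce to known decompositions of $V\otimes V^\ast$ and $V\otimes V$ into $G$-submodules, combined with the character identities of Section~\ref{sec:character theory} and Schur's Lemma (Lemma~\ref{lem:schur}).

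For $\mf{sl}(V)$ the argument is immediate. As a $G$-module, $\mf{gl}(V)=\End(V)\cong V\otimes V^\ast$ with the conjugation action, so its character is $\chi_V\chi_{V^\ast}=\chi_V\overline{\chi_V}$. The decomposition $\mf{gl}(V)=\mf{sl}(V)\oplus\CC\Id$ is a decomposition into $G$-submodules: the scalar matrices are fixed by $G$, and $\mf{sl}(V)$ is $G$-stable by hypothesis. Hence $\chi_{\mf{sl}(V)}=\chi_{\mf{gl}(V)}-\triv=\chi_V\overline{\chi_V}-1$.

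For $\mf{so}(V)$ and $\mf{sp}(V)$, write the Lie algebra as $\mf{g}_B(V)=\{A\in\mf{gl}(V)\mid A^TB+BA=0\}$ for a nondegenerate bilinear form $B$ with $B^T=B$, respectively $B^T=-B$. The first step is to show that $B$ may be taken $G$-invariant (rescaling $B$ does not change $\mf{g}_B(V)$). A direct computation gives $\mf{g}_{g\cdot B}(V)=\rho(g)\,\mf{g}_B(V)\,\rho(g)^{-1}$, where $(g\cdot B)(u,v)=B(\rho(g)^{-1}u,\rho(g)^{-1}v)$, so the set of bilinear forms cutting out $\mf{g}_B(V)$ is $G$-stable; and this set is the line $\CC B$, because the natural module $V$ is irreducible over $\mf{g}_B(V)$, whence by Schur's Lemma the form defining $\mf{g}_B(V)$ is unique up to scalar. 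Thus $\CC B$ is a one-dimensional $G$-submodule of the space of bilinear forms, carrying some character $\mu$ of $G$; one then checks $\mu$ is trivial, so that $B$ coincides up to scalar with the unique $G$-invariant bilinear form provided by Section~\ref{sec:invariant bilinear forms}. Once $B$ is $G$-invariant, $v\mapsto B(v,\cdot)$ is a $G$-isomorphism $V\xrightarrow{\sim}V^\ast$, hence $\mf{gl}(V)=V\otimes V^\ast\cong V\otimes V=S^2V\oplus\wedge^2V$ as $G$-modules, via $A\mapsto\beta_A$ with $\beta_A(u,v)=B(Au,v)$ ($G$-equivariance of this map is precisely the $G$-invariance of $B$). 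A one-line computation using $B^T=\pm B$ shows that $\mf{so}(V)$ is carried isomorphically onto $\wedge^2V$ when $B^T=B$, and $\mf{sp}(V)$ onto $S^2V$ when $B^T=-B$. Reading off characters from Section~\ref{sec:character theory} gives $\chi_{\mf{so}(V)}=\chi_{\wedge^2V}=\tfrac12(\chi_V(g)^2-\chi_V(g^2))$ and $\chi_{\mf{sp}(V)}=\chi_{S^2V}=\tfrac12(\chi_V(g)^2+\chi_V(g^2))$, as claimed.

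The main obstacle is the middle step, establishing that the defining form $B$ is $G$-invariant, that is, that the character $\mu$ on the line $\CC B$ is trivial. Existence and uniqueness of the invariant bilinear form are already in hand (Frobenius--Schur theory, Section~\ref{sec:invariant bilinear forms}); the remaining work is to compare the nondegenerate form cutting out our Lie algebra with the invariant one, applying Schur's Lemma to $V$ to see that a nontrivial twist $\mu$ is incompatible with both forms being nondegenerate on the (essentially self-dual) module $V$. Everything else is bookkeeping with characters.
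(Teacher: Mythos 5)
Your argument is, in substance, the paper's own: the $\mf{sl}(V)$ case via $\mf{gl}(V)=\mf{sl}(V)\oplus\CC\Id$ is identical, and for $\mf{so}(V)$, $\mf{sp}(V)$ the paper likewise passes through $A\mapsto BA$, identifying $B\,\mf{so}(V)$ with the antisymmetric and $B\,\mf{sp}(V)$ with the symmetric matrices and then computing the trace of $M\mapsto\tau_g^{-T}M\tau_g^{-1}$ on a basis diagonalising $\tau_g$ -- which is just a re-derivation of the formulas $\chi_{\wedge^2V}$, $\chi_{S^2V}$ that you quote directly. So the two proofs differ only in bookkeeping, except at one point.

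That point is the step you yourself flag as the main obstacle, and there your patch does not work. The paper simply asserts that $\mf{g}_B(V)$ is a $G$-submodule if and only if $\tau_g^TB\tau_g=B$; you correctly see that $G$-stability of $\mf{g}_B(V)$ only forces $\tau_g^TB\tau_g=\mu(g)B$ for some linear character $\mu$, and you propose to kill $\mu$ by arguing a nontrivial twist is incompatible with nondegeneracy. That is false. Take $G=Q_8$ acting on $V=\CC^2$ by its two-dimensional irreducible representation, with $i\mapsto\diag(i,-i)$ and $j\mapsto\left(\begin{smallmatrix}0&1\\-1&0\end{smallmatrix}\right)$, and let $B$ be the standard symmetric form $\Id$. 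Then $B$ is nondegenerate, $\tau_i^TB\tau_i=-B$ so $\mu(i)=-1$, and $\mf{so}_B(V)=\CC\left(\begin{smallmatrix}0&1\\-1&0\end{smallmatrix}\right)$ \emph{is} a $G$-submodule, but its character takes the value $-1$ at $i$ whereas $\nicefrac{1}{2}\left(\chi_V(i)^2-\chi_V(i^2)\right)=1$. So for an arbitrary nondegenerate $B$ with $G$-stable $\mf{g}_B(V)$ the orthogonal/symplectic formulas in the Lemma actually fail, and no argument can close the gap; the correct general statement is $\chi_{\mf{g}_B(V)}(g)=\mu(g)^{-1}\cdot\nicefrac{1}{2}\left(\chi_V(g)^2\mp\chi_V(g^2)\right)$. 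The Lemma is true under the reading intended in the thesis, namely that $B$ is \emph{the} $G$-invariant bilinear form of Section \ref{sec:invariant bilinear forms} (which exists exactly when $\chi_V$ is real valued, and is the only way $\mf{so}(V)$, $\mf{sp}(V)$ are used in Table \ref{tab:group decompositions of simple lie algebras}). Under that hypothesis your middle step is vacuous and the rest of your proof is complete and matches the paper's.
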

\begin{proof}
Finding the first character takes little effort. Indeed, $\mf{gl}(V)=\End(V)\cong V\otimes V^*$ has character $\chi_V\overline{\chi_V}$, and the identity matrix is invariant.

The other characters are harder to find, but we can do both of them at once.
Consider the Lie algebra defined by a nondegenerate bilinear form $B$, which we represent by an element in $\GL(V)$; \[\mf{g}(V)=\{A\in \End(V)\;|\; A^TB+BA=0\}.\] If $B$ is symmetric ($B^T=B$) we say $\mf{g}(V)=\mf{so}(V)$ and if $B$ is antisymmetric ($B^T=-B$) we say $\mf{g}(V)=\mf{sp}(V)$. The defining condition for the Lie algebra can be rephrased as a symmetry condition on $BA$ for $A\in \mf{g}(V)$. Indeed, $(BA)^T=A^TB^T=\pm A^TB=\mp BA$, where the last equality defines the Lie algebra, and the choice of sign determines the choice between orthogonal and symplectic Lie algebras.
Now we have \[B\mf{g}(V)=\{BA\;|\;A\in\mf{g}(V)\}=\{M\in \End(V)\;|\;M^T=\pm M\}\]
i.e.~$B\mf{so}(V)$ is the space of antisymmetric matrices and $B\mf{sp}(V)$ is the space of symmetric matrices. As a vector space, $B\mf{g}(V)$ is isomorphic to $\mf{g}(V)$ because $B$ is nondegenerate. We can define an action of $G$ on $B\mf{g}(V)$ by requiring $B\mf{g}(V)$ to be isomorphic to $\mf{g}(V)$ as a representation. This gives the following result.

The representation $\tau:G\rightarrow \GL(V)$ is such that $\mf{g}(V)$ is a submodule of $V\otimes V^\ast$ if and only if $\tau_g^T B\tau_g=B$ for all elements $g\in G$. Therefore, the induced action on $M=BA\in B\mf{g}(V)$ reads
\[g\cdot M=g\cdot BA=Bg\cdot A=B\tau_g A \tau_g^{-1}=\tau_g^{-T} BA \tau_g^{-1}=\tau_g^{-T} M \tau_g^{-1},\]
where we use a shorthand notation $\tau_g^{-T}=(\tau_g^{T})^{-1}=(\tau_g^{-1})^T$.

Consider a basis $\{e_i\}$ for $V$ diagonalising $\tau_g=\diag(\mu_i)$ for a fixed group element $g\in G$. We find a diagonal action of $g$ on the basis $\{E_{i,j}-E_{j,i}\;|\; i<j\}$ for $B\mf{so}(V)$ or on the basis $\{E_{i,j}+E_{j,i}\;|\; i\le j\}$ for $B\mf{sp}(V)$ given by multiplication by $\overline{\mu_i\mu_j}$. Hence the trace is respectively given by \[\sum_{i<j}\overline{\mu_i\mu_j}=\frac{1}{2}\left(\left(\sum_{i}\overline{\mu_i}\right)^2-\sum_{i}\overline{\mu_i}^2\right)=\frac{1}{2}\left(\overline{\chi_V(g)}^2-\overline{\chi_V(g^2)}\right)\]
and
\[\sum_{i\le j}\overline{\mu_i\mu_j}=\frac{1}{2}\left(\left(\sum_{i}\overline{\mu_i}\right)^2+\sum_{i}\overline{\mu_i}^2\right)=\frac{1}{2}\left(\overline{\chi_V(g)}^2+\overline{\chi_V(g^2)}\right).\]
Finally, notice that $\chi_V(g)$ is real. Indeed, this is a basic fact about orthogonal and symplectic matrices, $\chi_V(g)=\tr \tau_g=\tr(B^{-1}\tau_g^{-T}B)=\tr \tau_g^{-1}=\overline{\chi_V(g)}$, thus the proof is complete.
\end{proof}
With the information of Section \ref{sec:characters} on the characters of the binary polyhedral groups and Equation (\ref{eq:characters of Lie algebras}) we can calculate the character decompositions of the Lie algebras of our interest. After an example we present all decompositions in Table \ref{tab:group decompositions of simple lie algebras}. Notice that all irreducible summands occurring in the Lie algebras are nonspinorial, as expected.
\begin{Example}[The character of $\mf{sp}(\boiiiiiiii)$]
By Schur's Lemma there is no invariant in $\mf{sp}(\boiiiiiiii)$, that is, $(\mf{sp}(\boiiiiiiii),\boi)=0$. Moreover, because the actions on the Lie algebras are nonspinorial, by design, we also know that $(\mf{sp}(\boiiiiiiii),\bo_j)=0$. Thus, we can say beforehand that 
\begin{equation}
\label{eq:decomposition of sp(O8)}
\mf{sp}(\boiiiiiiii)=n_2\boii+n_3\boiii+n_6\boiiiiii+n_7\boiiiiiii
\end{equation}
where the numbers $n_j$ are nonnegative integers.

Since we only need to find $4$ integers it will be sufficient to find as many equations (and likely one can do with less). We proceed using Formula (\ref{eq:characters of Lie algebras}) and the character table of the binary octahedral group, Table \ref{tab:ctbo}. At the trivial group element, (\ref{eq:characters of Lie algebras}) reads  $\chi_{\mf{sp}(\boiiiiiiii)}(1)=\nicefrac{1}{2}(4^2+4)=10$ and at $g=\gga$ it says $\chi_{\mf{sp}(\boiiiiiiii)}(\gga)=\nicefrac{1}{2}(0-4)=-2$. The next column of the character table belongs to the conjugacy class $[\gbe^2]$. In order to use (\ref{eq:characters of Lie algebras}) one must first figure out which class $\gbe^4=z\gbe$ belongs to. Alternatively, one can skip this column: $4$ equations are sufficient anyway. For the group elements $\gal^2$ and $\gal$ one can immediately compute $\chi_{\mf{sp}(\boiiiiiiii)}(\gal^2)=\nicefrac{1}{2}(0-4)=-2$ and $\chi_{\mf{sp}(\boiiiiiiii)}(\gal)=\nicefrac{1}{2}(0-0)=0$. By evaluating (\ref{eq:decomposition of sp(O8)}) in the group elements $1$, $\gga$, $\gal^2$ and $\gal$ respectively, one finds the system of equations
\[\begin{pmatrix}1&2&3&3\\-1&0&1&-1\\1&2&-1&-1\\-1&0&-1&1\end{pmatrix}\begin{pmatrix}n_2\\n_3\\n_6\\n_7\end{pmatrix}=\begin{pmatrix}10\\-2\\-2\\0\end{pmatrix}\]
which has the unique solution $(n_2, n_3, n_6, n_7)=(1,0,1,2)$, cf.~Table \ref{tab:group decompositions of simple lie algebras}.
\end{Example}

\begin{center}
\begin{table}[h!] 
\caption{Character decompositions of simple Lie algebras $\mf{g}(V)$, ordered by $\dim V$.}
\label{tab:group decompositions of simple lie algebras}
\begin{center}
\begin{tabular}{ll} \hline
$2$&$
\begin{array}{lccccccccc}
V&\psi_j&\btiiii&\btiiiii&\btiiiiii&\boiii&\boiiii&\boiiiii&\byii&\byiii\\
\mf{sl}(V)&\chi_2+\psi_{2j}&\btiiiiiii&\btiiiiiii&\btiiiiiii&\boii+\boiii&\boiiiiiii&\boiiiiiii&\byiiiii&\byiiii\\
\end{array}$\\
\hline 
$3$&$
\begin{array}{lccccccc}
V&\btiiiiiii&\boiiiiii&\boiiiiiii&\byiiii&\byiiiii\\
\mf{sl}(V)&\btii+\btiii+2\btiiiiiii&\boiii+\boiiiiii+\boiiiiiii&\boiii+\boiiiiii+\boiiiiiii&\byiiii+\byiiiiiiii&\byiiiii+\byiiiiiiii\\
\mf{so}(V)&\btiiiiiii&\boiiiiiii&\boiiiiiii&\byiiii&\byiiiii\\
\end{array}$\\
\hline
$4$&$
\begin{array}{lccccccccc}
V&\boiiiiiiii&\byiiiiii&\byiiiiiii\\
\mf{sl}(V)&\boii+\boiii+2\boiiiiii+2\boiiiiiii&\byiiii+\byiiiii+\byiiiiii+\byiiiiiiii&\byiiii+\byiiiii+\byiiiiii+\byiiiiiiii\\
\mf{so}(V)&&\byiiii+\byiiiii&\\
\mf{sp}(V)&\boii+\boiiiiii+2\boiiiiiii&&\byiiii+\byiiiii+\byiiiiii\\
\end{array}$\\
\hline
$5$&$
\begin{array}{lccccccccccc}
V&\byiiiiiiii\\
\mf{sl}(V)&\byiiii+\byiiiii+2\byiiiiii+2\byiiiiiiii\\
\mf{so}(V)&\byiiii+\byiiiii+\byiiiiii\\
\end{array}$\\
\hline
$6$&$
\begin{array}{lccccccccccc}
V&\byiiiiiiiii\\
\mf{sl}(V)&2\byiiii+2\byiiiii+2\byiiiiii+3\byiiiiiiii\\
\mf{sp}(V)&2\byiiii+2\byiiiii+\byiiiiii+\byiiiiiiii\\
\end{array}$\\
\hline
\end{tabular}
\end{center}
\end{table}
\end{center}


The following result can be used to obtain information on the Lie algebra structure from information on the group module structure.
\begin{Lemma}
\label{lem:bracket of isotypical components}
Let $\mf{g}$ be a Lie algebra and suppose $G<\Aut(\mf{g})$.
The bracket of two $G$-submodules of $\mf{g}$ is a submodule of $\mf{g}$.
If $\mf{g}$ is semisimple and $\mf{g}^\chi\cong nV_\chi$ and $\mf{g}^\psi\cong mV_\psi$ are isotypical components of $\mf{g}$ (which are uniquely determined by the characters, cf.~Section \ref{sec:representation theory}), then $[\mf{g}^\chi,\mf{g}^\psi]$ is a representation, isomorphic to a subrepresentation of $nV_\chi\wedge mV_\psi$.
\end{Lemma}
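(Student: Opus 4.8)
The first assertion is immediate from the hypothesis that $G$ acts by Lie algebra automorphisms: if $\mf{a},\mf{b}<\mf{g}$ are $G$-submodules then for $g\in G$, $a\in\mf{a}$, $b\in\mf{b}$ one has $g[a,b]=[ga,gb]\in[\mf{a},\mf{b}]$, and $[\mf{a},\mf{b}]$ is a subspace by bilinearity of the bracket, so $[\mf{a},\mf{b}]$ is a $G$-submodule of $\mf{g}$. The plan is to record this one-line verification first, since it is also the conceptual heart of what follows.

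For the second assertion, the key is to realise the bracket as a $G$-module homomorphism. The bracket $[\cdot,\cdot]:\mf{g}\times\mf{g}\rightarrow\mf{g}$ is bilinear and antisymmetric, hence factors through a linear map $\beta:\wedge^2\mf{g}\rightarrow\mf{g}$, $a\wedge b\mapsto[a,b]$. Because $G$ acts by automorphisms, $\beta$ is $G$-equivariant. Restricting to the submodule $\mf{g}^\chi\wedge\mf{g}^\psi\subset\wedge^2\mf{g}$ (using $\chi\ne\psi$; the case $\chi=\psi$ is handled by the submodule $\wedge^2\mf{g}^\chi\subset\wedge^2\mf g$, which is the convention implicit in the notation $nV_\chi\wedge mV_\psi$ when the two components coincide), we get a $G$-equivariant surjection $\mf{g}^\chi\wedge\mf{g}^\psi\twoheadrightarrow[\mf{g}^\chi,\mf{g}^\psi]$. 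Now I would invoke semisimplicity of $\mf{g}$ to conclude that, as a $G$-module, $\mf{g}$ is completely reducible — but in fact we need less: by Maschke's Theorem (characteristic zero), \emph{every} $G$-module over our field is completely reducible, so the surjection splits and $[\mf{g}^\chi,\mf{g}^\psi]$ is isomorphic to a direct summand, hence a subrepresentation, of $\mf{g}^\chi\wedge\mf{g}^\psi$. Finally, identify $\mf{g}^\chi\wedge\mf{g}^\psi$ with $nV_\chi\wedge mV_\psi$ using the given isomorphisms $\mf{g}^\chi\cong nV_\chi$, $\mf{g}^\psi\cong mV_\psi$, and $nV_\chi\wedge mV_\psi\cong (nV_\chi)\otimes(mV_\psi)$ when $\chi\ne\psi$ — this gives the stated conclusion.

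The only genuinely delicate point is the bookkeeping around the wedge notation when $\chi=\psi$ versus $\chi\ne\psi$: for $\chi\ne\psi$ the relevant submodule of $\wedge^2\mf g$ is $\mf g^\chi\otimes\mf g^\psi$ sitting inside $\wedge^2\mf g$ via $a\otimes b\mapsto a\wedge b$, which is injective since the two isotypical components meet trivially; for $\chi=\psi$ it is the honest exterior square $\wedge^2\mf g^\chi$. I would state both cases explicitly so the meaning of $nV_\chi\wedge mV_\psi$ is unambiguous. The role of semisimplicity of $\mf g$ is, strictly speaking, only to make the phrasing natural (so that $\mf g^\chi$ is well-defined as in Section \ref{sec:representation theory}); the splitting itself is pure finite-group representation theory over a field of characteristic zero, i.e.~Maschke. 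No serious obstacle is expected — the content is that the bracket is an equivariant map out of the exterior square, and everything else is formal.
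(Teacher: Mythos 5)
Your proof is correct, and its core idea coincides with the paper's: the bracket restricted to the two isotypical components is a $G$-equivariant map out of (the relevant piece of) $\wedge^2\mf{g}$, so its image is controlled by the source module. The execution differs in a way worth noting. The paper argues concretely: it picks bases $\{e^\chi_i\}$, $\{e^\psi_j\}$, diagonalises each group element separately, and observes that $g$ acts on $[e_i,e_j]$ with the same eigenvalue as on $e_i\otimes e_j$; it then asserts that the span of the brackets is a subrepresentation of $nV_\chi\otimes mV_\psi$ (resp.\ $\wedge^2$ when $\chi=\psi$). As written, that last step is slightly elided: an equivariant surjection only exhibits $[\mf{g}^\chi,\mf{g}^\psi]$ as a \emph{quotient} of the tensor (or exterior) square, and passing from quotient to subrepresentation requires complete reducibility. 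Your version makes exactly this point explicit by invoking Maschke to split the surjection $\mf{g}^\chi\wedge\mf{g}^\psi\twoheadrightarrow[\mf{g}^\chi,\mf{g}^\psi]$, and your careful separation of the cases $\chi=\psi$ (honest exterior square) versus $\chi\ne\psi$ (tensor product embedded in $\wedge^2\mf g$) matches the convention the paper uses implicitly. Your remark that semisimplicity of $\mf g$ is not really needed for the module-theoretic content is also accurate. In short: same mechanism, but your write-up closes the one small gap in the paper's argument rather than introducing any new one.
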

\begin{proof}
The first claim follows from the assumption that $G$ acts by Lie algebra automorphisms, $g[V,W]=[gV,gW]=[V,W]$.

For the second part, consider bases $\{e^\chi_i\}$ and $\{e^\psi_j\}$ for the isotypical components. Then $\{[e^\chi_i,e^\psi_j]\}$ is a set spanning $[\mf{g}^\chi,\mf{g}^\psi]$. Thus a subset thereof is a basis for this space.

For any particular group element $g$ we can assume the bases $\{e^\chi_i\}$ and $\{e^\psi_j\}$ diagonalise $g$. But then the action on $[e_i,e_j]$ equals the action on $\{e_i\otimes e_j\}$ which is a basis for  $nV_\chi\otimes mV_\psi$. If $\chi=\psi$ we can restrict to $\{[e^\chi_i,e^\chi_j]\;|\;i<j\}$ inside $nV_\chi\wedge mV_\psi$, by antisymmetry of the Lie bracket.
\end{proof}

\begin{Example}[$\mf{so}(\byiiiiii)$]
The orthogonal Lie algebra based on $\byiiiiii$ has group decomposition
\[\mf{so}(\byiiiiii)=\byiiii\oplus\byiiiii\]
according to Table \ref{tab:group decompositions of simple lie algebras}.
This is a rather special case since it is the only irreducible $4$-dimensional representation of a binary polyhedral group that preserves a symmetric bilinear form, i.e.~the only $\mf{so}_4(\CC)$-case. Moreover, this Lie algebra is not simple; \[\mf{so}_4(\CC)=\mf{sl}_2(\CC)\oplus\mf{sl}_2(\CC).\] One could ask whether this decomposition as a Lie algebra coincides with the decomposition as a group module.

Using Lemma \ref{lem:bracket of isotypical components} we check that $[\byiiii,\byiiii]<\wedge^2 \byiiii=\byiiii$, and $[\byiiiii,\byiiiii]<\wedge^2 \byiiiii=\byiiiii$, so the two summands of the first decomposition are in fact three-dimensional (and perfect) Lie algebras. To show that the direct sum of $G$-modules is a Lie algebra direct sum as well, we must show that elements from different components commute. Indeed they do. By Lemma \ref{lem:bracket of isotypical components} $[\byiiii,\byiiiii]<\byiiii\byiiiii=\byiiiiii\oplus\byiiiiiiii$ and this $\YY$-module has zero intersection with $\mf{so}(\byiiiiii)=\byiiii\oplus\byiiiii$.
\end{Example}

\section{Inner Automorphisms and the Reduction Group}
\label{sec:inner automorphisms}

In this section we use a few standard notions from the theory of Lie groups $\cG$ and their Lie algebras $\mf{g}$, such as the \emph{adjoint representation} of the Lie group $\Ad:\cG\rightarrow\Aut (\mf{g})$ and of the Lie algebra $\ad:\mf{g}\rightarrow\Aut (\mf{g})$, the \emph{exponent map} $\exp:\mf{g}\rightarrow \cG$ (also written as $\exp(a)=e^a$), the \emph{Killing form} given by $K(a,b)=\tr\ad(a)\ad(b)$, $a,b\in\mf{g}$ (a $\cG$-invariant bilinear form on $\mf{g}$), \emph{Cartan subalgebras} $\mf{h}<\mf{g}$ and the relationship between semisimple Lie algebras and \emph{Dynkin diagrams}. Some good references are \cite{fulton1991representation,humphreys1972introduction,jacobson1979lie,knapp2002lie}. We recall the following definition.

\begin{Definition}[Inner automorphisms of Lie algebras]
Automorphisms of a semisimple Lie algebra $\mf{g}$ of the form $\Ad(e^a)$, where $a\in\mf{g}$, are called \emph{inner}. The set of all inner automorphisms is denoted by $\Inn(\mf{g})$. Elements of the complement $\Aut(\mf{g})\setminus \Inn(\mf{g})$ are sometimes called \emph{outer} automorphisms.
\end{Definition}
It is well known that $\Inn(\mf{g})$ is a normal subgroup of $\Aut(\mf{g})$ and the quotient is the automorphism group of the Dynkin diagram \cite{fulton1991representation,humphreys1972introduction,jacobson1979lie}. 
\[\bigslant{\Aut(\mf{g})}{\Inn(\mf{g})}\cong \Aut(\Dyn (\mf{g}))\]
Dynkin diagrams are arranged in families of type $A$, $B$, $C$, $D$, and $E$ (as are many other objects, in particular the closely related root systems). These families relate to classical Lie algebras by 
\begin{equation*}
\begin{array}{ll}
A_\ell=\Dyn (\mf{sl}_{\ell+1}(\CC)),\\
B_\ell=\Dyn (\mf{so}_{2\ell+1}(\CC)),\\
C_\ell=\Dyn (\mf{sp}_{2\ell}(\CC)),\\
D_\ell=\Dyn (\mf{so}_{2\ell}(\CC)).\\
\end{array}
\end{equation*}
The number $\ell$ is called the \emph{rank} of the Dynkin diagram or of the associated Lie algebra, and equals the dimension of its Cartan subalgebra.
The automorphism groups of Dynkin diagrams are also well known \cite{fulton1991representation,humphreys1972introduction}. 
They are all trivial with the following exceptions.
\begin{equation}
\label{eq:AutDynkin}
\begin{array}{ll}
\Aut(A_\ell)=\zn{2},& \ell\ge 2,\\
\Aut(D_4)=S_3,&\\
\Aut(D_\ell)=\zn{2}&\ell\ge 5,\\
\Aut(E_6)=\zn{2}.
\end{array}
\end{equation}

One can find explicit descriptions of inner and outer automorphisms in \cite{jacobson1979lie}.
Inner automorphisms of the special linear algebra $\mf{sl}_n(\CC)$, naturally represented in $\End(\CC^n)$, are precisely the conjugations $A\mapsto \Ad(B)A=BAB^{-1}$ where $B\in\SL_n(\CC)$. If $n\ge3$ there exists outer automorphisms, which are conjugations composed with the map $A\mapsto -A^T$.
Automorphisms of $\mf{so}_{2n}(\CC)$, $n\ne 4$, are conjugations by orthogonal matrices. These are inner if and only if the orthogonal matrix has determinant $1$. In other words $\Aut(\mf{so}_{2n}(\CC))\cong\Ad(\NSO_{2n}(\CC))$ and $\Inn(\mf{so}_{2n}(\CC))\cong\Ad(\SO_{2n}(\CC))$. The exception $\mf{so}_{8}(\CC)$ allows more outer automorphisms, cf.~\cite{fulton1991representation}.

\begin{Lemma}
\label{lem:determinant of automorphism}
An automorphism $\phi\in\Aut(\mf{g})$ of a semisimple Lie algebra has determinant $\pm1$. If the automorphism is inner, it has determinant $1$. There exists outer automorphisms with determinant of both signs.
\end{Lemma}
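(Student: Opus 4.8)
The plan is to prove the three assertions in turn, all from the fact (recalled just above) that $\mathrm{Inn}(\mf{g})$ is normal in $\Aut(\mf{g})$ with quotient $\Aut(\Dyn(\mf{g}))$, together with a handful of explicit descriptions of inner automorphisms.

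First I would show that every inner automorphism has determinant $1$. An inner automorphism has the form $\Ad(e^a)=e^{\ad(a)}$ for $a\in\mf{g}$, so its determinant is $e^{\tr\ad(a)}$. Since $\mf{g}$ is semisimple it equals its own derived algebra, so $\ad(a)$ lies in $[\ad(\mf{g}),\ad(\mf{g})]$, hence has zero trace; therefore $\det\Ad(e^a)=e^{0}=1$. (Alternatively: $\Inn(\mf{g})$ is a connected Lie group, the determinant is a continuous homomorphism into $\CC^\ast$, and it sends the identity to $1$, so its image is connected; but one still needs that the image lies in $\{\pm1\}$, which is exactly the next point, so the trace argument is cleaner here.)

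Next I would show $\det\phi=\pm1$ for a general $\phi\in\Aut(\mf{g})$. The key input is that $\phi$ preserves the Killing form $K$, which is nondegenerate on a semisimple Lie algebra: $K(\phi a,\phi b)=K(a,b)$ for all $a,b$. Picking a basis and writing $K$ as a nondegenerate matrix $\mathcal{K}$ and $\phi$ as a matrix $M$, this reads $M^T\mathcal{K}M=\mathcal{K}$, so $(\det M)^2\det\mathcal{K}=\det\mathcal{K}$, and since $\det\mathcal{K}\ne0$ we get $(\det\phi)^2=1$, i.e.\ $\det\phi=\pm1$. Combined with the previous paragraph, the sign of $\det$ gives a homomorphism $\Aut(\mf{g})\to\{\pm1\}$ which is trivial on $\mathrm{Inn}(\mf{g})$, hence factors through $\Aut(\Dyn(\mf{g}))$.

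Finally, to see that outer automorphisms of both determinant signs occur, it suffices to exhibit one example of each, and the natural place is a Lie algebra with nontrivial diagram automorphism group. For $\mf{so}_{2n}(\CC)$ with $n\ne4$, the paragraph above records $\Aut(\mf{so}_{2n}(\CC))\cong\Ad(\NSO_{2n}(\CC))$ with $\Inn\cong\Ad(\SO_{2n}(\CC))$; conjugation $A\mapsto BAB^{-1}$ by an $B\in\NSO_{2n}(\CC)$ with $\det B=-1$ is an outer automorphism, and one computes its determinant on $\mf{so}_{2n}(\CC)\cong\wedge^2\CC^{2n}$ (after absorbing the metric) to be $(\det B)^{2n-1}=-1$, while an inner one has determinant $+1$; so both signs arise provided some outer automorphism is realised already by such a conjugation --- which it is. This settles the $-1$ case. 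For the $+1$ case one may take $\mf{sl}_n(\CC)$ with $n\ge3$: the outer automorphism $A\mapsto -A^T$ has, as a linear map on $\mf{sl}_n(\CC)$ (dimension $n^2-1$), determinant $(-1)^{n^2-1}\det(A\mapsto A^T)$. The transpose map on $\mathfrak{gl}_n$ has determinant $(-1)^{n(n-1)/2}$ and restricts compatibly to $\mf{sl}_n$; working this out for, say, $n=3$ gives determinant $+1$ for the outer automorphism $A\mapsto -A^T$. Thus outer automorphisms of determinant $+1$ exist as well.

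The main obstacle I anticipate is the bookkeeping in the last paragraph: one must be careful that the ``determinant of an automorphism'' is computed on the correct space (the Lie algebra $\mf{g}$ itself, not its natural module), and the sign computations for $A\mapsto -A^T$ on $\mf{sl}_n$ and for orthogonal conjugation on $\mf{so}_{2n}$ require keeping track of parities of dimensions. The first two assertions, by contrast, are immediate once one invokes the invariance of the Killing form and semisimplicity; the only subtlety there is making sure one quotes nondegeneracy of $K$ (valid precisely because $\mf{g}$ is semisimple, by Cartan's criterion) and perfectness of $\mf{g}$.
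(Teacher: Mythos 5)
Your proofs of the first two assertions coincide with the paper's: $(\det\phi)^2=1$ from invariance of the nondegenerate Killing form, and $\det\Ad(e^a)=e^{\tr\ad(a)}=1$ from $\mf{g}=[\mf{g},\mf{g}]$. Both are correct.

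The third assertion is where you slip. Your $\mf{so}_{2n}$ example for an outer automorphism of determinant $-1$ is fine (and differs from the paper, which instead uses $A\mapsto -A^T$ on $\mf{sl}_n$ and computes its determinant to be $(-1)^{\nicefrac{1}{2}n(n+1)-1}$). But your proposed determinant-$(+1)$ example fails: on $\mf{sl}_3$ the map $A\mapsto -A^T$ has determinant $-1$, not $+1$. Indeed, by your own formula it is $(-1)^{n^2-1}(-1)^{n(n-1)/2}=(-1)^{8}(-1)^{3}=-1$ for $n=3$; concretely, the two Cartan generators each pick up a $-1$ and the three swapped pairs $(E_{ij},E_{ji})$ each contribute $\det\begin{pmatrix}0&-1\\-1&0\end{pmatrix}=-1$, giving $(-1)^5=-1$. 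This matters because the sign of the determinant factors through $\Aut(\mf{g})/\Inn(\mf{g})$, so when that quotient is $\zn{2}$ \emph{all} outer automorphisms of a given $\mf{g}$ share one sign; you genuinely need a correct example in some algebra. The fix is immediate: the exponent $\nicefrac{1}{2}n(n+1)-1$ is even precisely when $n\equiv 1,2\pmod 4$, so $A\mapsto -A^T$ on $\mf{sl}_5$ (or $\mf{sl}_6$) is an outer automorphism of determinant $+1$. With that substitution your argument is complete.
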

\begin{proof}
Any Lie algebra automorphism respects the Killing form $K$ of the Lie algebra, which is nondegenerate if and only if the Lie algebra is semisimple (a fact known as \emph{Cartan's criterion}). In terms of matrices, these statements read $\phi^T K \phi=K$ and $\det K \ne 0$ if $\mf{g}$ is semisimple. Taking the determinant of the first equation gives $(\det \phi)^2=1$.

Now suppose the automorphism is inner, i.e.~$\phi=\Ad(e^a)$ where $a\in\mf{g}$. By semisimplicity, $\mf{g}=[\mf{g},\mf{g}]$, so the adjoint action of $\mf{g}$ on itself is traceless: $\tr\ad(a)=\tr\ad([b,c])=\tr[\ad(b),\ad(c)]=0$. Therefore 
$\det \Ad(e^a)=\det e^{\ad(a)}=e^{\tr\ad(a)}=1$.

Finally, as an example of an automorphism with determinant $-1$, consider $\phi\in\Aut(\mf{sl}_n(\CC))$ defined by
$\phi:A\mapsto-A^T$. Then $\det \phi=(-1)^{\nicefrac{1}{2}n(n+1)-1}$. 
Indeed, in the usual \emph{Chevalley} \cite{humphreys1972introduction} basis for the Lie algebra and the usual basis of its representation as $n\times n$ matrices, a basis element of the Cartan subalgebra is a diagonal matrix, hence mapped to minus itself by $\phi$, contributing a factor $-1$ to $\det \phi$. Each nondiagonal basis vector $E_{i,j}$ is mapped to $-E_{j,i}$ and such a pair also contributes a factor $\det\begin{pmatrix}0&-1\\-1&0\end{pmatrix}=-1$ to $\det \phi$, counting a total of $\dim\mf{h}+\frac{\dim\mf{sl}_n-\dim\mf{h}}{2}=n-1+\frac{n^2-1-(n-1)}{2}=\nicefrac{1}{2}n(n+1)-1$.
\end{proof}

If $G < \Aut(\mf{g})$ is a group of automorphisms, then the structure of the group $G$ often limits the options of how its representation is divided into inner and outer automorphisms. Indeed, the inner part is a normal subgroup: $G\cap \Inn(\mf{g})\triangleleft G$. A well known identity in group theory, often called the \emph{second isomorphism theorem}, states that $\bigslant{G}{G\cap \Inn(\mf{g})}\cong \bigslant{G\,\Inn(\mf{g})}{\Inn(\mf{g})}$. The right hand side is a subgroup of automorphisms of the Dynkin diagram $\bigslant{\Aut(\mf{g})}{\Inn(\mf{g})}$, so we have
\begin{equation}
\label{eq:subgroup of Aut(Dyn)}
\bigslant{G}{G\cap \Inn(\mf{g})}<\Aut(\Dyn(\mf{g})).
\end{equation} 
The structure of $\Aut(\Dyn(\mf{g}))$ is known (\ref{eq:AutDynkin}). Comparing with the quotients groups of $G$ can inform us about the possibility of outer automorphisms.

For example, let $\cQ G=\left\{\bigslant{G}{N}\;|\;N\triangleleft G\right\}$ denote the set of all quotient groups of $G$.
For the polyhedral groups we find
\begin{equation}
\begin{array}{lll}
\cQ\zn{N}&=&\left\{\zn{M}\;|\;M\text{ divides }N\right\},\\
\cQ\DD_N&=&\left\{1,\DD_{M}\;|\;M\text{ divides }N\right\},\\
\cQ\TT&=&\left\{1,\zn{3},\TT\right\},\\
\cQ\OO&=&\left\{1,\zn{2},S_3,\OO\right\},\\
\cQ\YY&=&\left\{1,\YY\right\}.
\end{array}
\end{equation}
If for instance the tetrahedral group $\TT$ acts faithfully on a simple Lie algebra other than $\mf{so}_8(\CC)$, then it acts solely by inner automorphisms, since $\zn{2}$ is not in $\cQ\TT$. By the same argument the icosahedral group can only act by inner automorphism on a classical simple Lie algebra.

On the other hand, this information can help to find actions involving outer automorphisms, for instance a $\zn{2N}$ or $\DD_N$ action on $\mf{sl}_n(\CC)$ where the normal subgroup $\zn{N}$ of index $2$ acts by inner automorphism, and the other half of the group elements are represented by outer automorphisms, i.e.~are of the form $A\mapsto -\Ad(B)A^T$. This dihedral case is studied in \cite{MPW2014}. 
We also notice that the largest group of Dynkin diagram automorphisms is a polyhedral group: $\Aut(\Dyn(\mf{so}_8(\CC)))=\Aut(D_4)=S_3=\DD_3$ (cf.~\emph{triality}, \cite{fulton1991representation}).

If we stick to the format where $G$ acts on $\mf{g}(V)$ as induced by a $G$-module $V$ (which is a restriction of Definition \ref{def:alias1}) then there is only conjugation and many outer automorphisms are excluded. In fact, for classical simple Lie algebras only $\mf{so}(V)$ with $\dim V \in 2\NN$ might still be acted upon by an outer automorphism. But we found in Section \ref{sec:characters} that the only even dimensional irreducible representation of a polyhedral group that preserves a bilinear form is $V=\byiiiiii$, and because $\det{\byiiiiii}=\byi$, we can conclude that if we restrict further by requiring $V$ to be irreducible then this action on the base Lie algebra is completely inner.
\begin{Observation}
\label{obs:only inner}
Let $V$ be an irreducible representation of a binary polyhedral group $G^\flat$ 
and let $\mf{g}(V)<\mf{gl}(V)$ be a $G^\flat$-submodule and classical simple 
Lie algebra. Then $G$ acts by inner automorphisms on this Lie algebra.
In particular the determinant of the action on $\mf{g}(V)$ is trivial; $\det\chi_{\mf{g}(V)}=\triv.$
\end{Observation}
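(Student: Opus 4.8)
The plan is to dispatch $\mf{sl}(V)$, $\mf{sp}(V)$ and odd-dimensional $\mf{so}(V)$ using the known automorphism groups (\ref{eq:AutDynkin}) of the relevant Dynkin diagrams, and then to eliminate the one remaining family — even-dimensional $\mf{so}(V)$, the only classical simple Lie algebra admitting outer automorphisms realisable by conjugations — by inspecting the character tables of Section \ref{sec:characters}.

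First I would split into cases according to the type of $\mf{g}(V)$. Since the action of $G^\flat$ on $\mf{gl}(V)\cong V\otimes V^\ast$ is induced by the representation $\tau\colon G^\flat\to\GL(V)$, every $g\in G$ acts on $\mf{g}(V)$ by conjugation $A\mapsto\tau(g)\,A\,\tau(g)^{-1}$. If $\mf{g}(V)=\mf{sl}(V)$ this is manifestly an inner automorphism, because $\Inn(\mf{sl}_n(\CC))$ is precisely the group of conjugations by elements of $\GL_n(\CC)$ (equivalently, the image of $\PGL_n(\CC)$). If $\mf{g}(V)=\mf{sp}(V)$ then $\Dyn(\mf{g})=C_\ell$, and by (\ref{eq:AutDynkin}) $\Aut(C_\ell)$ is trivial, so $\Aut(\mf{sp}_{2\ell}(\CC))=\Inn(\mf{sp}_{2\ell}(\CC))$ and there is nothing to prove. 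Likewise, if $\mf{g}(V)=\mf{so}(V)$ with $\dim V$ odd then $\Dyn(\mf{g})=B_\ell$ has trivial automorphism group, so again the conjugation action of $G$ is inner.

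The only remaining possibility is $\mf{g}(V)=\mf{so}(V)$ with $\dim V$ even, where $\Dyn(\mf{g})$ is of type $D$ and genuine outer automorphisms (orthogonal conjugations of determinant $-1$) do exist. I would argue this case cannot occur under the hypotheses. For $\mf{so}(V)$ to be a $G^\flat$-submodule of $\mf{gl}(V)$ the defining symmetric bilinear form must be $G^\flat$-invariant — this is the condition $\tau_g^{T}B\tau_g=B$ appearing in the proof of Lemma \ref{lem:characters of Lie algebras} — so, $V$ being irreducible, Schur's Lemma together with the Frobenius--Schur analysis of Section \ref{sec:invariant bilinear forms} forces $V$ to be of real type, $\iota_{\chi_V}=1$; and for $\mf{so}(V)$ to be a classical \emph{simple} Lie algebra with $\dim V$ even one needs $\dim V\ge 6$. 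But a scan of the character tables of Section \ref{sec:characters} shows that every even-dimensional irreducible representation of a binary polyhedral group of real type has dimension $2$ or $4$ (namely $\psi_j$, $\boiii$ and $\byiiiiii$), and neither $\mf{so}_2(\CC)$ nor $\mf{so}_4(\CC)$ is simple. Hence $G$ acts by inner automorphisms in all admissible cases.

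Finally, the determinant statement is immediate: by Lemma \ref{lem:determinant of automorphism} an inner automorphism of a semisimple Lie algebra has determinant $1$, so the linear map by which each $g\in G$ acts on $\mf{g}(V)$ has determinant $1$, that is $\det\chi_{\mf{g}(V)}=\triv$. The only substantive step is the table inspection ruling out even-dimensional $\mf{so}(V)$; I expect that to be the main — indeed essentially the only — obstacle, the rest being bookkeeping with (\ref{eq:AutDynkin}) and Lemma \ref{lem:determinant of automorphism}.
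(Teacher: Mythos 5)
Your proof is correct, and it diverges from the paper's argument at exactly one point: the even-dimensional orthogonal case. The paper also reduces everything to this case (conjugation is automatically inner for $\mf{sl}(V)$, and $\Aut=\Inn$ for the symplectic and odd orthogonal algebras), but it then handles it head-on rather than excluding it: it observes that the only even-dimensional irreducible representation of a binary polyhedral group preserving a symmetric bilinear form is $\byiiiiii$, and that $\det\chi_{\byiiiiii}=\byi$ is trivial, so every $\tau(g)$ lies in $\SO(V)$ and the conjugations are inner automorphisms of $\mf{so}(V)$. You instead invoke the word \emph{simple} in the hypothesis: since $\mf{so}_2(\CC)$ and $\mf{so}_4(\CC)$ are not simple and the only even-dimensional real-type irreducibles ($\psi_j$, $\boiii$, $\byiiiiii$) have dimension $2$ or $4$, the case is vacuous. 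Both arguments are sound. Yours is marginally cleaner in that it needs no determinant computation for the Lie-algebra statement itself; the paper's has the advantage of also covering $\mf{so}(\byiiiiii)\cong\mf{sl}_2(\CC)\oplus\mf{sl}_2(\CC)$, which is semisimple but not simple and which the thesis does use later (e.g.\ in the $A_1\times A_1$ discussion), so the determinant-of-the-representation argument is not wasted there. Your final step — deducing $\det\chi_{\mf{g}(V)}=\triv$ from Lemma \ref{lem:determinant of automorphism} — matches the paper exactly.
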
 

\begin{Remark} In the above we implicitly make the assumption that the action of $G$ on $\mf{g}(V)$ is faithful. There is a single case in the current setup where this is not so, as can be seen in Table \ref{tab:group decompositions of simple lie algebras} together with the $\im$-column of the character tables in Section \ref{sec:characters}. This is $\mf{sl}(\boiii)=\boii+\boiii$ where we act by the quotient group $\DD_3=S_3$ rather than $\OO$. Given that all automorphisms of $\mf{sl}_2(\CC)$ are inner, this has no consequences for the preceding discussion.
There is however a convenient implication of this fact when one is interested in the invariant matrices $\palia{sl}{\boiii}$. In Table \ref{tab:ctbo} we see that the kernel of the action on $\mf{sl}(\boiii)$, a normal subgroup in $\OO$ of order $24/6=4$,  is given by $\{1,\gal^2,z,\gal^2z\}=\langle \gal^2\rangle \cong \zn{4}$. This subgroup acts solely on $R$. In particular, the entries of the invariant matrices $\palia{sl}{\boiii}$ are ${\zn{4}}$-invariant forms. A more thorough discussion of these and related phenomena can be found in \cite{LM05comm,Lombardo}.
\end{Remark}

If we drop the condition that $V$ is irreducible, the embedding $G<\Aut(\mf{g}(V))$ can contain outer automorphisms. The smallest simple case is $\mf{so}_6(\CC)$.
\begin{Example}[Outer automorphisms on a simple Lie algebra]
Consider the octahedral group. The representations $\boiiiiii$ and $\boiiiiiii$ both preserve a symmetric bilinear form. Therefore, so does their sum $\boiiiiii\oplus\boiiiiiii$. 
The determinant of this representation is $\det{\boiiiiii\oplus\boiiiiiii}=\det{\boiiiiii}\det{\boiiiiiii}=\boii\boi=\boii$. In particular, there are endomorphisms with determinant $-1$ which induce outer automorphisms of $\mf{so}(\boiiiiii\oplus\boiiiiiii)$.
\end{Example}



All in all this section shows that in the present setup all Automorphic Lie Algebras are invariant solely under inner automorphisms. If the setup is generalised to the case where a polyhedral group is embedded $G\hookrightarrow\Aut(\mf{g})$ in an arbitrary way, the classification problem becomes infinite, and a few cases with outer automorphisms appear, namely the embeddings of $\zn{2M}$, $\DD_{2M}$ and $\OO$ in the automorphism groups of $\mf{sl}_n(\CC)$ with $n\ge 3$, $\mf{so}_{2n}(\CC)$ with $n\ge 2$ and $\mf{g}_2(\CC)$ may contain outer automorphisms,  as do embeddings of $\zn{2M}$, $\zn{3M}$, $\DD_{2M}$, $\DD_{3M}$ and $\TT$ in $\Aut(\mf{so}_{8}(\CC))$.

\section{A Family of Reductive Lie Algebras}
\label{sec:evaluating alias}

In this section we determine the map
\[\mf{f}:\mu\mapsto \alia{g}{V}{\Gamma}(\mu)\]
up to automorphisms of Lie algebras. That is, we evaluate the space of invariant matrices in a point of the Riemann sphere and determine the Lie algebra structure of the resulting finite dimensional vector space. This is related to what was done in Proposition \ref{prop:evaluating invariant vectors}, where we evaluated the space of invariant vectors. In fact, Proposition \ref{prop:evaluating invariant vectors} will be the starting point. The main difference is that we now want to determine the Lie algebra structure, rather than just the vector space.

On first glance the map $\mf{f}$ seems to be terribly complicated because of the many dependencies. Besides the point $\mu$ of the Riemann sphere the Lie algebra $\mf{f}(\mu)$ may depend on the choice of polyhedral group $G$, one of its representations $V$ and one of the orbits $\Gamma\subset\overline{\CC}$ under $G$ and finally a complex Lie algebra $\mf{g}$.

It turns out that the situation is drastically easier than that. First of all, from the definition we have $\mf{f}(g\mu)=g\mf{f}(\mu)\cong \mf{f}(\mu)$, i.e.~this map is constant on orbits and can be defined on the orbifold $\bigslant{\overline{\CC}}{G}$. We will find in this section that, after the identification $\bigslant{\overline{\CC}}{G}\cong\overline{\CC}$, the group $G$ and its representation $V$ do not play a role anymore. In fact, the value of $\mf{f}$ depends only on the orbit type of $G\mu$ and a choice of simple Lie algebra $\mf{g}$. In particular, $\mf{f}$ defines an invariant of Automorphic Lie Algebras (see Concept \ref{conc:invariant of alias}).

To asses the value of this result we need to revisit our definition of Automorphic Lie Algebras. In the introduction of this thesis we defined $\mf{sl}_n(\CC)$ as $n\times n$ traceless matrices. However, it would be better to say that  $\mf{sl}_n(\CC)$ is an $(n^2-1)$-dimensional complex vector space together with a set of structure constants defined by commutators of $n\times n$ traceless matrices. These matrices are merely a representation of the Lie algebra. Since the Lie algebra is defined by this representation one could call it the \emph{natural representation}. This distinction between the Lie algebra and a natural representation has not been relevant in this thesis until now. Therefore it was ignored.

Analogous to the classical Lie algebras, we will slightly modify Definition \ref{def:alias1} and call the space of invariant matrices $\alia{g}{V}{\Gamma}$ the \emph{natural representation of the Automorphic Lie Algebra}, whereas the actual Automorphic Lie Algebra is an infinite dimensional complex vector space together with a set of structure constants in $\mero_{\Gamma}$ as defined by this natural representation.
Furthermore, now that the subject of the thesis comes into focus, we introduce the acronym \emph{ALiAs} for Automorphic Lie Algebras.

The results on the map $\mf{f}$ of this section tell us a lot about the natural representation of 
ALiAs, but less about these Lie algebras themselves. Nevertheless, these concepts are intimately related and the map $\mf{f}$ will be of great help in the next chapter.

\begin{Lemma}
\label{lem:g(V)^g}
Let $V$ be an $n$-dimensional vector space and let $g\in\GL(V)$ have finite order and eigenvalues
\[g\cong \diag(1,\ldots,1,-1,\ldots,-1,\mu_3,\ldots,\mu_3,\ldots,\mu_k,\ldots,\mu_k)\]
with respective multiplicities $(m_1, m_2,m_3,\ldots m_k)$.
If $g$ preserves $\mf{so}(V)$ or $\mf{sp}(V)$ we may assume that $m_r=m_{r+1}$ and $\overline{\mu_r}=\mu_{r+1}$ if $r\ge 3$ is odd. If $g$ preserves $\mf{sp}(V)$ we can moreover assume that $m_1$ and $m_2$ are even. Up to isomorphism, the Lie algebras of $g$-invariants are
\begin{align*}
&\mf{sl}(V)^{\langle g\rangle}\cong \bigslant{\left(\bigoplus_{r=1}^{k}\mf{gl}_{m_{r}}\right)}{\CC \Id_n}\cong \bigoplus_{r=1}^{k}\mf{sl}_{m_{r}}\oplus \bigoplus_{r=1}^{k-1}\CC,\\
&\mf{so}(V)^{\langle g\rangle}\cong\mf{so}_{m_1}\oplus  \mf{so}_{m_2}\oplus\bigoplus_{r=3,\,r\text{ odd}}^{k-1} \mf{gl}_{m_{r}},\\
&\mf{sp}(V)^{\langle g\rangle}\cong\mf{sp}_{m_1}\oplus  \mf{sp}_{m_2}\oplus\bigoplus_{r=3,\,r\text{ odd}}^{k-1} \mf{gl}_{m_{r}},
\end{align*}
where we use Lie algebra direct sums, i.e.~elements from distinct summands commute.
In particular one finds the dimensions 
\begin{align*}
&\dim \mf{sl}(V)^{\langle g\rangle}=-1+\sum_{r=1}^k m_r^2,\\
&\dim \mf{so}(V)^{\langle g\rangle}=\frac{1}{2}\left(m_1(m_1-1)+m_2(m_2-1)+\sum_{r=3}^k m_r^2\right),\\
&\dim \mf{sp}(V)^{\langle g\rangle}=\frac{1}{2}\left(m_1(m_1+1)+m_2(m_2+1)+\sum_{r=3}^k m_r^2\right).
\end{align*}
satisfying $\dim \mf{gl}(V)^{\langle g\rangle}=\dim \mf{sl}(V)^{\langle g\rangle}+1=\dim \mf{so}(V)^{\langle g\rangle} +\dim\mf{sp}(V)^{\langle g\rangle}$.
\end{Lemma}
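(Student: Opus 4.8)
The plan is to reduce everything to an explicit block decomposition of the matrix $g$ and then read off the centralisers. First I would work in a basis realising the stated eigenvalue normal form, so that $V$ splits as $V=\bigoplus_{r=1}^{k}V_r$ with $V_r$ the $\mu_r$-eigenspace, $\dim V_r=m_r$, where $\mu_1=1$, $\mu_2=-1$. Since $g\in\GL(V)$ acts as the scalar $\mu_r$ on $V_r$, an endomorphism $A\in\End(V)=\mf{gl}(V)$ commutes with $g$ if and only if it is block-diagonal with respect to this decomposition: writing $A=(A_{rs})$ with $A_{rs}\in\Hom(V_s,V_r)$, the relation $gA=Ag$ forces $\mu_r A_{rs}=A_{rs}\mu_s$, hence $A_{rs}=0$ for $r\ne s$. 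Therefore $\mf{gl}(V)^{\langle g\rangle}\cong\bigoplus_{r=1}^k\mf{gl}_{m_r}$. Imposing the trace-zero condition cuts out a codimension-one subalgebra containing $\CC\Id_n$; quotienting out the (central) trace gives $\mf{sl}(V)^{\langle g\rangle}\cong\left(\bigoplus_r\mf{gl}_{m_r}\right)/\CC\Id_n\cong\bigoplus_r\mf{sl}_{m_r}\oplus\bigoplus_{r=1}^{k-1}\CC$, using $\mf{gl}_{m}=\mf{sl}_{m}\oplus\CC$ and that one of the $k$ central lines is identified with the scalars. The dimension formula for $\mf{sl}$ then follows by counting: $\sum_r m_r^2-1$.

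For $\mf{so}(V)$ and $\mf{sp}(V)$ I would use the device of Lemma \ref{lem:characters of Lie algebras}: pass from $\mf{g}(V)=\{A\mid A^TB+BA=0\}$ to $B\mf{g}(V)=\{M\in\End(V)\mid M^T=\mp M\}$, the space of (anti)symmetric matrices, which is $G$-isomorphic to $\mf{g}(V)$ via $A\mapsto BA$. Here I must arrange $B$ to be compatible with the $g$-eigenspace decomposition; since $g$ preserves $B$ (i.e. $g^TBg=B$ up to the identification), $B$ pairs $V_r$ with the $\bar\mu_r^{-1}=\bar\mu_r$-eigenspace. This is exactly why the hypothesis groups the non-real eigenvalues into conjugate pairs $\mu_r,\mu_{r+1}=\bar\mu_r$ with $m_r=m_{r+1}$, and keeps the real eigenvalues $\pm1$ separate: $B$ restricts to a nondegenerate form on each of $V_1$ and $V_2$ (symmetric or antisymmetric according to the case; for $\mf{sp}$ one needs $m_1,m_2$ even for an antisymmetric form to be nondegenerate, which is the remaining hypothesis), and pairs $V_r$ with $V_{r+1}$ nondegenerately for each odd $r\ge3$. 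Then a block-diagonal $M$ (which is what the centraliser condition gives) satisfies $M^T=\mp M$ block by block: on $V_1$ and $V_2$ this is the defining condition of $\mf{so}_{m_1}\oplus\mf{so}_{m_2}$ respectively $\mf{sp}_{m_1}\oplus\mf{sp}_{m_2}$, while on each hyperbolic pair $V_r\oplus V_{r+1}$ the condition $M^T=\mp M$ together with the pairing identifies the block with a copy of $\mf{gl}_{m_r}$ (the standard fact that $\mf{so}$ or $\mf{sp}$ of a split form in a hyperbolic pair of Lagrangians is $\mf{gl}$ of the Lagrangian). Collecting the summands gives the claimed isomorphisms, and the direct sums are Lie algebra direct sums because distinct blocks act on orthogonal subspaces and hence commute. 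The dimension formulas then drop out from $\dim\mf{so}_m=\binom m2$, $\dim\mf{sp}_m=\binom{m+1}2$, $\dim\mf{gl}_m=m^2$.

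Finally I would verify the consistency relation $\dim\mf{gl}(V)^{\langle g\rangle}=\dim\mf{sl}(V)^{\langle g\rangle}+1=\dim\mf{so}(V)^{\langle g\rangle}+\dim\mf{sp}(V)^{\langle g\rangle}$ directly from the three formulas: the first equality is immediate, and the second follows from $\binom m2+\binom{m+1}2=m^2$ applied to $m_1$ and $m_2$ (the paired blocks $\mf{gl}_{m_r}$ contribute $m_r^2$ to each of $\mf{so}$ and $\mf{sp}$, but only once each to $\mf{gl}(V)^{\langle g\rangle}$, so one must be careful with the bookkeeping — in fact $\dim\mf{gl}(V)^{\langle g\rangle}=\sum_{r=1}^k m_r^2$ and the $\mf{so}$ and $\mf{sp}$ dimensions each contain $\sum_{r=3}^k m_r^2$, which is $2\sum_{r\ge3,\text{ odd}}m_r^2$, precisely matching once the $r,r+1$ blocks are summed, so the identity holds). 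The main obstacle I anticipate is purely bookkeeping: making the choice of basis for $B$ explicit enough that the orthogonality/symplectic conditions really do decouple into the stated blocks, and tracking the parity hypotheses ($m_r=m_{r+1}$ for odd $r\ge3$, $m_1,m_2$ even in the symplectic case) so that every restricted form is nondegenerate. Once that is set up, each block computation is a standard fact about classical Lie algebras of split or nondegenerate forms, and no genuinely new idea is required.
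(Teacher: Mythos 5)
Your proposal is correct and follows essentially the same route as the paper: block-diagonalise the centraliser of $g$ over its eigenspaces, use $g^TBg=B$ to force $B$ into a normal form that is nondegenerate on the $\pm1$-eigenspaces and hyperbolic on conjugate pairs, and then read off $\mf{so}_{m_1}\oplus\mf{so}_{m_2}$ (resp.\ $\mf{sp}_{m_1}\oplus\mf{sp}_{m_2}$) together with one $\mf{gl}_{m_r}$ per hyperbolic pair. The only cosmetic difference is that you quote the hyperbolic-pair identification with $\mf{gl}_{m_r}$ as a standard fact where the paper writes out the condition $a_{11}^T=-a_{22}$ explicitly.
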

\begin{proof}
The description of $\mf{sl}(V)^{\langle g\rangle}$ follows immediately from the observation that \[\End(V)^{\langle g\rangle}=\mf{gl}(V)^{\langle g\rangle}\cong \bigoplus_{r=1}^{k}\mf{gl}_{m_{r}}(\CC).\]
If $g$ preserves a bilinear form $B$ then its nonreal eigenvalues come in conjugate pairs. 
Now we choose a basis such that $g=\diag(g_1,\ldots,g_n)$ is the given diagonal matrix. The condition $g^T B g=B$, reads $g_ig_jB_{ij}=B_{ij}$, which implies $B_{ij}=0$ if $g_ig_j\ne 1$. This gives $B$ a block structure. A change of basis $g\mapsto P^{-1}g P$ transforms the bilinear form as $B\mapsto P^TBP$. Restricting $P$ to the group $C_{\GL(V)}(g)=\GL(V)^{\langle g\rangle}$ means that the matrix for $g$ does not change, yet there is plenty of freedom to find a convenient matrix representing $B$.

From here we distinguish between the orthogonal case ($B^T=B$) and the symplectic case ($B^T=-B$).
In the orthogonal case one can transform $B$ into 
\[B=\diag\left(\Id_{m_1},\Id_{m_2},\begin{pmatrix}0&\Id_{m_3}\\\Id_{m_3}&0\end{pmatrix},\begin{pmatrix}0&\Id_{m_5}\\\Id_{m_5}&0\end{pmatrix},\ldots,\begin{pmatrix}0&\Id_{m_k}\\\Id_{m_k}&0\end{pmatrix}\right).\]
Indeed, the block structure we start with, given by $B_{ij}=0$ if $g_ig_j\ne 1$, is precisely this block structure where each $\Id_{m}$ is replaced by a square matrix, and since $B$ is nondegenerate and symmetric, so are all these blocks. The transformation group $C_{\GL(V)}(g)$ allows us free reign over all these blocks.

Now we can describe $\mf{so}(V)^{\langle g\rangle}= \mf{so}(V)\cap\mf{gl}(V)^{\langle g\rangle}$
. Recall that $\mf{gl}(V)^g=\mf{gl}_{m_1}\oplus\ldots\oplus\mf{gl}_{m_k}$.
We describe $\mf{so}(V)$ at the blocks. First, at the $m_1$ and $m_2$ block we have antisymmetric matrices, contributing the summand $\mf{so}_{m_1}\oplus\mf{so}_{m_2}$. For $m_r$, $r\ge 3$, $r$ odd, we consider the blocks $\begin{pmatrix}a_{11}&0\\0&a_{22}\end{pmatrix}$ in $ \mf{so}(V)\cap\mf{gl}(V)^{\langle g\rangle}$ of size $2m_r$. The condition that $BA$ is antisymmetric for $A\in\mf{so}(V)$ implies here that $a_{11}^T=-a_{22}$. 
Thus we obtain the description of $\mf{so} (V)^g$.

In the symplectic case,
$B$ is an antisymmetric nondegenerate form. The same block structure,  $B_{ij}=0$ if $g_ig_j\ne 1$, holds, and each block must be nondegenerate and antisymmetric as well. This implies that $m_1$ and $m_2$ are even numbers.  The antisymmetric bilinear form $B$ can be transformed to
\[B=\diag\left(J_{\frac{m_1}{2}},J_{\frac{m_2}{2}},J_{m_3},J_{m_5}\ldots, J_{m_k}\right)\]
while keeping $g$ diagonal. Here $J_m=\begin{pmatrix}0&\Id_{m}\\-\Id_{m}&0\end{pmatrix}$. This time we seek $A=\begin{pmatrix}a_{11}&a_{12}\\a_{21}&a_{22}\end{pmatrix}$ such that $JA$ is symmetric. We obtain the same condition for the diagonal blocks: $a_{11}^T=-a_{22}$, and the other blocks are required to be symmetric: $a_{12}^T=a_{12}$, $a_{21}^T=a_{21}$. When intersecting with $\mf{gl} (V)^g$ we find the Lie algebra described in the Lemma.
\end{proof}

Now we have gathered enough information to classify evaluations of the natural representation of ALiAs. 

\begin{Theorem}
\label{thm:evaluated alias}
Let $V$ be an irreducible representation of a binary polyhedral group $G^\flat$ and $\mf{g}(V)<\mf{gl}(V)$ a simple Lie algebra and $G^\flat$-submodule. 
Denote the natural representation of the corresponding Automorphic Lie Algebra, holomorphic outside $\Gamma\subset\overline{\CC}$, by $\alia{g}{V}{\Gamma}$.
Define the map $\mf{f}_{\mf{g}}:\bigslant{\overline{\CC}}{G}\rightarrow\{\text{Lie subalgebras of }\mf{g}(V) \}$ by
\[\mf{f}_{\mf{g}}(\Gamma)=\alia{g}{V}{\Gamma'}(\mu),\qquad \mu\in\Gamma\ne\Gamma'.\]
This map is well defined up to Lie algebra isomorphism. That is, $\mf{f}_{\mf{g}}(\Gamma)$ is independent of the element $\mu\in\Gamma$, of the chosen orbit of poles $\Gamma'$, of the representation $V$ and in particular independent of the polyhedral group $G$.

If $\Gamma$ is a generic orbit then $\mf{f}_{\mf{g}}(\Gamma)=\mf{g}$.
There is a \emph{linear direct sum} of the values at the exceptional orbits
\[\mf{g}=\bigoplus_{i\in\Omega} \mf{f}_{\mf{g}}(\Gamma_i).\] 
Moreover, the Lie algebra $\mf{f}_{\mf{g}}(\Gamma_i)$ is as described in Table \ref{tab:evaluated alias}, where $(m_{i,1},\ldots,m_{i,k_i})$ are the multiplicities of eigenvalues of $g_i\in G^\flat$ given in Table \ref{tab:multiplicities of eigenvalues}.
\begin{center}
\begin{table}[h!] 
\caption{The natural representation evaluated at exceptional orbits: $\mf{f}_{\mf{g}}(\Gamma_i)$.}
\label{tab:evaluated alias}
\begin{center}
\begin{tabular}{cccccc} \hline
 &$ \al $&$ \be $&$ \ga$\\
\hline
$\mf{sl}_n 
$&$\bigoplus_{r=1}^{k_\al}\mf{sl}_{m_{\al,r}}\oplus\bigoplus_{r=1}^{k_\al-1}\CC
$&$\bigoplus_{r=1}^{k_\be}\mf{sl}_{m_{\be,r}}\oplus\bigoplus_{r=1}^{k_\be-1}\CC$&$
\bigoplus_{r=1}^{k_\ga}\mf{sl}_{m_{\ga,r}}\oplus\bigoplus_{r=1}^{k_\ga-1}\CC$\\
\hline
$\mf{so}_3 $&$ \CC $&$ \CC $&$ \CC$\\
$\mf{so}_4 $&$ \CC\oplus\CC $&$ \CC\oplus\CC $&$ \CC\oplus\CC $\\
$\mf{so}_5 $&$ \CC\oplus\CC $&$ \mf{sl}_2\oplus\CC $&$  \mf{sl}_2\oplus\CC$\\
\hline
$\mf{sp}_2 $&$ \CC $&$ \CC $&$ \CC$\\
$\mf{sp}_4 $&$ \CC\oplus\CC $&$ \mf{sl}_2\oplus\CC $&$ \mf{sl}_2\oplus\CC$\\
$\mf{sp}_6 $&$ \mf{sl}_2\oplus\CC\oplus\CC $&$ \mf{sl}_2\oplus \mf{sl}_2\oplus\CC $&$ \mf{sl}_3\oplus\CC$\\
\hline 
\end{tabular}
\end{center}
\end{table}
\end{center}

In particular, we have the dimensions of $\dim \mf{f}_{\mf{g}}(\Gamma_i)$ in Table \ref{tab:dimg(V)^g}. 
\begin{center}
\begin{table}[h!] 
\caption{Dimensions of single element invariants: $\dim \mf{f}_{\mf{g}}(\Gamma_i)$.}
\label{tab:dimg(V)^g}
\begin{center}
\begin{tabular}{cccccccccccc} \hline
$ $&$\mf{sl}_2$&$\mf{sl}_3$&$\mf{sl}_4$&$\mf{sl}_5$&$\mf{sl}_6
$&$\mf{so}_3$&$\mf{so}_4$&$\mf{so}_5
$&$\mf{sp}_2$&$\mf{sp}_4$&$\mf{sp}_6$\\
\hline
$\al$&$1$&$2$&$3$&$4$&$7$&$
{1}$&${2}$&${2}
$&$1$&${2}$&${5}$\\
$\be$&$1$&$2$&$5$&$8$&$11$&$
{1}$&${2}$&${4}
$&$1$&${4}$&${7}$\\
$\ga$&$1$&$4$&$7$&$12$&$17$&$
{1}$&${2}$&${4}
$&$1$&${4}$&${9}$\\
\hline
$\Sigma$&$3$&$8$&$15$&$24$&$35$&$3$&$6$&$10$&$3$&$10$&$21$\\
\hline 
\end{tabular}
\end{center}
\end{table}
\end{center}
\end{Theorem}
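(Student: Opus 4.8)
The plan is to combine three ingredients: Proposition \ref{prop:evaluating invariant vectors}, which evaluates spaces of invariant vectors; Lemma \ref{lem:g(V)^g}, which computes the Lie algebras $\mf{g}(V)^{\langle g\rangle}$ for a single finite-order automorphism in terms of eigenvalue multiplicities; and Theorem \ref{thm:multiplicities of eigenvalues}, which asserts that for an irreducible $G^\flat$-module $V$ the eigenvalue multiplicities of $\tau(s(g_i))$ depend only on $i\in\Omega$ and $\dim V$. First I would apply Proposition \ref{prop:evaluating invariant vectors} with the module taken to be the Lie algebra: since $\mf{g}(V)$ is a $G$-module (here $G$ acting through $\pi$, as the action on $\mf{gl}(V)=\End(V)$ is always nonspinorial), it gives
\[
\alia{g}{V}{\Gamma'}(\mu)=\mf{g}(V)^{G_\mu}.
\]
If $\mu$ is generic then $G_\mu=1$ and the right-hand side is $\mf{g}(V)=\mf{g}$, giving the generic case. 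If $\mu\in\Gamma_i$ then $G_\mu$ is conjugate to $\langle g_i\rangle$ and, since conjugate subgroups give isomorphic fixed-point Lie subalgebras (conjugation by the connecting group element is a Lie algebra automorphism of $\mf{g}(V)$), we get $\mf{f}_{\mf{g}}(\Gamma_i)\cong\mf{g}(V)^{\langle g_i\rangle}$ up to isomorphism. This already establishes independence of $\mu\in\Gamma$ and of $\Gamma'$ (the latter because $\mf{f}_{\mf{g}}$ does not reference $\Gamma'$ once the evaluation formula is in place, any $\Gamma'\ne\Gamma$ working equally).

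Next I would invoke Lemma \ref{lem:g(V)^g} with $g=\tau(s(g_i))$: it expresses $\mf{sl}(V)^{\langle g_i\rangle}$, $\mf{so}(V)^{\langle g_i\rangle}$, $\mf{sp}(V)^{\langle g_i\rangle}$ purely in terms of the eigenvalue multiplicities $(m_{i,1},\ldots,m_{i,k_i})$ of $g_i$ (for the orthogonal and symplectic cases one must check that the pairing-up hypotheses on conjugate eigenvalues hold, which follows from $\chi_V$ being real-valued on such $g_i$ — itself a consequence of $g_i$ preserving the bilinear form, exactly as in the proof of Lemma \ref{lem:characters of Lie algebras}). Then Theorem \ref{thm:multiplicities of eigenvalues} tells us these multiplicity partitions depend only on $i$ and $n=\dim V$, not on the choice of $G^\flat$ nor of the section $s$; reading off the partitions from Table \ref{tab:multiplicities of eigenvalues} and substituting into Lemma \ref{lem:g(V)^g} produces precisely the entries of Table \ref{tab:evaluated alias}. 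For instance, for $\mf{sl}_n$ one gets $\bigoplus_{r=1}^{k_i}\mf{sl}_{m_{i,r}}\oplus\bigoplus_{r=1}^{k_i-1}\CC$ verbatim; for $\mf{so}_5$ at orbit $\be$ the partition is $(2,2,1)$ with one conjugate pair of blocks of size $2$ and a real block of size $1$, giving $\mf{gl}_2\oplus\mf{so}_1=\mf{sl}_2\oplus\CC$; and similarly for the $\mf{sp}$ rows. The dimension counts in Table \ref{tab:dimg(V)^g} are then immediate from the dimension formulas in Lemma \ref{lem:g(V)^g}.

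It remains to prove the linear direct sum decomposition $\mf{g}=\bigoplus_{i\in\Omega}\mf{f}_{\mf{g}}(\Gamma_i)$. Since $\mf{f}_{\mf{g}}(\Gamma_i)\cong\mf{g}(V)^{\langle g_i\rangle}$, this is exactly the statement of Corollary \ref{cor:sumV^g} applied to the $G^\flat$-module $V=\mf{g}(V)$: the sum $V^{\langle\gal\rangle}+V^{\langle\gbe\rangle}+V^{\langle\gga\rangle}$ equals $V$, and it is direct precisely when $V^G=0$, which holds here because $\mf{g}(V)$ is simple and the $G^\flat$-action is by inner automorphisms, so by Observation \ref{obs:only inner} and Schur's Lemma there is no nonzero $G^\flat$-invariant in $\mf{g}(V)$ (equivalently, $(\mf{g}(V),\triv)=0$, visible in Table \ref{tab:group decompositions of simple lie algebras}). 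Finally, the consistency check: the dimensions must satisfy $\sum_{i\in\Omega}\dim\mf{f}_{\mf{g}}(\Gamma_i)=\dim\mf{g}$, which is guaranteed by the same corollary together with Lemma \ref{lem:dimV^g}, and which one sees in the $\Sigma$-row of Table \ref{tab:dimg(V)^g}.

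The main obstacle I anticipate is not any single deep step but the bookkeeping in the orthogonal/symplectic rows: one must carefully verify that the eigenvalue partitions in Table \ref{tab:multiplicities of eigenvalues} are compatible with the block-pairing structure required by Lemma \ref{lem:g(V)^g} (conjugate eigenvalues occurring with equal multiplicity, and $m_1,m_2$ even in the symplectic case), and that the resulting $\mf{gl}_{m}$ and $\mf{so}_{m}/\mf{sp}_{m}$ summands assemble into the listed low-rank isomorphisms ($\mf{so}_3\cong\mf{sp}_2\cong\mf{sl}_2$, $\mf{so}_4\cong\mf{sl}_2\oplus\mf{sl}_2$, $\mf{so}_1=\mf{so}_2=0\oplus\CC$ conventions, etc.). A secondary subtlety worth a sentence is confirming that the $\mf{f}_{\mf{g}}$ appearing in the theorem literally ignores $V$ once we pass to isomorphism classes — this is where Theorem \ref{thm:multiplicities of eigenvalues} does the essential work, turning an a priori $V$- and $G$-dependent object into one depending only on $\dim V = n$ and the orbit label, and then the classification of which $(\mf{g},n)$ pairs actually arise (read off Table \ref{tab:group decompositions of simple lie algebras}) closes the argument.
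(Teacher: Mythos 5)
Your proposal is correct and matches the paper's proof almost verbatim: the paper combines Proposition \ref{prop:evaluating invariant vectors}, Lemma \ref{lem:g(V)^g}, Theorem \ref{thm:multiplicities of eigenvalues} and Corollary \ref{cor:sumV^g} in exactly the way you describe. The only refinement worth noting is that the ``bookkeeping'' you flag as the main obstacle is precisely where the paper spends most of its effort: for the orthogonal and symplectic rows the eigenvalue partitions do not by themselves determine which eigenvalues are real, and the paper resolves the ambiguous cases (e.g.\ $\mf{so}_4$ and $\mf{sp}_4$ at $\Gga$) by using the identity $\dim\mf{g}(V)=\sum_{i\in\Omega}\dim\mf{g}(V)^{\langle g_i\rangle}$ as an input rather than merely as a consistency check.
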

\begin{proof}
This result is a combination of previous results. Firstly Proposition \ref{prop:evaluating invariant vectors}, which shows that $\mf{f}_{\mf{g}}(\Gamma)=\mf{g}$ if $\Gamma$ is a generic orbit, and $\mf{f}_{\mf{g}}(\Gamma_i)=\mf{g}(V)^{\langle g_i\rangle}$ otherwise. Secondly, the description of $\mf{g}(V)^{\langle g_i\rangle}$ given in Lemma \ref{lem:g(V)^g}.

The linear direct sum $\mf{g}(V)=\bigoplus_{i\in\Omega} \mf{g}(V)^{\langle g_i\rangle}$ follows immediately from Schur's Lemma and Corollary \ref{cor:sumV^g}. Notice that this direct sum is not respected by the Lie bracket if $\mf{g}(V)$ is a simple Lie algebra.

In Theorem \ref{thm:multiplicities of eigenvalues} we found the multiplicities of eigenvalues of $g_i\in G^\flat$, cf.~Table \ref{tab:multiplicities of eigenvalues}, which are independent of the group.
This gives us the Lie algebra structure of $\mf{sl}(V)^{\langle g_i\rangle}$ as shown in Table \ref{tab:evaluated alias}. However, this is not enough information to determine the Lie algebra structure of the other cases: $\mf{so}(V)^{\langle g_i\rangle}$ and $\mf{sp}(V)^{\langle g_i\rangle}$, thus Lemma \ref{lem:g(V)^g} comes into play. By this lemma it is sufficient to determine the multiplicities of \emph{real} eigenvalues of irreducible representations. This can be done in an ad hoc manner, going through all cases and occasionally using the fact that  $\dim\mf{g}(V)=\sum_{i\in\Omega}\dim \mf{g}(V)^{\langle g_i\rangle}$. The findings are sketched in the table at the end of this proof.

For the orthogonal cases we may assume that nonreal eigenvalues come in conjugate pairs, but real eigenvalues can be single.
If $n=\dim V=3$ and $i\in\{\al,\be\}$ then the multiplicities of eigenvalues of $g_i$ are $(1,1,1)$, cf.~Table \ref{tab:multiplicities of eigenvalues}. Given that there are only two real roots of unity, $\pm1$, there must be precisely one conjugate pair of nonreal eigenvalues and one real eigenvalue. For $\gga$ we have multiplicities $(2,1)$, which can only correspond to real eigenvalues. Using Lemma \ref{lem:g(V)^g} we have now established the $\mf{so}_3$-row of Table \ref{tab:evaluated alias}.

Now consider $n=4$. The group element $\gal$ has eigenvalue multiplicities $(1,1,1,1)$. The eigenvalues can either be two conjugate pairs, which would imply $\dim \mf{so}_4^{\langle \gal\rangle}=2$, or two reals and one conjugate pair, yielding $\dim \mf{so}_4^{\langle \gal\rangle}=1$. 
At $\be$ we have multiplicities $(2,1,1)$, implying two identical reals and one conjugate pair and $\dim \mf{so}_4^{\langle \gbe\rangle}=2$. 
Finally, the multiplicities $(2,2)$ for $\ga$ yield all nonreal or all real eigenvalues, implying respectively $\dim \mf{so}_4^{\langle \gga\rangle}=4$ and $\dim \mf{so}_4^{\langle \gga\rangle}=2$. The fact that the dimensions add up to $6$ fixes everything.

If $n=5$ we can see that $\gbe$, with multiplicities $(2,2,1)$, has precisely one real eigenvalue and $\gga$, with multiplicities $(3,2)$, has all real eigenvalues. The $\al$ situation is then fixed by dimension.
The case $n=6$ is impossible for the orthogonal algebra. In Section \ref{sec:characters} we confirm that there is no orthogonal irreducible representation of this dimension.

Now we go through all the symplectic cases. This time we have the additional condition that real eigenvalues occur in even numbers.
If $n=2$ we have multiplicities $(1,1)$. The eigenvalues are not real. Therefore $\mf{sp}(V)^{g_i}\cong \mf{gl}_1$ for all $i\in\Omega$. 
If $n=4$ then $\gal\sim\diag(z_1, \bar{z}_1, z_2, \bar{z}_2)$ and $\gbe\sim\diag(\pm1, \pm1, z_3, \bar{z}_3)$.
We know that \[\dim\mf{sp}(V)^{\gga}=\dim\mf{sp}(V)-\dim\mf{sp}(V)^{\gal}-\dim\mf{sp}(V)^{\gbe}=10-2-4=4.\] This means we can decide whether the multiplicities $(2,2)$ come from all real eigenvalues $(1,1,-1,-1)$ or a double nonreal eigenvalue with conjugates, since the first contributes a dimension of $\frac{1}{2}(2\cdot3+2\cdot3)=6$ and the latter $\frac{1}{2}(2^2+2^2)=4$.
The case $n=6$ is easier. At $\al$, $(2,1,1,1,1)$, we have one pair of real eigenvalues and otherwise distinct nonreal eigenvalues. At $\be$, $(2,2,2)$, there is also exactly one real pair and at $\ga$, $(3,3)$ there cannot be any real eigenvalues.

The eigenvalue structure can be schematically summarised as follows,
\begin{center}
\begin{tabular}{cccc}
 & $\al$ & $\be$ & $\ga$\\
\hline
$\mf{so}_3$ & $(\pm1, z, \bar{z})$ & $(\pm1, z, \bar{z})$ & $(\pm1,\pm1, \mp1)$ \\
$\mf{so}_4$ & $(z_1, \bar{z}_1,z_2, \bar{z}_2)$ & $(\pm1, \pm1, z, \bar{z})$ & $(1, 1, -1, -1)$ \\
$\mf{so}_5$ & $(\pm1, z_1, \bar{z}_1,z_2, \bar{z}_2)$ & $(\pm1, z,z, \bar{z},\bar{z})$ & $(\pm1,\pm1,\pm1,\mp1, \mp1)$ \\
$\mf{sp}_2$ & $(z, \bar{z})$ & $(z, \bar{z})$ & $(z, \bar{z})$ \\
$\mf{sp}_4$ & $(z_1, \bar{z}_1,z_2, \bar{z}_2)$ & $(\pm1, \pm1, z, \bar{z})$ & $(z,z, \bar{z},\bar{z})$ \\
$\mf{sp}_6$ & $(\pm1,\pm1, z_1, \bar{z}_1,z_2, \bar{z}_2)$ & 
$(\pm1, \pm1, z,z, \bar{z},\bar{z})$ & 
$(z,z, z, \bar{z},\bar{z},\bar{z})$ \\
\end{tabular}
\end{center} 
where $z,z_i\in\CC\setminus\RR$ and $z_1\ne z_2$. At each position in the table the numbers $z$, $z_1$ and $z_2$ are redefined.
\end{proof}

We end this chapter by attaching an $|\Omega|$-tuple of integers to the relevant selection of simple Lie algebras, abusing the notation of Definition \ref{def:kappa(chi)}.
\begin{Definition}[$\kappa(\roots)$]
\label{def:kappa(roots)}
Let $V$ be an irreducible representation of a binary polyhedral group $G^\flat$ and $\mf{g}(V)$ a  $G^\flat$-submodule of $\mf{gl}(V)$ and a simple Lie subalgebra in the isomorphism class $\roots$. We define \[\kappa(\roots)_i=\nicefrac{1}{2}\,\codim\mf{g}(V)^{\langle g_i\rangle},\qquad i\in\Omega.\]
Equivalently, using Definition \ref{def:kappa(chi)} we have $\kappa(\roots)=\kappa(\chi_{\mf{g}(V)})$.
\end{Definition}
Theorem \ref{thm:evaluated alias} shows that $\kappa(\roots)$ is well defined, that is, the dimension of $\mf{g}(V)^{\langle g_i\rangle}$ only depends on $\roots$ and $i\in\Omega$. This can also be deduced from the decompositions of $\mf{g}(V)$ into irreducible representations, cf.~Table \ref{tab:group decompositions of simple lie algebras}, for which all these dimensions have been calculated and summarised in Table \ref{tab:dimV^g}. 
However, this does not provide the full Lie algebra structure.
The analysis of this section enabled us to see that 
the Lie algebra structure of $\mf{f}_{\mf{g}}(\Gamma)$ is an invariant of ALiAs. Moreover, we see that all values of $\mf{f}_{\mf{g}}$ contain a Cartan subalgebra for $\mf{g}$, and that they are reductive \cite{fulton1991representation} (this holds true in bigger generality, as shown by Kac in 1969 \cite{kac1969automorphisms, kac1994infinite}). In particular, their codimensions are even, since both semisimple and reductive Lie algebras are linear direct sums of a Cartan subalgebra and an even number of one-dimensional weight spaces. Thus $\kappa(\roots)_i$ is an integer.





\chapter[Structure Theory for Automorphic Lie Algebras]{Structure Theory for Automorphic Lie Algebras}
\label{ch:C}
In this chapter we discuss several classes of Lie algebras. First we define Polynomial Automorphic Lie Algebras in Section \ref{sec:palias}, which is a natural continuation from the previous chapters. Secondly, we move on to ALiAs in Section \ref{sec:alias}, the primary subject of study. We define a normal form for these Lie algebras and find examples of explicit representations, namely the natural representation, i.e.~the invariant matrices, and a much simpler representation, which we will call the \emph{matrices of invariants}.

In Section \ref{sec:structure theory for alias} all invariants of ALiAs (Concept \ref{conc:invariant of alias}) are combined in Theorem \ref{thm:alias satisfy game of roots} to obtain a list of constraints for the Lie algebra structure. A natural language in which to express these constraints is that of cochains and their boundaries on root systems, leading to root system cohomology and a class of Lie algebras $\AL{\roots}$ associated to $2$-cocycles $\omega^2$ on a root system $\roots$. The isomorphism question (Question \ref{q:isomorphism question}) is then revisited in light of the new information.
At the end of this chapter we will be able to summarise the current state of the art and specify the open problems.

\section{Polynomial Automorphic Lie Algebras}
\label{sec:palias}

In Chapter \ref{ch:A} we discussed the polynomial ring $R=\CC[U]$ were $U$ is a representation of a finite group. More specifically, in Section \ref{sec:invariant vectors} we discussed polynomial invariant vectors $(V_\chi\otimes R)^G\cong R^{\overline{\chi}}$. If we now replace the $G$-module $V_\chi$ by one that is also a Lie algebra, like the spaces $\mf{g}(V)$, which formed the topic of Chapter $\ref{ch:B}$, then one obtains a Lie algebra. Indeed, the Lie bracket of $\mf{g}(V)$ induces a Lie bracket on $\mf{g}(V)\otimes R$ by linear extension over $R$. The fact that $G$ acts by Lie algebra isomorphisms on $\mf{g}(V)$ implies that  $\palia{g}{V}$ is closed under this bracket, that is, if $ga=a$ and $gb=b$ then $g[a,b]=[ga,gb]=[a,b]$.
\begin{Definition}[Polynomial Automorphic Lie Algebra]
Let $V$ and $U$ be finite dimensional representations of $G$. Suppose $\mf{g}(V)$ is a Lie subalgebra and $G$-submodule of $\mf{gl}(V)$ and $R=\CC[U]$ the polynomial ring. The Lie algebra of invariants 
\[\palia{g}{V}\]
is called a \emph{Polynomial Automorphic Lie Algebra} based on $\mf{g}(V)$.
\end{Definition}
Classical results from invariant theory (cf.~Section \ref{sec:classical invariant theory}) demonstrate that the space of invariants $\palia{g}{V}$ is a finitely generated free module over the polynomial ring in a choice of primary invariants of $G$ in $R$. Given such a choice, one can predict the number of generators, which is a fixed multiple of $\dim\mf{g}(V)$ (cf.~Proposition \ref{prop:stanley}). The Lie algebra induced by this space, which we also denote by $\palia{g}{V}$, is defined by a related number of structure constants, all being polynomials in primary invariants.

We continue our string of dihedral examples with the corresponding Polynomial ALiA.
\begin{Example}[${\palia[\DD_N]{sl}{V_{\psi_j}}}$]
\label{ex:dihedral palia}
Let $\psi_j$ be an arbitrary two-dimensional irreducible character of the dihedral group $\DD_N$. In the introduction of Chapter \ref{ch:B} we found explicit bases for the decomposition  $\mf{sl}(V_{\psi_j})=V_{\chi_2}\oplus V_{\psi_{2j}}$. Just to recall, $V_{\chi_2}$ is in this context the space of diagonal traceless $2\times 2$ matrices and  $V_{\psi_{2j}}$ is the space of $2\times 2$ matrices with zeros on the diagonal. The basis 
$\left\{\begin{pmatrix}0&1\\0&0\end{pmatrix},\begin{pmatrix}0&0\\1&0\end{pmatrix}\right\}$ for the latter space 
corresponds to our preferred basis (\ref{eq:standard matrices}) for $\psi_{2j}$.

In order to find the polynomial invariant matrices,
one can at this stage simply plug in the information of dihedral invariant vectors summarised in Table \ref{tab:invariants, N odd} and Table \ref{tab:invariants, N even}. 
We find generators
\begin{center}
\begin{tabular}{lcr}
$\eta_1=  \begin{pmatrix} 0 & X^{2j} \\ Y^{2j} & 0 \end{pmatrix},$ &
$\eta_2 =\begin{pmatrix} 0 & Y^{N-2j} \\ X^{N-2j} & 0 \end{pmatrix},$ & 
$\eta_3 =\nicefrac{1}{2}(X^N-Y^N) \begin{pmatrix}1 & 0 \\ 0 &-1 \end{pmatrix},$
\end{tabular}
\end{center}
and the space of invariant matrices
\[\palia{sl}{V_{\psi_j}}=\CC[\fal,\fbe]\eta_1\oplus\CC[\fal,\fbe]\eta_2\oplus\CC[\fal,\fbe]\eta_3,\] 
where the primary invariants $\fal$ and $\fbe$ are given by (\ref{eq:Fs}).

The Lie algebra structure is found by computing the commutator brackets 
\begin{align*}
&[\eta_2,\eta_3]=2(-{F_\al}^{N-2j}\eta_1+{F_\be } \eta_2),\\
&[\eta_3,\eta_1]=2({F_\be } \eta_1-{F_\al}^{2j} \eta_2),\\
&[\eta_1,\eta_2]=2\eta_3,
\end{align*}
and we notice that the structure constants are indeed in $\CC[\fal,\fbe]$.


\end{Example}

If we pick a different extension of the dihedral group we can find a Lie algebra with more generators. For instance, in the next example we consider the binary dihedral group, which we refer to as the \emph{dicyclic group} $\bd{N}$. The lowest possible degree of the invariants in a homogeneous system of parameters $\{\theta_1,\theta_2\}$ is $\deg\theta_1=4$ and $\deg{\theta_2}=2N$. Then Proposition \ref{prop:stanley} provides the number of secondary invariants: $\frac{\deg \theta_1\deg \theta_2}{|\bd{N}|}=\frac{4\cdot 2N}{4N}=2$ times the dimension of the base vector space. For example, if this base vector space is $\mf{sl}_2(\CC)$ then there are $2\cdot 3=6$ generators.
\begin{Example}[${\palia[\bd{N}]{sl}{V}}$]
\label{ex:dicyclic palia}
Just like the dihedral group, the dicyclic group \[\bd{N}=\langle r,s\;|\;r^{2N}=1,\,s^2=r^N,\,sr=r^{-1}s\rangle\] has an abelian subgroup of index 2: $\langle r\rangle\cong \zn{2N}$. Therefore, the dimension of an irreducible representation of $\bd{N}$ is at most $2$. The two-dimensional cases are conjugate to \[r=\begin{pmatrix}\omega_{2N}^j&0\\0&\omega_{2N}^{-j}\end{pmatrix},\qquad s=\begin{pmatrix}0&i^j\\i^j&0\end{pmatrix}\] where $\omega_{2N}$ is a primitive $2N$-th root of unity. We take $j=1$ for the natural representation $U$ with dual basis $\{X,Y\}$. Let $V$ be the above representation, with parameter $j$ arbitrary. Then the space of polynomial invariant traceless matrices takes the form
\[\palia[\bd{N}]{sl}{V}=\CC[\theta_1,\theta_2]\left(\zeta_1\oplus\zeta_2\oplus\zeta_3\oplus\zeta_4\oplus\zeta_5\oplus\zeta_6\right)\]
where
\begin{align*}
&\theta_1= (XY)^2,\\
&\theta_2= X^{2N}+(-1)^NY^{2N},\\
&\zeta_1=\begin{pmatrix}XY&0\\0&-XY\end{pmatrix},\\
&\zeta_2=\begin{pmatrix}X^{2N}-(-1)^NY^{2N}&0\\0&-X^{2N}+(-1)^NY^{2N}\end{pmatrix},\\
&\zeta_3=\begin{pmatrix}0&X^{2j}\\Y^{2j}&0\end{pmatrix},\\
&\zeta_4=\begin{pmatrix}0&X^{2j+1}Y\\-XY^{2j+1}&\end{pmatrix},\\
&\zeta_5=\begin{pmatrix}0&(-1)^NY^{2N-2j}\\X^{2N-2j}&0\end{pmatrix},\\
&\zeta_6=\begin{pmatrix}0&-(-1)^NXY^{2N-2j+1}\\X^{2N-2j+1}Y&0\end{pmatrix}.
\end{align*}
In terms of ground forms (which are defined by $\DD_N$) the primary invariants are $\theta_1=\fal^2$ and $\theta_2=(\fbe+\fga)^2+(-1)^N(\fbe-\fga)^2$. These generators can be obtained in a similar manner as is shown for the dihedral group. A detailed walk through does not add value to this thesis and is left out. 

The Lie structure can be computed directly.
\[
\begin{array}{lll}
[\zeta_1,\zeta_2]=0&[\zeta_1,\zeta_3]=2\zeta_4&[\zeta_1,\zeta_4]=2\theta_1\zeta_3\\

[\zeta_2,\zeta_3]=2\theta_2\zeta_3-4\theta_1^j\zeta_5&[\zeta_2,\zeta_4]=2\theta_2\zeta_4+4\theta^j\zeta_6&[\zeta_2,\zeta_5]=4(-1)^N\theta_1^{N-j}\zeta_3-2\theta_2\zeta_5\\

[\zeta_3,\zeta_4]=-2\theta_1^j\zeta_1&[\zeta_3,\zeta_5]=\zeta_2&[\zeta_3,\zeta_6]=\theta_2\zeta_1\\

[\zeta_4,\zeta_5]=\theta_2\zeta_1&[\zeta_4,\zeta_6]=\theta_1\zeta_2&\\

[\zeta_5,\zeta_6]=2\theta_1^{N-j}\zeta_1&&[\zeta_1,\zeta_5]=-2\zeta_6\\

&[\zeta_1,\zeta_6]=-2\theta_1\zeta_5&[\zeta_2,\zeta_6]=-4(-1)^N\theta_1^{N-j}\zeta_4-2\theta_2\zeta_6
\end{array}\]
Despite looking completely different, this Polynomial ALiA gives exactly the same ALiA as ${\palia[\DD_N]{sl}{V_{\psi_j}}}$ from Example \ref{ex:dihedral palia} does, when the prehomogenisation and homogenisation operators from Section \ref{sec:homogenisation} are applied. This illustrates some significant differences between ALiAs and Polynomial ALiAs.
\end{Example}
No attempt has been made to find a set of generators for this module of invariants that makes the Lie algebra structure more transparent, and it is obvious from the example that without such effort it is difficult to make sense of this structure, even when there are merely $6$ generators. To cure this problem we introduce a \emph{normal form} in Section \ref{sec:normal form}.
But first we will find some actual ALiAs as Lie subalgebras of Polynomial ALiAs. 

\section{Automorphic Lie Algebras}
\label{sec:alias}

Automorphic Lie Algebras are Lie subalgebras of current algebras of Krichever-Novikov type i.e.~the tensor product $\mf{g}\otimes\mero_\Gamma$, also known as \emph{loop algebras} (usually $\Gamma=\{0,\infty\}$ so that $\mero_\Gamma=\CC[\lambda,\lambda^{-1}]$). Current algebras are commonly denoted by $\overline{\mf{g}}$ (cf.~\cite{babelon2003introduction,schlichenmaier2003higher,vasil2014harmonic}) and we introduce a notation for Automorphic Lie Algebras in line with this convention:
\[\salia{g}{V}{\Gamma}=\alia{g}{V}{\Gamma}.\]
Polynomial ALiAs serve as a stepping stone to reach ALiAs in this exposition. The step can be taken by the prehomogenisation and homogenisation operators from Section \ref{sec:homogenisation}, thanks to Lemma \ref{lem:only quotients of invariants} which shows that the full ALiA is obtained in this way. Lemma \ref{lem:prehomogenisation mod f is homogenisation} explains the relation between the prehomogenised Polynomial ALiAs and ALiAs. 
In fact, since the Lie algebra structure of the ALiA depends only on ring structure, that is, addition and multiplication, the Lemma gives the following isomorphism of Lie algebras. We adopt the notation from Chapter \ref{ch:A}.
\begin{Proposition} 
\label{prop:prehomogenised palia is alia} Let $G<\Aut(\overline{\CC})$ be a finite polyhedral group and  $\Gamma\in\bigslant{\overline{\CC}}{G}$. Moreover, let $G^\flat$ be a sufficient extension of $G$. If $|G|$ divides $d\in\NN$ then there is a Lie algebra isomorphism
\[\p{d}\palia[G^\flat]{g}{V}\textrm{mod }F_\Gamma\cong \salia{g}{V}{\Gamma}.\]
In particular, any Automorphic Lie Algebra is a quotient of $\p{|G|}\palia[G^\flat]{g}{V}$.
\end{Proposition}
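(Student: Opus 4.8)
The plan is to notice that almost all of the work is already done: Lemma~\ref{lem:prehomogenisation mod f is homogenisation} supplies an isomorphism of modules $\p{d}(\mf{g}(V)\otimes R)^{G^\flat}\,\textrm{mod}\,F_\Gamma\cong(\mf{g}(V)\otimes\mero_\Gamma)^G$, realised by the homogenisation operator $\h{\Gamma}$, and all that remains is to upgrade this module isomorphism to a Lie algebra isomorphism and then to read off the ``in particular'' clause by specialising $d=|G|$. So the proof will consist of two parts: checking that $\h{\Gamma}$ intertwines the two Lie brackets, and a one-line deduction of the quotient statement.

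For the Lie algebra bookkeeping I would first recall that the bracket on the current algebra $\mf{g}(V)\otimes R$ is the $R$-bilinear extension of the bracket of $\mf{g}(V)$, so that on homogeneous pure tensors $[v\otimes P,\,w\otimes Q]=[v,w]\otimes PQ$, and likewise for $\mf{g}(V)\otimes\mero_\Gamma$ over $\mero_\Gamma$. Since $G^\flat$ (respectively $G$) acts by Lie algebra automorphisms on $\mf{g}(V)$, the spaces of invariants $\palia[G^\flat]{g}{V}$ and $\salia{g}{V}{\Gamma}$ are Lie subalgebras of the respective current algebras. A small point to record is that $\p{d}\palia[G^\flat]{g}{V}$ is again a subalgebra: by the formula above, the bracket of two homogeneous invariant vectors of degrees divisible by $d$ is a homogeneous invariant vector of degree the sum of the two, again divisible by $d$, and $\p{d}$ is just the projection onto the span of such vectors, so $[\p{d}\palia[G^\flat]{g}{V},\,\p{d}\palia[G^\flat]{g}{V}]\subset\p{d}\palia[G^\flat]{g}{V}$.

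Next I would verify that $\h{\Gamma}$ preserves brackets. As $\h{\Gamma}$ is linear it is enough to test it on homogeneous invariant pure tensors; since $|\Gamma|$ divides $|G|$ and hence $d$, for such a tensor $v\otimes P$ with $\deg P=r|\Gamma|$ we have $\h{\Gamma}(v\otimes P)=v\otimes F_\Gamma^{-r}P$, the operator acting trivially on the $\mf{g}(V)$ factor. Then for a second homogeneous invariant pure tensor $w\otimes Q$ with $\deg Q=s|\Gamma|$, using $\deg(PQ)=(r+s)|\Gamma|$,
\[
\h{\Gamma}\big([v\otimes P,\,w\otimes Q]\big)=\h{\Gamma}\big([v,w]\otimes PQ\big)=[v,w]\otimes F_\Gamma^{-(r+s)}PQ=\big[v\otimes F_\Gamma^{-r}P,\;w\otimes F_\Gamma^{-s}Q\big]=\big[\h{\Gamma}(v\otimes P),\,\h{\Gamma}(w\otimes Q)\big],
\]
the middle equality being precisely the fact, already used in the proof of Lemma~\ref{lem:prehomogenisation mod f is homogenisation}, that $\h{\Gamma}$ respects products. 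Hence $\h{\Gamma}$ restricts to a Lie algebra homomorphism on $\p{d}\palia[G^\flat]{g}{V}$, and Lemma~\ref{lem:prehomogenisation mod f is homogenisation} identifies the induced map on classes $\textrm{mod}\,F_\Gamma$ with a linear bijection onto $\salia{g}{V}{\Gamma}$; a bracket-preserving linear bijection of Lie algebras is a Lie algebra isomorphism, which is the displayed isomorphism.

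For the last assertion take $d=|G|$. By Lemma~\ref{lem:only quotients of invariants} the map $\h{\Gamma}\colon\p{|G|}\palia[G^\flat]{g}{V}\to\salia{g}{V}{\Gamma}$ is surjective, and by the previous paragraph it is a Lie algebra homomorphism, so $\salia{g}{V}{\Gamma}$ is a Lie algebra quotient of $\p{|G|}\palia[G^\flat]{g}{V}$. I do not expect a genuine obstacle here: the only thing demanding attention is the bookkeeping of the exponents of $F_\Gamma$ and the justification that it suffices to test the bracket on homogeneous pure tensors, and this is purely formal because the current algebra is graded in a way compatible with its bracket, so nothing is needed beyond what Lemmas~\ref{lem:only quotients of invariants} and~\ref{lem:prehomogenisation mod f is homogenisation} already provide.
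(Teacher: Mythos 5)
Your proposal is correct and follows exactly the route the paper intends: the paper gives no separate proof of this proposition, merely remarking before its statement that Lemma~\ref{lem:prehomogenisation mod f is homogenisation} upgrades to a Lie algebra isomorphism "since the Lie algebra structure of the ALiA depends only on ring structure, that is, addition and multiplication." Your write-up simply fills in the details of that one-line argument (closure of the prehomogenised subspace under the bracket, compatibility of $\h{\Gamma}$ with products on homogeneous pure tensors, and surjectivity via Lemma~\ref{lem:only quotients of invariants}), all of which check out.
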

The significance of this result lies in the fact that $\p{|G|}\palia[G^\flat]{g}{V}$ is independent of the orbit $\Gamma$. Therefore, one can study all ALiAs $\left\{\salia{g}{V}{\Gamma}\;|\;\Gamma\in\bigslant{\overline\CC}{G}\right\}$ by studying  the one Lie algebra $\p{|G|}\palia[G^\flat]{g}{V}$. Note that this does not imply that ALiAs with different pole orbits are isomorphic. 

\begin{Example}[${\salia[\DD_N]{sl}{V_{\psi_j}}{\Gamma}}$]
\label{ex:dihedral alia}
Consider the Polynomial ALiA $\palia[\DD_N]{sl}{V_{\psi_j}}$ from Example \ref{ex:dihedral palia} as starting point. To determine the image of this space under the prehomogenisation projection $\p{|\DD_N|}=\p{2N}$ we need to know whether $N$ is odd or even. First suppose $N$ is odd. The Poincar\'e series, with assisting dummy symbols $\al$ and $\be$,
\begin{multline*}
P\left(\palia[\DD_N]{sl}{V_{\psi_j}},t\right)=\frac{t^{2j}+t^{N-2j}+t^{N}}{(1-\al t^2)(1-\be t^N)}\\
=\frac{(t^{2j}+t^{N-2j}+t^{N})(1+\al t^2+\ldots+\al^{N-1}t^{2N-2})(1+\be t^N)}{(1-\al^N t^{2N})(1-\be^2 t^{2N})}
\end{multline*}
is mapped to
\[\p{2N}P\left(\palia[\DD_N]{sl}{V_{\psi_j}},t\right)=\frac{\al^{N-j}t^{2N}+\al^j\be t^{2N}+\be t^{2N}}{(1-\al^N t^{2N})(1-\be^2 t^{2N})}.\]
This is in agreement with (\ref{eq:prehomogenised generating function}). We have found the prehomogenised Lie algebra
\begin{equation}
\label{eq:prehomogenised dihedral palia}
\p{2N}\palia[\DD_N]{sl}{V_{\psi_j}}=\CC[\fal^N,\fbe^2]\left(\fal^{N-j}\eta_1\oplus \fal^j\fbe\eta_2\oplus\fbe \eta_3\right).
\end{equation}

In order to find ALiAs with $\DD_N$ symmetry via Polynomial ALiAs we need a sufficient extension of the dihedral group. If $N$ is odd then $\DD_N$ itself will suffice, but if $N$ is even we need something more, for instance $\DD_{2N}$, cf.~Secion \ref{sec:binary polyhedral groups}.

Clearly $\palia[\DD_{2N}]{sl}{V_{\psi_j}}$ is obtained by the substitution $N\mapsto 2N$ in the expression for $\palia[\DD_{N}]{sl}{V_{\psi_j}}$ of Example \ref{ex:dihedral palia}. We would, however, like to express the primary invariants in terms of the ground forms as defined by the action of the original group on the Riemann sphere, rather than its extension. We take the primary invariants $\fal$ and $\fbe$ of $\DD_N$ to construct primary invariants $\fal$ and $\fbe^2$ for $\DD_{2N}$. The Poincar\'e series reads
\begin{align*}
P\left(\palia[\DD_{2N}]{sl}{V_{\psi_j}},t\right)&=\frac{t^{2j}+t^{2N-2j}+t^{2N}}{(1-\al t^2)(1-\be^2 t^{2N})}\\
&=\frac{(t^{2j}+t^{N-2j}+t^{N})(1+\al t^2+\ldots+\al^{N-1}t^{2N-2})}{(1-\al^N t^{2N})(1-\be^2 t^{2N})},
\end{align*}
and is mapped to
\[\p{2N}P\left(\palia[\DD_N]{sl}{V_{\psi_j}},t\right)=\frac{\al^{N-j}t^{2N}+\al^{j} t^{2N}+t^{2N}}{(1-\al^N t^{2N})(1-\be^2 t^{2N})}.\]
We find the prehomogenised Lie algebra
\[\p{2N}\palia[\DD_{2N}]{sl}{V_{\psi_j}}=\CC[\fal^N,\fbe^2]\left(\fal^{N-j}\eta_1\oplus \fal^{j}(\eta_2|_{N\mapsto 2N})\oplus (\eta_3|_{N\mapsto 2N})\right).\]
Let us compare this module with the previous, (\ref{eq:prehomogenised dihedral palia}). Their first generators are identical and the last only differs by a factor $\nicefrac{1}{2}$. 
Finally we notice that
\begin{align*}
2\fal^j\fbe\eta_2&=(XY)^j(X^N+Y^N)\begin{pmatrix}0&Y^{N-2j}\\X^{N_2j}&0\end{pmatrix}\\
&=\begin{pmatrix}0&(XY)^{N-j}X^{2j}+(XY)^jY^{2N-2j}\\(XY)^{N-j}Y^{2j}+(XY)^jX^{2N-2j}&0\end{pmatrix}\\
&=\fal^{N-j}\eta_1+\fal^j\eta_2|_{N\mapsto 2N}.
\end{align*}
In other words, we have found the same module as before.
Regardless of whether $N$ is odd or even, if a sufficient extension of the dihedral group $\DD_N$ is used, say $\DD_N^\flat$, then the image of the prehomogenisation operator equals (\ref{eq:prehomogenised dihedral palia}). One could even try the $\bd{N}$-symmetric Polynomial ALiA from Example \ref{ex:dicyclic palia}.
This reflects the fact that $\p{|G|}\palia[G^\flat]{g}{V}\textrm{mod }F_\Gamma \cong \salia{g}{V}{\Gamma}$ is a space of $G$-invariants rather than $G^\flat$-invariants (also observed in Example \ref{ex:automorphic forms}). 

We can conclude with
\[\p{2N}\palia[\DD_{N}^\flat]{sl}{V_{\psi_j}}=\CC[\fal^N,\fbe^2]\left(\tilde{\eta}_1\oplus \tilde{\eta}_2\oplus \tilde{\eta}_3\right),\]
where
\begin{equation}
\label{eq:prehomogenised dihedral invariant matrices}
\begin{array}{lcccl}
\tilde{\eta}_1&=&\fal^{N-j}\eta_1&=&\fal^N\begin{pmatrix}0&\lambda^j\\\lambda^{-j}&0\end{pmatrix},\\
\tilde{\eta}_2&=& \fal^j\fbe\eta_2&=&\fal^N\begin{pmatrix}0&\nicefrac{1}{2}(\lambda^j+\lambda^{j-N})\\\nicefrac{1}{2}(\lambda^{-j}+\lambda^{N-j})&0\end{pmatrix},\\
\tilde{\eta}_3&=&\fbe \eta_3&=&\fbe\fga \begin{pmatrix}1&0\\0&-1\end{pmatrix}.
\end{array}
\end{equation}
One readily computes the Lie brackets
\begin{align*}
&[\tilde{\eta}_2,\tilde{\eta}_3]=2(-{F_\be }^{2}\tilde{\eta}_1+{F_\be }^2 \tilde{\eta}_2),\\
&[\tilde{\eta}_3,\tilde{\eta}_1]=2({F_\be }^2 \tilde{\eta}_1-{F_\al}^{N} \tilde{\eta}_2),\\
&[\tilde{\eta}_1,\tilde{\eta}_2]=2F_\al^N\tilde{\eta}_3,
\end{align*}
and notices the structure constants are indeed elements of $\p{|G|}R^{G^\flat}=\CC[\fal^\nal,\fbe^\nbe]$.
\end{Example}
One can apply the homogenisation operator $\h{\Gamma}$ to obtain the ALiA. 
The particular choice of generators from the previous example is very similar to the generators found in other works.
\begin{Example}[Comparison of dihedral results]
It is useful to compare ours with previous results. In particular, let us consider \cite{LM05comm} and \cite{bury2010automorphic}, which contain explicit descriptions of ALiAs with dihedral symmetry with poles restricted to the orbit of two points, which we call $\Gal=\{0,\infty\}$.

If we take $\eta_1$, $\eta_2$ and $\eta_3$ to be the generators of the Polynomial ALiA of Example \ref{ex:dihedral palia}, then we can recover the generators (27) in \cite{LM05comm}, page 190, where $N=2$ and $j=1$.
These are nothing but
$\h{\Gal}(\eta_1)$, $\h{\Gal}(\eta_2)$ and $2\h{\Gal}(\eta_3)$, respectively.

Similarly,  generators (4.10) in \cite{bury2010automorphic}, page 81, for general $N$ and $j$, are $\h{\Gal}(\eta_1)$, $\h{\Gal}(2\eta_2-\eta_1)$ and $4\h{\Gal}(\eta_3)$ respectively, if $N$ is odd, and $\h{\Gal}(\eta_1)$, $\h{\Gal}(\eta_2)$ and $2\h{\Gal}(\eta_3)$ respectively, if $N$ is even.
%
%
\end{Example}

\subsection{A Cartan-Weyl Normal Form}
\label{sec:normal form}
The normal form introduced in this section improves our understanding of the structure of ALiAs and it is an aid in establishing isomorphisms between these Lie algebras. 

Throughout this section let $\mf{g}$ be a semisimple complex Lie algebra of dimension $k$ and rank $\ell$,  
\[k=\dim\mf{g},\qquad \ell=\dim\mf{h},\]
where $\mf{h}$ is a Cartan subalgebra (CSA) of $\mf{g}$. The root system \cite{fulton1991representation, humphreys1972introduction, knapp2002lie} of $\mf{g}$ is denoted by $\roots$ and we fix a subset of positive roots $\roots^+$ and simple roots $\sroots$. Moreover, we introduce a shorthand notation for free $R$-modules,
\begin{equation}
\label{eq:short hand notation free module}
R\left(a_i\;|\;i\in I\right)=\bigoplus_{i\in I} Ra_i,
\end{equation}
where $R$ is a ring.

We define a Cartan-Weyl normal form for ALiAs similar to the Cartan-Weyl basis for semisimple Lie algebras, that is, a basis that identifies a Cartan subalgebra and diagonalises its adjoint action. 
\begin{Definition}[Cartan-Weyl normal form]
\label{def:cartan-weyl normal form}
The $k$ generators of an ALiA $\salia{g}{V}{\Gamma}$ as a $\CC[\II]$-module are in Cartan-Weyl normal form 
if they diagonalise the adjoint action of $\ell$ of these generators, with the same roots as $\mf{h}$ in the Cartan-Weyl basis for $\mf{g}$. In other words, the set $\{h_\alpha, e_\beta\;|\;\alpha\in\sroots,\;\beta\in\roots\}$ defines a Cartan-Weyl normal form if 
\[\salia{g}{V}{\Gamma}=\CC[\II]\left(h_\alpha, e_\beta\;|\;\alpha\in\sroots,\;\beta\in\roots\right)\] and 
\begin{align*}
&[h,h']=0,\\
&[h,e_{\beta}]=\beta (h)e_{\beta},
\end{align*}
for all $h, h'\in \CC[\II]\left(h_\alpha\;|\;\alpha\in\sroots\right)$ and $\beta\in\roots$. Here a root $\beta$ is considered as a $\CC[\II]$-linear form on $\CC[\II]\left(h_\alpha\;|\;\alpha\in\sroots\right)$ defined by $\beta (h_\alpha)=\frac{2(\beta,\alpha)}{(\alpha,\alpha)}$, where $(\cdot,\cdot)$ is the inner product on $\CC\sroots$ induced by the Killing form of $\mf{g}$.
\end{Definition}
Notice that $\CC[\II]\left(h_\alpha\;|\;\alpha\in\sroots\right)$ is an abelian-, in particular nilpotent subalgebra which is selfnormalising, i.e.~a CSA of $\salia{g}{V}{\Gamma}$.
It follows from the eigenvalues of this CSA and Jacobi's identity that $[e_\beta,e_\gamma]=0$ if $\beta+\gamma\notin\roots$ and  $[e_\beta,e_\gamma]\in\CC[\II]e_{\beta+\gamma}$ if $\beta+\gamma\in\roots$. The restriction of the Killing form of $\salia{g}{V}{\Gamma}$ to a CSA depends only on the Lie brackets appearing in the definition of the Cartan-Weyl normal form, and is therefore identical to the Killing form on the CSA of the base Lie algebra. 
The Cartan element $\frac{(\alpha,\alpha)}{2}h_\alpha$ is dual to $\alpha$ with respect to this inner product, and invariance of the Killing form $K(\cdot,\cdot)$ implies $[e_\alpha,e_{-\alpha}]=K(e_\alpha,e_{-\alpha})\frac{(\alpha,\alpha)}{2}h_\alpha$, cf.~\cite{humphreys1972introduction}. It is in the last two types of brackets where ALiAs differ from semisimple complex Lie algebras. Further specification of these Lie brackets leads to the so called \emph{Chevalley basis} in the complex case and we aim to define an analogues normal form in Definition \ref{def:lie algebra associated to a cocycle} for Lie algebras over a graded ring.

The existence of a Cartan-Weyl normal form relies on the fact that the ALiA $\overline{\mf{g}(V)}^G_\Gamma$ can be generated by $k$ elements over $\CC[\II]$, as was shown in Theorem \ref{thm:dimV generators}. This is necessary but not sufficient to prove the existence of a Cartan-Weyl normal form for a general ALiA, and the problem is still open.
\begin{Conjecture}
\label{conj:existence serre normal form}
For all Automorphic Lie Algebras, a Cartan-Weyl normal form exists.
\end{Conjecture}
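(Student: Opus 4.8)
The final statement is a conjecture, so what follows is a proposal for a uniform proof strategy rather than a finished argument; a direct, case-by-case verification already covers the $\mf{sl}_2(\CC)$-based algebras and, via \cite{knibbeler2014automorphic,knibbeler2014higher}, many further cases, and any general proof must be consistent with those.

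The plan is to pass to the field of fractions $K=\CC(\II)$ and work with $L_K:=\salia{g}{V}{\Gamma}\otimes_{\CC[\II]}K$. First I would show that $L_K$ is a split semisimple Lie algebra over $K$ of the same type as $\mf{g}$. Semisimplicity follows because the discriminant of the Killing form of $L_K$ is a polynomial in $\II$ that does not vanish at a generic evaluation point $\mu_0\notin\Gamma\cup\Gal\cup\Gbe\cup\Gga$, where $\salia{g}{V}{\Gamma}(\mu_0)\cong\mf{g}$ by Theorem \ref{thm:evaluated alias}. The type, together with splitness, is read off by lifting a regular semisimple element of this copy of $\mf{g}$ to an element $h\in\salia{g}{V}{\Gamma}$ (possible since evaluation at $\mu_0$ is onto $\mf{g}$) and analysing $\ad h$ on $L_K$, whose characteristic polynomial has coefficients in $\CC[\II]$ and specialises at $\mu_0$ to that of a split Cartan element; its generalised $0$-eigenspace is then a Cartan subalgebra $\mf{h}_K<L_K$ of rank $\ell$, and one obtains a Chevalley basis $\{h_\alpha,e_\beta\;|\;\alpha\in\sroots,\;\beta\in\roots\}$ of $L_K$ over $K$, with root system identified with $\roots$ by specialisation at $\mu_0$.

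Next I would rescale, replacing each $e_\beta$ and $h_\alpha$ by a suitable $\CC[\II]$-multiple so that all $k=\dim\mf{g}$ elements lie in $L=\salia{g}{V}{\Gamma}$ itself. The bracket relations $[h,h']=0$ and $[h,e_\beta]=\beta(h)e_\beta$ demanded by Definition \ref{def:cartan-weyl normal form} then hold over $\CC[\II]$ automatically, since they already hold over $K$ and every term lies in $L\subseteq L_K$. The remaining, essential point is that these $k$ elements can be chosen so as to \emph{freely generate} $L$ over $\CC[\II]$, that is, so that the transition matrix between them and a known free $\CC[\II]$-basis of $L$ (Theorem \ref{thm:dimV generators}) lies in $\GL_k(\CC[\II])$, equivalently has constant nonzero determinant. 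Here Theorem \ref{thm:evaluated alias} is the control mechanism: at a generic $\mu$ the elements must evaluate to a basis of $\mf{g}$, while at an exceptional orbit $\Gamma_i$ the root-space decomposition of $L$ must degenerate exactly onto the reductive subalgebra $\mf{f}_{\mf{g}}(\Gamma_i)$, so that precisely $\kappa(\roots)_i$ (in the notation of Definition \ref{def:kappa(roots)}) pairs of root vectors vanish there, with vanishing orders prescribed by the divisor of invariant vectors $(\chi_{\mf{g}(V)})_\Gamma$ of Theorem \ref{thm:determinant of invariant vectors}. Matching these local data at $\Gal,\Gbe,\Gga$ and $\Gamma$ simultaneously is what would force the transition determinant to be a nonzero constant.

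I expect this last step to be the main obstacle, and it is why the conjecture is still open: the numerical and divisor-theoretic consistency conditions supplied by Theorems \ref{thm:evaluated alias} and \ref{thm:determinant of invariant vectors} are necessary for a Cartan-Weyl normal form to exist, but promoting that consistency to an actual choice of integral Chevalley basis that is also a free generating set requires a uniform construction of the $e_\beta$ with exactly the right zeros on the exceptional orbits, something currently only carried out case by case. Two auxiliary reductions would accompany the argument: reducing a semisimple base Lie algebra to its simple summands (which are permuted by $G$, so one treats $G$-orbits of summands), and, for the split/type step, first allowing a finite extension of $K$ and then arguing a posteriori, via the specialisation at $\mu_0$, that no extension was needed.
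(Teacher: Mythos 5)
You are right to treat this statement as unprovable by current means: it is a Conjecture, and the paper offers no proof, only a decomposition into three sub-conjectures (find a CSA of $\salia{g}{V}{\Gamma}$ with constant eigenvalues; diagonalise its adjoint action; show the resulting transition matrix is invertible over $\CC[\II]$) together with plausibility arguments based on evaluations. Your strategy is genuinely different from the paper's in its first half: the paper works integrally inside the $\CC[\II]$-module throughout and must therefore \emph{conjecture} even the existence of a Cartan subalgebra with $\lambda$-independent eigenvalues, whereas you pass to $K=\CC(\II)$, where the existence of a splitting Cartan subalgebra and a Chevalley basis comes from classical semisimple theory over a field, and then try to descend. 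What your route buys is that the entire difficulty is concentrated in a single place — choosing the $\CC[\II]$-rescalings of the $h_\alpha,e_\beta$ so that the transition matrix to a known free basis has determinant in $\CC^\ast$ — and that place coincides exactly with the paper's third sub-conjecture, with the same control data (Theorem \ref{thm:evaluated alias} for the generic and exceptional fibres, Theorem \ref{thm:determinant of invariant vectors} for the vanishing orders). You correctly identify this as the open obstruction, so your proposal is an honest reorganisation of the problem rather than a claimed proof, and in that sense it is consistent with the paper.

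Two soft spots in the $K$-stage deserve flagging. First, splitness of $L_K$ over $K$ does not follow from splitness of the fibre at one closed point $\mu_0$ (a twisted form can be split at a point without being split over the function field), so your "a posteriori via specialisation" justification is not an argument; what saves you is that $K\cong\CC(t)$ has trivial Brauer group (Tsen), and since the paper shows $G$ acts by inner automorphisms, the relevant twisted form is inner and hence split — but that is a different proof than the one you sketch. Second, lifting a regular semisimple element of $\mf{g}=\salia{g}{V}{\Gamma}(\mu_0)$ to $h\in\salia{g}{V}{\Gamma}$ only guarantees that $\ad h$ has the right eigenvalues \emph{at} $\mu_0$; the eigenvalues of the lift are a priori algebraic functions of $\II$, and arranging them to be constant is precisely the content of the paper's Conjecture \ref{conj:existence csa}, which your write-up makes look more automatic than it is. Neither issue is fatal to the strategy, but both should be stated as further hypotheses or handled by the Tsen/Chevalley-basis machinery rather than by specialisation.
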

There are several reasons why we deem this plausible. First of all, many particular cases are computed. For instance, in Section \ref{sec:dihedral alias} we will find the normal form for all ALiAs with dihedral symmetry, for any orbit of poles, and in \cite{knibbeler2014higher} the $\mf{sl}_n(\CC)$-based ALiAs with exceptional pole orbits are classified and a normal form for each case is computed.
To discuss general reasons for the existence conjecture we break it down into easier problems.
First we need a Cartan subalgebra. This is perhaps the hardest part of Conjecture \ref{conj:existence serre normal form} due to its nonlinear nature.
\begin{Conjecture}
\label{conj:existence csa}
For any orbit $\Gamma$, one can find $\ell$ diagonalisable and commuting elements in $\alia{g}{V}{\Gamma}$ with eigenvalues in $\CC$. 
\end{Conjecture}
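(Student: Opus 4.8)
The plan is to realise the desired commuting family as (a suitable $\CC$-basis of) the centraliser of one carefully chosen element, so that the whole statement reduces to a single existence question about the adjoint quotient of $\mf g$. First I would pass to the generic fibre. Since $\CC[\II]$ is a polynomial ring in one variable, hence a principal ideal domain, and $\alia{g}{V}{\Gamma}$ is a free $\CC[\II]$-module of rank $k=\dim\mf g$ by Theorem~\ref{thm:dimV generators}, the $\CC(\II)$-Lie algebra $\mathfrak L=\alia{g}{V}{\Gamma}\otimes_{\CC[\II]}\CC(\II)$ has dimension $k$. Its Killing form is non-degenerate: it specialises at a generic point $\mu\in\overline\CC$ to the Killing form of the fibre, which by the surjectivity in Proposition~\ref{prop:evaluating invariant vectors} together with a dimension count is isomorphic to $\mf g$, and non-degeneracy is an open condition. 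Hence $\mathfrak L$ is semisimple of the same type as $\mf g$, and it suffices to produce inside $\alia{g}{V}{\Gamma}$ a Cartan subalgebra that is \emph{split with scalar roots}; translating the remaining bookkeeping into cochains on $\roots$, and ultimately recognising the ALiA as one of the algebras $\AL{\roots}$, is exactly what Section~\ref{sec:structure theory for alias} is designed to handle.

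Now the key reduction. Let $p_1,\dots,p_\ell$ be homogeneous generators of the algebra of $\Aut(\mf g)$-invariant polynomials on $\mf g$ (the fundamental Casimirs). For $H\in\alia{g}{V}{\Gamma}$ each $p_j(H)$ is a $G$-invariant rational function on $\overline\CC$ with poles only on $\Gamma$, hence $p_j(H)\in\mero_\Gamma^G=\CC[\II]$ by Example~\ref{ex:automorphic functions}. The crucial sub-claim is that one can choose $H$ with
\[
p_1(H),\dots,p_\ell(H)\in\CC\qquad\text{and}\qquad\bigl(p_1(H),\dots,p_\ell(H)\bigr)\ \text{regular in}\ \mf g/\!/\Aut(\mf g)\cong\CC^\ell.
\]
Granting this, $H(\mu)$ is a regular semisimple element of $\mf g$ whose characteristic polynomial, and hence the characteristic polynomial of $\ad H(\mu)$, is the same for every $\mu\in\overline\CC\setminus\Gamma$. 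The operator $\ad H$ on $\alia{g}{V}{\Gamma}$ is $\CC[\II]$-linear (the bracket is $\mero_\Gamma$-bilinear and $\CC[\II]\subset\mero_\Gamma^G$), its minimal polynomial is $\prod_{c}(t-c)$ with $c$ running over the fixed finite set $\{0\}\cup\{\alpha(x_0):\alpha\in\roots\}\subset\CC$ for a standard element $x_0$ of the chosen fibre, and the differences $c-c'$ are non-zero constants, hence units of $\CC[\II]$; by the Chinese Remainder Theorem $\ad H$ is therefore diagonalisable over $\CC[\II]$,
\[
\alia{g}{V}{\Gamma}=\mathfrak H\oplus\bigoplus_{\alpha\in\roots}\mathfrak L_\alpha,\qquad \mathfrak H=\ker(\ad H),
\]
with each summand a free $\CC[\II]$-module, of rank $\ell$ respectively $1$, as one checks by comparison with $\mathfrak L$. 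Thus $H$ is already one diagonalisable element with eigenvalues in $\CC$; the remaining $\ell-1$ commuting ones are obtained from the coroot combinations built out of bracketed root vectors $[e_\alpha,e_{-\alpha}]\in\mathfrak H$, after rescaling the $e_{\pm\alpha}$ by elements of $\mero_\Gamma^G$ so that the relevant Killing pairings $K(e_\alpha,e_{-\alpha})$ become constants. This rescaling is possible over the PID $\CC[\II]$ up to a cohomological obstruction in the (degree-adjusted) sense of Section~\ref{sec:structure theory for alias}, whose vanishing for the ALiAs at hand is part of what has to be shown and is where the free generation of Theorem~\ref{thm:dimV generators} is used essentially.

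The main obstacle is the sub-claim itself: the existence of an $H\in\alia{g}{V}{\Gamma}$ all of whose Casimirs are constant and generic. Writing the regular semisimple fibre over $\mathbf c\in\CC^\ell$ as a homogeneous space $\mf g_{\mathbf c}^{\mathrm{reg}}\cong\Inn(\mf g)/T$, this amounts to constructing a $G$-equivariant rational map from $\overline\CC$ to $\mf g_{\mathbf c}^{\mathrm{reg}}$, regular outside $\Gamma$, where $G$ acts on $\mf g_{\mathbf c}$ through $\psi\colon G\to\Aut(\mf g)$ (which preserves $\mf g_{\mathbf c}$ since automorphisms fix the Casimirs). This is a non-linear section problem and cannot be settled by the module-theoretic machinery of Chapter~\ref{ch:A} alone — which is precisely why the statement is only conjectural. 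The realistic line of attack is to exploit the evaluation data of Theorem~\ref{thm:evaluated alias}: at each of the at most three exceptional orbits $\Gamma_i$ the natural representation evaluates to an explicit reductive subalgebra containing a Cartan subalgebra of $\mf g$, and at generic orbits it is all of $\mf g$. One would first pick compatible Cartan subalgebras in these finitely many fibres, decomposing $\mf g$ along $\langle\gal\rangle$, $\langle\gbe\rangle$, $\langle\gga\rangle$ by Corollary~\ref{cor:sumV^g}, then interpolate them to a single invariant section in the manner of Proposition~\ref{prop:evaluating invariant vectors}, and finally perturb that section within the open dense regular locus so as to land on the closed constant-Casimir locus. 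Showing that this locus is actually attained, rather than merely approached, is the heart of the difficulty and appears to require a genuinely new idea, for instance a vanishing statement for $\Inn(\mf g)$-bundles on $\mathbb A^1_\CC$.
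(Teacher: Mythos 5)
This statement is stated in the paper as a \emph{conjecture} and the paper supplies no proof of it: the only supporting evidence given is the remark that every evaluation $\alia{g}{V}{\Gamma}(\mu)=\mf{g}(V)^{G_\mu}$ contains a Cartan subalgebra of $\mf{g}(V)$ (Theorem \ref{thm:evaluated alias}), so that no obstruction arises at the level of fibres, together with explicit constructions in special cases such as the dihedral normal form of Theorem \ref{thm:main2}. Your submission is likewise not a proof, and you say so yourself; so the honest verdict is that you have produced a plausible reduction, not a solution, and there is no proof in the paper to compare it against.

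That said, the reduction you propose is sensible and goes further than the paper's discussion. The observation that $p_j(H)\in\mero_\Gamma^G=\CC[\II]$ for every invariant $H$ is correct (it uses Observation \ref{obs:only inner} that the action is by inner automorphisms), and granting an $H$ with constant \emph{regular} Casimir values, your argument that $\ad H$ splits the free $\CC[\II]$-module into constant-eigenvalue summands is sound: the minimal polynomial has distinct roots in $\CC$ whose differences are units, every fibre $H(\mu)$ over a regular value of the adjoint quotient is regular semisimple, and direct summands of a free module over the PID $\CC[\II]$ are free. Two caveats remain even modulo the sub-claim. First, membership of an element in $\mathfrak H=\ker(\ad H)$ does not by itself give constant eigenvalues, so producing the other $\ell-1$ generators reintroduces exactly the same nonlinear condition you started with; your appeal to rescaled coroots $[e_\alpha,e_{-\alpha}]$ hides an obstruction you acknowledge but do not control. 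Second, and decisively, the existence of a $G$-equivariant section of the constant-regular-Casimir locus, regular outside $\Gamma$, is the entire content of the conjecture in your formulation, and nothing in Chapters \ref{ch:A}--\ref{ch:B} (interpolation as in Proposition \ref{prop:evaluating invariant vectors} is linear, the target here is a closed nonlinear subvariety) delivers it. So the gap is precisely where you located it; the proposal should be read as a clean reformulation of the conjecture rather than progress towards its proof.
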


Notice that by Theorem \ref{thm:evaluated alias} there is a CSA for the base Lie algebra $\mf{g}(V)$ available in the evaluation of the natural representation of the ALiA at any point of the Riemann sphere. 
This not being the case would be an obstruction to the existence of a CSA for the ALiA.

The linear part of the construction of Cartan-Weyl normal form generators is the diagonalisation of the adjoint action of the CSA. 
\begin{Conjecture}
\label{conj:diagonalisation}
Suppose $\CC[\II](h_1,\ldots,h_\ell)$ is a CSA for $\salia{g}{V}{\Gamma}$.
That is, $h_1,\ldots,h_\ell$ are diagonalisable, pairwise commuting, and have $\lambda$-independent eigenvalues.
Then there exists a nonzero solution $e_\alpha\in\salia{g}{V}{\Gamma}$ to the equations
\[[h_r, e_\alpha]=\alpha(h_r) e_\alpha,\qquad 1\le r\le \ell,\]
for all roots $\alpha\in\roots$ of the base Lie algebra.
\end{Conjecture}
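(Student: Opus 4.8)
The plan is to solve the eigenvector equations over the field $\mero$ of all rational functions on $\overline{\CC}$ and then descend to $\salia{g}{V}{\Gamma}$. By Theorem~\ref{thm:dimV generators} we may write $\salia{g}{V}{\Gamma}=\bigoplus_{r=1}^{k}\CC[\II]\bar\zeta_r$ with $k=\dim\mf{g}(V)$; exactly as in that proof the $\bar\zeta_r$ are $\CC$-linearly independent after evaluation at a generic point of $\overline{\CC}$, hence $\mero$-linearly independent inside the $k$-dimensional $\mero$-Lie algebra $\mf{g}(V)\otimes_\CC\mero$, so that $\mf{g}(V)\otimes\mero=\bigoplus_{r=1}^{k}\mero\,\bar\zeta_r$ and the $\CC(\II)$-span of the $\bar\zeta_r$ is precisely $\salia{g}{V}{\Gamma}\otimes_{\CC[\II]}\CC(\II)$.

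First I would check that each $\ad(h_s)$, viewed as an $\mero$-linear derivation of $\mf{g}(V)\otimes\mero$, is diagonalisable over $\mero$ with eigenvalues in $\CC$. In the natural representation the hypothesis says the commuting matrices $h_s(\lambda)$ are, at every point, simultaneously conjugate to one fixed family of constant diagonal matrices; hence over $\mero$ they have squarefree minimal polynomials splitting over $\CC$, so they are simultaneously diagonalisable over $\mero$, and the eigenvalues of $\ad_{\mf{gl}_n(\mero)}(h_s)$ are differences of the constant eigenvalues of $h_s$, again constants. Writing $\ad(h_s)$ on $\mf{g}(V)\otimes\mero$ in the basis $\{\bar\zeta_r\}$ yields a matrix $M_s$ with entries in $\CC[\II]$ whose joint eigenspaces (over $\mero$) decompose $\mf{g}(V)\otimes\mero$, with constant joint eigenvalues.

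Next I would identify these joint eigenvalues with the root data of Definition~\ref{def:cartan-weyl normal form}. The span $\mf{h}^\mero_\Gamma:=\bigoplus_r\mero\,h_r$ is a toral subalgebra of the split semisimple $\mero$-Lie algebra $\mf{g}(V)\otimes\mero$, of dimension $\ell=\rank\mf{g}(V)$, hence a Cartan subalgebra, equal to its own joint-$0$-eigenspace; so the nonzero joint eigenvalues count $k-\ell=|\roots|$ with multiplicity. A nonzero joint eigenvector over $\mero$ may be evaluated at a point where it does not vanish to give one over $\CC$ with the same (constant) eigenvalue; conversely at a generic $\mu\in\overline{\CC}$, where — as part of $\mf{h}_\Gamma$ being a Cartan subalgebra — the $h_r(\mu)$ span a Cartan subalgebra of $\mf{g}(V)$ with root system $\roots$ in the Killing-form normalisation, the joint eigenvalues of $(\ad h_1,\dots,\ad h_\ell)$ on $\mf{g}(V)$ are exactly $\{0\}\cup\roots$. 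Comparing multiplicities, the nonzero joint eigenvalues on $\mf{g}(V)\otimes\mero$ are precisely the roots $\alpha\in\roots$, read as $\CC$-linear functionals via $\alpha(h_{\alpha'})=\tfrac{2(\alpha,\alpha')}{(\alpha',\alpha')}$. Thus each $\alpha\in\roots$ admits a nonzero $e_\alpha^\mero=\sum_r f_r\bar\zeta_r$ with $M_s f=\alpha(h_s)f$ for all $s$.

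Finally comes the descent, which is routine: the system $M_s f=\alpha(h_s)f$ has all its coefficients in the subfield $\CC(\II)\subseteq\mero$, so having a nonzero solution over $\mero$ it has one over $\CC(\II)$; clearing denominators by a suitable $q\in\CC[\II]\setminus\{0\}$ and using $\CC[\II]$-bilinearity of the bracket produces a nonzero $e_\alpha=\sum_r(qf_r)\bar\zeta_r\in\salia{g}{V}{\Gamma}$ with $[h_s,e_\alpha]=\alpha(h_s)e_\alpha$ for $1\le s\le\ell$. The only step with content beyond field-extension bookkeeping is the eigenvalue identification of the third paragraph — that a dimension-$\ell$ toral subalgebra is a Cartan subalgebra of $\mf{g}(V)\otimes\mero$ and that its nonzero $\ad$-eigenvalues on the whole algebra are exactly the roots with the correct normalisation; this is precisely where the hypothesis that $\mf{h}_\Gamma$ is a Cartan subalgebra (rather than just a commuting family of semisimple elements) is genuinely used, and it is the local analogue of the obstruction that keeps the full Cartan–Weyl normal form, Conjecture~\ref{conj:existence serre normal form}, from a uniform proof.
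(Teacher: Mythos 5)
Your argument is correct, and it is worth saying up front that the paper does not actually prove this statement: it is left as an open conjecture, accompanied only by the suggestion to work degree-by-degree with homogeneous matrices over $\CC[U]$ and count equations against unknowns, a strategy the author says succeeds for $\mf{sl}_2(\CC)$ but is ``more difficult'' in general. Your route is genuinely different and sidesteps that difficulty entirely: by passing to the function field $\mero$ you place yourself in the classical setting of a split semisimple Lie algebra $\mf{g}(V)\otimes\mero$ over a field of characteristic zero, where an $\ell$-dimensional toral subalgebra with constant spectrum is automatically a splitting Cartan subalgebra and the joint eigenspace decomposition is immediate; what remains is the observation that the rank of a homogeneous linear system is insensitive to field extension (to descend from $\mero$ to $\CC(\II)$) and a denominator-clearing step to land back in the free $\CC[\II]$-module $\salia{g}{V}{\Gamma}$. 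What the paper's degree-counting approach would additionally buy --- control over the degree of $e_\alpha$, which feeds into the subsequent conjecture that the $e_\alpha$ can be completed to a free generating set --- your argument deliberately does not attempt, and correctly so, since the statement only asks for a nonzero solution. Two points deserve one more sentence of care in a written version. First, the identification of the constant joint eigenvalues with the values $\alpha(h_r)$ should be phrased relative to the Cartan subalgebra of $\mf{g}(V)$ spanned by the $h_r(\mu)$ at a generic $\mu$; your parenthetical about the Killing-form normalisation presumes the $h_r$ are coroots of simple roots, which the hypothesis does not quite give, though this only affects bookkeeping, not existence. Second, the ``comparing multiplicities'' step should record explicitly that a basis of joint eigenvectors over $\mero$ evaluates, at a generic $\mu$, to a basis of $\mf{g}(V)$ with the same constant eigenvalue tuples, so that every root genuinely occurs among the joint eigenvalues over $\mero$ (a priori one root could occur with multiplicity two and another not at all). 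Neither point is a gap.
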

It can be helpful to approach this problem in terms of homogeneous matrices over $\CC[U]$ rather than the matrices over $\mero$, using the identification $\salia{g}{V}{\Gamma}\cong \p{|G|}\palia{g}{V}\mod F_\Gamma$ from Proposition \ref{prop:prehomogenised palia is alia}. One can then restrict the diagonalisation to a fixed degree (the degree of $e_\alpha$ is unknown, but it can be taken as a variable), by which the $\CC$-linear problem becomes finite dimensional. In the case $\mf{g}=\mf{sl_2}(\CC)$ one can then simply count the number of equations and the number of variables and conclude that the solutions $e_\alpha$ exists at some degree. For general Lie algebras $\mf{g}$ it is more difficult.

\begin{Conjecture}
Let  $\CC[\II](h_1,\ldots,h_\ell)$ be a CSA of $\salia{g}{V}{\Gamma}$ and 
$e_\alpha$, $\alpha\in\roots$,  a set of solutions of the equations $[h_r, e_\alpha]=\alpha(h_r) e_\alpha$, $1\le r\le \ell$ as in the above conjecture. Suppose $e_\alpha$ does not have a factor of $\CC[\II]$ vanishing on a generic orbit. Then the transformation $(\eta_1,\ldots,\eta_k)\mapsto(h_r, e_\alpha\;|\;1\le r \le \ell, \alpha\in\roots)$ is invertible over $\CC[\II]$, i.e.~the determinant of the corresponding matrix is in $\CC^\ast$, i.e.~\[\salia{g}{V}{\Gamma}=\CC[\II]\left(\eta_1,\ldots,\eta_{k}\right)=\CC[\II]\left(h_r,\;e_\alpha\;|\;1\le r\le \ell,\;\alpha\in\roots\right)\]
and $\left\{h_r,\;e_\alpha\;|\;1\le r\le \ell,\;\alpha\in\roots\right\}$ defines a Cartan-Weyl normal form for $\salia{g}{V}{\Gamma}$.
\end{Conjecture}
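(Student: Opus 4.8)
The plan is to reduce the statement to a unimodularity assertion about a single matrix and then attack that by evaluation on the Riemann sphere. By Theorem~\ref{thm:dimV generators} the module $\salia{g}{V}{\Gamma}$ is free of rank $k=\dim\mf{g}$ over the principal ideal domain $\CC[\II]$, with free basis $\{\eta_1,\ldots,\eta_k\}$; writing the proposed generators in terms of it, $(h_\alpha,e_\beta)^{T}=M\,(\eta_1,\ldots,\eta_k)^{T}$ for a matrix $M$ over $\CC[\II]$, and $\{h_\alpha,e_\beta\}$ is again a free basis exactly when $\det M\in\CC[\II]^{\ast}=\CC^{\ast}$ (the units of a polynomial ring in one variable being the nonzero constants). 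Viewing all these elements as $\mf{g}(V)$-valued functions in the standard $\CC(\lambda)$-frame of $\mf{g}(V)$, one has $\det M\doteq\big(\text{coordinate determinant of }\{h_\alpha,e_\beta\}\big)\big/\big(\text{coordinate determinant of }\{\eta_s\}\big)$, and by Theorem~\ref{thm:determinant of invariant vectors} the denominator is, up to scalar, $\prod_{i\in\Omega}\II_i^{\kappa(\roots)_i}$ (using Definition~\ref{def:kappa(roots)}, since $\chi_{\mf{g}(V)}$ is real valued). So the whole task is to show that the coordinate determinant of the Cartan--Weyl generators has the \emph{same} divisor $\sum_{i\in\Omega}\kappa(\roots)_i(\nu_i\Gamma_i-\nu_\Gamma\Gamma)$.

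First I would settle the generic part: $\det M\ne 0$ and $\det M$ vanishes on no generic orbit. Evaluate at a point $\mu$ of a generic orbit; by Theorem~\ref{thm:evaluated alias} (equivalently Proposition~\ref{prop:evaluating invariant vectors}) the evaluation is all of $\mf{g}(V)$, so $\{\eta_s(\mu)\}$ is a $\CC$-basis. The hypothesis on $e_\beta$ is equivalent to $e_\beta$ vanishing on no generic orbit, so $e_\beta(\mu)\ne 0$; together with $[h_\alpha(\mu),e_\beta(\mu)]=\beta(h_\alpha)e_\beta(\mu)$, the (distinct, nonzero) coroot coordinates $\big(\beta(h_\alpha)\big)_{\alpha\in\sroots}$, and the fact that the $h_\alpha(\mu)$ commute and are diagonalisable, a joint eigenspace decomposition of $\mf{g}(V)$ puts the $e_\beta(\mu)$ in pairwise distinct nonzero weight spaces and hence makes them independent of each other and of the zero weight space; the $h_\alpha(\mu)$ are themselves independent because $\sum_\alpha c_\alpha h_\alpha(\mu)=0$ forces $\sum_\alpha c_\alpha\beta(h_\alpha)=0$ for every root $\beta$, whence all $c_\alpha=0$. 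Thus $\{h_\alpha(\mu),e_\beta(\mu)\}$ is a basis of $\mf{g}(V)$ and $\det M(\mu)\ne 0$. Since $\det M=p(\II)$ and the $\II$-preimage of any non-critical value is a single generic orbit, $p$ has no non-critical roots; as the only critical values of $\II$ are the at most three exceptional values, $\det M$ has its zero divisor supported on the exceptional orbits $\Gamma_i$.

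The hard part is ruling out vanishing of $\det M$ at the exceptional orbits. The intended route is a local analysis at each $\mu\in\Gamma_i$, in the spirit of the proof of Theorem~\ref{thm:determinant of invariant vectors}: the evaluation collapses to the reductive subalgebra $\mf{g}(V)^{\langle g_i\rangle}$ of codimension $2\kappa(\roots)_i$, so the coordinate determinant of $\{h_\alpha,e_\beta\}$ vanishes there to order at least $\nu_i\kappa(\roots)_i$, and one must prove the order is \emph{exactly} this. Because the Cartan--Weyl generators carry fixed $\CC$-valued weights, the pieces that vanish at $\mu$ should be organised by the $\nu_i$-periodicity of the eigenvalues of $g_i$ into the same real/conjugate-pair pattern that produced the bound $\nu_i\kappa(\chi)_i$ in Theorem~\ref{thm:determinant of invariant vectors}; carrying this through appears to need either a reduction to the graded picture $\p{|G|}\palia[G^\flat]{g}{V}\bmod F_\Gamma$ of Proposition~\ref{prop:prehomogenised palia is alia} together with the uniform degree $|G|$ of the module generators from Proposition~\ref{prop:dimV generators} (which would let one realise $\{h_\alpha,e_\beta\}$ with the same pole order as $\{\eta_s\}$, forcing the transition matrix to be constant), or a root-system-type case analysis informed by Section~\ref{sec:structure theory for alias}. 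This step is precisely where control is lost: it is tied to the unresolved Conjectures~\ref{conj:existence csa} and~\ref{conj:diagonalisation}, and in particular to the possibility of constructing the Cartan--Weyl generators with no spurious factors vanishing at the exceptional orbits, which is why the assertion is stated as a conjecture rather than proved.
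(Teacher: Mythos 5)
This statement is left as a conjecture in the paper: the author offers only a plausibility discussion, not a proof, so there is no complete argument to match yours against. That said, your generic-orbit analysis coincides with the paper's own heuristic — evaluate at a generic $\mu\notin\Gamma$, use Proposition \ref{prop:evaluating invariant vectors} to see that $\{\eta_s(\mu)\}$ is a $\CC$-basis of $\mf{g}(V)$, and use the weight-space decomposition together with the no-common-factor hypothesis to conclude that $\{h_\alpha(\mu), e_\beta(\mu)\}$ is also a basis, so $\det M$ does not vanish at generic points. Your additional reformulation — that $\det M=p(\II)$ with $\CC[\II]^\ast=\CC^\ast$, that its zero divisor is therefore supported on the exceptional orbits, and that via Theorem \ref{thm:determinant of invariant vectors} the whole claim reduces to showing the coordinate determinant of the Cartan--Weyl generators has divisor exactly $\sum_{i\in\Omega}\kappa(\roots)_i(\nu_i\Gamma_i-\nu_\Gamma\Gamma)$ — is a genuinely useful sharpening that the paper does not state, and it isolates the obstruction more cleanly than the paper's discussion (which only observes that at $\mu_i\in\Gamma_i$ the number of vanishing $e_\alpha(\mu_i)$ is at least $\codim\,\mf{g}(V)^{\langle g_i\rangle}$). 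You are also right that the remaining step — pinning down the exact order of vanishing at the exceptional orbits, i.e.\ excluding spurious $\II_i$-factors in the Cartan--Weyl generators — is precisely the open gap, entangled with Conjectures \ref{conj:existence csa} and \ref{conj:diagonalisation}, and is the reason the statement is a conjecture. In short: your attempt is as complete as the paper's, proves strictly more of the reduction, and correctly names what is missing; just be careful not to present the final paragraph as anything more than a programme, since the periodicity/pairing argument you sketch there has not been carried out for the Cartan--Weyl frame (where the relevant functions are coordinates with respect to a $\lambda$-dependent weight basis, not entries of a single invariant vector, so the local analysis of Theorem \ref{thm:determinant of invariant vectors} does not transfer verbatim).
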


This statement becomes plausible when considering evaluations of the invariant matrices.
If $\mu\in\overline{\CC}\setminus\Gamma$ then the evaluated generators $h_r(\mu)$ form a basis for a CSA $\mf{h}$ of $\mf{g}$ by assumption. In particular, one can find an element $h$ in the $\CC$-span of $\{h_r(\mu)\;|\;1\le r\le \ell\}$ such that $\alpha(h)\ne\beta(h)$ if $\alpha\ne\beta$ are two roots.
Nonzero elements of $\{e_\alpha(\mu)\;|\;\alpha\in\roots\}$ are eigenvectors of $h$ with distinct eigenvalues and as such linearly independent. 

By Proposition \ref{prop:evaluating invariant vectors}, the original generators $\eta_1(\mu),\ldots, \eta_{k}(\mu)$ evaluated at $\mu$ are independent if $\mu$ is a generic point outside $\Gamma$. Thus if $\mu$ is such a point and $e_\alpha(\mu)=f^\alpha_1(\mu)\eta_1(\mu)+\ldots+f^\alpha_{k}(\mu)\eta_{k}(\mu)=0$, then all the functions $f^\alpha_1,\ldots,f^\alpha_{k}\in\CC[\II]$ vanish at $\mu$. This implies they have a common factor, contradicting the assumption that factors of $e_\alpha$ are removed.
We have proved that the transformation $(f^i_j(\mu))$ between the original generators $\eta_j(\mu)$ and the realisation of a Cartan-Weyl normal form $\{h_r,\;e_\alpha\;|\;1\le r\le \ell,\;\alpha\in\roots\}$ is invertible if $\mu$ is generic and not in $\Gamma$. 

What if $\mu_i\in\Gamma_i\ne\Gamma$ is an exceptional value? The matrices $\{h_\ell(\mu_i)\}$ are independent so in that case the matrices $\{e_\alpha(\mu_i)\;|\;\alpha\in\roots\}$ are dependent by Proposition \ref{prop:evaluating invariant vectors}. However, all the \emph{nonzero} matrices $e_\alpha(\mu_i)$ have different eigenvalues relative to the CSA, so \emph{these} are independent. Also, the space spanned by $\{e_\alpha(\mu_i)\;|\;\alpha\in\roots\}$ is a subspace of the space spanned by $\{\eta_j(\mu_i)\;|\;1\le j\le k\}$.  Therefore, the minimal number of vanishing eigenvectors $e_\alpha(\mu_i)$ is $k-\dim \CC\langle \eta_1(\mu_i),\ldots, \eta_{k}(\mu_i)\rangle=\codim\,\mf{g}(V)^{g_i}$. In the next section we will see all these things happen when we construct a Cartan-Weyl normal form for the dihedral ALiAs.

\subsection{Dihedral Automorphic Lie Algebras}
\label{sec:dihedral alias} 

This section is the culmination of the example sequence related to dihedral invariants and describes the main result of \cite{ knibbeler2014automorphic}.
We find the Lie algebra structure of ALiAs with dihedral symmetry and an arbitrary pole orbit, by a construction of a Cartan-Weyl normal form.



\begin{Theorem}
\label{thm:main2}

Let $\DD_N$ act faithfully on the Riemann sphere and let $\Gamma$ be a single orbit therein.
Let $(c_{\al},c_\be)\in\CC P^1$ be such that $\cN(\Gamma)$ is generated by \[F=F_\Gamma^{\nu_\Gamma}=c_{\al}F_\al^N+c_\be F_\be ^2\] where $F_i$ is given by (\ref{eq:Fs}). 
Then, adopting the previous notation,
\[\salia[\DD_N]{sl}{V_{\psi_j}}{\Gamma}=\CC[\II]h\oplus \CC[\II]e_+\oplus \CC[\II]e_-\]
and 
\begin{align*}
&[h,e_\pm]=\pm 2e_\pm, \\
&[e_+,e_-]=\ial\ibe\iga h,
\end{align*}
where $\ii$, $i\in\Omega$, is given by (\ref{eq:automorphic function}).
A set of generators for this normal form is given by
\begin{equation}
\label{eq:dihedral invariant matrices in normal form}
\begin{array}{lcl}
h&=&\begin{pmatrix}
\frac{c_{\be}\fbe \fga}{F}   & (c_{\al}\lambda^j+\nicefrac{c_{\be}}{2}(\lambda^j+\lambda^{j-N}))\ial\\
(c_{\al}\lambda^{-j}+\nicefrac{c_{\be}}{2}(\lambda^{-j}+\lambda^{N-j}))\ial & \frac{-c_{\be}F_\be F_\ga}{F}  
\end{pmatrix},\\
e_+&=&\frac{F_\al^N F_\be F_\ga}{2F^2}\begin{pmatrix}
1&-\lambda^j\\
\lambda^{-j}&-1
\end{pmatrix},\\
e_-
&=&\frac{F_\al^NF_\be F_\ga  }{2F^2}
\begin{pmatrix}
c_{\al}+c_{\be}+c_{\al}c_{\be}\iga&\left(c_{\al}(c_{\al}+c_{\be})\ial+c_{\be}\lambda^{-N}\right)\lambda^{j}\\
-\left(c_{\al}(c_{\al}+c_{\be})\ial+c_{\be}\lambda^N\right)\lambda^{-j}& -c_{\al}-c_{\be}-c_{\al}c_{\be}\iga
\end{pmatrix}\,,
\end{array}
\end{equation}
in the basis corresponding to (\ref{eq:standard matrices}).
\end{Theorem}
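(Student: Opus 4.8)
The plan is to build directly on the example sequence that precedes the theorem. By Proposition~\ref{prop:prehomogenised palia is alia} and Example~\ref{ex:dihedral alia} we already know that $\salia[\DD_N]{sl}{V_{\psi_j}}{\Gamma}\cong\p{2N}\palia[\DD_N^\flat]{sl}{V_{\psi_j}}\,\textrm{mod}\,F_\Gamma$, and that the prehomogenised Polynomial ALiA is the free $\CC[\fal^N,\fbe^2]$-module on the three explicit generators $\tilde\eta_1,\tilde\eta_2,\tilde\eta_3$ of (\ref{eq:prehomogenised dihedral invariant matrices}), with the Lie brackets recorded there. Applying the homogenisation operator $\h{\Gamma}$ (Definition~\ref{def:homogenisation}) termwise gives $\salia[\DD_N]{sl}{V_{\psi_j}}{\Gamma}=\CC[\II]\big(\h{\Gamma}\tilde\eta_1,\h{\Gamma}\tilde\eta_2,\h{\Gamma}\tilde\eta_3\big)$ as a free module, whose structure constants are the images of the $F_i$-monomials under $\h{\Gamma}$, i.e.\ monomials in $\ial,\ibe,\iga$. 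So the entire content of the theorem is to perform one explicit change of $\CC[\II]$-basis bringing this presentation into the Cartan-Weyl normal form of Definition~\ref{def:cartan-weyl normal form}; as the base Lie algebra is $\mf{sl}_2(\CC)$, of rank one, we need a single Cartan generator $h$ and two root vectors $e_+,e_-$.

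First I would construct $h$. A traceless $2\times2$ matrix satisfies $h^2=-\det(h)\,\Id$, so $\ad(h)$ has eigenvalues $\{0,2,-2\}$, the $A_1$ root data, precisely when $\det h=-1$. Writing $h=a_\al(\II)\,\h{\Gamma}\tilde\eta_1+a_\be(\II)\,\h{\Gamma}\tilde\eta_2+a_\ga(\II)\,\h{\Gamma}\tilde\eta_3$ and imposing $\det h\equiv-1$ yields a small polynomial system for the $a_i$; solving it and reducing with the two syzygies $c_\al\ial+c_\be\ibe=1$ and, from (\ref{eq:abg relation}), $\ial-\ibe+\iga=0$, produces exactly the matrix $h$ displayed in (\ref{eq:dihedral invariant matrices in normal form}). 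One then records that $h$ lies in $\salia[\DD_N]{sl}{V_{\psi_j}}{\Gamma}$, is diagonalisable over the function field with constant eigenvalues $\pm1$, and hence $\CC[\II]h$ is an abelian selfnormalising subalgebra, a CSA; this realises Conjecture~\ref{conj:existence csa} in the dihedral case.

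Next, with $h^2=\Id$ in hand, $\ad(h)$ satisfies $\ad(h)\big(\ad(h)^2-4\big)=0$ on the ALiA, so $\pi_\pm=\tfrac18\big(\ad(h)^2\pm2\ad(h)\big)$ are complementary idempotents onto the $\pm2$-eigenspaces; applying them to suitable generators and dividing out the resulting common $\CC[\II]$-factor produces $e_\pm$ (this realises Conjecture~\ref{conj:diagonalisation} here). Checking that these agree, up to a nonzero scalar, with the matrices in (\ref{eq:dihedral invariant matrices in normal form}) is a finite computation, as is the verification of the remaining facts: (i) $e_\pm$ is $\DD_N$-invariant and holomorphic off $\Gamma$, transparent from the formulas since the overall factor $\fal^N\fbe\fga/F^2$ has divisor bounded below by $-\nu_\Gamma\Gamma$, in accordance with the determinant formula of Theorem~\ref{thm:determinant of invariant vectors}, which predicts the content $\ial\ibe\iga$; (ii) $[h,e_\pm]=\pm2e_\pm$, which holds by construction; and (iii) $[e_+,e_-]=\ial\ibe\iga\,h$, obtained by a direct $2\times2$ matrix multiplication and simplification with the same two relations among $\ial,\ibe,\iga$, which also pins down the normalisation of $e_\pm$. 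Finally, to see that $\{h,e_+,e_-\}$ still freely generates: the transition matrix from $(\h{\Gamma}\tilde\eta_i)$ to $(h,e_+,e_-)$ has entries in $\CC[\II]$ and determinant $D\in\CC[\II]$; evaluating at a generic $\mu\in\overline\CC\setminus\Gamma$, Proposition~\ref{prop:evaluating invariant vectors} shows the $\tilde\eta_i(\mu)$ span $\mf{sl}_2(\CC)$ while $h(\mu),e_\pm(\mu)$ is a Cartan-Weyl basis of $\mf{sl}_2(\CC)$, hence a basis, so $D(\mu)\neq0$ off finitely many orbits; since this is a non-discrete subset of $\overline\CC$ and $D$ is rational, $D\in\CC^\ast$. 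Thus $\{h,e_+,e_-\}$ is a free generating set in Cartan-Weyl normal form, and the stated brackets complete the proof; the exceptional pole orbits $\Gamma=\Gamma_\al$ (so $(c_\al,c_\be)=(1,0)$, $\ial=1$) and $\Gamma=\Gamma_\be$ (so $(c_\al,c_\be)=(0,1)$, $\ibe=1$) appear as special cases.

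The step I expect to be the real obstacle is the construction of $h$: one must exhibit concretely an invariant traceless matrix with the correct pole divisor whose square is the identity, i.e.\ solve the condition $\det h=-1$ and check it holds identically after invoking the syzygies. Everything downstream — the projections $\pi_\pm$, the bracket $[e_+,e_-]$, and the invertibility of the transition matrix — is either bounded-degree linear algebra over $\CC[\II]$ or a routine $2\times2$ computation, with the only subtlety being the scalar normalisations needed so that $[h,e_\pm]=\pm2e_\pm$ and $[e_+,e_-]=\ial\ibe\iga\,h$ come out exactly as stated.
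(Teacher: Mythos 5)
Your overall route — find an invariant traceless matrix $h$ with constant eigenvalues $\pm1$, diagonalise $\ad(h)$ to get $e_\pm$, and verify the two brackets by direct $2\times2$ computation — is exactly the strategy of the paper, which carries it out at the prehomogenised level by exhibiting the explicit transformation $T$ of (\ref{eq:transformation to normal form}) with $(\tilde h,\tilde e_+,\tilde e_-)=(\tilde\eta_1,\tilde\eta_2,\tilde\eta_3)T$ and then homogenising. The one place where your argument genuinely breaks is the last step, the claim that $\{h,e_+,e_-\}$ still \emph{freely generates the whole module}. You argue that the transition determinant $D\in\CC[\II]$ is nonzero at generic $\mu\in\overline\CC\setminus\Gamma$ and conclude $D\in\CC^\ast$ because the generic set is non-discrete. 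That inference is false: nonvanishing on a dense set only gives $D\not\equiv0$, i.e.\ independence over the fraction field. A nonconstant polynomial in $\II$ such as $\ial$ or $\ial\ibe\iga$ is also nonzero at every generic orbit; its zeros sit exactly at the exceptional orbits, which your evaluation test cannot see because there \emph{both} triples degenerate (the $\tilde\eta_i(\mu_i)$ span only the one-dimensional $\mf{sl}_2^{\langle g_i\rangle}$, and $e_\pm(\mu_i)=0$). Concretely, replacing $e_+$ by $\ial e_+$ passes every check you propose — generic evaluations still give a basis, exceptional evaluations still have the right dimensions, the brackets still close — yet generates a strictly smaller $\CC[\II]$-module, so the theorem's equality would fail. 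This is precisely the point the paper does not shortcut: it computes $\det T=\nicefrac{1}{2}F^3$ explicitly (flagged as ``straightforward though very tedious''), which reduces to $\nicefrac{1}{2}\in\CC^\ast$ modulo $F$, and that computation cannot be replaced by generic evaluation.

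To repair your proof you must either perform that determinant computation, or supplement the evaluation argument with degree/divisor bookkeeping at the exceptional orbits — for instance by invoking Theorem \ref{thm:determinant of invariant vectors}: the total determinant of any generating set of $\salia{sl}{V_{\psi_j}}{\Gamma}$ equals $\CC\,\ial\ibe\iga$ (up to the factor trivialised by $\Gamma$), and a direct computation shows $\det(h,e_+,e_-)$ already saturates this, leaving no room for an extra factor $D$ outside $\CC^\ast$. Everything else in your proposal (the construction of $h$ from $\det h=-1$ plus the syzygies $c_\al\ial+c_\be\ibe=1$ and $\ial-\ibe+\iga=0$, the projections $\pi_\pm=\tfrac18(\ad(h)^2\pm2\ad(h))$, and the verification of $[e_+,e_-]=\ial\ibe\iga h$) matches the paper's method and is sound.
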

\begin{proof}
We prove the theorem by an explicit construction.
Consider the prehomogenised Lie algebra $\p{|2N|}\palia[\DD_N^\flat]{sl}{V_{\psi_j}}=\CC[F_\al^N,F_\be ^2]\left(\tilde{\eta}_1\oplus\tilde{\eta}_2\oplus\tilde{\eta}_3\right)$ from Example \ref{ex:dihedral alia}. Define $(\tilde{h},\tilde{e}_+,\tilde{e}_-)$ by \[(\tilde{h},\tilde{e}_+,\tilde{e}_-)=(\tilde{\eta}_1,\tilde{\eta}_2,\tilde{\eta}_3)T\] where 
\begin{equation}
\label{eq:transformation to normal form}
T=\frac{1}{2}\begin{pmatrix}
2c_{\al}&-F_\be ^2&c_{\al}^2F_\al^N{F_\ga}^2+F^2\\
2c_{\be}&F_\al^N&-c_{\be}^2F_\be ^2{F_\ga}^2-F^2\\
2c_{\be}&F_\al^N&F_\al^N(c_{\al}c_{\be}{F_\ga}^2+(c_{\al}+c_{\be})F)
\end{pmatrix}\,.
\end{equation}
It is a straightforward though very tedious exercise to check that this transformation has determinant $\nicefrac{1}{2}F^3$ and is therefore invertible over the ring $\CC[\fal^\nal, \fbe^\nbe]\mod F\cong\CC[\II] $. Moreover, one computes the commutators
\begin{align*}
&[\tilde{h},\tilde{e}_\pm]=\pm 2F\tilde{e}_\pm ,
\\&[\tilde{e}_+,\tilde{e}_-]=FF_\al^NF_\be ^2F_\ga^2 \tilde{h} \,,
\end{align*}
for instance using the commutation relations of $\tilde{\eta}_i$ in Example \ref{ex:dihedral alia}.
Hence the homogenised matrices $h=\h{\Gamma}(\tilde{h})$ and $e_{\pm}=\h{\Gamma}(\tilde{e}_{\pm})$ define a Cartan-Weyl normal form as described in the theorem.
\end{proof}
We observe that different choices of transformation groups have been made in previous works. Here we follow \cite{LS10} and allow invertible transformations $T$ on the generators $\tilde{\eta}_i$ over the ring of invariant forms, contrary to \cite{bury2010automorphic}, where only $\CC$-linear transformations on the generators are considered. 
The last approach preserves the quasigrading of the Lie algebra (cf.~\cite{LM05comm}).

To find the desired isomorphism $T$ one first looks for a matrix $\tilde{h}\in \CC[F_\al^N,F_\be ^2]\left(\tilde{\eta}_1\oplus\tilde{\eta}_2\oplus\tilde{\eta}_3\right)$ which has eigenvalues $\pm\mu$ that are  units in the localised ring, that is, powers of $F$. This yields the first column of $T$. The other two columns are found by diagonalising $\ad(\tilde{h})$, i.e.~solving $[\tilde{h},\tilde{e}_\pm]=\pm 2 \mu \tilde{e}_\pm$ for $\tilde{e}_\pm\in\CC[F_\al^N,F_\be ^2]\left(\tilde{\eta}_1\oplus\tilde{\eta}_2\oplus\tilde{\eta}_3\right)$. Then one has to check that the transformation is invertible.

In Table \ref{tab:invariant matrices with exceptional poles} we present the invariant matrices when $\Gamma$ is one of the exceptional orbits $\Gamma_i$. In other words, these are the matrices (\ref{eq:dihedral invariant matrices in normal form}) with $(c_{\al},c_{\be})=(1,0)$ for poles at $\Gal$, $(c_{\al},c_{\be})=(0,1)$ for poles at $\Gbe$ and $(c_{\al},c_{\be})=(-1,1)$ for poles at $\Gga$.

\begin{table}[h!]
\caption{Generators in normal form of Automorphic Lie Algebras with dihedral symmetry and exceptional pole orbit.}
\label{tab:invariant matrices with exceptional poles}
\begin{center}
\scalebox{0.914}{
\begin{tabular}{lccc}
 $ $ & $h$ & $e_+$ & $e_-$ \\
\hline
$\Gal$ &
$\begin{pmatrix}0&\lambda^j\\\lambda^{-j}&0\end{pmatrix}$ & 
$\frac{\lambda^{2N}-1}{8\lambda^N}\begin{pmatrix}1&-\lambda^j\\\lambda^{-j}&-1\end{pmatrix}$ & 
$\frac{\lambda^{2N}-1}{8\lambda^N}\begin{pmatrix}1&\lambda^j\\-\lambda^{-j}&-1\end{pmatrix}$ \\
$\Gbe$ &
$\frac{1}{\lambda^N+1}\begin{pmatrix}\lambda^N-1&-2\lambda^j\\2\lambda^{N-j}&-\lambda^N+1\end{pmatrix}$ & 
$\frac{2(\lambda^N-1)\lambda^N}{(\lambda^N+1)^3}\begin{pmatrix}1&-\lambda^j\\\lambda^{-j}&-1\end{pmatrix}$ & 
$\frac{2(\lambda^N-1)\lambda^N}{(\lambda^N+1)^3}\begin{pmatrix}1&\lambda^{j-N}\\-\lambda^{N-j}&-1\end{pmatrix}$ \\
$\Gga$ &
$\frac{1}{\lambda^N-1}\begin{pmatrix}\lambda^N+1&-2\lambda^j\\2\lambda^{N-j}&-\lambda^N-1\end{pmatrix}$ & 
$\frac{2(\lambda^N+1)\lambda^N}{(\lambda^N-1)^3}\begin{pmatrix}1&-\lambda^j\\\lambda^{-j}&-1\end{pmatrix}$ & 
$\frac{2(\lambda^N+1)\lambda^N}{(\lambda^N-1)^3}\begin{pmatrix}-1&\lambda^{j-N}\\-\lambda^{N-j}&1\end{pmatrix}$ \\
\end{tabular}
}
\end{center}
\end{table}


Theorem \ref{thm:main2} describes ALiAs with dihedral symmetry, with poles at any orbit. Its proof exhibits the consequence of Proposition \ref{prop:prehomogenised palia is alia} that one can compute all these Lie algebras in one go. In particular, one does not need to distinguish between exceptional orbits and generic orbits. 

The resulting Lie algebras differ only in the bracket $[e_+,e_-]=\ial\ibe\iga h$. Here one can discriminate between the case of generic and exceptional orbits since precisely one factor $\ii$ equals $1$ if and only if $\Gamma$ is an exceptional orbit. In other words, $\ial\ibe\iga$ is a polynomial in $\II$ (for any $1\ne\II=\ii$, $i\in\Omega$) of degree $2$ or $3$ if $\Gamma$ is exceptional or generic respectively. Notice that this degree equals the complex dimension of the quotient 
\[\bigslant{\salia[\DD_N]{sl}{V_{\psi_j}}{\Gamma}}{\left[\salia[\DD_N]{sl}{V_{\psi_j}}{\Gamma},\salia[\DD_N]{sl}{V_{\psi_j}}{\Gamma}\right]}\cong \bigslant{\CC[\II]}{\CC[\II]\ial\ibe\iga}\]
in agreement with results in \cite{bury2010automorphic}.

Another way to analyse the ALiAs
is by considering their values at a particular point $\lambda$. This can be done in the Lie algebra, i.e.~the structure constants, or in the natural representation of the Lie algebra, i.e.~the invariant matrices (cf.~Theorem \ref{thm:evaluated alias}).
In the first case we have a three-dimensional Lie algebra, equivalent to $\mf{sl}_2(\CC)$ if $\lambda\in \overline{\CC}\setminus (\Gamma\cup\Gga\cup\Gbe\cup\Gal)$. When on the other hand $\lambda\in \Gamma_i\ne\Gamma$ we obtain the Lie algebra 
\[
[h,e_\pm]=\pm 2 e_\pm\,,\quad [e_+,e_-]=0 \,.
\]

Evaluating the invariant matrices in $\lambda\in \overline{\CC}\setminus (\Gamma\cup\Gga\cup\Gbe\cup\Gal)$ also results in $\mf{sl}_2(\CC)$. If, on the other hand, $\lambda\in\Gamma_i\ne\Gamma$ then two generators $e_\pm$ vanish and one is left with a one-dimensional, and in particular commutative, Lie algebra, in agreement with Theorem \ref{thm:evaluated alias}.

\subsection{Matrices of Invariants}
\label{sec:moi}

Invariant matrices act on invariant vectors by multiplication. The description of the invariant matrices in terms of this action yields greatly simplified matrices, which we call \emph{matrices of invariants}, while preserving the structure constants of the Lie algebra. This provides a convenient representation for Automorphic Lie Algebras. We follow \cite{knibbeler2014higher}.

The claimed action relies on the fact that a product $\eta \upsilon$  of an invariant matrix $\eta$ and an invariant vector $\upsilon$ is again an invariant vector. We check 
\[g\cdot (\eta(\lambda)\upsilon(\lambda))=\tau_g\eta(g^{-1}\lambda) \upsilon(g^{-1}\lambda) =(\tau_g\eta(g^{-1}\lambda)\tau_g^{-1})(\tau_g \upsilon(g^{-1}\lambda) )=\eta(\lambda) \upsilon(\lambda)\]
where $\tau:G^\flat\rightarrow \GL(V)$ defines the action on the underlying vector space $V$.

Let the generators $R^{\chi_V}_{|G|}$ of the module of invariant vectors be $\upsilon_1,\ldots,\upsilon_n$, $n=\dim V$,  i.e.~\[\p{|G|}\left(V\otimes R\right)^{G^\flat}=\bigoplus_{j=1}^{n}\CC[\fal^\nal,\fbe^\nbe]\upsilon_j\] and define the \emph{matrix of invariants} $A=(Q_{r,s}(\fal^\nal,\fbe^\nbe))$ related to an invariant matrix $\eta$  by \[\eta\upsilon_s=\sum_{r=1}^{n} Q_{r,s}\upsilon_{r}.\]
In terms of the square matrix $P=(\upsilon_1,\ldots,\upsilon_n)$ this reads
\[\eta P=PA.\]
The entries $Q_{r,s}$ of $A$ are forms of degree $\frac{\deg\eta}{|G|}$ in $\fal^\nal$ and $\fbe^\nbe$.
Important to note is that $A$ is not an invariant matrix in the usual sense. This is the reason for using a Roman instead of a Greek letter.

We know that $\det P=\det R^\chi_{|G|}=\prod_{i\in\Omega}F_i^{\kappa(\chi)_i\nu_i}$ where $\kappa(\chi)_i=\nicefrac{1}{2}\;\codim\,V_\chi^{\langle g_i \rangle}$ (cf.~Definition \ref{def:determinant of invariant vectors} and Theorem \ref{thm:determinant of invariant vectors}). In particular, $P$ is not invertible on the whole Riemann sphere. 
The singularity of $P$ means that $A$ cannot simply be seen as $\eta$ in a different basis, but we want to make sure that the Lie algebra structure is preserved nonetheless. Suppose \[[\eta_i,\eta_j]=\sum_{k=1}^{n}c^k_{i,j}\eta_k,\] where $c^k_{i,j}\in\CC[\fal^\nal,\fbe^\nbe]$. 
By definition
\[0=\left([\eta_i,\eta_j]-\sum_{k=1}^{n}c^k_{i,j}\eta_k\right)P=P\left([A_i,A_j]-\sum_{k=1}^{n}c^k_{i,j}A_k\right)\]
where $A_i$ is the matrix of invariants related to the invariant matrix $\eta_i$. At the points on the Riemann sphere where $P$ is invertible we thus find that $[A_i,A_j]-\sum_{k}c^k_{i,j}A_k=0$. Since these regular points include all generic orbits they constitute a non-discrete set and we can conclude that the polynomial entries of $[A_i,A_j]-\sum_{k}c^k_{i,j}A_k$ are in fact identically zero,
\[[A_i,A_j]=\sum_{k=1}^n c^k_{i,j}A_k,\]
and the transformation $\eta_i\mapsto A_i$ defines a Lie algebra morphism (see \cite{knibbeler2014higher} for an algebraic argument to this end).

To illustrate we compute the matrices of invariants for the dihedral group. We consider the original generators (\ref{eq:prehomogenised dihedral invariant matrices}) of prehomogenised $\DD_N$-invariant matrices first. We could use the generators of the normal form (\ref{eq:dihedral invariant matrices in normal form}), but these are more involved (only the structure constants are simpler) and we will consider them later. The transformation on the generators of the Lie algebra, such as the one achieving a Cartan-Weyl normal form, and the transformation on the matrix representatives of these generators, such as the transition between invariant matrices and matrices of invariants, commute. Hence we can apply transformation (\ref{eq:transformation to normal form}) of Theorem \ref{thm:main2} afterwards if we wish to.
\begin{Example}[Matrices of invariants for $\DD_N$]
\label{ex:moi}
As starting point we consider the original generators (\ref{eq:prehomogenised dihedral invariant matrices}) of the prehomogenised module of $\DD_N$-invariant matrices,
\begin{align*}
&\tilde{\eta}_1=\fal^N\begin{pmatrix}0&\lambda^j\\\lambda^{-j}&0\end{pmatrix},\\
&\tilde{\eta}_2=\fal^N\begin{pmatrix}0&\nicefrac{1}{2}(\lambda^j+\lambda^{j-N})\\\nicefrac{1}{2}(\lambda^{-j}+\lambda^{N-j})&0\end{pmatrix},\\
&\tilde{\eta}_3=\fbe\fga \begin{pmatrix}1&0\\0&-1\end{pmatrix}.
\end{align*}
Suppose $j$ is even and consider the invariant vectors of degree $|G|$ 
\[\upsilon_1=\fal^{N-\nicefrac{j}{2}}\begin{pmatrix}X^j\\Y^j\end{pmatrix},\qquad \upsilon_2=\fal^{\nicefrac{j}{2}}\fbe\begin{pmatrix}Y^{N-j}\\X^{N-j}\end{pmatrix}.\]
If $P=(\upsilon_1, \upsilon_2)$ and $\tilde{\eta}_i P=P A_i$ then the matrices $A_i$ are given by
\begin{equation}
\label{eq:dihedral moi}
A_1=\begin{pmatrix}\fal^N&2\fbe^2\\0&-\fal^N\end{pmatrix},\qquad
A_2=\begin{pmatrix}0&\fbe^2\\\fal^N&0\end{pmatrix},\qquad
A_3=\begin{pmatrix}\fbe^2&\fbe^2\\-\fal^N&-\fbe^2\end{pmatrix}
\end{equation}
and we see that all entries are in $R^{G^\flat}_{|G|}=\CC\fal^\nal\oplus\CC \fbe^\nbe$ as expected.

\end{Example}

We defined the determinant of invariant vectors in Definition \ref{def:determinant of invariant vectors}. This can be computed just as well when the base vector space is a space of matrices. But since then one can also consider the determinant of one vector, we call the former determinant the \emph{total determinant of invariant matrices}. That is, relative to a basis $\{e_1,\ldots,e_k\}$ for $\mf{g}(V)$, if the invariant matrices are given by $\eta_i=f_{i,j}e_j$, then the total determinant of invariant matrices is $\det f_{i,j}$. A change of basis $e_i=c_{i,j}e'_j$ will only change the total determinant by a factor $\det c_{i,j}\in\CC^\ast$, therefore the total determinant is well defined.
\begin{Lemma}
\label{lem:determinant of moi}
The total determinants of invariant matrices and their related matrices of invariants are related by a factor $\pm 1$; 
\[\det(\eta_1,\ldots,\eta_k)=\pm 1 \det(A_1,\ldots,A_k).\]
In particular, they are identical up to scalar factor.
\end{Lemma}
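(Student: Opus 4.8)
The plan is to realise both total determinants over the field $K=\CC(\lambda)$ of rational functions on the Riemann sphere, where the matrix of invariant vectors becomes invertible, and then to read off the relation between them from a single conjugation, using the Killing form to pin down the correction factor. So first I would pass to $K$. The square matrix $P=(\upsilon_1,\ldots,\upsilon_n)$ satisfies $\det P=\h{\Gamma}\prod_{i\in\Omega}F_i^{\nu_i\kappa(\chi)_i}=\prod_{i\in\Omega}\ii^{\kappa(\chi)_i}$ up to a nonzero constant, by Theorem \ref{thm:determinant of invariant vectors}; this is a nonzero element of $K$, so $P\in\GL_n(K)$ and the defining relation $\eta_iP=PA_i$ becomes $A_i=P^{-1}\eta_iP$ in $\mf{gl}(V)\otimes K$. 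Hence $\phi:=\Ad(P^{-1})$ is an automorphism of the Lie algebra $\mf{gl}(V)\otimes K$ (conjugation always respects brackets) with $\phi(\eta_i)=A_i$ for all $i$.

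Next I would check that $\phi$ restricts to a Lie algebra isomorphism between the two base algebras over $K$. The generators $\eta_1,\ldots,\eta_k$ form a $\CC[\II]$-basis of $\salia{g}{V}{\Gamma}$ (Theorem \ref{thm:dimV generators}), hence a $K$-basis of $\salia{g}{V}{\Gamma}\otimes_{\CC[\II]}K$; since $\salia{g}{V}{\Gamma}\subset\mf{g}(V)\otimes\mero_\Gamma$ and this tensor product has $K$-dimension $k=\dim\mf{g}(V)$, it coincides with $\mf{g}(V)\otimes_\CC K$. Therefore $\phi$ maps $\mf{g}(V)\otimes K$ isomorphically onto $\mathrm{span}_K(A_1,\ldots,A_k)=\mf{g}'\otimes K$, where $\mf{g}'$ is the simple matrix Lie algebra carrying the matrices of invariants (cf.\ Section \ref{sec:moi}); for $\mf{sl}$-based Automorphic Lie Algebras one simply has $\mf{g}'=\mf{sl}(V)=\mf{g}(V)$, every conjugate of $\mf{sl}(V)$ in $\mf{gl}(V)$ being $\mf{sl}(V)$ itself. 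Now fix $\CC$-bases $\{e_i\}$ of $\mf{g}(V)$ and $\{e'_i\}$ of $\mf{g}'$, and let $M\in\GL_k(K)$ be the matrix of $\phi|_{\mf{g}(V)\otimes K}$ in these bases, so $\phi(e_a)=\sum_i M_{ia}e'_i$. Writing $\eta_i=\sum_j f_{ij}e_j$ and $A_i=\sum_j g_{ij}e'_j$, applying $\phi$ to the first identity gives $(g_{ij})=(f_{ij})M^{T}$, hence $\det(g_{ij})=\det(f_{ij})\det M$, and the lemma reduces to the claim that $\det M$ is a (nonzero) constant, equal to $\pm1$ under a suitable normalisation of the bases.

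The heart of the matter is then that $\det M\in\CC^\ast$. Because $\phi|_{\mf{g}(V)\otimes K}$ is a Lie algebra isomorphism it intertwines Killing forms. Writing $\Gamma_{\mf{g}}$ and $\Gamma_{\mf{g}'}$ for the (constant) Gram matrices of the Killing forms of $\mf{g}(V)$ and $\mf{g}'$ with respect to $\{e_i\}$ and $\{e'_i\}$ — the Killing form of a Lie algebra over $K$ being the $K$-bilinear extension of the Killing form over $\CC$ — this says $\Gamma_{\mf{g}}=M^{T}\Gamma_{\mf{g}'}M$, so $(\det M)^2=\det\Gamma_{\mf{g}}/\det\Gamma_{\mf{g}'}$. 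Both Gram determinants are nonzero constants by Cartan's criterion, since $\mf{g}(V)\cong\mf{g}'$ is semisimple; hence $(\det M)^2\in\CC^\ast$. But an element of $K=\CC(\lambda)$ whose square is a nonzero constant has neither zeros nor poles on $\overline{\CC}$ and is therefore itself a nonzero constant, so $\det M\in\CC^\ast$; choosing $\{e_i\}$ and $\{e'_i\}$ so that $\det\Gamma_{\mf{g}}=\det\Gamma_{\mf{g}'}$ forces $\det M=\pm1$. Combined with $\det(g_{ij})=\det(f_{ij})\det M$ this gives the statement; the reductive case $\mf{g}(V)=\mf{gl}(V)$ follows immediately from the semisimple one, since the central summand $\CC\Id$ contributes the same automorphic function to both total determinants.

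I expect the main obstacle to be the verification, in the second step, that the matrices of invariants genuinely lie in $\mf{g}'\otimes\CC[\II]$ for a single constant simple matrix Lie algebra $\mf{g}'$ of dimension $k$. For $\mf{sl}(V)$ this is automatic, and for the orthogonal and symplectic cases the natural fix is to choose the generating invariant vectors so that the invariant bilinear form $P^{T}BP$ carried along the conjugation is constant, which is what makes $\mf{g}'$ a constant matrix algebra and the Gram determinant $\det\Gamma_{\mf{g}'}$ a genuine scalar. The Killing-form computation and the remark that a square root of a constant in $\CC(\lambda)$ is again constant are then routine.
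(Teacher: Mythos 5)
Your argument is correct, and it reaches the same pivot as the paper — that $\Ad P$ is an automorphism of a semisimple Lie algebra and therefore has determinant $\pm1$ because it preserves the nondegenerate Killing form — but it closes the argument by a genuinely different route. The paper evaluates everything pointwise on the dense connected set $D\subset\CC^2$ of points with trivial stabiliser, invokes Lemma \ref{lem:determinant of automorphism} at each point to get $\det\Ad P(X,Y)=\pm1$, and then uses continuity of the polynomial total determinants together with connectedness of $D$ to conclude the sign is globally constant, extending to all of $\CC^2$ by density. You instead stay algebraic: working over $K=\CC(\lambda)$ you get $(\det M)^2=\det\Gamma_{\mf g}/\det\Gamma_{\mf g'}\in\CC^\ast$ in one stroke from the Killing-form Gram matrices, and replace the topological step by the observation that a rational function with constant square is constant. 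Your version buys a cleaner, evaluation-free argument and makes the sign-constancy automatic; the paper's version buys freedom from having to discuss bases over the function field. You are also more careful than the paper on one point: in the orthogonal and symplectic cases the conjugated algebra $P^{-1}\mf g_B(V)P=\mf g_{P^TBP}(V)$ is not literally $\mf g(V)$, so the phrase ``basis transformation of $\mf g(V)$'' in the paper needs exactly the qualification you supply, namely that $P^TBP$ depends only on $\II$ (this is verified in the proof of Theorem \ref{thm:alias satisfy game of roots}) and that the target algebra and its Gram determinant are then honest constants once compatible bases are fixed. Your normalisation caveat for obtaining the precise factor $\pm1$ rather than a general nonzero constant is consistent with the paper's own convention that the total determinant is only defined up to an element of $\CC^\ast$.
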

\begin{proof}
Let $P$ be the matrix of invariant vectors such that $\eta_j P=P A_j$ and suppose $D\subset\CC^2$ is the set of points $(X,Y)$ such that $\frac{X}{Y}\in\overline{\CC}\setminus(\Gal\cup\Gbe\cup\Gga)$, i.e.~$D$ is the preimage under the canonical quotient map $\CC^2\rightarrow\CC P^1\cong\overline{\CC}$ of all points in the Riemann sphere with trivial stabiliser subgroup. The topological closure of $D$ is $\CC^2$ and $D$ is connected, as can be seen from its continuous image in the Riemann sphere.

The total determinants of the invariant matrices and the matrices of invariants are polynomial functions on $\CC^2$ and in particular continuous. 
The former is nonzero on $D$, by Proposition \ref{prop:evaluating invariant vectors}, as is the determinant of $P$.
Therefore, in the domain $D$, we have the following properties. Conjugation with $P$ induces a basis transformation of $\mf{g}(V)$ with nonzero determinant $\det \Ad P$ and 
\[\det(\eta_1,\ldots,\eta_k)(X,Y)=\det \Ad P(X,Y) \det(A_1,\ldots,A_k)(X,Y),\qquad (X,Y)\in D.\]  
The basis transformation given by $\Ad P(X,Y)$ is an automorphism of a semisimple Lie algebra, for which Lemma \ref{lem:determinant of automorphism} shows $\det \Ad P(X,Y) =\pm 1$. Moreover, since both total determinants are continuous, so is $\det \Ad P:D\rightarrow \{1,-1\}$. Finally, because $D$ is connected, $\det \Ad P$ is constant. By continuity of the total determinants we can fill in the gaps $\CC^2\setminus D$ since $D$ is dense in $\CC^2$.
\end{proof}

\begin{Example}[The total determinant]
\label{ex:total determinant}
Consider the $\mf{sl}_2(\CC)$-basis \[\left\{ \begin{pmatrix}1&0\\0&-1\end{pmatrix},\;\begin{pmatrix}0&1\\0&0\end{pmatrix},\;\begin{pmatrix}0&0\\1&0\end{pmatrix}\right\}.\] The $\DD_N$-invariant matrices (\ref{eq:prehomogenised dihedral invariant matrices}) have coefficients $\tilde{\eta}_1\cong (0,\fal^{N-j}X^{2j},\fal^{N-j}Y^{2j})$ etc. We check that $R^{\chi_2+\psi_{2j}}_{|G|}=\CC\fal^N\fbe^2\fga^2$ as predicted by Theorem \ref{thm:determinant of invariant vectors} and Table \ref{tab:1/2codimV^g},
\[\det(\tilde{\eta}_1,\tilde{\eta}_2,\tilde{\eta}_3)=
\det\begin{pmatrix}0&\fal^{N-j}X^{2j}&\fal^{N-j}Y^{2j}\\
0&\fal^{j}\fbe Y^{N-2j}&\fal^{j}\fbe X^{N-2j}\\
\fbe\fga&0&0\end{pmatrix}=2\fal^N\fbe^2\fga^2.\]
To illustrate Lemma \ref{lem:determinant of moi}, we compute the total determinant of the matrices of invariants (\ref{eq:dihedral moi}) as well. Notice first that conjugation gives an inner automorphism of the special linear algebra, so we expect the same sign. 
\begin{align*}
&\det(A_1,A_2,A_3)=
\det\begin{pmatrix}\fal^N&2\fbe^2&0\\
0&\fbe^2&\fal^N\\
\fbe^2&\fbe^2&-\fal^N\end{pmatrix}=\\
&\fal^N(-2\fal^N\fbe^2)+2\fbe^2(\fal^N\fbe^2)=2\fal^N\fbe^2(\fbe^2-\fal^N)=2\fal^N\fbe^2\fga^2.
\end{align*}
\end{Example}

\begin{Example}[Matrices of invariants in normal form]
\label{ex:moi in normal form}
Let us apply transformation (\ref{eq:transformation to normal form}), achieving the Cartan-Weyl normal form, to the matrices of invariants (\ref{eq:dihedral moi}). The homogenised version \footnote{It is arguably easier to postpone the homogenisation procedure, because in order to obtain the given simplified matrices one has to replace $1$ by appropriate powers of $c_\al\ial+c_\be\ibe$ in various places.}\[(H, E_+, E_-)=\h{\Gamma}\left((A_1,A_2,A_3)T\right)=(\h{\Gamma}A_1,\h{\Gamma}A_2,\h{\Gamma}A_3)\h{\Gamma}T\]
takes the form
\[\h{\Gamma}A_1=\begin{pmatrix}\ial&2\ibe\\0&-\ial\end{pmatrix},\qquad
\h{\Gamma}A_2=\begin{pmatrix}0&\ibe\\\ial&0\end{pmatrix},\qquad
\h{\Gamma}A_3=\begin{pmatrix}\ibe&\ibe\\-\ial&-\ibe\end{pmatrix},\]
\begin{equation}
\h{\Gamma}T=\frac{1}{2}\begin{pmatrix}
2c_{\al}&-\ibe&c_{\al}^2\ial\iga+1\\
2c_{\be}&\ial&-c_{\be}^2\ibe\iga-1\\
2c_{\be}&\ial&\ial(c_{\al}c_{\be}\iga+c_{\al}+c_{\be})
\end{pmatrix}\,.
\end{equation}
After simplification, using the two relations
\begin{align*}
\ial-\ibe+\iga&=0,\\
c_\al\ial+c_\be\ibe&=1,
\end{align*}
one finds
\begin{align*}
H&=\begin{pmatrix}1&2(c_\al+c_\be)\ibe\\0&-1\end{pmatrix},\\
E_+&=\begin{pmatrix}0&-\ibe\iga\\0&0\end{pmatrix},\\
E_-&=\begin{pmatrix}
(c_\al+c_\be)\ial\ibe
&(c_\al+c_\be)^2\ial\ibe^2\\
-\ial&
-(c_\al+c_\be)\ial\ibe
\end{pmatrix}.
\end{align*}
The matrices have been simplified significantly compared to (\ref{eq:dihedral invariant matrices in normal form}) but there is one more thing to do. By construction, $H$ is diagonalisable, so let us get this done as well \footnote{The diagonalisation was not carried out for the matrices (\ref{eq:dihedral invariant matrices in normal form}) because it interferes with the group action (as does the basis of invariant vectors). The diagonalising basis for $V$ depends on $\lambda$ and therefore the matrices representing $G$ by $\tau:G\rightarrow\GL(V)$ become dependent on $\lambda$ as well (cf.~twisted reduction group, \cite{Lombardo}). Theorem \ref{thm:main2} is concerned with invariant matrices, contrary to this example, where we are only concerned with the Lie structure.}, with basis
\[P=\begin{pmatrix}1&(c_\al+c_\be)\ibe\\0&-1\end{pmatrix}.\]
This matrix is invertible over $\CC[\II]$. In fact, $P^2=\Id$, and 
\[P^{-1}\begin{pmatrix}u&v\\w&-u\end{pmatrix}P=
\begin{pmatrix}
u+(c_\al+c_\be)\ibe w&
2(c_\al+c_\be)\ibe u-v+(c_\al+c_\be)^2\ibe^2 w\\
-w&
-u-(c_\al+c_\be)\ibe w\end{pmatrix}\] 
and
\begin{equation}
\label{eq:dihedral alias serre model}
H\sim\begin{pmatrix}1&0\\0&-1\end{pmatrix},\qquad
E_+\sim\begin{pmatrix}0&\ibe\iga\\0&0\end{pmatrix},\qquad
E_-\sim\begin{pmatrix}0&0\\\ial&0\end{pmatrix}.
\end{equation}
This is a set of generators that reflect the simplicity of the structure constants clearly.
\end{Example}

\section{Applications of Invariants}
\label{sec:structure theory for alias}

If we assume the existence of a Cartan-Weyl normal form for any ALiA, i.e.~Conjecture \ref{conj:existence serre normal form}, then we can say quite a few things about this Lie algebra. 
The investigation of the predictive power of the invariants of ALiAs naturally leads to a cohomology theory for root systems. 


\subsection{Root Cohomology}

This section presents a tentative setup for a cohomology theory of root systems and explains its relation to Automorphic Lie Algebras and their representations.

Let \(q\in \mb{N}\) and let $\roots$ be a root system. Put $\roots_0=\roots\cup\{0\}$.
We define $1$-chains $C_1(\roots)$ by formal $\ZZ$-linear combinations of roots.
Inductively we define $m$-chains $C_m(\roots)$ by formal $\ZZ$-linear combinations of $m$-tuples of roots $(\alpha_1 ,\ldots, \alpha_m)\in\roots_0^m$ with the property that $ (\alpha_1 ,\ldots,\alpha_j+\alpha_{j+1},\ldots, \alpha_m)\in C_{m-1}(\roots)$ for all $1\le j< m$.

Dually, let the \emph{$m$-cochains} be the $\ZZ$-linear maps \[C^m(\roots,\mb{Z}^q)=\Hom(C_m(\roots),\mb{Z}^q).\] 
The maps $\mathsf{d}^m:C^m(\roots,\mb{Z}^q)\rightarrow C^{m+1}(\roots,\mb{Z}^q)$ can then be defined in the standard way \cite{harrison1962commutative} (with trivial \(\roots\)-action on \(\mb{Z}^q\)).
\begin{align*}
\mathsf{d}^1 \omega^1(\alpha_0,\alpha_1)&= \omega^1(\alpha_1)-\omega^1(\alpha_0+\alpha_1)+\omega^1(\alpha_0)\\
\mathsf{d}^2 \omega^2(\alpha_0,\alpha_1,\alpha_2)&= \omega^2(\alpha_1,\alpha_2)-\omega^2(\alpha_0+\alpha_1,\alpha_2)+\omega^2(\alpha_0,\alpha_1+\alpha_2)-\omega^2(\alpha_0,\alpha_1)\\
\mathsf{d}^m \omega^m(\alpha_0,\ldots,\alpha_m)&=\omega^m(\alpha_1,\ldots,\alpha_m)\\
&+\sum_{j=1}^m (-1)^{j} \omega^m(\alpha_0,\ldots,\alpha_{j-1}+\alpha_j,\cdots,\alpha_m)\\
&-(-1)^m\omega^m(\alpha_0,\ldots,\alpha_{m-1}).
\end{align*}
We have $\mathsf{d}^{m+1}\mathsf{d}^m=0$ and define the group of \emph{$m$-cocycles}, \emph{$m$-coboundaries} and the \emph{$m^{\text{th}}$-cohomology group with coefficients in $\ZZ^q$} respectively by
\begin{align*}
Z^m(\roots,\mb{Z}^q)&=\ker \mathsf{d}^m,\\
B^m(\roots,\mb{Z}^q)&=\mathsf{d}^{m-1} C^{m-1}(\roots,\mb{Z}^q),\\
H^m(\roots,\mb{Z}^q)&=\bigslant{Z^m(\roots,\mb{Z}^q)}{B^m(\roots,\mb{Z}^q)}.
\end{align*}

If \(\omega^m\in C^m(\roots,\mb{Z}^q)\) \emph{and} \(\mathsf{d}^m \omega^m\in C^{m+1}(\roots,\mb{Z}^q)\) take their values in \(\NN_0^q\), where $\NN_0=\NN\cup\{0\}$,
we say that \(\omega^m\in C^m(\roots,\NN_0^q)\). Observe that if $\omega^m\in  C^{m}(\roots,\NN_0^q)$ then $\mathsf{d}^m\omega^m\in  C^{m+1}(\roots,\NN_0^q)$, since \(\mathsf{d}^{m+1}\mathsf{d}^m=0\).
Thus \(\mathsf{d}^m\) is well defined on  \(C^m(\roots,\NN_0^q)\).

In what follows we will only use $1$-(co)chains and $2$-(co)chains (with coefficients in $\NN_0$). 
The basis of the $2$-chains $C_2(\roots)$ are the pairs of roots whose sum is a root as well. The $1$-cochains are maps $\omega^1:\roots_0\rightarrow\NN_0$ such that $\mathsf{d}^1\omega^1(\alpha,\beta)=\omega^1(\beta)-\omega^1(\alpha+\beta)+\omega^1(\alpha)\in\NN_ 0$ for all $(\alpha, \beta)\in C_2(\roots)$. 
Symmetric $2$-cocycles define a class of Lie algebras of our interest and $1$-cochains provide representations for such Lie algebras, as is shown below.

\begin{Definition}[Associated Lie algebras $\AL{\roots}$ in Chevalley normal form] 
\label{def:lie algebra associated to a cocycle}
Let  \(\omega^2\in Z^2(\roots,\NN_0^q)\) be symmetric. Define $\AL{\roots}$ as the Lie algebra associated to the $2$-cocycle \(\omega^2\), in the following sense.
The Lie algebra $\AL{\roots}$ is the free $\CC[\II_1,\cdots,\II_q]$-module with generators  $\{h_\alpha, e_\beta\;|\;\alpha\in\sroots,\;\beta\in\roots\}$ and $\CC[\II_1,\cdots,\II_q]$-linear Lie bracket
\[\begin{array}{ll}
[h,h']=0&\text{if }h,h'\in \CC[\II_1,\cdots,\II_q]\left(h_\alpha\;|\;\alpha\in\sroots\right),\\
{[}h, e_{\beta}]=\beta(h)e_{\beta}&\text{if }h\in \CC[\II_1,\cdots,\II_q]\left(h_\alpha\;|\;\alpha\in\sroots\right)\text{ and }\beta\in\roots,\\
{[}e_\beta,e_{-\beta}]=\II^{\omega^2(\beta,-\beta)} \sum_{i=1}^d h_{\alpha_i} & \text{if }\beta=\alpha_1+\ldots+\alpha_d,\;\alpha_i\in\sroots,\\
{[} e_\beta, e_\gamma]=\pm(r+1) \II^{\omega^2(\beta,\gamma)} e_{\beta+\gamma}&\text{if } \beta,\,\gamma,\,\beta+\gamma\in \roots,\\
{[} e_\beta, e_\gamma]=0& \text{if }\beta,\,\gamma\in \roots\text{ and }\beta+\gamma\notin\roots_0.
\end{array}\]
Here we use a multi-index notation  $\II^{\omega^2(\alpha,\beta)}=\prod_{i=1}^q\ii^{\omega^2(\alpha,\beta)_i}$. Moreover, a root $\beta$ is considered as a $\CC[\II_1,\cdots,\II_q]$-linear form on $\CC[\II_1,\cdots,\II_q]\left(h_\alpha\;|\;\alpha\in\sroots\right)$ 
defined by $\beta (h_\alpha)=\frac{2(\beta,\alpha)}{(\alpha,\alpha)}$, where $(\cdot,\cdot)$ is the inner product on $\CC\sroots$ induced by the Killing form of $\mf{g}$.
The integer $r\in\ZZ$ is the largest satisfying $\gamma-r\beta\in\roots$. A consistent choice of signs is originally given by Tits \cite{tits1966constantes} and also described by Kac \cite{kac1994infinite} in terms of a $2$-cocycle on the root lattice with values in $\{\pm1\}$. If the generators of a Lie algebra over a graded ring satisfy the above Lie brackets it is said to be in Chevalley normal form.
\end{Definition}

To prove that $\AL{\roots}$ is a Lie algebra, one has to confirm that the Jacobi identity is satisfied. Given the classical existence proof for a Lie algebra with root system $\roots$ and its Chevalley basis \cite{humphreys1972introduction}, 
the Jacobi identity for $\AL{\roots}$ is equivalent to the assumption that $\mathsf{d}^2\omega^2=0$ and $\omega^2(\alpha,\beta)=\omega^2(\beta,\alpha)$ for all $(\beta,\alpha)\in C_2(\roots)$. Indeed, for the expression $[e_\alpha,[e_\beta,e_\gamma]]+[e_\beta,[e_\gamma,e_\alpha]]+[e_\gamma,[e_\alpha,e_\beta]]$ to vanish, it is necessary and sufficient that each term has the same $\II$-multidegree. That is \[\omega^2(\alpha,\beta+\gamma)+\omega^2(\beta,\gamma)=\omega^2(\beta,\gamma+\alpha)+\omega^2(\gamma,\alpha)=\omega^2(\gamma,\alpha+\beta)+\omega^2(\alpha,\beta).\]
Under the symmetry assumption this is equivalent to \[\mathsf{d}^2\omega^2(\alpha,\beta,\gamma)= \omega^2(\beta,\gamma)-\omega^2(\alpha+\beta,\gamma)+\omega^2(\alpha,\beta+\gamma)-\omega^2(\alpha,\beta)=0.\]
We remark that the Killing form $K$ of $\AL{\roots}$ on its generators is identical to the complex Lie algebra related to $\roots$ apart from the values \[K(e_\alpha,e_{-\alpha})=\II^{\omega^2(\alpha,-\alpha)},\qquad \alpha\in\roots.\]

If $\omega^2$ is a coboundary then $\AL{\roots}$ can be realised in the following way.
\begin{Example}
\label{ex:representation by 1-cochain}
Suppose $\left\{H_\alpha, E_\beta\;|\;\alpha\in\sroots,\; \beta\in\roots\right\}$ is a Chevalley basis for a simple complex Lie algebra and let $\omega^1\in C^1(\roots,\NN_0^q)$ be a $1$-cochain. Then the Lie algebra associated to its coboundary can be faithfully represented by 
\[\AL[\mathsf{d}^1\omega^1]{\roots}\cong\CC[\II_1,\ldots,\II_q]\left(H_\alpha, \II^{\omega^1(\beta)}E_\beta\;|\;\alpha\in\sroots,\; \beta\in\roots\right).\]
Indeed, any $2$-coboundary $\mathsf{d}^1\omega^1$ is a symmetric $2$-cocycle and the commutation relations are easy to check. For instance, if $\beta$, $\gamma$, $\beta+\gamma\in\roots$ then
\begin{align*}
[\II^{\omega^1(\beta)}E_\beta,\II^{\omega^1(\gamma)}E_\gamma]&=\II^{\omega^1(\beta)}\II^{\omega^1(\gamma)}[E_\beta,E_\gamma]\\
&=\pm(r+1)\II^{\omega^1(\beta)+\omega^1(\gamma)}E_{\beta+\gamma}\\
&=
\pm(r+1)\II^{\mathsf{d}^1\omega^1(\beta,\gamma)}(\II^{\omega^1(\beta+\gamma)}E_{\beta+\gamma}).
\end{align*}
We call this concretisation the \emph{canonical representation associated to the 1-cochain $\omega^1$}.
Recall the matrices of invariants (\ref{eq:dihedral alias serre model}) for example.
\end{Example}

The following result is based on the invariants of ALiAs gathered throughout this thesis. For convenience we first define the \emph{norm of a $1$-cochain}, \[\|\omega^1\|=\sum_{\alpha\in\roots}\omega^1(\alpha)\in\NN_0^q,\qquad \omega^1\in C^1(\roots, \NN_0^q),\]
and more generally one can define the norm of a $m$-cochain as the sum of all values on the basis elements of the $m$-chains.
\begin{Theorem}
\label{thm:alias satisfy game of roots}
Let $V$ be an irreducible representation of a binary polyhedral group $G^\flat$ and $\mf{g}(V)$ a  $G^\flat$-submodule of $\mf{gl}(V)$ and a simple Lie subalgebra of rank $\ell$ and with root system $\roots$. Suppose the Automorphic Lie Algebra $\salia{g}{V}{\Gamma}$ allows a Cartan-Weyl normal form 
\[\salia{g}{V}{\Gamma}\cong\CC[\II]\left( h_\alpha, e_\beta\;|\;\alpha\in\sroots,\, \beta\in\roots\right).\]
Then there exists a $1$-cochain $\omega^1\in C^1(\roots, \NN_0^{|\Omega|})$ with $\omega^1(0)=0$ such that \[\salia{g}{V}{\Gamma}\cong \AL[\mathsf{d}^1\omega^1]{\roots}.\]
Moreover, for $i\in\Omega$ and $(\alpha,\beta)\in C_2(\roots)$,
\begin{align}
\label{eq:noI}
\|\omega^1\|_i&=\delta_{\Gamma\Gamma_i} \kappa(\roots)_i\\
\label{eq:omega1le1}
\omega^1(\alpha)_i&\in \{0,1\}\\
\label{eq:d1omega1le1}
 \mathsf{d}^1\omega^1(\alpha,\beta)_i&\in \{0,1\}
\end{align}
where $\delta$ is the Kronecker delta and $\kappa(\roots)$ of Definition \ref{def:kappa(roots)} is given in Table \ref{tab:noI}.
\end{Theorem}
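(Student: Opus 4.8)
The plan is to build the cochain $\omega^1$ directly from an explicit Cartan-Weyl normal form and then verify the three numerical constraints using the invariants established earlier in the thesis, principally Theorem \ref{thm:determinant of invariant vectors} and Theorem \ref{thm:evaluated alias}.

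First I would fix a Cartan-Weyl normal form $\{h_\alpha, e_\beta\}$ for $\salia{g}{V}{\Gamma}$, which exists by hypothesis. The bracket relations in Definition \ref{def:cartan-weyl normal form} already force $[e_\beta,e_\gamma]\in\CC[\II]e_{\beta+\gamma}$ when $\beta+\gamma\in\roots$ and $[e_\alpha,e_{-\alpha}]\in\CC[\II]h_\alpha$, so the only freedom left in the structure constants is a polynomial factor in each such bracket. Since $\CC[\II]$ is a polynomial ring in one variable and $\II=\ial=\ibe=\iga$ up to the linear relations recalled before Theorem \ref{thm:dimV generators}, I would first argue that after rescaling each $e_\beta$ by a unit we may assume each structure constant is a monomial $c\,\II^{n}$ with $c\in\CC^\ast$; the obstruction to doing this globally is exactly a $2$-coboundary condition, which is why the resulting exponents assemble into $\mathsf{d}^1\omega^1$ for some $1$-cochain $\omega^1$ rather than an arbitrary symmetric $2$-cocycle. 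Concretely, I would pick a base point $\mu_0$ outside $\Gamma\cup\Gal\cup\Gbe\cup\Gga$, normalise so that the evaluation at $\mu_0$ is the Chevalley basis of $\mf{g}$ (legitimate by Theorem \ref{thm:evaluated alias}, which says this evaluation is $\mf{g}$ itself), and then define $\omega^1(\beta)_i$ to be the order of vanishing of $e_\beta$ at the exceptional orbit $\Gamma_i$, i.e.\ the exponent of $\ii$ dividing $e_\beta$ in the canonical representation. The matching of signs $\pm(r+1)$ comes for free from the evaluation at $\mu_0$. This produces the isomorphism $\salia{g}{V}{\Gamma}\cong\AL[\mathsf{d}^1\omega^1]{\roots}$ with $\omega^1(0)=0$.

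Next I would establish the norm formula (\ref{eq:noI}). The key identity is that the total determinant of the Cartan-Weyl normal form generators, viewed as a section, must up to a nonzero scalar be the determinant of invariant vectors $\h{\Gamma}\det(R^{\chi_{\mf{g}(V)}}_{|G|})$, which by Theorem \ref{thm:determinant of invariant vectors} equals $\CC\prod_{i\in\Omega}\ii^{\kappa(\roots)_i}$ (using Definition \ref{def:kappa(roots)}, $\kappa(\roots)=\kappa(\chi_{\mf{g}(V)})$, and the fact that $\chi_{\mf{g}(V)}$ is real valued with trivial determinant by Observation \ref{obs:only inner}). On the other hand, in the canonical representation associated to $\omega^1$ the generators are $h_\alpha$ together with $\II^{\omega^1(\beta)}E_\beta$, and the $h_\alpha$ contribute no factors of $\ii$ while each $e_\beta$ contributes $\ii^{\omega^1(\beta)_i}$; taking the determinant gives a factor $\ii^{\sum_\beta\omega^1(\beta)_i}=\ii^{\|\omega^1\|_i}$. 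Comparing exponents at each $\Gamma_i$ and remembering that $\ii=1$ exactly when $\Gamma=\Gamma_i$ (so the pole orbit contributes nothing) yields $\|\omega^1\|_i=\delta_{\Gamma\Gamma_i}\kappa(\roots)_i$. The same bookkeeping shows $\det\Ad P$ between the two representations is $\pm1$ by Lemma \ref{lem:determinant of automorphism}, so no spurious extra factors appear.

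Finally, for the pointwise bounds (\ref{eq:omega1le1}) and (\ref{eq:d1omega1le1}) I would use the local analysis at an exceptional point $\mu_i\in\Gamma_i$, exactly in the spirit of the proof of Theorem \ref{thm:determinant of invariant vectors}. Evaluating the normal form at $\mu_i$, the element $h\in\CC\langle h_\alpha(\mu_i)\rangle$ can be chosen with distinct eigenvalues on distinct roots, so the nonzero $e_\beta(\mu_i)$ are linearly independent eigenvectors; hence at most $\codim\mf{g}(V)^{\langle g_i\rangle}=2\kappa(\roots)_i$ of them vanish, and each vanishes to order exactly $1$ because a higher-order vanishing would, via the invariance relation $e_\beta\circ g_i^{-1}=\zeta e_\beta$ for a root of unity $\zeta$ and the local linearisation of $g_i$ near its fixed point $\mu_i$, force an incompatible congruence on the exponents (the $\nu_i$-th root structure only permits order $1$ once the factor is stripped in the canonical representation). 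This gives $\omega^1(\alpha)_i\in\{0,1\}$. For $\mathsf{d}^1\omega^1(\alpha,\beta)_i=\omega^1(\alpha)_i+\omega^1(\beta)_i-\omega^1(\alpha+\beta)_i$, the same local picture applied to the bracket $[e_\alpha,e_\beta]=\pm(r+1)\II^{\mathsf{d}^1\omega^1(\alpha,\beta)}e_{\alpha+\beta}$ shows the vanishing order of the structure constant at $\mu_i$ is $0$ or $1$: it is nonnegative since the bracket of two invariant matrices is an invariant matrix, and it is at most $1$ because the corresponding entry of the matrix of invariants lives in the two-dimensional space $R^{G^\flat}_{|G|}$ and a product of two of its nonzero evaluations still cannot vanish to order $>1$ at a point with stabiliser of order $\nu_i$. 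The main obstacle I anticipate is making the ``order exactly $1$'' step fully rigorous: one must rule out the degenerate possibility that $e_\beta$ or a structure constant vanishes to order $\ge2$ at some $\mu_i$, and the cleanest route is the local coordinate argument with the linearised stabiliser action, carefully tracking the $\nu_i$-th root ambiguity exactly as in Theorem \ref{thm:determinant of invariant vectors}'s proof, rather than any global degree count.
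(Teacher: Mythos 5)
Your construction of $\omega^1$ and your proof of the norm formula (\ref{eq:noI}) follow the paper's route: write each $e_\beta$ as $f_\beta E_\beta$ against a Chevalley basis and identify $\prod_{\beta\in\roots}f_\beta$ with the total determinant, which Theorem \ref{thm:determinant of invariant vectors} and Lemma \ref{lem:determinant of moi} evaluate as $\prod_{i\in\Omega}\ii^{\delta_{\Gamma\Gamma_i}\kappa(\roots)_i}$. Two corrections of order, though: the $f_\beta$ cannot be made into monomials by ``rescaling by a unit'' (units of $\CC[\II]$ are constants); they are monomials \emph{because} their product is one, by unique factorisation in $\CC[\II]$, so the determinant identity must come first. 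And for $\mf{so}$ and $\mf{sp}$ the weight spaces relative to the constant CSA become $\II$-dependent after passing to matrices of invariants (the bilinear form transforms to $P^TBP$); the paper handles this explicitly and you should at least flag it.

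The genuine gap is in the bounds (\ref{eq:omega1le1}) and (\ref{eq:d1omega1le1}). The local analysis used for Theorem \ref{thm:determinant of invariant vectors} only produces \emph{lower} bounds on vanishing orders, so it cannot show an order is exactly one; and your fallback justification --- that a structure constant ``lives in the two-dimensional space $R^{G^\flat}_{|G|}$'' --- is false, since structure constants are arbitrary elements of $\CC[\II]$ and could a priori contain $\ii^2$. The missing idea is a counting argument that closes the loop with (\ref{eq:noI}): by Theorem \ref{thm:evaluated alias} the set $\roots_i^c$ of roots with $e_\beta(\mu_i)=0$ (as \emph{invariant matrices} --- not matrices of invariants, where the singular transformation $P$ spoils the equivalence) has size exactly $2\kappa(\roots)_i$ and is closed under $\beta\mapsto-\beta$; evaluating $[e_\beta,e_{-\beta}]=f_\beta f_{-\beta}\sum_j h_{\alpha_j}$ at $\mu_i$ and using $\sum_j h_{\alpha_j}(\mu_i)\neq0$ forces $f_\beta(\mu_i)f_{-\beta}(\mu_i)=0$, i.e.\ at least one factor $\ii$ per pair $\{\beta,-\beta\}\subset\roots_i^c$. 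Since (\ref{eq:noI}) supplies exactly $\kappa(\roots)_i$ factors of $\ii$ for exactly $\kappa(\roots)_i$ such pairs, each pair receives precisely one, giving $\mathsf{d}^1\omega^1(\beta,-\beta)_i\in\{0,1\}$. From this, (\ref{eq:omega1le1}) is immediate by nonnegativity of $\omega^1$, and (\ref{eq:d1omega1le1}) follows from the short contradiction $\mathsf{d}^1\omega^1(\alpha+\beta,-\alpha)_i=-1$ in the only offending configuration. Without the global budget supplied by the norm formula, no purely local argument will deliver these upper bounds.
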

\begin{center}
\begin{table}[h!] 
\caption{
The integers $\kappa(\roots)_i=\nicefrac{1}{2}\,\codim\mf{g}(V)^{\langle g_i\rangle}$, $i\in\Omega$.}
\label{tab:noI}
\begin{center}
\begin{tabular}{cccccccccccc} \hline
$\roots$&$A_1$&$A_2$&$B_2$&$A_3$&
$C_3$&$A_4$&$A_5$
\\
\hline
$\al$&$1$&$3$&${4}$&$6$&
${8}$&$10$&$14$\\
$\be$&$1$&$3$&${3}$&$5$&
${7}$&$8$&$12$\\
$\ga$&$1$&$2$&${3}$&$4$&
${6}$&$6$&$9$\\
\hline
$\Sigma$&$3$&$8$&$10$&$15$&$21$&$24$&$35$\\
\hline 
\end{tabular}
\end{center}
\end{table}
\end{center}
\begin{proof}
Notice that \(\mathsf{d}^1 \omega^1\) is a symmetric $2$-cocycle so that $\AL[\mathsf{d}^1\omega^1]{\roots}$ is a well defined Lie algebra.
Suppose the generators of the Automorphic Lie Algebra are represented by matrices of invariants. Then the generators $\{h_\alpha\,|\,\alpha\in\sroots\}$ of the Cartan subalgebra (CSA), if diagonalised, are constant matrices, which span a CSA for $\mf{g}(V)$. 

First we consider the case $\mf{g}=\mf{sl}$.
If $\mf{sl}(V)_\beta=\CC E_\beta$ is the (one-dimensional) weight space in $\mf{sl}(V)$ corresponding to the root $\beta$, relative to the CSA, then the generator $e_\beta$ is an element of $\CC[\II]E_\beta$. Say $e_\beta=f_\beta E_\beta$ with $f_\beta\in\CC[\II]$. The total determinant of all the generators is the product of the polynomials $f_\beta$ and this determinant is known to be \[\prod_{\beta\in\roots}f_\beta=\prod_{i\in\Omega}\ii^{\delta_{\Gamma\Gamma_i}\kappa(\roots)_i}\] where $\kappa(\roots)_i$ is given in Table \ref{tab:noI} (as this follows from Theorem \ref{thm:determinant of invariant vectors} and Lemma \ref{lem:determinant of moi}). In particular, all polynomials $f_\beta$ are monomials in $\ial$, $\ibe$ and $\iga$.
By defining $\omega^1(\beta)_i$ to be the power of $\ii$ in the monomial $f_\beta$, \[\II^{\omega^1(\beta)}=f_\beta,\] we see that the ALiA is faithfully represented by the canonical representation associated to  a $1$-cocycle $\omega^1$ satisfying (\ref{eq:noI}). In particular \[\salia{sl}{V}{\Gamma}\cong \AL[\mathsf{d}^1\omega^1]{\roots}\]
as Lie algebras.

If $\mf{g}=\mf{so}$ or $\mf{g}=\mf{sp}$ there is an additional complication. Let $B$ be the nondegenerate bilinear form that defines $\mf{g}(V)=\mf{g}_B(V)$; it can be represented by a constant matrix. The transformation $P(\lambda)$ of invariant vectors taking invariant matrices to matrices of invariants (cf.~Section \ref{sec:moi}) transforms the constant bilinear form to a form $P^T(\lambda)BP(\lambda)$ which is $\lambda$-dependent. But because the representation $\tau:G^\flat\rightarrow\GL(V)$ preserves the original bilinear form (that is, $\tau_g^{-T}B \tau_g^{-1}=B$) the new bilinear form is in fact only $\II$-dependent. Indeed, it is invariant under the action on $\lambda$:
\[P^T(g^{-1}\lambda)BP(g^{-1}\lambda)=(P^T(g^{-1}\lambda)\tau_g^T)(\tau_g^{-T}B \tau_g^{-1})(\tau_g P(g^{-1}\lambda))=P^T(\lambda)BP(\lambda)\] 
thus $P^T(\lambda)BP(\lambda)=B'(\II)$.
This implies that the one-dimensional weight spaces $\CC E_\beta(\II)$ in $\mf{g}_{B'(\II)}(V)$ relative to the CSA spanned by $\{h_\alpha\,|\,\alpha\in\sroots\}$ are $\II$-dependent as well, and the generators of a Cartan-Weyl normal form are of the form $e_\beta=f_\beta E_\beta(\II)$, with $f_\beta\in\CC[\II]$. 

Nonetheless the complex Lie algebra $\mf{g}_{B'(\II)}(V)=\CC\left(h_\alpha, E_\beta(\II)\;|\;\alpha\in\sroots,\, \beta\in\roots\right)$ remains constant up to isomorphism, because the transition to matrices of invariants preserves the Lie bracket. In particular, the total determinant of this basis is a nonvanishing meromorphic function and is therefore constant. Hence the total determinant of the natural representation of the ALiA is once more given by $\prod_{\beta\in\roots}f_\beta=\prod_{i\in\Omega}\ii^{\delta_{\Gamma\Gamma_i}\kappa(\roots)_i}$ and we can define a $1$-cochain $\omega^1$ by $\II^{\omega^1(\beta)}=f_\beta$ as before.
Moreover, the ALiA can be represented by $\CC[\II]\left(h_\alpha,\, f_\beta E_\beta\;|\;\alpha\in\sroots, \,\beta\in\roots\right)$ where $\left\{h_\alpha,\,E_\beta\;|\;\alpha\in\sroots, \,\beta\in\roots\right\}$ defines a Chevalley basis for $\mf{g}_B(V)$, that is, the canonical representation associated to $\omega^1$ (even though such a representation might not be obtainable through the sequence of transformations that was applied on the ALiAs with dihedral symmetry throughout this thesis).

To justify condition (\ref{eq:omega1le1}) and (\ref{eq:d1omega1le1}) we will show that
\begin{equation}
\label{eq:killing form minimal rank}
\mathsf{d}^1\omega^1(\alpha,-\alpha)_i\in\{0,1\},\qquad i\in\Omega,\; \alpha\in\roots,
\end{equation}
using evaluations. But before doing so we make sure that (\ref{eq:killing form minimal rank}) implies  (\ref{eq:omega1le1}) and (\ref{eq:d1omega1le1}). Since $\mathsf{d}^1\omega^1(\alpha,-\alpha)_i=\omega^1(\alpha)_i+\omega^1(-\alpha)_i$ and $\omega^1$ takes values in $\NN_0^{|\Omega|}$, condition (\ref{eq:omega1le1}) follows immediately.
Notice also that (\ref{eq:killing form minimal rank}) implies 
\begin{equation}
\label{eq:mle1/2delta}
\|\omega^1\|_i\le\nicefrac{1}{2}|\roots|,\qquad i\in\Omega
\end{equation}
since $\|\omega^1\|_i=\sum_{\alpha\in\roots^+}(\omega^1(\alpha)_i+\omega^1(-\alpha)_i)=\sum_{\alpha\in\roots^+}\mathsf{d}^1\omega^1(\alpha,-\alpha)_i\le |\roots^+|$.

The implication (\ref{eq:d1omega1le1}) is a bit more hidden. Granted (\ref{eq:omega1le1}) we only need to exclude the possibility of roots $\alpha$ and $\beta$ such that $\omega^1(\alpha)-\omega^1(\alpha+\beta)+\omega^1(\beta)=2$ i.e.~$\omega^1(\alpha)=\omega^1(\beta)=1$ and $\omega^1(\alpha+\beta)=0$. If such roots exists then $\omega^1(-\alpha)=0$ by (\ref{eq:killing form minimal rank}) and $\mathsf{d}^1\omega^1(\alpha+\beta,-\alpha)=\omega^1(-\alpha)-\omega^1(\beta)+\omega^1(\alpha+\beta)=0-1+0$ contradicting the fact that $\mathsf{d}^1\omega^1$ takes values in $\NN_0^{|\Omega|}$. Hence (\ref{eq:killing form minimal rank}) implies (\ref{eq:d1omega1le1}).

We turn to the proof of (\ref{eq:killing form minimal rank}). Let $\salia{g}{V}{\Gamma}(\mu)$ be the space spanned by the invariant matrices evaluated at $\mu\in\overline{\CC}$. This is the Lie algebra $\mf{g}(V)^{G_\mu}$ and we studied it extensively in Section \ref{sec:evaluating alias}. Now we define another Lie algebra, $\salia[]{g}{V}{\Gamma}(\mu)$, as the complex vector space with abstract basis $\{h_\alpha,\, e_\beta\;|\;\alpha\in\sroots,\,\beta\in \roots\}$ and with structure constants given by the structure constants of the Automorphic Lie Algebra evaluated in $\mu$. In general the first is a Lie subalgebra of the latter \[\salia{g}{V}{\Gamma}(\mu)<\salia[]{g}{V}{\Gamma}(\mu),\] and these two Lie algebras coincide and equal $\mf{g}(V)$ if $\mu$ belongs to a generic orbit, that is, if $G_\mu=1$.

For $i\in\Omega$ we can split the set of roots \[\roots=\roots_i\sqcup\roots_i^c\] such that $\salia{g}{V}{\Gamma}(\mu_i)$ has basis $\{h_\alpha(\mu_i),\, e_\beta(\mu_i)\;|\;\alpha\in\sroots,\,\beta\in \roots_i\}$ if $\mu_i\in\Gamma_i$ and $h_\alpha$ and $e_\beta$ are invariant matrices. In other words, $e_\beta(\mu_i)=0$ if and only if $\beta\in \roots_i^c$. This does \emph{not} work with the matrices of invariants; the singularity of the transformation $P$ of invariant vectors at $\mu_i$ allows evaluated invariant matrices to be zero while the corresponding matrix of invariants is nonzero.

The evaluated invariant matrices $\salia{g}{V}{\Gamma}(\mu)$ constitute a reductive Lie algebra given by Table \ref{tab:evaluated alias}, that is, a direct sum of a semisimple and an abelian Lie algebra. The elements $e_\beta(\mu_i)$ are in the semisimple summand. In particular, if $\beta\in\roots_i$ then $-\beta\in\roots_i$ (a standard fact, see e.g.~\cite{fulton1991representation, humphreys1972introduction, knapp2002lie}) hence if $\beta\in\roots_i^c$ then $-\beta\in\roots_i^c$.

Take $\beta=\alpha_1+\ldots+\alpha_d\in \roots_i^c$, $\alpha_j\in\sroots$. In $\salia[]{g}{V}{\Gamma}(\mu_i)$ we can consider the bracket $[e_\beta,e_{-\beta}]=f_\beta(\mu_i)f_{-\beta}(\mu_i)\sum_{j=1}^{d}h_{\alpha_j}\in\CC\mf{h}$ where $f_\beta$ and $f_{-\beta}$ are the monomials occurring in the matrices of invariants. Here we use that $h_\alpha$, as a diagonal matrix of invariants, is constant. Since $e_\beta(\mu_i)=0$ as an invariant matrix, this bracket is zero, and because the sum of evaluated invariant matrices $\sum_{j=1}^{d}h_{\alpha_j}(\mu_i)\ne 0$ we can conclude $f_\beta(\mu_i)f_{-\beta}(\mu_i)=0$, i.e.~$\ii$ occurs at least once in at least one of the monomials $f_\beta$ and $f_{-\beta}$. 

Condition (\ref{eq:killing form minimal rank}) now follows by counting. Indeed, $|\roots_i^c|=\dim \mf{g}(V)-\dim \mf{g}(V)^{\langle g_i\rangle}=\codim \mf{g}(V)^{\langle g_i\rangle}=2\kappa(\roots)_i$. Under condition (\ref{eq:noI}) we only have half this many $\ii$ at our disposal to make a representation for the ALiA out of matrices of invariants. Therefore, in order to satisfy $f_\beta(\mu_i)f_{-\beta}(\mu_i)=0$ if $\beta\in\roots_i^c$ we have to be economical and only use one $\ii$ for each pair $\{\beta,-\beta\}\subset\roots_i^c$. Thus condition (\ref{eq:killing form minimal rank}).
\end{proof}

\begin{Corollary} 
\label{cor:necessary conditions for isomorphism} 
If $\salia{g}{V}{\Gamma}\cong \salia[G']{g'}{V'}{\Gamma'}$ then the base Lie algebras are isomorphic,
$\roots\cong\roots'$, and $\{\delta_{\Gamma\Gamma_i}\kappa(\roots)_i\;|\;i\in\Omega\}=\{\delta_{\Gamma'\Gamma'_i}\kappa(\roots')_i\;|\;i\in\Omega\}$.
\end{Corollary}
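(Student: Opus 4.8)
The plan is to extract both claimed invariants directly from the structural description of Automorphic Lie Algebras developed earlier, so that an isomorphism of ALiAs forces equality of these invariants. The starting point is Theorem \ref{thm:alias satisfy game of roots}, which (conjecturally, but we work under its hypothesis) identifies any ALiA $\salia{g}{V}{\Gamma}$, up to Lie algebra isomorphism, with the associated Lie algebra $\AL[\mathsf{d}^1\omega^1]{\roots}$ for some $1$-cochain $\omega^1\in C^1(\roots,\NN_0^{|\Omega|})$ satisfying $\|\omega^1\|_i = \delta_{\Gamma\Gamma_i}\kappa(\roots)_i$. So an isomorphism $\salia{g}{V}{\Gamma}\cong\salia[G']{g'}{V'}{\Gamma'}$ yields $\AL[\mathsf{d}^1\omega^1]{\roots}\cong\AL[\mathsf{d}^1\omega'^1]{\roots'}$ as Lie algebras over a polynomial ring, and the task reduces to recovering $\roots$ and the multiset $\{\|\omega^1\|_i\}$ from the abstract Lie algebra.

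First I would recover the root system. The key observation is that $\AL{\roots}$ is a free $\CC[\II]$-module whose rank equals $k=\dim\mf{g}$, and that reducing modulo the maximal ideal generated by the $\II_i$ (or, more robustly, evaluating the structure constants at a generic point of the Riemann sphere, as in Section \ref{sec:evaluating alias} and the discussion after Definition \ref{def:lie algebra associated to a cocycle}) returns the simple complex Lie algebra with root system $\roots$. Concretely, the Killing form of $\AL{\roots}$ restricted to the Cartan subalgebra $\CC[\II](h_\alpha\mid\alpha\in\sroots)$ is identical to that of $\mf{g}$, so $\roots$ is determined by the isomorphism type of the Lie algebra. Since a Lie algebra isomorphism preserves the rank of the underlying free module and this evaluation construction, it follows that $\roots\cong\roots'$. (Alternatively one can quote the fact, used implicitly in Corollary \ref{cor:necessary conditions for isomorphism}, that passing to the residue field at a generic point is intrinsic to the quasigraded structure.)

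Next I would recover the multiset $\{\delta_{\Gamma\Gamma_i}\kappa(\roots)_i\mid i\in\Omega\}$. Having fixed $\roots\cong\roots'$, I would look at the total determinant of the generating set, which by the proof of Theorem \ref{thm:alias satisfy game of roots} equals $\prod_{\beta\in\roots} f_\beta = \prod_{i\in\Omega}\II_i^{\,\delta_{\Gamma\Gamma_i}\kappa(\roots)_i}$ in the canonical representation, and is well defined up to a nonzero scalar (Lemma \ref{lem:determinant of moi} and the well-definedness of the total determinant). This determinant, viewed as an element of $\CC[\II_1,\II_2,\II_3]/(\text{relation})$, is a monomial; its exponent vector, up to reordering of the three automorphic functions $\II_i$, is an isomorphism invariant because the functions $\II_i$ themselves are intrinsically attached to the $\CC[\II]$-module structure (they are the generators of the ring over which the ALiA is free, recovered abstractly as in Theorem \ref{thm:dimV generators}). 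An isomorphism of ALiAs must send this ring to that ring and hence permute the $\II_i$, so the multiset of exponents is preserved, giving $\{\delta_{\Gamma\Gamma_i}\kappa(\roots)_i\} = \{\delta_{\Gamma'\Gamma'_i}\kappa(\roots')_i\}$.

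The main obstacle I anticipate is making precise the claim that the polynomial ring $\CC[\II]$ and its distinguished generators $\II_i$ are recovered from the abstract Lie algebra structure in a way that is genuinely canonical — an isomorphism of $\CC[\II]$-modules need not a priori respect the ring, and the $\II_i$ are not canonically ordered (hence the formulation with multisets and Kronecker deltas). The cleanest route is probably to argue at the level of the center of the multiplier algebra or via the quasigrading: the ALiA is quasigraded, its degree-zero part after suitable normalization is $\mf{g}$, and the ring of scalars acting on the free module is intrinsic; then the three exceptional automorphic functions are distinguished among all of $\CC[\II]$ by, e.g., the property that each $\II_i$ has a single zero orbit on the sphere (they are simple automorphic functions in the sense recalled after Equation \eqref{eq:automorphic function}), a property that can be detected through the evaluation maps $\salia{g}{V}{\Gamma}(\mu)$ classified in Theorem \ref{thm:evaluated alias}. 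I would flag this as the step requiring the most care, while noting that for the purposes of Corollary \ref{cor:necessary conditions for isomorphism} even the weaker statement — that the formal product $\prod_i \II_i^{\delta_{\Gamma\Gamma_i}\kappa(\roots)_i}$ is invariant up to the action of the (finite) group of permutations of the $\II_i$ arising from ring automorphisms of $\CC[\II]$ — already delivers the stated multiset equality.
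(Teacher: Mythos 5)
Your overall strategy is sound and the first half coincides with the paper's: the paper also recovers $\roots\cong\roots'$ by evaluating at a generic point $\mu$ (Proposition \ref{prop:evaluating invariant vectors} gives $\salia{g}{V}{\Gamma}(\mu)=\mf{g}(V)$), so any isomorphism of ALiAs descends to an isomorphism of the base Lie algebras. For the second half you and the paper diverge in a meaningful way. You work with the total determinant of the generators in the canonical representation, $\prod_{\beta\in\roots}f_\beta=\prod_{i\in\Omega}\ii^{\delta_{\Gamma\Gamma_i}\kappa(\roots)_i}$, and you correctly identify the weak point yourself: this quantity is defined relative to a choice of faithful representation and generating set, so you must argue that it is intrinsic to the abstract Lie algebra, which leads you into the discussion of whether $\CC[\II]$ and its distinguished generators can be recovered canonically. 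The paper closes exactly this gap by replacing the total determinant with the determinant of the Killing form: since $K(a,b)=\tr\ad(a)\ad(b)$ is defined by the bracket alone, its Gram matrix over $\CC[\II]$ has determinant in $\CC^\ast\prod_{i\in\Omega}\ii^{2\delta_{\Gamma\Gamma_i}\kappa(\roots)_i}$, and under any module isomorphism $T$ with $\det T\in\CC^\ast$ it transforms as $K\mapsto T^tKT$, so $\det K$ is preserved up to a nonzero scalar with no appeal to a preferred representation. Your invariant is (up to squaring and a constant) the same monomial, so your argument can be repaired by routing it through the Killing form rather than through the canonicity of the $\II_i$; as written, the step asserting that an isomorphism "must send this ring to that ring and hence permute the $\II_i$" is the one place where the proof is not yet complete, and the Killing form is the standard device that makes it unnecessary.
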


\begin{proof}
The first necessary condition to have an isomorphism $\salia{g}{V}{\Gamma}\cong\salia[G']{g'}{V'}{\Gamma'}$ was already evident from the evaluations considered in Proposition \ref{prop:evaluating invariant vectors}, where we found that $\salia{g}{V}{\Gamma}(\mu)=\mf{g}(V)$ for generic $\mu\in\overline{\CC}$, thus any isomorphism of ALiAs evaluates on the Riemann sphere to an isomorphism of the base Lie algebras. 
To obtain the second necessary condition we consider the Killing form of $\salia{g}{V}{\Gamma}$ and $\salia[G']{g'}{V'}{\Gamma'}$. We observed before that this bilinear form on $\salia{g}{V}{\Gamma}$ can be represented by a $\dim \mf{g} \times \dim \mf{g}$ matrix over $\CC[\II]$ with determinant in $\CC^\ast\prod_{i\in\Omega}\ii^{\delta_{\Gamma\Gamma_i}2\kappa(\roots)_i}$. An isomorphism $\salia{g}{V}{\Gamma}\cong \salia[G']{g'}{V'}{\Gamma'}$ is realised by a $\dim \mf{g} \times \dim \mf{g}$ matrix over $\CC[\II]$ with determinant in $\CC^*$ and acts on a bilinear form as $K\mapsto T^t K T$, hence the determinant of the Killing form is preserved up to scalar under Lie algebra isomorphisms.
\end{proof}

Are the necessary conditions for an isomorphism given in Corollary \ref{cor:necessary conditions for isomorphism} also sufficient? This is the isomorphism question, \ref{q:isomorphism question}, or rather a slightly stronger version of it.
In the following subsection we investigate the collections of $\AL[\mathsf{d}^1\omega^1]{\roots}$ where $\omega^1$ satisfies the conditions (\ref{eq:noI}, \ref{eq:omega1le1}, \ref{eq:d1omega1le1}) and will find that they often contain only one Lie algebra in which case we find an ALiA, granted it allows a Cartan-Weyl normal form.  However, we will also find cases where there are still various Lie algebras in this class and the isomorphism question remains open.

\subsection{The Isomorphism Question}

This section investigates the consequences of Theorem \ref{thm:alias satisfy game of roots} for the isomorphism question; whether Automorphic Lie Algebras themselves are invariants of Automorphic Lie Algebras. In Table \ref{tab:group decompositions of simple lie algebras} we see that there are only four root systems involved in the isomorphism question: $A_1$, $A_2$, $B_2$ and $A_3$ \cite{fulton1991representation, humphreys1972introduction, knapp2002lie}. For each we discuss and illustrate their $2$-coboundaries satisfying the conditions of Theorem \ref{thm:alias satisfy game of roots}.

Because we are ultimately interested in Lie algebras up to isomorphism we define a notion of isomorphism for $2$-cochains.
\begin{Definition}[Isomorphism of $m$-cochains] 
\label{def:isomorphism of cochains}
Two cochains $\omega^m, \bar{\omega}^m\in C^m(\roots,\NN_0^q)$ are \emph{isomorphic} if there exists an isomorphism $\phi:\roots\rightarrow\roots$ of root systems such that \[\omega^m(\phi(\alpha_1),\ldots,\phi(\alpha_m))= \bar{\omega}^m(\alpha_1,\ldots,\alpha_m)\]
for all $m$-chains $(\alpha_1,\ldots,\alpha_m)\in C_m(\roots)$.
\end{Definition}
\newcommand{\liehom}{\phi^{\ast\ast}}If two symmetric $2$-cocycles $\omega^2, \bar{\omega}^2\in Z^2(\roots,\NN_0^q)$ are isomorphic then their associated Lie algebras 
$\AL{\roots}$ and $\AL[\bar{\omega}^2]{\roots}$ with respective generators $h_\alpha,\,e_\beta$ and $\bar{h}_\alpha,\,\bar{e}_\beta$, $\alpha\in\sroots,\,\beta\in\roots$,
are isomorphic as Lie algebras. Indeed, the isomorphism $\phi$ of the root system can be used to construct an isomorphism $\liehom: \AL[\bar{\omega}^2]{\roots}\rightarrow \AL{\roots}$ by $\CC[\II_1,\ldots,\II_q]$-linear extension of $\liehom(\bar{h}_\alpha)=h_{\phi(\alpha)}$ and $\liehom(\bar{e}_\beta)=e_{\phi(\beta)}$, $\alpha\in\sroots,\;\beta\in\roots$. This map has inverse $(\liehom)^{-1}(h_\alpha)=\bar{h}_{\phi^{-1}(\alpha)}$, $(\liehom)^{-1}(e_\beta)=\bar{e}_{\phi^{-1}(\beta)}$ and preserves the Lie bracket. We check the case $(\beta,\gamma)\in C_2(\roots)$,
\begin{align*}
\liehom[\bar{e}_\beta,\bar{e}_\gamma]&=\pm\liehom (r+1)\II^{\bar{\omega}^2(\beta,\gamma)}\bar{e}_{\beta+\gamma}\\
&= \pm(r+1)\II^{\bar{\omega}^2(\beta,\gamma)}\liehom\bar{e}_{\beta+\gamma}\\
&= \pm(r+1)\II^{\omega^2(\phi(\beta),\phi(\gamma))}e_{\phi(\beta)+\phi(\gamma)}\\
&= [e_{\phi(\beta)}, e_{\phi(\gamma)}]\\
&= [\liehom \bar{e}_\beta,\liehom\bar{e}_\gamma].
\end{align*}
The remaining brackets are left for the reader to check, using $\phi(\beta)(h_{\phi(\alpha)})=\frac{2(\phi(\beta),\phi(\alpha))}{(\phi(\alpha),\phi(\alpha))}=\frac{2(\beta,\alpha)}{(\alpha,\alpha)}=\beta(h_\alpha)$.

In order to study the Lie algebra $\AL{\roots}$ using cocycles, it is clearly desirable to have a notion of isomorphism on symmetric $2$-cocycles that coincides with the Lie algebra isomorphism of the associated Lie algebras. The current attempt possibly defines too few isomorphisms. That is, if two symmetric $2$-cochains are not isomorphic by Definition \ref{def:isomorphism of cochains}, there is no guarantee that the associated Lie algebras are different. This is an open problem.

\subsubsection{Root System $A_1$}
\label{sec:A1}

The root systems $A_1$ and $A_1\times A_1$ for $\mf{sl}_2(\CC)$ and $\mf{so}_4(\CC)\cong\mf{sl}_2(\CC)\oplus \mf{sl}_2(\CC)$ are shown in Figure \ref{fig:A1} and Figure \ref{fig:A1xA1} respectively.
\begin{figure}[h!]
\begin{minipage}{0.5\linewidth}
\caption{The root system $A_1$.}
\label{fig:A1}
\begin{center}
\begin{tikzpicture}[scale=0.6]
  \path node at ( 0,0) [scale=\nodescale,shape=circle,draw,label=270: $ $] (zero) {$ $}
  	node at ( 2,0) [scale=\nodescale,shape=circle,draw,label=0: $\alpha$] (one) {$ $}
  	node at ( -2,0) [scale=\nodescale,shape=circle,draw,label=180: $-\alpha$] (two) {$ $};
	\draw (zero) to node [sloped,below]{} (one);
	\draw (zero) to node [sloped,above]{} (two);
\end{tikzpicture}
\end{center}
\end{minipage}%
\begin{minipage}{0.5\linewidth}
\caption{The root system $A_1\times A_1$.}
\label{fig:A1xA1}
\begin{center}
\begin{tikzpicture}[scale=0.6]
  \path node at ( 0,0) [scale=\nodescale,shape=circle,draw,label=270: $ $] (zero) {$ $}
  	node at ( 2,0) [scale=\nodescale,shape=circle,draw,label=0: $\alpha$] (one) {$ $}
  	node at ( -2,0) [scale=\nodescale,shape=circle,draw,label=180: $-\alpha$] (two) {$ $}
  	node at ( 0,2) [scale=\nodescale,shape=circle,draw,label=90: $\beta$] (three) {$ $}
  	node at ( 0,-2) [scale=\nodescale,shape=circle,draw,label=270: $-\beta$] (four) {$ $};
	\draw (zero) to node [sloped,below]{} (one);
	\draw (zero) to node [sloped,above]{} (two);
	\draw (zero) to node [sloped,below]{} (three);
	\draw (zero) to node [sloped,above]{} (four);
\end{tikzpicture}
\end{center}
\end{minipage}
\end{figure}

There are only two maps $\omega^1:C_1(A_1)\rightarrow \NN _0$ satisfying conditions $\omega^1(0)=0$ and $\omega^1(-\alpha)+\omega^1(0)+\omega^1(\alpha)=1$ from Theorem \ref{thm:alias satisfy game of roots}. Either $\omega^1(-\alpha)=1$ or $\omega^1(\alpha)=1$ and the other values are zero. Both of them map to the same $2$-coboundary; $\mathsf{d}^1\omega^1(-\alpha,\alpha)=1$ and $\mathsf{d}^1\omega^1(0,\alpha)=\mathsf{d}^1\omega^1(0,-\alpha)=0$. In terms of the representation (\ref{eq:dihedral alias serre model}) this statement means that it does not matter whether $\ii$ is in the top right entry or in the bottom left. Hence, if $\dim V=2$, i.e.~if
\[V\in\{\psi_j, \btiiii, \btiiiii, \btiiiiii, \boiii, \boiiii, \boiiiii, \byii, \byiii\}\] 
then
\begin{align*}
\salia{sl}{V}{\Gamma}&\cong\CC[\II]\left(h,  e_+, e_-\right)\\
{[}h,e_{\pm}]&=\pm 2 e_\pm\\
{[}e_+,e_-]&=\ial\ibe\iga h
\end{align*}
for all orbits $\Gamma$, if we take into account that $\ii=1\Leftrightarrow \Gamma=\Gamma_i$.
This is conform the dihedral examples. 
Moreover, considering a switch $1\ne\II_i\leftrightarrow\II_j\ne 1$ to be an isomorphism we obtain 
\[\salia{sl}{V}{\Gal}\cong \salia{sl}{V}{\Gbe}\cong \salia{sl}{V}{\Gga}\ncong\salia{sl}{V}{\Gamma\ne\Gamma_i}.\]
In particular, we obtain the main result of \cite{LS10} for arbitrary polyhedral groups by taking $\Gamma=\Gal$, and the $\mf{sl}_2(\CC)$-based isomorphism results of \cite{bury2010automorphic}.

For root system $A_1\times A_1$, another constraint plays a part as (\ref{eq:d1omega1le1}) forces one $\ii$ at each simple component of the root system. Thus, for all four-dimensional irreducible representations $V$ of real type, that is, only for $V=\byiiiiii$, we find $\salia{so}{\byiiiiii}{\Gamma}$ to be isomorphic to $\salia{sl}{V}{\Gamma}\oplus\salia{sl}{V}{\Gamma}$ if $\dim V=2$.
This can be generalised to $n$ (orthogonal) copies of $A_1$ and $\kappa((A_1)^n)=(n,n,n)$.

\subsubsection{Root System $A_2$}
\label{sec:A2}

The planar root system $A_2$ for $\mf{sl}_3(\CC)$ consists of $6$ roots arranged in a regular hexagon. In the diagram there is a node for each root. The addition of roots is inherited from the ambient vector space with origin in the centre of the hexagon. 

For this root system the defining properties of chains and cochains play an important role, contrary to $A_1$. Recall that $2$-chains are formal $\ZZ$-linear combinations of pairs of roots $(\alpha,\beta)$ such that $\alpha+\beta$ is again a root. We depict $(\alpha,\beta)\in C_2(\roots)$ in the diagram by an edge between $\alpha$ and $\beta$, cf.~Figure \ref{fig:C_2(A_2)}, using the fact that we are only interested in symmetric $2$-cocycles for now.
\begin{figure}[h!]
\caption{The basis for $C_2(A_2)$.}
\label{fig:C_2(A_2)}
\begin{center}
\begin{tikzpicture}[scale=0.6]
  \path node at ( 0,0) [scale=\nodescale,shape=circle,draw,label=270: $ $] (zero) {$ $}	
	node at ( 4,0) [scale=\nodescale,shape=circle,draw,label=0: $\alpha_1$] (one) {$ $}
  	node at ( 2,3.464) [scale=\nodescale,shape=circle,draw,label=60: $\alpha_2$] (two) {$ $}
  	node at ( -2,3.464) [scale=\nodescale,shape=circle,draw,label=120: $\alpha_3$] (three) {$ $}
	node at ( -4,0) [scale=\nodescale,shape=circle,draw,label=180: $\alpha_4$] (four) {$ $}
	node at ( -2,-3.464) [scale=\nodescale,shape=circle,draw,label=240: $\alpha_5$] (five) {$ $}
	node at ( 2,-3.464) [scale=\nodescale,shape=circle,draw,label=300: $\alpha_6$] (six) {$ $};
	\draw (zero) to node [sloped,below]{} (one);
	\draw (zero) to node [sloped,above]{} (two);
	\draw (zero) to node [sloped,above]{} (three);
	\draw (zero) to node [sloped,above]{} (four);
	\draw (zero) to node [sloped,above]{} (five);
	\draw (zero) to node [sloped,above]{} (six);
	\draw (one) to node [sloped,above]{$ $} (three);
	\draw (two) to node [sloped,above]{$ $} (four);
	\draw (three) to node [sloped,above]{$ $} (five);
	\draw (four) to node [sloped,above]{$ $} (six);
	\draw (five) to node [sloped,above]{$ $} (one);
	\draw (six) to node [sloped,above]{$ $} (two);
\end{tikzpicture}
\end{center}
\end{figure}
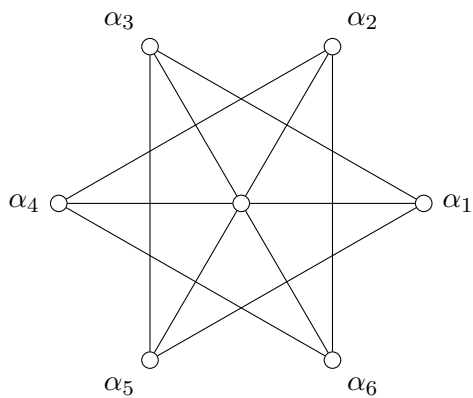

We are interested in $1$-cochains $\omega^1$ that satisfy (\ref{eq:noI}, \ref{eq:omega1le1}, \ref{eq:d1omega1le1}). Table \ref{tab:noI} gives \[\|\omega^1\|=(3,3,2).\] Components of $\omega^1$ can be studied separately and we start with the smallest sum $\|\omega^1\|_\ga=2$.

The condition in the cochain definition that $\omega^1$ be such that $\mathsf{d}^1\omega^1$ takes nonnegative values is very restrictive. For instance, if $\omega^1(\alpha_1)>0$ then either $\omega^1(\alpha_2)>0$ or $\omega^1(\alpha_6)>0$, since otherwise $\mathsf{d}^1\omega^1(\alpha_2,\alpha_6)=\omega^1(\alpha_2)-\omega^1(\alpha_1)+\omega^1(\alpha_6)<0$. In particular there is no $1$-cochain with norm $1$. And, more relevant, there is a unique $1$-cochain with norm $2$, up to isomorphism. Indeed, it was just argued that two neighbours in the hexagon have to take positive value, and if we define $\omega^1(\alpha_1)=\omega^1(\alpha_2)=1$ and  $\omega^1(\alpha_j)=0$ for $j=3,\ldots,6$ then $\mathsf{d}^1\omega^1\ge 0$ so $\omega^1$ is a cochain.

In the root system diagram we depict a $1$-cochain $\omega^1$ by filling the node $\alpha$ if and only if $\omega^1(\alpha)=1$ and we depict the corresponding $2$-coboundary $\mathsf{d}^1\omega^1$ by drawing and edge $(\alpha,\beta)\in C_2(\roots)$ if and only if $\mathsf{d}^1\omega^1(\alpha,\beta)=1$, cf.~Figure \ref{fig:B^2(A_2), m=2,3}. This provides all information under the conditions (\ref{eq:omega1le1}, \ref{eq:d1omega1le1}).

The equation $\|\omega^1\|_\be=3$ has a unique solution in $C^1(A_2,\NN_0)$ as well, up to isomorphism. This follows from the same argument. Each root $\alpha$ for which $\omega^1(\alpha)=1$ requires at least one neighbour in the hexagon to take value $1$ as well. Therefore, three subsequent roots take value $1$ and there is one way to do this, up to isomorphism. The solution together with its $2$-coboundary is shown in the right root system of Figure \ref{fig:B^2(A_2), m=2,3}. Notice that the associated Lie algebra of the latter is not generated as a Lie algebra by $6$ elements, such as the base Lie algebra $\mf{sl}_3(\CC)$, whereas the Lie algebra associated to the former coboundary is.

\begin{figure}[h!]
\caption{The $2$-coboundaries $\mathsf{d}^1\omega^1\in B^2(A_2,\NN_0)$ where $\|\omega^1\|=2$ or $3$.}
\label{fig:B^2(A_2), m=2,3}
\begin{minipage}{0.5\linewidth}
\begin{center}
\begin{tikzpicture}[scale=0.6]
  \path node at ( 0,0) [scale=\nodescale,shape=circle,draw,label=270: $ $] (zero) {$ $}	
	node at ( 4,0) [scale=\nodescale,shape=circle,fill=\colorga,draw,label=0: $\alpha_1$] (one) {$ $}
  	node at ( 2,3.464) [scale=\nodescale,shape=circle,fill=\colorga,draw,label=60: $\alpha_2$] (two) {$ $}
  	node at ( -2,3.464) [scale=\nodescale,shape=circle,draw,label=120: $\alpha_3$] (three) {$ $}
	node at ( -4,0) [scale=\nodescale,shape=circle,draw,label=180: $\alpha_4$] (four) {$ $}
	node at ( -2,-3.464) [scale=\nodescale,shape=circle,draw,label=240: $\alpha_5$] (five) {$ $}
	node at ( 2,-3.464) [scale=\nodescale,shape=circle,draw,label=300: $\alpha_6$] (six) {$ $};
	\draw [\colorga, line width=1pt](one) to node [sloped,above,near end]{} (zero);
	\draw [\colorga, line width=1pt](two) to node [sloped,below,near end]{} (zero);
	\draw [\colorga, line width=1pt](four) to node [sloped,below,near end]{} (zero);
	\draw [\colorga, line width=1pt](five) to node [sloped,below,near end]{} (zero);
	\draw [\colorga, line width=1pt](two) to node [sloped,above]{} (four);
	\draw [\colorga, line width=1pt](five) to node [sloped,below]{} (one);
\end{tikzpicture}
\end{center}
\end{minipage}
\begin{minipage}{0.5\linewidth}
\begin{center}
\begin{tikzpicture}[scale=0.6].  
  \path node at ( 0,0) [scale=\nodescale,shape=circle,draw,label=270: $ $] (zero) {$ $}	
	node at ( 4,0) [scale=\nodescale,shape=circle,fill=\colorbe,draw,label=0: $\alpha_1$] (one) {$ $}
  	node at ( 2,3.464) [scale=\nodescale,shape=circle,fill=\colorbe,draw,label=60: $\alpha_2$] (two) {$ $}
  	node at ( -2,3.464) [scale=\nodescale,shape=circle,fill=\colorbe,draw,label=120: $\alpha_3$] (three) {$ $}
	node at ( -4,0) [scale=\nodescale,shape=circle,draw,label=180: $\alpha_4$] (four) {$ $}
	node at ( -2,-3.464) [scale=\nodescale,shape=circle,draw,label=240: $\alpha_5$] (five) {$ $}
	node at ( 2,-3.464) [scale=\nodescale,shape=circle,draw,label=300: $\alpha_6$] (six) {$ $};
	\draw [\colorbe, line width=1pt](one) to node [sloped,above,near end]{} (zero);
	\draw [\colorbe, line width=1pt] (two) to node [sloped,below,near end]{} (zero);
	\draw [\colorbe, line width=1pt] (three) to node [sloped,below,near end]{} (zero);
	\draw [\colorbe, line width=1pt] (four) to node [sloped,below,near end]{} (zero);
	\draw [\colorbe, line width=1pt] (five) to node [sloped,below,near end]{} (zero);
	\draw [\colorbe, line width=1pt] (six) to node [sloped,below,near end]{} (zero);
	\draw [\colorbe, line width=1pt] (one) to node [sloped,above]{} (three);
	\draw [\colorbe, line width=1pt] (three) to node [sloped,below]{} (five);
	\draw [\colorbe, line width=1pt] (five) to node [sloped,below]{} (one);
\end{tikzpicture}
\end{center}
\end{minipage}
\end{figure}

Any solutions to $\|\omega^1\|=\kappa$ with $\kappa\ge4$ violates (\ref{eq:d1omega1le1}) as noticed before: (\ref{eq:mle1/2delta}).


To make a statement on ALiAs the above results have to be combined. For instance, if $\dim V=3$, the ALiAs $\salia{sl}{V}{\Gal}$ and $\salia{sl}{V}{\Gbe}$ are associated to $\mathsf{d}^1\omega^1$ for some $\omega^1\in C^1(A_2,\NN_0^2)$ with \[\|\omega^1\|=(3,2)\] by Theorem \ref{thm:alias satisfy game of roots}. Since there is only one $2$-coboundary with norm $2$ and with norm $3$, the number of coboundaries that could define $\salia{sl}{V}{\Gal}$ and $\salia{sl}{V}{\Gbe}$ is the number of ways the previous two coboundaries can be combined. In this case we are in luck because there is again just one possibility up to isomorphism, shown in Figure \ref{fig:B^2(A_2), m=(3,2)}. 
In order to incorporate multidimensional cochains in the root system diagrams we define a colouring as follows.
\begin{equation}
\label{eq:colouring}
\text{\Large{\color{\coloral} \bf $\al$}, {\color{\colorbe} \bf $\be$}, {\color{\colorga} \bf $\ga$}.}
\end{equation}

\begin{figure}[h!]
\caption{The $2$-coboundary $\mathsf{d}^1\omega^1\in B^2(A_2,\NN_0^2)$ where $\|\omega^1\|=(3,2)$.}
\label{fig:B^2(A_2), m=(3,2)}
\begin{center}
\begin{tikzpicture}[scale=0.6]
  \tikzset{edgeal/.style ={green, line width=1pt}}
  \tikzset{edgebe/.style ={red, line width=1pt}}
  \tikzset{edgega/.style ={blue, line width=1pt}}
  \path node at ( 0,0) [scale=\nodescale,shape=circle,draw,label=270: $ $] (zero) {$ $}	
	node at ( 4,0) [scale=\nodescale,shape=circle,draw,label=0: $\alpha_1$] (one) {$ $}
  	node at ( 2,3.464) [scale=\nodescale,shape=circle,draw,label=60: $\alpha_2$] (two) {$ $}
  	node at ( -2,3.464) [scale=\nodescale,shape=circle,draw,label=120: $\alpha_3$] (three) {$ $}
	node at ( -4,0) [scale=\nodescale,shape=circle,draw,label=180: $\alpha_4$] (four) {$ $}
	node at ( -2,-3.464) [scale=\nodescale,shape=circle,draw,label=240: $\alpha_5$] (five) {$ $}
	node at ( 2,-3.464) [scale=\nodescale,shape=circle,draw,label=300: $\alpha_6$] (six) {$ $};
	\draw (one)[\colorbe, line width=1pt] to node [sloped,above,near end]{} (zero);
	\draw (two)[\colorbe, line width=1pt] to node [sloped,below,near end]{} (zero);
	\draw (three)[\colorbe, line width=1pt] to node [sloped,below,near end]{} (zero);
	\draw (four)[\colorbe, line width=1pt] to node [sloped,below,near end]{} (zero);
	\draw (five)[\colorbe, line width=1pt] to node [sloped,below,near end]{} (zero);
	\draw (six)[\colorbe, line width=1pt] to node [sloped,below,near end]{} (zero);
	\draw (one)[\colorbe, line width=1pt] to node [sloped,above]{} (three);
	\draw (three)[\colorbe, line width=1pt] to node [sloped,below]{} (five);
	\draw (five)[\colorbe, line width=1pt] to node [sloped,below]{} (one);


	\draw (one.210)[\colorga, line width=1pt] to node [sloped,above,near end]{} (zero.330);
	\draw (two.270)[\colorga, line width=1pt] to node [sloped,below,near end]{} (zero.30);
	\draw (four.330)[\colorga, line width=1pt] to node [sloped,below,near end]{} (zero.210);
	\draw (five.30)[\colorga, line width=1pt] to node [sloped,below,near end]{} (zero.270);
	\draw (two)[\colorga, line width=1pt] to node [sloped,above]{$ $} (four);
	\draw (five.0)[\colorga, line width=1pt] to node [sloped,below]{} (one.240);
\end{tikzpicture}
\end{center}
\end{figure}

One can conclude that the existence of the Cartan-Weyl normal form (Conjecture \ref{conj:existence serre normal form}) implies through Theorem \ref{thm:alias satisfy game of roots} that $\mf{sl}_3(\CC)$-based ALiAs with poles restricted to one of the smallest two orbits are isomorphic. This includes all combinations of \[V\in\{\btiiiiiii,\boiiiiii, \boiiiiiii, \byiiii, \byiiiii\}, \qquad \Gamma\in\{\Gal, \Gbe\}.\] 
Better yet, it describes the Lie algebra structure explicitly. 
With a choice of simple roots  $\alpha=\alpha_1$ and $\beta=\alpha_3$  this Lie algebra
\[
\salia[]{sl}{V}{\Gamma_j}=\CC[\II]\left( h_\alpha, h_\beta, e_{\alpha}, e_{\beta}, e_{\alpha+\beta}, e_{-\alpha}, e_{-\beta}, e_{-\alpha-\beta}\right)\]
has commutation relations
\[
\begin{array}{llllll}
{[}e_\alpha,e_{-\alpha}]&=&\ii\iga h_\alpha&\qquad{[}e_\alpha,e_{\beta}]&=&\ii e_{\alpha+\beta}\\
{[}e_\beta,e_{-\beta}]&=&\ii h_\beta&\qquad{[}e_{\alpha+\beta},e_{-\alpha}]&=&-\iga e_{\beta}\\
{[}e_{\alpha+\beta},e_{-\alpha-\beta}]&=&\ii\iga (h_\alpha+h_\beta)&\qquad{[}e_\beta,e_{-\alpha-\beta}]&=&\ii e_{-\alpha}\\
&&&\qquad{[}e_{-\alpha},e_{-\beta}]&=&- e_{-\alpha-\beta}\\
&&&\qquad{[}e_{-\alpha-\beta},e_{\alpha}]&=&\ii\iga e_{-\beta}\\
&&&\qquad{[}e_{-\beta},e_{\alpha+\beta}]&=&- e_{\alpha}
\end{array}
\]
where $i=\be$ if $j=\al$ and vice versa.
The other brackets are given by the roots (and identical to the base Lie algebra). Arguably the full Lie algebra structure is displayed in a more transparent manner by the coboundary in Figure \ref{fig:B^2(A_2), m=(3,2)} than through this list of Lie brackets. 

The remaining $\mf{sl}_3(\CC)$-based ALiAs with exceptional pole orbits are related to cochains $\omega^1$ with norm $\|\omega^1\|=(3,3)$. There are two distinct\footnote{Distinct in the sense of Definition \ref{def:isomorphism of cochains}} $2$-coboundaries resulting from such cochains, depicted in Figure \ref{fig:B^2(A_2), m=(3,3)}. It is not known at the time of writing whether the Lie algebras associated to these coboundaries are isomorphic.
\begin{figure}[h!]
\caption{The two $2$-coboundaries $\mathsf{d}^1\omega^1\in B^2(A_2,\NN_0^2)$ where $\|\omega^1\|=(3,3)$.}
\label{fig:B^2(A_2), m=(3,3)}
\begin{minipage}{0.5\linewidth}
\begin{center}
\begin{tikzpicture}[scale=0.6]
  \path node at ( 0,0) [scale=\nodescale,shape=circle,draw,label=270: $ $] (zero) {$ $}	
	node at ( 4,0) [scale=\nodescale,shape=circle,draw,label=0: $ $] (one) {$ $}
  	node at ( 2,3.464) [scale=\nodescale,shape=circle,draw,label=60: $ $] (two) {$ $}
  	node at ( -2,3.464) [scale=\nodescale,shape=circle,draw,label=120: $ $] (three) {$ $}
	node at ( -4,0) [scale=\nodescale,shape=circle,draw,label=180: $ $] (four) {$ $}
	node at ( -2,-3.464) [scale=\nodescale,shape=circle,draw,label=240: $ $] (five) {$ $}
	node at ( 2,-3.464) [scale=\nodescale,shape=circle,draw,label=300: $ $] (six) {$ $};

	\draw[\coloral, line width=1pt] (one.150) to node [sloped,above,near end]{$ $} (zero.30);
	\draw[\coloral, line width=1pt] (two.210) to node [sloped,below,near end]{} (zero.90);
	\draw[\coloral, line width=1pt] (three.330) to node [sloped,below,near end]{$ $} (zero.90);
	\draw[\coloral, line width=1pt] (four.30) to node [sloped,below,near end]{} (zero.150);
	\draw[\coloral, line width=1pt] (five.90) to node [sloped,below,near end]{$ $} (zero.210);
	\draw[\coloral, line width=1pt] (six.90) to node [sloped,below,near end]{} (zero.330);
	\draw[\coloral, line width=1pt] (two) to node [sloped,above]{$ $} (four);
	\draw[\coloral, line width=1pt] (four) to node [sloped,below]{$ $} (six);
	\draw[\coloral, line width=1pt] (six) to node [sloped,below]{$ $} (two);

	\draw[\colorbe, line width=1pt] (one) to node [sloped,above,near end]{$ $} (zero);
	\draw[\colorbe, line width=1pt] (two) to node [sloped,below,near end]{} (zero);
	\draw[\colorbe, line width=1pt] (three) to node [sloped,below,near end]{$ $} (zero);
	\draw[\colorbe, line width=1pt] (four) to node [sloped,below,near end]{} (zero);
	\draw[\colorbe, line width=1pt] (five) to node [sloped,below,near end]{$ $} (zero);
	\draw[\colorbe, line width=1pt] (six) to node [sloped,below,near end]{} (zero);
	\draw[\colorbe, line width=1pt] (one) to node [sloped,above]{$ $} (three);
	\draw[\colorbe, line width=1pt] (three) to node [sloped,below]{$ $} (five);
	\draw[\colorbe, line width=1pt] (five) to node [sloped,below]{$ $} (one);
\end{tikzpicture}
\end{center}
\end{minipage}%
\begin{minipage}{0.5\linewidth}
\begin{center}
\begin{tikzpicture}[scale=0.6]
  \path  node at ( 0,0) [scale=\nodescale,shape=circle,draw,label=270: $ $] (zero) {$ $}	
	node at ( 4,0) [scale=\nodescale,shape=circle,draw,label=0: $ $] (one) {$ $}
  	node at ( 2,3.464) [scale=\nodescale,shape=circle,draw,label=60: $ $] (two) {$ $}
  	node at ( -2,3.464) [scale=\nodescale,shape=circle,draw,label=120: $ $] (three) {$ $}
	node at ( -4,0) [scale=\nodescale,shape=circle,draw,label=180: $ $] (four) {$ $}
	node at ( -2,-3.464) [scale=\nodescale,shape=circle,draw,label=240: $ $] (five) {$ $}
	node at ( 2,-3.464) [scale=\nodescale,shape=circle,draw,label=300: $ $] (six) {$ $};

	\draw[\coloral, line width=1pt] (one.150) to node [sloped,above,near end]{$ $} (zero.30);
	\draw[\coloral, line width=1pt] (two.210) to node [sloped,below,near end]{} (zero.90);
	\draw[\coloral, line width=1pt] (three.330) to node [sloped,below,near end]{$ $} (zero.90);
	\draw[\coloral, line width=1pt] (four.30) to node [sloped,below,near end]{} (zero.150);
	\draw[\coloral, line width=1pt] (five.90) to node [sloped,below,near end]{$ $} (zero.210);
	\draw[\coloral, line width=1pt] (six.90) to node [sloped,below,near end]{} (zero.330);
	\draw[\coloral, line width=1pt] (one.120) to node [sloped,above]{$ $} (three.0);
	\draw[\coloral, line width=1pt] (three.240) to node [sloped,below]{$ $} (five.120);
	\draw[\coloral, line width=1pt] (five.60) to node [sloped,below]{$ $} (one.180);

	\draw[\colorbe, line width=1pt] (one) to node [sloped,above,near end]{$ $} (zero);
	\draw[\colorbe, line width=1pt] (two) to node [sloped,below,near end]{} (zero);
	\draw[\colorbe, line width=1pt] (three) to node [sloped,below,near end]{$ $} (zero);
	\draw[\colorbe, line width=1pt] (four) to node [sloped,below,near end]{} (zero);
	\draw[\colorbe, line width=1pt] (five) to node [sloped,below,near end]{$ $} (zero);
	\draw[\colorbe, line width=1pt] (six) to node [sloped,below,near end]{} (zero);
	\draw[\colorbe, line width=1pt] (one) to node [sloped,above]{$ $} (three);
	\draw[\colorbe, line width=1pt] (three) to node [sloped,below]{$ $} (five);
	\draw[\colorbe, line width=1pt] (five) to node [sloped,below]{$ $} (one);
\end{tikzpicture}
\end{center}
\end{minipage}
\end{figure}
An $\mf{sl}_3(\CC)$-based ALiA $\salia{sl}{V}{\Gga}$ allowing a Cartan-Weyl normal form is isomorphic to (at least) one of these.
A solution to this problem also solves the case with poles at generic orbits, because the coboundary of the $1$-cochain on $A_2$ with norm $2$ (cf.~Figure \ref{fig:B^2(A_2), m=2,3} on the left) is included in the coboundaries of Figure \ref{fig:B^2(A_2), m=(3,3)} in a unique way, up to isomorphism.

\subsubsection{Root System $B_2$}
\label{sec:B2}
The root system $B_2$ related to the isomorphic Lie algebras $\mf{so}_5(\CC)$ and $\mf{sp}_4(\CC)$ has four short roots and four long roots. Its $2$-chains are shown in Figure \ref{fig:B2}.
\begin{figure}[h!]
\caption{The basis for $C_2(B_2)$.}
\label{fig:B2}
\begin{center}
\begin{tikzpicture}[scale=0.7]
  \path node at ( 0,0) [scale=\nodescale,shape=circle,draw,label=270: $ $] (zero) {$ $}	
	node at ( 4,0) [scale=\nodescale,shape=circle,,draw,label=0: $\alpha_1$] (one) {$ $}
  	node at ( 4,4) [scale=\nodescale,shape=circle,,draw,label=45: $\alpha_2$] (two) {$ $}
  	node at ( 0,4) [scale=\nodescale,shape=circle,,draw,label=90: $\alpha_3$] (three) {$ $}
	node at ( -4,4) [scale=\nodescale,shape=circle,draw,label=135: $\alpha_4$] (four) {$ $}
	node at (-4,0) [scale=\nodescale,shape=circle,draw,label=180: $\alpha_5$] (five) {$ $}
	node at ( -4,-4) [scale=\nodescale,shape=circle,draw,label=225: $\alpha_6$] (six) {$ $}
	node at (0,-4) [scale=\nodescale,shape=circle,draw,label=270: $\alpha_7$] (seven) {$ $}
	node at (4,-4) [scale=\nodescale,shape=circle,draw,label=315: $\alpha_8$] (eight) {$ $};
	\draw (one) to node [sloped,above,near end]{} (zero);
	\draw (two) to node [sloped,above,near end]{} (zero);
	\draw (three) to node [sloped,below,near end]{} (zero);
	\draw (four) to node [sloped,below,near end]{} (zero);
	\draw (five) to node [sloped,below,near end]{} (zero);
	\draw (six) to node [sloped,below,near end]{} (zero);
	\draw (seven) to node [sloped,below,near end]{} (zero);
	\draw (eight) to node [sloped,below,near end]{} (zero);
	\draw (one) to node [sloped,above]{$ $} (three);
	\draw (three) to node [sloped,above]{$ $} (five);
	\draw (five) to node [sloped,above]{$ $} (seven);
	\draw (seven) to node [sloped,above]{$ $} (one);
	\draw (two) to node [sloped,above]{$ $} (five);
	\draw (two) to node [sloped,above]{$ $} (seven);
	\draw (four) to node [sloped,above]{$ $} (seven);
	\draw (four) to node [sloped,above]{$ $} (one);
	\draw (six) to node [sloped,above]{$ $} (one);
	\draw (six) to node [sloped,above]{$ $} (three);
	\draw (eight) to node [sloped,above]{$ $} (three);
	\draw (eight) to node [sloped,above]{$ $} (five);
\end{tikzpicture}
\end{center}
\end{figure}
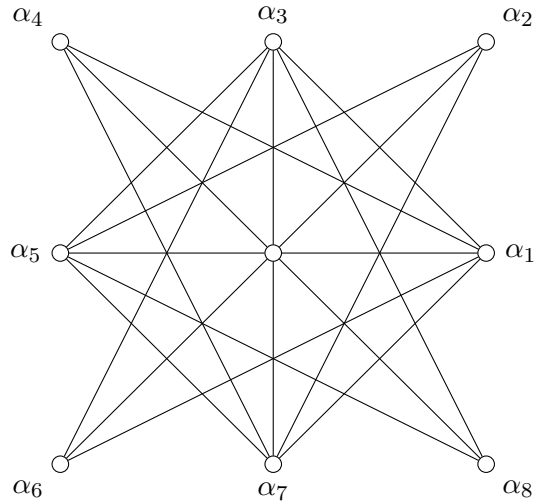

There are no $1$-cochains $\omega^1$ with norm $1$ or $2$.
For norm $1$ it is noticed that any root $\alpha$ is a sum of two roots. Therefore if $\omega^1(\alpha)=1$ and all other values are zero, then $\mathsf{d}^1\omega^1(\beta,\gamma)=-1$ if $\beta+\gamma=\alpha$, which is not allowed.

If $\omega^1$ has norm $2$ and is positive on a short root, say $\omega^1(\alpha_3)=1$, then, in order for $\mathsf{d}^1\omega^1$ to be nonnegative, either $\omega^1(\alpha_1)=1$ or  $\omega^1(\alpha_4)=1$, \emph{and} either $\omega^1(\alpha_2)=1$ or  $\omega^1(\alpha_5)=1$, which is impossible if $\sum_{B_2}\omega^1(\alpha)=2$. If a positive value is assigned to a long root, say $\omega^1(\alpha_2)=1$, then at least one of the values $\omega^1(\alpha_1)$ and $\omega^1(\alpha_3)$ is positive, and it was argued that this is impossible.



There are two $1$-cochains of norm $3$, as can be deduced by an ad hoc analysis similar to the above. Their coboundaries, depicted in the left two root systems of Figure \ref{fig:B^2(B_2), m=3,4}, are nonisomorphic. Norm $4$ provides another occasion where additional constraints from Theorem \ref{thm:alias satisfy game of roots} play a part, since there are multiple $1$-cochains of norm $4$ on $B_2$ but only one of these satisfies (\ref{eq:d1omega1le1}) (or equivalently (\ref{eq:killing form minimal rank})) and is depicted in the right root system of Figure \ref{fig:B^2(B_2), m=3,4}.

\begin{figure}[h!]
\caption{The $2$-coboundaries $\mathsf{d}^1\omega^1\in B^2(B_2,\NN_0)$ where $\|\omega^1\|=3$ or $4$, satisfying (\ref{eq:d1omega1le1}).}
\label{fig:B^2(B_2), m=3,4}
\begin{minipage}{0.32\linewidth}
\begin{center}
\begin{tikzpicture}[scale=0.45]
  \path node at ( 0,0) [scale=\nodescale,shape=circle,draw,label=270: $ $] (zero) {$ $}	
	node at ( 4,0) [scale=\nodescale,shape=circle,draw,label=0: $ $] (one) {$ $}
  	node at ( 4,4) [scale=\nodescale,shape=circle,fill=\colorga,draw,label=45: $ $] (two) {$ $}
  	node at ( 0,4) [scale=\nodescale,shape=circle,fill=\colorga,draw,label=90: $ $] (three) {$ $}
	node at ( -4,4) [scale=\nodescale,shape=circle,fill=\colorga,draw,label=135: $ $] (four) {$ $}
	node at (-4,0) [scale=\nodescale,shape=circle,draw,label=180: $ $] (five) {$ $}
	node at ( -4,-4) [scale=\nodescale,shape=circle,draw,label=225: $ $] (six) {$ $}
	node at (0,-4) [scale=\nodescale,shape=circle,draw,label=270: $ $] (seven) {$ $}
	node at (4,-4) [scale=\nodescale,shape=circle,draw,label=315: $ $] (eight) {$ $};
	\draw [\colorga, line width=1pt](two) to node [sloped,above,near end]{$ $} (zero);
	\draw [\colorga, line width=1pt](three) to node [sloped,below,near end]{$ $} (zero);
	\draw [\colorga, line width=1pt](four) to node [sloped,below,near end]{$ $} (zero);
	\draw [\colorga, line width=1pt](six) to node [sloped,below,near end]{} (zero);
	\draw [\colorga, line width=1pt](seven) to node [sloped,below,near end]{} (zero);
	\draw [\colorga, line width=1pt](eight) to node [sloped,below,near end]{} (zero);
	\draw [\colorga, line width=1pt](two) to node [sloped,above]{$ $} (seven);
	\draw [\colorga, line width=1pt](four) to node [sloped,above]{$ $} (seven);
	\draw [\colorga, line width=1pt](six) to node [sloped,above]{$ $} (three);
	\draw [\colorga, line width=1pt](eight) to node [sloped,above]{$ $} (three);
\end{tikzpicture}
\end{center}
\end{minipage}
\begin{minipage}{0.32\linewidth}
\begin{center}
\begin{tikzpicture}[scale=0.45]
  \path node at ( 0,0) [scale=\nodescale,shape=circle,draw,label=270: $ $] (zero) {$ $}	
	node at ( 4,0) [scale=\nodescale,shape=circle,fill=\colorbe,draw,label=0: $ $] (one) {$ $}
  	node at ( 4,4) [scale=\nodescale,shape=circle,fill=\colorbe,draw,label=45: $ $] (two) {$ $}
  	node at ( 0,4) [scale=\nodescale,shape=circle,fill=\colorbe,draw,label=90: $ $] (three) {$ $}
	node at ( -4,4) [scale=\nodescale,shape=circle,draw,label=135: $ $] (four) {$ $}
	node at (-4,0) [scale=\nodescale,shape=circle,draw,label=180: $ $] (five) {$ $}
	node at ( -4,-4) [scale=\nodescale,shape=circle,draw,label=225: $ $] (six) {$ $}
	node at (0,-4) [scale=\nodescale,shape=circle,draw,label=270: $ $] (seven) {$ $}
	node at (4,-4) [scale=\nodescale,shape=circle,draw,label=315: $ $] (eight) {$ $};
	\draw [\colorbe, line width=1pt](one) to node [sloped,above,near end]{$ $} (zero);
	\draw [\colorbe, line width=1pt](two) to node [sloped,above,near end]{$ $} (zero);
	\draw [\colorbe, line width=1pt](three) to node [sloped,below,near end]{$ $} (zero);
	\draw [\colorbe, line width=1pt](five) to node [sloped,below,near end]{} (zero);
	\draw [\colorbe, line width=1pt](six) to node [sloped,below,near end]{} (zero);
	\draw [\colorbe, line width=1pt](seven) to node [sloped,below,near end]{} (zero);
	\draw [\colorbe, line width=1pt](one) to node [sloped,above]{$ $} (three);
	\draw [\colorbe, line width=1pt](three) to node [sloped,above]{$ $} (five);
	\draw [\colorbe, line width=1pt](seven) to node [sloped,above]{$ $} (one);
	\draw [\colorbe, line width=1pt](six) to node [sloped,above]{$ $} (one);
	\draw [\colorbe, line width=1pt](six) to node [sloped,above]{$ $} (three);
\end{tikzpicture}
\end{center}
\end{minipage}
\begin{minipage}{0.32\linewidth}
\begin{center}
\begin{tikzpicture}[scale=0.45]
 \path node at ( 0,0) [scale=\nodescale,shape=circle,draw,label=270: $ $] (zero) {$ $}	
	node at ( 4,0) [scale=\nodescale,shape=circle,fill=\coloral,draw,label=0: $ $] (one) {$ $}
  	node at ( 4,4) [scale=\nodescale,shape=circle,fill=\coloral,draw,label=45: $ $] (two) {$ $}
  	node at ( 0,4) [scale=\nodescale,shape=circle,fill=\coloral,draw,label=90: $ $] (three) {$ $}
	node at ( -4,4) [scale=\nodescale,shape=circle,fill=\coloral,draw,label=135: $ $] (four) {$ $}
	node at (-4,0) [scale=\nodescale,shape=circle,draw,label=180: $ $] (five) {$ $}
	node at ( -4,-4) [scale=\nodescale,shape=circle,draw,label=225: $ $] (six) {$ $}
	node at (0,-4) [scale=\nodescale,shape=circle,draw,label=270: $ $] (seven) {$ $}
	node at (4,-4) [scale=\nodescale,shape=circle,draw,label=315: $ $] (eight) {$ $};
	\draw [\coloral, line width=1pt](one) to node [sloped,above,near end]{$ $} (zero);
	\draw [\coloral, line width=1pt](two) to node [sloped,above,near end]{$ $} (zero);
	\draw [\coloral, line width=1pt] (three) to node [sloped,below,near end]{$ $} (zero);
	\draw [\coloral, line width=1pt] (four) to node [sloped,below,near end]{$ $} (zero);
	\draw [\coloral, line width=1pt] (five) to node [sloped,below,near end]{} (zero);
	\draw [\coloral, line width=1pt] (six) to node [sloped,below,near end]{} (zero);
	\draw [\coloral, line width=1pt] (seven) to node [sloped,below,near end]{} (zero);
	\draw [\coloral, line width=1pt] (eight) to node [sloped,below,near end]{} (zero);
	\draw [\coloral, line width=1pt] (one) to node [sloped,above]{$ $} (three);
	\draw [\coloral, line width=1pt] (seven) to node [sloped,above]{$ $} (one);
	\draw [\coloral, line width=1pt] (four) to node [sloped,above]{$ $} (seven);
	\draw [\coloral, line width=1pt] (four) to node [sloped,above]{$ $} (one);
	\draw [\coloral, line width=1pt] (six) to node [sloped,above]{$ $} (one);
	\draw [\coloral, line width=1pt] (six) to node [sloped,above]{$ $} (three);
\end{tikzpicture}
\end{center}
\end{minipage}
\end{figure}



This analysis of $B_2$ provides limited information about the ALiAs $\salia{sp}{\boiiiiiiii}{\Gamma}$, $\salia{sp}{\byiiiiiii}{\Gamma}$ and $\salia{so}{\byiiiiiiii}{\Gamma}$. Not only because there are two $1$-cochains of norm $3$ with nonisomorphic coboundaries, but also because the various ways to combine the cochains of this section into an element of $B^1(B_2,\NN_0^3)$ of norm $(4,3,3)$ result in various coboundaries.

\subsubsection{Root System $A_3$}
\label{sec:A3}
There is only one three-dimensional root system that is involved in the isomorphism question for ALiAs, namely $A_3$.
The roots are the midpoints of the edges of a cube (or octahedron) and they all have the same length. 
In Figure \ref{fig:A3 and edges not incident to 0} all basis elements of the $2$-chains are shown except the ones incident to zero.
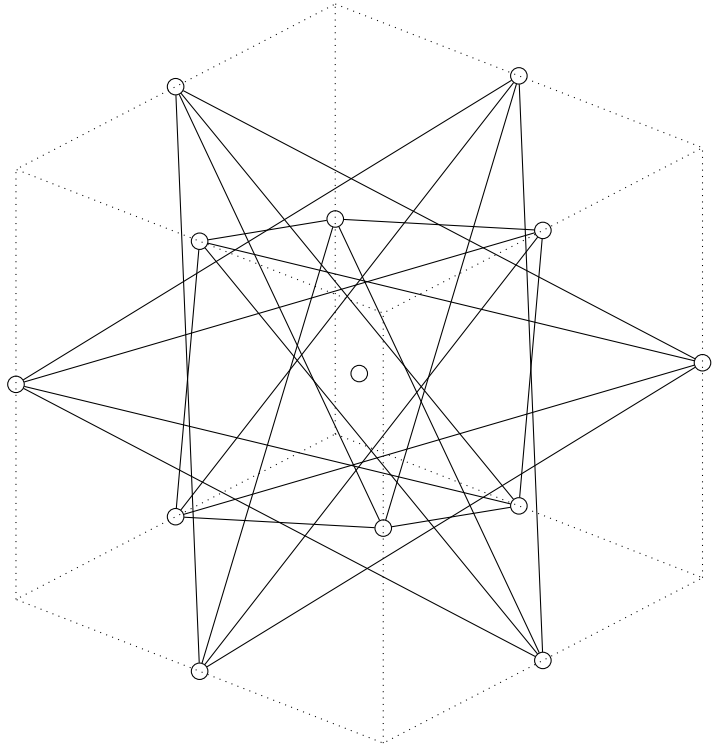
\begin{figure}
\caption{Basis elements of $C_2(A_3)$ not incident to $0$.}
\label{fig:A3 and edges not incident to 0}
\begin{center}
\tdplotsetmaincoords{63}{131}
\begin{tikzpicture}
		[tdplot_main_coords,
			cube/.style={dotted},
			grid/.style={very thin,gray},
			axis/.style={->,blue,thick},scale=3.2]



	\draw[cube] (-1,-1,-1) -- (-1,1,-1) -- (1,1,-1) -- (1,-1,-1) -- cycle;
	\draw[cube] (-1,-1,1) -- (-1,1,1) -- (1,1,1) -- (1,-1,1) -- cycle;
	
	\draw[cube] (-1,-1,-1) -- (-1,-1,1);
	\draw[cube] (-1,1,-1) -- (-1,1,1);
	\draw[cube] (1,-1,-1) -- (1,-1,1);
	\draw[cube] (1,1,-1) -- (1,1,1);

 \path node at (0, 0,0) [scale=\nodescale,shape=circle,draw,label=270: $ $] (zero) {$ $}
	node at (1,-1,0) [scale=\nodescale,shape=circle,draw,label=0: $ $] (one) {$ $}
  	node at (0,1,-1) [scale=\nodescale,shape=circle,draw,label=60: $ $] (two) {$ $}
  	node at (1,0,1) [scale=\nodescale,shape=circle,draw,label=120: $ $] (three) {$ $}
	node at (-1,1,0) [scale=\nodescale,shape=circle,draw,label=180: $ $] (four) {$ $}
	node at ( 0,-1,1) [scale=\nodescale,shape=circle,draw,label=240: $ $] (five) {$ $}
	node at (-1,0,-1) [scale=\nodescale,shape=circle,draw,label=300: $ $] (six) {$ $}

	node at (1,0,-1) [scale=\nodescale,shape=circle,draw,label=0: $ $] (seven) {$ $}
  	node at ( 1,1,0) [scale=\nodescale,shape=circle,draw,label=60: $ $] (eight) {$ $}
  	node at (0,1,1) [scale=\nodescale,shape=circle,draw,label=120: $ $] (nine) {$ $}
	node at  ( -1,0,1) [scale=\nodescale,shape=circle,draw,label=180: $ $] (ten) {$ $}
	node at(-1,-1,0) [scale=\nodescale,shape=circle,draw,label=240: $ $] (eleven) {$ $}
	node at  ( 0,-1,-1) [scale=\nodescale,shape=circle,draw,label=300: $ $] (twelve) {$ $}
	;

	\draw (seven) to node [sloped,above]{$ $} (nine);
	\draw (eight) to node [sloped,above]{$ $} (ten);
	\draw(nine) to node [sloped,above]{$ $} (eleven);
	\draw (ten) to node [sloped,above]{$ $} (twelve);
	\draw(eleven) to node [sloped,above]{$ $} (seven);
	\draw(twelve) to node [sloped,above]{$ $} (eight);

	\draw (seven) to node [sloped,above]{$ $} (five);
	\draw (seven) to node [sloped,above]{$ $} (four);
	\draw (eight) to node [sloped,above]{$ $} (five);
	\draw (eight) to node [sloped,above]{$ $} (six);
	\draw (nine) to node [sloped,above]{$ $} (one);
	\draw (nine) to node [sloped,above]{$ $} (six);
	\draw (ten) to node [sloped,above]{$ $} (one);
	\draw (ten) to node [sloped,above]{$ $} (two);
	\draw (eleven) to node [sloped,above]{$ $} (three);
	\draw (eleven) to node [sloped,above]{$ $} (two);
	\draw (twelve) to node [sloped,above]{$ $} (three);
	\draw (twelve) to node [sloped,above]{$ $} (four);

	\draw (one) to node [sloped,above]{$ $} (two);
	\draw (two) to node [sloped,above]{$ $} (three);
	\draw (three) to node [sloped,above]{$ $} (four);
	\draw (four) to node [sloped,above]{$ $} (five);
	\draw (five) to node [sloped,above]{$ $} (six);
	\draw (six) to node [sloped,above]{$ $} (one);
\end{tikzpicture}
\end{center}
\end{figure}

The computer package Sage \cite{sage} is used to compute all solutions $\omega^1\in C^1(A_3,\NN_0)$ to the equation $\|\omega^1\|=\kappa$ satisfying (\ref{eq:omega1le1}) and (\ref{eq:d1omega1le1}).
Multiple solutions are found with norm $\kappa=3,\ldots,6$, and no solutions for other values of $\kappa$. 
Interestingly, for all the relevant norms, $\kappa=4, 5$ or $6$ (cf.~Table \ref{tab:noI}), these solutions provide just one coboundary $\mathsf{d}^1\omega^1$ up to isomorphism. They are shown in Figure \ref{fig:B^2(A_3), m=4,5,6}, together with a choice of $1$-cochain integrating them. 

\begin{figure}[h!]
\caption{The $2$-coboundaries $\mathsf{d}^1\omega^1\in B^2(A_3,\NN_0)$ where $\|\omega^1\|=4,5$ or $6$, satisfying (\ref{eq:omega1le1}, \ref{eq:d1omega1le1}).}
\label{fig:B^2(A_3), m=4,5,6}
\begin{minipage}{0.32\linewidth}
\begin{center}
\tdplotsetmaincoords{63}{131}
\begin{tikzpicture}
		[tdplot_main_coords,
			cube/.style={dotted},
			grid/.style={very thin,gray},
			axis/.style={->,blue,thick},scale=1.5]


	\draw[cube] (-1,-1,-1) -- (-1,1,-1) -- (1,1,-1) -- (1,-1,-1) -- cycle;
	\draw[cube] (-1,-1,1) -- (-1,1,1) -- (1,1,1) -- (1,-1,1) -- cycle;
	
	\draw[cube] (-1,-1,-1) -- (-1,-1,1);
	\draw[cube] (-1,1,-1) -- (-1,1,1);
	\draw[cube] (1,-1,-1) -- (1,-1,1);
	\draw[cube] (1,1,-1) -- (1,1,1);

 \path node at (0, 0,0) [scale=\nodescale,shape=circle,draw,label=270: $ $] (zero) {$ $}
	node at (1,-1,0) [scale=\nodescale,shape=circle,draw,fill=\colorga,label=0: $ $] (one) {$ $}
  	node at (0,1,-1) [scale=\nodescale,shape=circle,draw,label=60: $ $] (two) {$ $}
  	node at (1,0,1) [scale=\nodescale,shape=circle,draw,fill=\colorga,label=120: $ $] (three) {$ $}
	node at (-1,1,0) [scale=\nodescale,shape=circle,draw,label=180: $ $] (four) {$ $}
	node at ( 0,-1,1) [scale=\nodescale,shape=circle,draw,label=240: $ $] (five) {$ $}
	node at (-1,0,-1) [scale=\nodescale,shape=circle,draw,label=300: $ $] (six) {$ $}

	node at (1,0,-1) [scale=\nodescale,shape=circle,draw,fill=\colorga,label=0: $ $] (seven) {$ $}
  	node at ( 1,1,0) [scale=\nodescale,shape=circle,draw,fill=\colorga,label=60: $ $] (eight) {$ $}
  	node at (0,1,1) [scale=\nodescale,shape=circle,draw,label=120: $ $] (nine) {$ $}
	node at  ( -1,0,1) [scale=\nodescale,shape=circle,draw,label=180: $ $] (ten) {$ $}
	node at(-1,-1,0) [scale=\nodescale,shape=circle,draw,label=240: $ $] (eleven) {$ $}
	node at  ( 0,-1,-1) [scale=\nodescale,shape=circle,draw,label=300: $ $] (twelve) {$ $}
	;
	\draw [\colorga,line width=1pt](one) to node [sloped,above,near end]{$ $} (zero);
	\draw [\colorga,line width=1pt] (three) to node [sloped,below,near end]{$ $} (zero);
	\draw [\colorga,line width=1pt] (four) to node [sloped,below,near end]{$ $} (zero);
	\draw [\colorga,line width=1pt] (six) to node [sloped,below,near end]{} (zero);
	\draw [\colorga,line width=1pt] (seven) to node [sloped,below,near end]{} (zero);
	\draw [\colorga,line width=1pt] (eight) to node [sloped,below,near end]{} (zero);
	\draw [\colorga,line width=1pt] (ten) to node [sloped,below,near end]{} (zero);
	\draw [\colorga,line width=1pt] (eleven) to node [sloped,below,near end]{} (zero);
	\draw [\colorga,line width=1pt] (eight) to node [sloped,above]{$ $} (ten);
	\draw [\colorga,line width=1pt] (eleven) to node [sloped,above]{$ $} (seven);
	\draw [\colorga,line width=1pt] (seven) to node [sloped,above]{$ $} (four);
	\draw [\colorga,line width=1pt] (eight) to node [sloped,above]{$ $} (six);
	\draw [\colorga,line width=1pt] (ten) to node [sloped,above]{$ $} (one);
	\draw [\colorga,line width=1pt] (eleven) to node [sloped,above]{$ $} (three);
	\draw [\colorga,line width=1pt] (three) to node [sloped,above]{$ $} (four);
	\draw [\colorga,line width=1pt] (six) to node [sloped,above]{$ $} (one);
\end{tikzpicture}
\end{center}
\end{minipage}
\begin{minipage}{0.32\linewidth}
\begin{center}
\tdplotsetmaincoords{63}{131}
\begin{tikzpicture}
		[tdplot_main_coords,
			cube/.style={dotted},
			grid/.style={very thin,gray},
			axis/.style={->,blue,thick},scale=1.5]


	\draw[cube] (-1,-1,-1) -- (-1,1,-1) -- (1,1,-1) -- (1,-1,-1) -- cycle;
	\draw[cube] (-1,-1,1) -- (-1,1,1) -- (1,1,1) -- (1,-1,1) -- cycle;
	
	\draw[cube] (-1,-1,-1) -- (-1,-1,1);
	\draw[cube] (-1,1,-1) -- (-1,1,1);
	\draw[cube] (1,-1,-1) -- (1,-1,1);
	\draw[cube] (1,1,-1) -- (1,1,1);

 \path node at (0, 0,0) [scale=\nodescale,shape=circle,draw,label=270: $ $] (zero) {$ $}
	node at (1,-1,0) [scale=\nodescale,shape=circle,draw,fill=\colorbe,label=0: $ $] (one) {$ $}
  	node at (0,1,-1) [scale=\nodescale,shape=circle,draw,label=60: $ $] (two) {$ $}
  	node at (1,0,1) [scale=\nodescale,shape=circle,draw,fill=\colorbe,label=120: $ $] (three) {$ $}
	node at (-1,1,0) [scale=\nodescale,shape=circle,draw,label=180: $ $] (four) {$ $}
	node at ( 0,-1,1) [scale=\nodescale,shape=circle,draw,fill=\colorbe,label=240: $ $] (five) {$ $}
	node at (-1,0,-1) [scale=\nodescale,shape=circle,draw,label=300: $ $] (six) {$ $}

	node at (1,0,-1) [scale=\nodescale,shape=circle,draw,label=0: $ $] (seven) {$ $}
  	node at ( 1,1,0) [scale=\nodescale,shape=circle,draw,fill=\colorbe,label=60: $ $] (eight) {$ $}
  	node at (0,1,1) [scale=\nodescale,shape=circle,draw,fill=\colorbe,label=120: $ $] (nine) {$ $}
	node at  ( -1,0,1) [scale=\nodescale,shape=circle,draw,label=180: $ $] (ten) {$ $}
	node at(-1,-1,0) [scale=\nodescale,shape=circle,draw,label=240: $ $] (eleven) {$ $}
	node at  ( 0,-1,-1) [scale=\nodescale,shape=circle,draw,label=300: $ $] (twelve) {$ $}
	;
	\draw [\colorbe,line width=1pt] (one) to node [sloped,above,near end]{$ $} (zero);
	\draw [\colorbe,line width=1pt] (two) to node [sloped,above,near end]{$ $} (zero);
	\draw [\colorbe,line width=1pt] (three) to node [sloped,below,near end]{$ $} (zero);
	\draw [\colorbe,line width=1pt] (four) to node [sloped,below,near end]{$ $} (zero);
	\draw [\colorbe,line width=1pt] (five) to node [sloped,below,near end]{} (zero);
	\draw [\colorbe,line width=1pt] (six) to node [sloped,below,near end]{} (zero);
	\draw [\colorbe,line width=1pt] (eight) to node [sloped,below,near end]{} (zero);
	\draw [\colorbe,line width=1pt] (nine) to node [sloped,below,near end]{} (zero);
	\draw [\colorbe,line width=1pt] (eleven) to node [sloped,below,near end]{} (zero);
	\draw [\colorbe,line width=1pt] (twelve) to node [sloped,below,near end]{} (zero);
	\draw [\colorbe,line width=1pt] (nine) to node [sloped,above]{$ $} (eleven);
	\draw [\colorbe,line width=1pt] (twelve) to node [sloped,above]{$ $} (eight);
	\draw [\colorbe,line width=1pt] (eight) to node [sloped,above]{$ $} (five);
	\draw [\colorbe,line width=1pt] (eight) to node [sloped,above]{$ $} (six);
	\draw [\colorbe,line width=1pt] (nine) to node [sloped,above]{$ $} (one);
	\draw [\colorbe,line width=1pt] (nine) to node [sloped,above]{$ $} (six);
	\draw [\colorbe,line width=1pt] (one) to node [sloped,above]{$ $} (two);
	\draw [\colorbe,line width=1pt] (four) to node [sloped,above]{$ $} (five);
	\draw [\colorbe,line width=1pt] (five) to node [sloped,above]{$ $} (six);
	\draw [\colorbe,line width=1pt] (six) to node [sloped,above]{$ $} (one);
\end{tikzpicture}
\end{center}	
\end{minipage}
\begin{minipage}{0.32\linewidth}
\begin{center}
\tdplotsetmaincoords{63}{131}
\begin{tikzpicture}
		[tdplot_main_coords,
			cube/.style={dotted},
			grid/.style={very thin,gray},
			axis/.style={->,blue,thick},scale=1.5]


	\draw[cube] (-1,-1,-1) -- (-1,1,-1) -- (1,1,-1) -- (1,-1,-1) -- cycle;
	\draw[cube] (-1,-1,1) -- (-1,1,1) -- (1,1,1) -- (1,-1,1) -- cycle;
	
	\draw[cube] (-1,-1,-1) -- (-1,-1,1);
	\draw[cube] (-1,1,-1) -- (-1,1,1);
	\draw[cube] (1,-1,-1) -- (1,-1,1);
	\draw[cube] (1,1,-1) -- (1,1,1);

 \path node at (0, 0,0) [scale=\nodescale,shape=circle,draw,label=270: $ $] (zero) {$ $}
	node at (1,-1,0) [scale=\nodescale,shape=circle,draw,fill=\coloral, label=0: $ $] (one) {$ $}
  	node at (0,1,-1) [scale=\nodescale,shape=circle,draw, fill=\coloral, label=60: $ $] (two) {$ $}
  	node at (1,0,1) [scale=\nodescale,shape=circle,draw,fill=\coloral,label=120: $ $] (three) {$ $}
	node at (-1,1,0) [scale=\nodescale,shape=circle,draw,label=180: $ $] (four) {$ $}
	node at ( 0,-1,1) [scale=\nodescale,shape=circle,draw,label=240: $ $] (five) {$ $}
	node at (-1,0,-1) [scale=\nodescale,shape=circle,draw,label=300: $ $] (six) {$ $}

	node at (1,0,-1) [scale=\nodescale,shape=circle,draw,fill=\coloral, label=0: $ $] (seven) {$ $}
  	node at ( 1,1,0) [scale=\nodescale,shape=circle,draw,fill=\coloral,label=60: $ $] (eight) {$ $}
  	node at (0,1,1) [scale=\nodescale,shape=circle,draw,fill=\coloral,label=120: $ $] (nine) {$ $}
	node at  ( -1,0,1) [scale=\nodescale,shape=circle,draw,label=180: $ $] (ten) {$ $}
	node at(-1,-1,0) [scale=\nodescale,shape=circle,draw,label=240: $ $] (eleven) {$ $}
	node at  ( 0,-1,-1) [scale=\nodescale,shape=circle,draw,label=300: $ $] (twelve) {$ $}
	;
	\draw [\coloral,line width=1pt](one) to node [sloped,above,near end]{$ $} (zero);
	\draw [\coloral,line width=1pt](two) to node [sloped,above,near end]{$ $} (zero);
	\draw [\coloral,line width=1pt](three) to node [sloped,below,near end]{$ $} (zero);
	\draw [\coloral,line width=1pt](four) to node [sloped,below,near end]{$ $} (zero);
	\draw [\coloral,line width=1pt](five) to node [sloped,below,near end]{} (zero);
	\draw [\coloral,line width=1pt](six) to node [sloped,below,near end]{} (zero);
	\draw [\coloral,line width=1pt](seven) to node [sloped,below,near end]{} (zero);
	\draw [\coloral,line width=1pt](eight) to node [sloped,below,near end]{} (zero);
	\draw [\coloral,line width=1pt](nine) to node [sloped,below,near end]{} (zero);
	\draw [\coloral,line width=1pt](ten) to node [sloped,below,near end]{} (zero);
	\draw [\coloral,line width=1pt](eleven) to node [sloped,below,near end]{} (zero);
	\draw [\coloral,line width=1pt](twelve) to node [sloped,below,near end]{} (zero);
	\draw [\coloral,line width=1pt](seven) to node [sloped,above]{$ $} (nine);
	\draw [\coloral,line width=1pt](nine) to node [sloped,above]{$ $} (eleven);
	\draw [\coloral,line width=1pt](eleven) to node [sloped,above]{$ $} (seven);
	\draw [\coloral,line width=1pt](nine) to node [sloped,above]{$ $} (one);
	\draw [\coloral,line width=1pt](nine) to node [sloped,above]{$ $} (six);
	\draw [\coloral,line width=1pt](ten) to node [sloped,above]{$ $} (one);
	\draw [\coloral,line width=1pt](ten) to node [sloped,above]{$ $} (two);
	\draw [\coloral,line width=1pt](eleven) to node [sloped,above]{$ $} (three);
	\draw [\coloral,line width=1pt](eleven) to node [sloped,above]{$ $} (two);
	\draw [\coloral,line width=1pt](one) to node [sloped,above]{$ $} (two);
	\draw [\coloral,line width=1pt](two) to node [sloped,above]{$ $} (three);
	\draw [\coloral,line width=1pt](six) to node [sloped,above]{$ $} (one);
\end{tikzpicture}
\end{center}	
\end{minipage}

\end{figure}




\cleardoublepage\phantomsection
\addcontentsline{toc}{chapter}{Conclusions}
\chapter*{Conclusions\markboth{Conclusions}{}}
\label{ch:conclusions}

Alongside the computational classification project of Automorphic Lie Algebras a theoretical treatment of the subject takes shape. This theory aims to explain the observations of the computational results. Most notably the uniformity over different reduction groups and the simplicity of the Lie algebra structure that becomes visible by the construction of a Cartan-Weyl normal form. Ultimately, the theory aims to answer the isomorphism question, assist in the classification project and provide information about Automorphic Lie Algebras which are computationally inaccessible.

The invariants of Automorphic Lie Algebras obtained in this thesis provide various predictions. They determine the structure of Automorphic Lie Algebras as a $\CC[\II]$-module and provide important information about the faithful Lie algebraic representation given by matrices of invariants. Moreover, the invariants put severe constraints on the Lie algebra structures of Automorphic Lie Algebras. 

\section*{Summary of the Results\markboth{Conclusions}{Summary of the Results}}

Chapter \ref{ch:A} discusses how the workplace of Automorphic Lie Algebras can be moved from the Riemann sphere and its automorphism group $\Aut(\overline{\CC})\cong \PSL_2(\CC)$ to the linear space $\CC^2$ and the group $SL_2(\CC)$. A method is devised to study Automorphic Lie Algebras for all pole orbits through a single Lie algebra. Using well known theory for the representations of $SL_2(\CC)$ it is then shown that Automorphic Lie Algebras are free modules over the polynomial ring in one variable. Moreover, the number of generators equals the dimension of the base Lie algebra. Since this is true for any Automorphic Lie Algebra it is an invariant. It also allows the definition of the determinant of invariant vectors, for which a simple formula is found.

Chapter \ref{ch:B} discusses the behavior of complex Lie algebras when they are acted upon by a finite group. Formulas to decompose classical Lie algebras into irreducible group representations are established and used to decompose all the base Lie algebras occurring in the set up of Automorphic Lie Algebras in this thesis. Furthermore, it is explained why only inner automorphisms occur in the representations. 
Finally, the natural representation of Automorphic Lie Algebras is defined and a full classification of their evaluations is given. In particular, these complex Lie algebras turn out to depend solely on the base Lie algebra and the type of orbit containing the point of evaluation, hence this is another invariant of Automorphic Lie Algebras. Combining this result with the formula for the determinant of invariant vectors, obtained in Chapter \ref{ch:A}, yields a third invariant. 

In Chapter \ref{ch:C} the Cartan-Weyl normal form for Automorphic Lie Algebras is introduced. 
It is illustrated how this normal form presents Automorphic Lie Algebras in a clear and familiar manner. A representation by matrices of invariants is then used to concretise the Lie algebra structure.
This motivates the introduction of root system cohomology. The constraints on the Lie algebra structure imposed by the invariants of Automorphic Lie Algebras are formulated in terms of cochains and their boundaries. This immediately describes $\mf{sl}_2(\CC)$-based Automorphic Lie Algebras with all pole orbits and $\mf{sl}_3(\CC)$-based Automorphic Lie Algebras whose poles are in one of the two smallest exceptional orbits, $\Gal$ or $\Gbe$. The implications for the isomorphism question are discussed in general. Moreover, necessary conditions for isomorphisms between Automorphic Lie Algebras are established using the invariants and the cohomological set up.

Throughout the main body of this thesis the results are illustrated by explicit calculations for the dihedral symmetric case. The dihedral group is suitable for this purpose because it contains most of the group theoretical difficulties one encounters with e.g.~Schur covers and exponents, but at the same time its polynomial invariants are simple enough to tackle by hand. As a byproduct the complete classification of dihedral Automorphic Lie Algebras \cite{knibbeler2014automorphic} is established. 

\section*{Open Questions and Research Directions\markboth{Conclusions}{Open Questions and Research Directions}}

\subsubsection*{Existence of a Cartan-Weyl Normal Form for Automorphic Lie Algebras}
The Cartan-Weyl normal form for Automorphic Lie Algebras, introduced in Section \ref{sec:normal form}, is a set of generators of the Lie algebra as a $\mero_\Gamma^G$-module that identifies a Cartan subalgebra and diagonalises its adjoint action, analogous to the Cartan-Weyl basis for semisimple complex Lie algebras. The first step in a general existence proof is given by Theorem \ref{thm:dimV generators} stating that any Automorphic Lie Algebra $\left(\mf{g}\otimes \mero_\Gamma\right)^G$ is freely generated by $\dim \mf{g}$ elements as a $\mero_\Gamma^G$-module.
Identifying a Cartan subalgebra boils down to the search for $\ell=\rank \mf{g}$ invariant matrices which are simultaneously diagonalisable and have eigenvalues in $\CC$ (that is, independent of $\lambda$). The second part of this problem is the identification of the root spaces of this Cartan subalgebra.
Proof of existence of such generators, conjectured in Section \ref{sec:normal form}, would be a major contribution to the subject and provide the final piece of the proof that justifies the use of root cohomology in the study of Automorphic Lie Algebras.

\subsubsection*{An Equivalence Relation on $2$-Cocycles}
If an Automorphic Lie Algebra allows a Cartan-Weyl normal form, as we know many do, cf.~\cite{knibbeler2014higher, knibbeler2014automorphic, LS10}, then the Lie algebra structure can be described by a function sending two roots of the base Lie algebra to a triple of natural numbers. A cohomology theory on root systems can be defined such that the $2$-cocycles are exactly those functions that describe a Lie algebra in Cartan-Weyl normal form over a graded ring. Thus to study such Lie algebras one can investigate the more tractable $2$-cocycles. However, for this to be effective it is crucial to know what $2$-cocycles produce isomorphic Lie algebras. This is not a straightforward problem as the cocycles only describe the Lie algebra in normal form, hence an isomorphism of Lie algebras needs to be composed with an isomorphism that takes it to a normal form before the effect on the cocycle can be studied.

The invariants of Automorphic Lie Algebras obtained in this thesis put severe constraints on the $2$-cocycles that describe Automorphic Lie Algebras, leaving only a handful of options. It is at the time of writing not known weather these cocycles generate distinct or isomorphic Lie algebras. If the desired equivalence relation on $2$-cocycles is obtained there is a chance that the answer of the isomorphism question soon follows, and the classification project of Automorphic Lie Algebras leaps ahead.

\subsubsection{Second Cohomology Groups}
The tentative set up for a cohomology theory of root systems could prove an interesting direction of further study by itself. The cohomology groups for instance are not yet studied in this thesis. The second cohomology group in particular has an obvious interpretation in terms of Lie algebras over graded rings and their representations, as it measures the amount of such Lie algebras that do not allow a representation given by a $1$-cochain in the canonical way described in Example \ref{ex:representation by 1-cochain}. Preliminary unpublished research in this direction looks promising and the subject has raised much interest from the integrable system community.

\subsubsection{Generalisations of Automorphic Lie Algebras}
The ingredients that go into an Automorphic Lie Algebra: the base Lie algebra, its field (Automorphic Lie Algebras over finite fields have been proposed), the automorphisms of the Lie algebra in the reduction group, the Riemann surface; they all can be generalised. To complete the classification of Automorphic Lie Algebras based on simple complex Lie algebras one has to study the exceptional Lie algebras. The icosahedral group at least can be represented in the exceptional Lie groups \cite{lusztig2003homomorphisms}.

The study of Automorphic Lie Algebras on a compact Riemann surface of positive genus was started in \cite{chopp2011lie}. Little of the current thesis can be used directly to continue this research, as the results rely from the very beginning on the Riemann sphere, through the properties of the polyhedral groups. However, the classification project of finite groups of automorphisms of compact Riemann surfaces is well ahead. 

\subsubsection{Kac-Moody Affiliations}
Kac describes in his monograph \cite{kac1994infinite} how central extensions of current algebras, possibly reduced by a cyclic reduction group, and adjoint by a derivation, give rise to all Kac-Moody algebras of affine type. One could mimic this construction starting from the Automorphic Lie Algebras that we have come to know, or even the larger class associated to the $2$-cocycles, and aim to describe this Lie algebra by a generalised Cartan matrix. Research of this nature can illuminate the relation between Automorphic Lie Algebras and Kac-Moody algebras. Moreover, it can enable the application of Automorphic Lie Algebras in conformal field theory and related areas of theoretical physics, where Kac-Moody algebras play an important role.

\subsubsection{Integrable Systems}
One could also stay true to the original motivation of the subject, integrable partial differential equations, and continue research in line with \cite{bury2010automorphic}. It would be interesting to investigate the consequences of the isomorphism theorem of \cite{knibbeler2014higher} or to investigate the consequences of the isomorphism conjecture. Integrable partial differential equations can be constructed using Automorphic Lie Algebras, as demonstrated in \cite{bury2010automorphic,Lombardo}. We emphasise that these equations depend on the Lie algebra structure only. Therefore they are independent of the choice of reduction group, by the isomorphism theorem of \cite{knibbeler2014higher}. Yet, for the study of this equation one can use a Lax pair from the natural representation of the Automorphic Lie Algebra, whose analytic structure \emph{does} depend on the reduction group. Hence, for each polyhedral group appearing as a reduction group for this Automorphic Lie Algebra, one could expect a class of solutions to the equation with symmetry properties of this particular group.

Naturally it would also be interesting to start an investigation on the relation between
Automorphic Lie Algebras and discrete integrable systems
as the latter display rich algebraic structures. This is an application for pioneers, yet it would be timely because of the rising awareness of the importance of discrete integrable systems in fundamental physics.

\cleardoublepage\phantomsection
\addcontentsline{toc}{chapter}{Index of Notation}
\chapter*{Index of Notation\markboth{Index of Notation}{}} 
\label{ch: Index of symbols}

\begin{center}
\begin{longtable}{ll}
\hline 
\multicolumn{1}{l}{Notation} & \multicolumn{1}{l}{Description} \\ \hline 
\endfirsthead

\multicolumn{2}{c}
{{\bfseries  continued from previous page}} \\
\hline 
\multicolumn{1}{l}{Notation} & \multicolumn{1}{l}{Description} \\ \hline 
\endhead

\hline \multicolumn{2}{r}{{Continued on next page}} \\ 
\endfoot

\hline 
\endlastfoot
ALiAs & Automorphic Lie Algebras\\
form & A polynomial of homogeneous degree\\
gcd & Greatest common divisor\\
lcm & Least common multiple\\
$\Hom(U,V)$ & Homomorphisms $U\rightarrow V$\\
$\End(V)$ & Endomorphisms, $\Hom(V,V)$\\
$\Aut(V)$ & Automorphisms, invertible endomorphisms\\
$\GL(V)$ & The general linear group, $\Aut(V)$ where $V$ is a vector space\\
$\NN_0$ & The nonnegative integers $\NN\cup\{0\}$\\
$\CC^\ast$ & The multiplicative group of nonzero complex numbers $\CC\setminus\{0\}$\\
$\overline{\CC}$ & The one-point compactification of the complex plane, also known as\\
&the Riemann sphere and the complex projective line $\CC P^1$\\
$G$ & A group, often a polyhedral group, i.e.~a finite subgroup of $\Aut(\overline{\CC})$\\
$\triv$ & The trivial character of a group $G$, i.e.~the map $\triv:G\rightarrow \{1\}\subset\CC^\ast$\\
$|G|$ & The order of a group $G$\\
$\|G\|$ & The exponent of a group $G$, Definition \ref{def:exponent of a group}\\
$G^\flat$ & The binary polyhedral group related to the polyhedral group $G$\\
$G_\lambda$ & Stabiliser subgroup $\{g\in G\;|\;g\lambda=\lambda\}$\\
$\Gamma$ & $G$-orbit in $\overline{\CC}$\\
$\Omega$ & Index set for all exceptional orbits of $G.\;\Omega=\{\al, \be\}\text{ if }G\text{ is cyclic and }$\\
&$\Omega=\{\al, \be, \ga\}\text{ if }G\text{ is a non-cyclic polyhedral group}$\\
$\Gamma_i, \;i\in\Omega$ & $G$-orbit in $\overline{\CC}$ of size $< |G|\text{, called \emph{exceptional orbit}}$\\
$d_i , \;i\in\Omega$& {The size $|\Gamma_i|$ of the exceptional orbit $\Gamma_i$}\\
$\nu_i , \;i\in\Omega$& {The size $\frac{|G|}{d_i}$ of the nontrivial stabiliser subgroups at }$\Gamma_i$\\
$\omega_N$& {A primitive $N$-th root of unity, e.g. }$e^{\frac{2\pi i}{N}}$\\
$C_A(B)$ & {The centraliser of $B$, }$\{a\in A\;|\;ab=ba,\,\forall b\in B\}$\\
$Z(G)$ & {The centre of $G$, }$\{g\in G\;|\;gh=hg,\,\forall h\in G\}=C_G(G)$\\
$V_\chi$ & The vector space of the representation affording the character $\chi$\\
$\chi_V$ & The character of the representation with vector space $V_\chi$\\
$\kappa(\chi)_i,\;i\in\Omega$ & the half integer $\nicefrac{1}{2}\,\codim V_\chi^{G_\mu}$, where $\mu\in\Gamma_i$, Definition \ref{def:kappa(chi)}\\
$V^\ast$ & {The dual of a vector space $V$, i.e.~}$\Hom(V,k)${ where $k$ is the field of $V$}\\
$\CC[U]$ & {The ring of polynomials whose variables are basis elements of $U^\ast$}\\
$R$&$\CC[U]${ where $U$ is the natural representation of the involved group}\\
$\cV(F)$ & {If $F\in\CC[U]$ then }$\cV(F)=\{u\in U\;|\;F(u)=0\}${, Definition \ref{def:algebraic set}}\\
$\cN(\Gamma)$ & {If $\Gamma\subset U$ then }$\cN(\Gamma)=\{F\in \CC[U]\;|\;F|_{\Gamma}=0\}${, Definition \ref{def:algebraic set}}\\
$F_\Gamma$ &  {The generator of }$\cN(\Gamma)${, Definition \ref{def:form of orbit}}\\
$F_i$ & Short hand notation for $F_{\Gamma_i}$\\
$\p{d}$ & {Prehomogenisation, Definition \ref{def:prehomogenisation}}\\
$\h{\Gamma}$ & {Homogenisation, Definition \ref{def:homogenisation}}\\
$\ii$& The quotient $\h{\Gamma}F_i^{\nu_i}=\frac{F_i^{\nu_i}}{F_{\Gamma}^{\nu_{\Gamma}}}${ where }$\nu_{\Gamma}=\frac{|G|}{|\Gamma|}$, i.e. the meromorphic\\
& function on $\overline{\CC}$ with divisor $\nu_i\Gamma_i-\nu_\Gamma\Gamma$\\
$\II$& A nonconstant function of the set $\{\ii\,|\,i\in\Omega\}$.
{ This notation is only}\\ & {used in a context where the particular choice is irrelevant, e.g.~}$\CC[\II]$\\
$\mf{g}$ & {A Lie algebra over the complex numbers, unless otherwise stated}\\
$\mf{h}${, CSA} & {A Cartan subalgebra, a nilpotent selfnormalising subalgebra of }$\mf{g}$\\
$\ell$& {The rank $\rank \mf{g}=\dim\mf{h}$ of the Lie algebra}\\
$K(\cdot,\cdot)$& {The killing form }$\mf{g}\times\mf{g}\ni(a,b)\mapsto K(a,b)=\tr(\ad(a)\ad(b))\in\CC$\\
$(\cdot,\cdot)$& {The killing form restricted to }$\mf{h}\times\mf{h}$\\
$\roots$ & {The roots of a semisimple Lie algebra}\\
$\roots_0$ & $\roots\cup\{0\}$\\
$\sroots$ & {A choice of simple roots in }$\roots$\\
$\kappa(\roots)_i,\;i\in\Omega$ & The integer $\nicefrac{1}{2}\,\codim\mf{g}(V)^{G_\mu}${ where $\mf{g}(V)$ is a Lie algebra with root }\\& {system $\roots$ and a $G$-module as induced by a $G$-action on $V$, and $\mu\in\Gamma_i$}, \\ &Definition \ref{def:kappa(roots)}\\
$\mero$ & {The field of rational functions on the Riemann sphere}\\
$\mero_\Gamma$ & {The ring of functions in $\mero$ with poles restricted to }$\Gamma\subset\overline{\CC}$\\
$\overline{V}$ & The tensor product $V\otimes\mero$, i.e. $V$-valued meromorphic functions \\
$\overline{V}_\Gamma$ & The tensor product $V\otimes\mero_\Gamma$ \\
$C_1(\roots)$ & $1$-chains, formal $\ZZ$-span of the eigenvalues of a CSA, $\ZZ\langle\roots_0\rangle$\\
$C_m(\roots)$ & $m$-chains, formal $\ZZ$-span of $m$-tuples $(\alpha_1 ,\ldots, \alpha_m)\in\roots_0^m$ such that \\
&$(\alpha_1 ,\ldots,\alpha_j+\alpha_{j+1},\ldots, \alpha_m)\in C_{m-1}(\roots)\text{ for all } 1\le j< m$\\
$C^m(\roots,X)$ & $m$-ochains, $\Hom(C_m(\roots),X)$ where $X$ is an abelian group\\
$\mathsf{d}^m$ & The map $C^m(\roots,X)\ni\omega^m\mapsto \mathsf{d}^m\omega^m\in C^{m+1}(\roots,X)$ defined by\\&
$\mathsf{d}^m \omega^m(\alpha_0,\ldots,\alpha_m)=\omega^m(\alpha_1,\ldots,\alpha_m)$\\&$
+\sum_{j=1}^m (-1)^{j} \omega^m(\alpha_0,\ldots,\alpha_{j-1}+\alpha_j,\cdots,\alpha_m)$\\&$
-(-1)^m\omega^m(\alpha_0,\ldots,\alpha_{m-1})$\\
$C^m(\roots,\NN_0^q)$ & The set of all $\omega^m\in C^m(\roots,\ZZ^q)$ such that $\omega^m$ and $\mathsf{d}^m\omega^m$ take \\&values in $\NN_0^q$\\ 
$\|\omega^m\|$& {The sum $\sum\omega^m(\alpha)$ ranging over the basis elements $\alpha$ of }$C_m(\roots)$\\
$Z^m(\roots,\NN_0^q)$ & The kernel of $\mathsf{d}^m:C^m(\roots,\NN_0^q)\rightarrow C^{m+1}(\roots,\NN_0^q)$, the set of\\& $m$-cocycles or closed cochains\\
$B^m(\roots,\NN_0^q)$ & The image of $\mathsf{d}^{m-1}:C^{m-1}(\roots,\NN_0^q)\rightarrow C^{m}(\roots,\NN_0^q)$, the set of\\& $m$-coboundaries or exact cochains\\
$\AL{\roots}$& The Lie algebra over a polynomial ring associated to a $2$-cocycle $\omega^2$\\& on the root system $\Phi$, Definition \ref{def:lie algebra associated to a cocycle}
\end{longtable}
\end{center}

\cleardoublepage\phantomsection
\addcontentsline{toc}{chapter}{Bibliography}
\bibliography{BackMatter/bibliography}

\end{document}